\pdfoutput=1
\documentclass[11pt]{article}

\usepackage[utf8]{inputenc}
\usepackage[fontsize=11pt]{scrextend}
\usepackage[margin=1in]{geometry}
\usepackage[round]{natbib}
\usepackage{amsmath,amsthm,amssymb}
\IfFileExists{dsfont.sty}{\usepackage{dsfont}\newcommand{\ind}{\mathds{1}}}{\newcommand{\ind}{\mathbf{1}}}
\usepackage[pagewise]{lineno}
\usepackage{graphicx}
\usepackage{float}
\usepackage{microtype}
\usepackage{setspace}
\usepackage{tabularx}
\usepackage{longtable}
\usepackage{booktabs}
\usepackage{subcaption}
\usepackage{nicefrac}
\usepackage{etoolbox}
\usepackage{xcolor}
\usepackage{colortbl}
\usepackage{arydshln}

\makeatletter
\renewenvironment{proof}[1][\proofname]{\par
  \pushQED{\qed}%
  \normalfont
  \topsep6\p@\@plus6\p@\relax
  \trivlist
  \item[\hskip\labelsep
        \bfseries #1\@addpunct{.}]\ignorespaces
}{%
  \popQED\endtrivlist\@endpefalse
}
\makeatother

\definecolor{depAssumption}{HTML}{E5E7EB}
\definecolor{depLemma}{HTML}{BBF7D0}
\definecolor{depProposition}{HTML}{BFDBFE}
\definecolor{depTheorem}{HTML}{FCA5A5}
\definecolor{depCorollary}{HTML}{FED7AA}

\usepackage{tikz}
\usepackage{hyperref}
\hypersetup{
  hidelinks,
  pdfborder={0 0 0}
}
\usepackage{url}
\usepackage[english]{isodate}
\usepackage{enumitem}
\usepackage{titlesec}

\titleclass{\subsubsubsection}{straight}[\subsubsection]
\newcounter{subsubsubsection}[subsubsection]
\renewcommand{\thesubsubsubsection}{\thesubsubsection.\arabic{subsubsubsection}}

\titleformat{\subsubsubsection}
  {\normalfont\normalsize\bfseries}{\thesubsubsubsection}{1em}{}
\titlespacing*{\subsubsubsection}
  {0pt}{2ex plus .5ex minus .2ex}{1ex plus .2ex}

\usepackage{newpxtext,newpxmath}

\graphicspath{{./Images/}}

\newcommand{\cmmnt}[1]{}

\newcommand{\wM}{w_{\scriptscriptstyle\mathrm{M}}}

\theoremstyle{plain}
\newtheorem{theorem}{Theorem}
\newtheorem{proposition}{Proposition}
\newtheorem{lemma}{Lemma}
\newtheorem{corollary}{Corollary}[theorem]

\newtheorem{assumption}{Assumption}

\theoremstyle{plain}
\newtheorem{definition}{Definition}

\theoremstyle{definition}
\newtheorem{example}{Example}
\theoremstyle{plain}

\title{\textbf{Network Origins of the Money-Output Relation}}

\author{Vipin P. Veetil\thanks{Economics Area, Indian Institute of Management Kozhikode, Kerala 673 570, India.}}

\date{\small\printdayoff\today}
\begin{document}
\maketitle
\begin{abstract}
\setstretch{1.3}

This paper develops a dynamic production-network model in which monetary non-neutrality
emerges under fully flexible prices. Firms finance intermediate-input purchases
from current nominal balances, and intermediate goods arrive with heterogeneous lead times
across suppliers. A monetary shock then has two distinct effects. The first is
miscoordination: because production uses dated input bundles assembled under past nominal
conditions, the shock disturbs the composition of usable inputs (within and across firms) and thereby generates a
deadweight loss. The second is reallocation: because the incidence of the shock is tilted
toward smaller firms, which are disproportionately downstream, the shock redistributes
resources along the supply chain. More specifically, positive shocks push resources downstream, negative shocks
pull them upstream. This means that under a negative shock, both the miscoordination and the reallocation effect reduce downstream output and
therefore GDP. Under a positive shock, reallocation raises GDP while miscoordination reduces
it. The money--output relation is therefore
sign-asymmetric: negative shocks generate sharp contractions, while positive shocks generate at most mild expansions, and sometimes none. Computational experiments on a reconstructed US production network generate output declines after a monetary contraction that are about three times as large as the increases after a monetary expansion. 

\vspace{1cm}
\noindent
\textbf{JEL Codes} D50, E31, E32, E40, E50, E52
\\
\textbf{Key Words} Monetary Shocks; Supply Chain; Deadweight Loss; Production Network.
\end{abstract}

\newpage
\setstretch{1.4}
\section{Introduction}
\label{sec:introduction}

The fundamental problem of monetary theory is the relation between money and output. In the
long run, the two appear to have little to do with each other. The number of bridges an
economy can build, the amount of bread it can produce, and the physicians it can train do
not depend on how many notes of a peculiar print happen to be in circulation. They
depend on technology and economic organization. Put differently, a larger circulation of certain
portraits may flatter imperial vanity but cannot increase national wealth. This much was
well understood by the classical economists and has remained, in one form or another, a
foundational proposition of monetary thought. The formidable problem, then as now, is the
short run. Negative monetary shocks tend to generate contractions in GDP, while positive
shocks generate expansions. How is it that instruments wholly powerless to alter national
wealth in the long run can nevertheless change output in the short run? This paper proposes
an answer.

Consider a supply-chain economy in which firms purchase intermediate inputs from one another, with upstream firms selling largely to other firms and downstream firms selling largely for final consumption.  Suppose that not all inputs can be combined contemporaneously:
some come like larks, others like slow freight. An automobile manufacturer, for instance,
may assemble a body frame ordered three months ago with tyres ordered last week and paint
ordered just yesterday, while a baker makes today's loaf with fresh yeast delivered this
morning and flour stocked weeks ago. Production, therefore, has a temporal structure.
Inputs flow through the network with heterogeneous lead times, so output at any date depends
on goods ordered under different past conditions. One of those conditions is the
distribution of nominal balances across firms. If a monetary shock changes all firms'
balances equiproportionally, relative positions are preserved, and there is no reason for
firms' real input orders to change. Once this equiproportionality is disturbed, however,
some firms can order more of certain intermediates, while others must order less. Some firms
expand, others contract. But these expansions and contractions do not offset one another,
for a reason at once subtle and simple. The input-bundles that become usable today were chosen
under different past nominal conditions and, after the shock, are no longer the ones firms
would have selected had relative balances remained aligned over time. With new money, the
baker can buy more yeast to make a lighter loaf but cannot bake many more loaves, for he
remains constrained by flour ordered weeks ago. Meanwhile, far away, a brewery now receives
less yeast and must cut production sharply. The baker's expansion is limited by commitments
inherited from an earlier monetary setting, while the brewer's contraction is severe because
plans laid in the past are now frustrated by the absence of a complementary input. What is gained
in one place therefore does not make up for what is lost in another.

These losses, however, are only half the story. Monetary shocks also shift
economic activity along the supply chain. Positive shocks push it downstream, while negative
shocks pull it upstream. This mechanism arises from the interaction of firm size with
the production network. Small firms tend to sit closer to the household, and small firms are
also more exposed to monetary shocks, which means that monetary shocks strike the supply chain
unevenly. A positive shock strengthens the purchasing power of firms close to final demand,
while a negative shock weakens it. Real resources at any date are fixed, so when downstream
firms bid more, fewer inputs remain for upstream use.
Production is thereby redistributed along the supply chain itself, with one part expanding as another contracts. The relevant rivalry is therefore not simply between the baker and the brewer,
but between the baker and the steel manufacturer, and among all the firms in between. Put differently, firms
affect one another not only through direct competition for the same inputs, but also through
competition for inputs used in the production of other inputs. The baker and the steel
manufacturer both draw directly on coal or electricity, but they are also linked more
indirectly. The baker depends on flour sacks, delivery trucks, and oven repairs. The steel
manufacturer depends on refractory bricks, machine parts, and industrial haulage. These are
different goods but their production relies on common inputs such as fuel, freight,
chemicals, and machinery. Nor is this redistribution costless. Each
firm keeps its workers in the short run, so the firms that win inputs
end up with more than their workers can use to best effect, while the
firms that lose inputs have too few for the workers they keep.

Our approach yields a rich and intricate set of dynamics, not only for GDP but for
production throughout the supply chain. To see those dynamics clearly, we must turn
to the formal structure of the model.  We
consider an economy with a finite number of firms and a representative household. The
household supplies a fixed quantity of labor and demands goods from firms. We impose an
additional structure on the production network that ties a firm's size to its location along
the supply chain: as one moves toward firms closer to final demand, the size distribution
shifts downward in the sense of first-order stochastic dominance. Firms
produce with a technology that aggregates intermediate inputs through CES and then combines
that compound with labor through a Cobb--Douglas nest. Intermediate inputs arrive with
delay. Each good is associated with a unique lead time drawn from an integer-valued
distribution with finite support. Prices are fully flexible, and firms set prices to clear
markets at each time step. Under standard regularity conditions on the production network, this setting admits a unique general equilibrium,
with nominal variables determined only up to a normalization of the price level, so that
equilibrium real allocations are independent of the aggregate nominal stock of money. We
also establish local stability of the stationary benchmark. 

Having thus constructed a network economy that remains stable despite dated input use, we
introduce monetary shocks as direct changes in firms' nominal balances.\footnote{Monetary disturbances that
take the form of direct, non-uniform changes in nominal balances have a long lineage. In
limited-participation and transaction-based models, this non-uniformity arises because new
money initially reaches only some agents or markets and only then diffuses through the
economy
\citep{grossman1983transaction,rotemberg1984monetary,lucas1990liquidity,fuerst1992liquidity,christiano1992liquidity}.
Spatial and network models make the same point of entry more explicit by letting injected
money enter particular locations or firms and then propagate through trading linkages
\citep{anthonisen2010monetary,mandel2018cantillon,mandel2021monetary}.}
 A positive shock
is an injection of cash into the firm sector, while a negative shock is an extraction. We
do not assume these changes are distributed uniformly across firms.\footnote{A large empirical literature shows that smaller firms are more exposed to monetary disturbances \citep{gertler1994,kashyap1993monetary,kashyap1994credit,bernanke1995credit}. There are many reasons for this. For one, large firms are better collateralized, depend less on banks and can borrow in the commercial-paper market \citep{gertler1994,bernanke1995credit}. And they have deeper internal funds. This matters because a firm with more capital of its own raises outside finance more easily, so that a tightening of credit falls hardest on poorly capitalized firms \citep{holmstrom1997financial}. Section~4.1 of \citet{mandel2021monetary} presents a brief summary.} We take this empirical regularity as
exogenous and parameterize the uneven incidence of monetary shocks through a concave
function of equilibrium money holdings. Because firms face cash-in-advance constraints, the shock's heterogeneous impact translates into heterogeneous
changes in input demand, with smaller firms responding more strongly. The concentration of
smaller firms downstream then gives those demand
shifts a directional character. Positive shocks enable downstream firms to bid resources away from upstream uses, while
negative shocks do the reverse. Resource reallocation along the supply chain is therefore
inseparable from the uneven incidence of the initial monetary disturbance. The same uneven
incidence is also the origin of miscoordination: miscoordination arises from
relative-price distortions, and such distortions require a non-uniform monetary impulse.
Were a cash injection to take the Humean form of an equiproportional change in all nominal
balances, firms would alter demand equiproportionally, and relative prices would remain
unchanged; neither reallocation nor miscoordination would arise. 

The aforenoted mathematical structure allows us to prove that a monetary shock, whether positive or negative, generates a deadweight
loss to the economy, with losses decreasing in the spectral gap of the production network
and in the substitutability of inputs. We show that, though the miscoordination effect pervades the whole economy,
the reallocation effect can dominate locally in some part of the supply-chain. More specifically, for an output aggregator
sufficiently biased toward one end of the supply-chain, the reallocative component can
outweigh miscoordination. It is precisely this possibility that allows GDP to rise in
response to a positive monetary shock, even though downstream firms too suffer from
miscoordination.

The temporal structure of production also shapes how a monetary disturbance is
felt over time. An economy built on longer and more roundabout input chains is
insulated for a while, since the disturbance takes time to reach usable
production; but that same depth, by pulling apart the fast and slow adjustments
the shock sets in motion, enlarges the cumulative loss it eventually inflicts.
Such an economy absorbs a monetary disturbance more slowly without thereby
absorbing it more cheaply.

Perhaps the most insightful of our formal results is that positive monetary shocks allow an
economy to borrow from the future. This may seem
somewhat odd prima facie, for such an inter-temporal reallocation of goods and services is
no part of the intention of monetary authorities. The unintended borrowing is a
consequence of the fact that today's upstream goods are tomorrow's downstream goods: going
up the supply chain is, in effect, going `back to the future'. A positive monetary shock
that raises GDP today does so by bidding resources away from upstream production, and hence
equivalently from the intermediate goods that are tomorrow's GDP. Society as a whole is
borrowing from itself. 

We measure the size of these effects with agent-based computational
experiments on the model calibrated to the United States production
network. The network carries the universe of firms in the United States,
aggregates to the input--output table, and embeds within it a
supply-chain structure. 
For a shock of the same size, the short-run decline in output after
a negative shock is about three times the largest rise after a positive one. It is worth noting that the sizeable asymmetry between output responses to positive and negative monetary shocks arises in a model with no asymmetry in the adjustment of prices or wages.

\subsection{Related literature}

Our paper is closely related to a growing literature that embeds production networks in
monetary models including \citet{bouakez2009}, \citet{nakamura2010monetary},
\citet{ozdagli2017production}, and \citet{pasten2019propagation}. Collectively, they argue
that input--output linkages, together with heterogeneity in sectoral size and price rigidity,
shape the distribution of monetary-policy effects across sectors, and that these effects do
not wash out in the aggregate.\footnote{One of the earliest, and admirably pioneering, contributions along these lines is \citet{anthonisen2018sticky}, who shows that with sticky prices in a
network economy, a monetary expansion can even leave output below its flexible-price
level at every location and date.}
 Put simply, production networks shape the macroeconomic consequences of
monetary shocks.\footnote{We should also note the nascent literature
that introduces networks into models of optimal monetary policy, more specifically 
\citet{lao2022optimal} and \citet{schaab2023hankio}. Here the question is not how networks
amplify the real effects of monetary policy, but how policy should be designed once one
recognizes the role of production networks in shaping real outcomes.}

Our contribution begins from a somewhat different perspective on monetary transmission. In
much of the literature just noted, networks aggravate nominal frictions and strengthen the
consequences of price stickiness. But the introduction of networks does not typically open a
new window through which to view monetary disturbances. The network remains a mere
amplification mechanism. In our setting, by contrast, the network is the wellspring of
monetary non-neutrality. This epistemic difference originates from the fact that, in our model, the
network is the structure through which monetary shocks disrupt the steady-state flow of
intermediate inputs.

What gives the network this role is a feature absent from standard monetary
models: the inputs entering a single production plan are delayed not uniformly
but by different amounts. Production at a date is therefore not a contemporaneous
object but a layered one, assembled from orders placed under different past
nominal conditions. A monetary shock consequently does more than move today's
spending; it disturbs the relation between today's spending and the input
commitments inherited from last week and last month. It is this layered time
profile---not the mere existence of delivery lags---that lets money be neutral in
the long-run equilibrium yet disruptive along the transition, as it scrambles the
vintages out of which current production is assembled.

The model used here builds on the foundational model of transient dynamics on production
networks proposed by \citet{gualdi2016emergence}. They study a  network economy
with CES technologies in which firms set prices and quantities through decentralized
interactions in segmented markets. We overlay dated inputs on that
decentralized network setting. The network thus acquires an explicit time dimension, and neither the existence nor the
local stability of equilibrium can be characterized without analyzing the interaction
between network propagation and input delays. More generally, the entire time profile of
how shocks---whether monetary or real---affect aggregate variables depends on the manner in
which input delays are embedded in the production network. We are not the first to introduce
delivery lags and time-to-build in intermediates into production networks.
\citet{taschereau2025echoes} and \citet{leng2025bullwhip} use such features to study the
macroeconomic effects of productivity and aggregate-demand shocks; both find that lags
amplify those effects. What distinguishes our paper from these  contributions is
the role played by delays in monetary transmission and how these delays interact with the
temporal structure of the delivery of intermediate inputs.

It is worth noting that this paper extends our previous work on the real consequences of
the flow of money through the production network. In \citet{mandel2018cantillon}, we showed
that the network percolation of a monetary shock can generate directionally anomalous price
movements, in the sense that some firms temporarily raise prices in response to monetary
contractions and lower them in response to expansions. In \citet{mandel2021monetary}, we
showed that the aggregate price level itself can move in the ``wrong'' direction, thereby
presenting a network-based resolution of the price puzzle. In \citet{veetil2026costinflation},
we showed that inflation can generate sizeable relative distortions even when the `size of
price changes' is nearly uncorrelated with the rate of inflation. Taken together, those papers
established that the transient passage of money through a production network can produce
rich, counterintuitive, and empirically consistent price dynamics. The present paper
turns to the corresponding output dynamics.

\subsection{Organization of the paper}

Section~\ref{sec:model} presents the model and its stationary
equilibrium. Section~\ref{sec:inefficiency} studies the production
inefficiency that a monetary shock causes within firms, and
Section~\ref{sec:reallocation} the reallocation of activity along the
supply chain. Section~\ref{sec:asymmetry} lets the two operate jointly,
measures the deadweight loss, made up of the loss from production inefficiency
and the loss from misallocation, and derives the asymmetric GDP response. Section~\ref{sec:quant} takes the model to a
reconstructed United States production network and to a family of
synthetic supply chains. Section~\ref{sec:conclusion}
concludes. Appendix~\ref{app:proofs} collects the proofs.
Appendix~\ref{app:analytical} states and proves the supplementary lemmas, and Appendix~\ref{app:construction} presents the calibration of the simulated supply chains and a glossary of notation.


\section{The Model}
\label{sec:model}

We build the model up from the individual firm. A firm produces with
labor and with intermediate inputs bought from other firms, and these
inputs arrive with lead times, so that the bundle a firm uses at a
date was ordered at several past dates. Production is thus tied both
to the network of who buys from whom and to the calendar of when each
order was placed. A representative household supplies labor and buys
from some of the firms. The economy rests at a stationary equilibrium
until a one-time monetary shock lands unevenly on the balances
of firms and sets off a decentralized adjustment. Along that
adjustment prices clear every market at every date. But firms keep
spending their balances along the supplier shares they had before the
shock.

\subsection{Economic environment}
\label{subsec:environment}

There are \(n\) firms, indexed by \(N:=\{1,\dots,n\}\), and one
representative household, indexed by \(h\).\footnote{Vectors are bold
lowercase and matrices bold uppercase letters, \(\mathbf 1\) is the
all-ones vector and \(\mathbf I\) the identity matrix. For a
firm-level variable \(z\), \(z_{i,t}\) is its value at firm
\(i\in N\) and date \(t\in\mathbb Z\), and a star marks its value at
the stationary equilibrium, \(z_i^\ast\).}

\subsubsection{Production technology}
\label{subsubsec:technology}
\label{subsubsec:lags}

Firm \(i\) combines labor and a composite of intermediate inputs in a
Cobb--Douglas top nest,
\begin{equation}
q_i
=
l_i^{\,\beta_i}X_i^{\,1-\beta_i},
\qquad i\in N
\label{eq:top_cd}
\end{equation}
where \(q_i\) is output, \(l_i\) is labor input, \(X_i\) the intermediate-input composite
and \(\beta_i\in(0,1)\) the labor share of firm \(i\). The labor share is a time-invariant
parameter of the firm's technology. Cost minimization makes \(\beta_i\) also
the share of the firm's revenue paid out as wages at the stationary
equilibrium. The labor shares are bounded away from \(0\) and \(1\),
\begin{equation}
\beta_i\in[\underline\beta,\overline\beta]\subset(0,1)
\qquad\text{for every }i\in N
\label{eq:beta_bounds}
\end{equation}
for fixed bounds \(\underline\beta\) and \(\overline\beta\).

Labor is immediately usable in production, while intermediate goods
arrive with lead times. Let \(\mathcal S_i\subseteq N\) be the set of firm \(i\)'s
suppliers and \(x_{ji}\) the quantity of good \(j\) used by firm
\(i\). Each purchase that firm \(i\) makes from a supplier \(j\) is
associated with a single lead time
\(k_{ji}\in\{\underline k,\underline k+1,\dots,\widehat{k}\}\). The lead
time \(k_{ji}\) is the number of periods that pass between the placing of an order for good
\(j\) by firm \(i\) and its use in firm \(i\)'s production. Here
\(\underline k\ge0\) is the minimum and \(\widehat k\) the maximum lead
time in the economy. The lead time is a property of the purchase, so the same good may reach one of its buyers sooner than it reaches another. The
inputs of a firm that share a lead time form a \emph{vintage layer} of
its bundle,
\begin{equation}
\mathcal S_i^{(k)}:=\{\,j\in\mathcal S_i:\;k_{ji}=k\,\},
\qquad k=\underline k,\dots,\widehat k
\label{eq:vintage_layer}
\end{equation}
and the lead times at which the firm has at least one supplier are its
\emph{active} vintages,
\(\mathcal K_i:=\{k:\mathcal S_i^{(k)}\neq\varnothing\}\). Every firm
buys at least one input at the minimum lead time of the economy,
\begin{equation}
\underline k\in\mathcal K_i
\qquad
\text{for every } i\in N
\label{eq:zero_lag_share}
\end{equation}
so that production first reacts to a shock at date \(\underline k\).

The technology nests the inputs of a firm by vintage. Inputs of a
common vintage are aggregated by an inner CES with curvature \(\rho\)
and supplier weights \(\nu_{ji}>0\),
\begin{equation}
X_i^{(k)}
:=
\Bigl(\sum_{j\in\mathcal S_i^{(k)}}\nu_{ji}\,x_{ji}^{\rho}\Bigr)^{1/\rho},
\qquad k\in\mathcal K_i
\label{eq:within_vintage_bundle}
\end{equation}
and the active vintage composites are then aggregated by an outer CES
with curvature \(\rho_c\) and vintage weights \(\chi_i^{(k)}\),
\begin{equation}
X_i
=
\Bigl(\sum_{k\in\mathcal K_i}\chi_i^{(k)}\,\bigl(X_i^{(k)}\bigr)^{\rho_c}\Bigr)^{1/\rho_c},
\qquad
\chi_i^{(k)}>0\ \text{ for }k\in\mathcal K_i,\quad\sum_{k\in\mathcal K_i}\chi_i^{(k)}=1
\label{eq:materials_bundle}
\end{equation}
The supplier weights \(\nu_{ji}\) and the vintage weights
\(\chi_i^{(k)}\) are parameters of firm \(i\)'s
technology.\footnote{We set \(\chi_i^{(k)}:=0\) for an inactive
vintage, so that a sum over \(k=\underline k,\dots,\widehat k\) is a
sum over \(\mathcal K_i\).}

The inner curvature \(\rho\) governs substitution among
contemporaneously ordered, same-vintage inputs. The outer curvature
\(\rho_c\) governs substitution across vintages, between vintage
composites assembled under different past conditions. In the short
run the intermediate inputs of a firm are, on most of the evidence, complements
(\citealp{barrot2016input,atalay2017sectoral,boehm2019input,cevallosfujiy2024production};
Table~\ref{tab:elasticities}). Both curvatures are accordingly
negative. And inputs of different
vintages are poorer substitutes than inputs of the same vintage. That
is, a firm can more readily replace one contemporaneously available supplier with another than substitute a freshly delivered input for one
ordered under earlier conditions. We therefore restrict the two
curvatures to
\begin{equation}
-\overline\rho\;\le\;\rho_c\;\le\;\rho\;\le\;-\underline\rho,
\qquad
0<\underline\rho<\overline\rho<\infty
\label{eq:rho_domain}
\end{equation}
and call \(\rho-\rho_c\ge0\) the \emph{curvature gap}.\footnote{When
the two curvatures coincide, \(\rho_c=\rho\), the composite collapses
to the flat CES
\(X_i=(\sum_{j\in\mathcal S_i}\chi_i^{(k_{ji})}\nu_{ji}\,x_{ji}^{\rho})^{1/\rho}\),
in which the vintage partition survives only through the vintage
weights \(\chi_i^{(k_{ji})}\). We refer to \(\rho_c=\rho\) as the flat
case. The flat case removes the difference between the within-vintage
and the cross-vintage penalty on a mismatched bundle, though not the
dating of orders, which the lead-time profile carries.}

\subsubsection{Production network and supplier shares}
\label{subsubsec:network}

The supplier sets \(\{\mathcal S_i\}_{i\in N}\) define a directed
production network, in which a link runs from firm \(j\) to firm \(i\)
when firm \(i\) uses the good of firm \(j\) as an intermediate input,
\(j\in\mathcal S_i\). Given a price vector
\(\mathbf p=(p_j)_{j\in N}\), cost minimization decides how firm \(i\)
divides its spending on intermediate inputs among its suppliers, and we
write
\begin{equation}
a_{ji}
:=
\frac{p_jx_{ji}}{\sum_{u\in \mathcal S_i}p_ux_{ui}},
\qquad
a_{ji}\ge0,
\qquad
\sum_{j\in \mathcal S_i}a_{ji}=1
\label{eq:supplier_share_def}
\end{equation}
for the share paid to supplier \(j\). The technology nests the inputs
by vintage, so this division runs in two steps. Firm \(i\) divides its
input spending first across its vintage layers and then, within each
vintage layer, across the suppliers that deliver at that lead time. At the
stationary equilibrium
each supplier share is accordingly the product of the cross-vintage
share \(s_i^{(k)}\) of the vintage layer and the within-vintage share
\(\widetilde a_{ji}^{(k)}\) of the supplier,
\begin{equation}
a_{ji}
=
s_i^{(k_{ji})}\,\widetilde a_{ji}^{(k_{ji})},
\qquad
\widetilde a_{ji}^{(k)}
=
\frac{\nu_{ji}\,x_{ji}^{\rho}}{\bigl(X_i^{(k)}\bigr)^{\rho}},
\qquad
s_i^{(k)}
=
\frac{\chi_i^{(k)}\bigl(X_i^{(k)}\bigr)^{\rho_c}}{X_i^{\rho_c}}
\label{eq:share_quantity_form}
\end{equation}
The within-vintage shares sum to one inside each vintage layer, so the cross-vintage share \(s_i^{(k)}\) of the outer nest is also the share of firm \(i\)'s input spending that arrives with lead time \(k\).

The supplier shares form the \(n\times n\) supplier-share matrix
\[
\mathbf A:=(a_{ji})_{j,i\in N}
\]
with \(a_{ji}=0\) whenever \(j\notin\mathcal S_i\). The support of
\(\mathbf A\) is fixed by the supplier sets and its nonzero entries by
cost minimization at the stationary equilibrium. Each column of
\(\mathbf A\) records how one firm divides its spending on inputs, so
it sums to one.

\subsubsection{Household preferences}
\label{subsubsec:household}

The representative household supplies one unit of labor inelastically
at the nominal wage \(w\) and consumes the differentiated goods
produced by firms. Its preferences are Cobb--Douglas,
\begin{equation}
U(c_1,\dots,c_n)
=
\prod_{i\in N} c_i^{\,\gamma_i},
\qquad
\gamma_i \geq 0,
\qquad
\sum_{i\in N}\gamma_i=1
\label{eq:hh_pref}
\end{equation}
where \(c_i\) is its consumption of good \(i\). The wage is the
household's only income, so at any prices it spends the fixed share
\(\gamma_i\) of the wage on good \(i\). The household need not buy from
every firm, and
\(\operatorname{supp}\boldsymbol\gamma:=\{i\in N:\gamma_i>0\}\) is the
set of firms it buys from.

Every firm's output nonetheless reaches the household through some
finite chain of sales. That is, for every firm \(j\) there are firms
\(i_1,\dots,i_r\) with \(j\in\mathcal S_{i_1}\),
\(i_1\in\mathcal S_{i_2}\), \dots, and
\(i_r\in\operatorname{supp}\boldsymbol\gamma\). We maintain this
reachability of final demand throughout. It is the household's
spending, passed up these chains of buyers, that pays for every firm's
sales. Reachability therefore makes every firm's stationary sales
strictly positive.

\subsection{The stationary competitive equilibrium}
\label{subsec:equilibrium}

With the technology, the network and the household in place, we now
turn to the stationary equilibrium from which the monetary shock
departs and against which we measure its effects. A stationary
competitive equilibrium consists of goods prices
\(\mathbf p=(p_i)_{i\in N}\), outputs \(\mathbf q=(q_i)_{i\in N}\),
intermediate-input orders \(\mathbf x=(x_{ji})_{j,i\in N}\), labor
inputs \(\mathbf l=(l_i)_{i\in N}\) and a nominal wage \(w\), all
constant across dates. At these values firms minimize cost and price at
unit cost, the household maximizes utility, and the markets for goods
and labor clear.

\begin{proposition}[Existence and uniqueness of the stationary equilibrium]
\label{prop:existence_stationary}
Suppose that every firm has a non-empty supplier set, that the labor
shares satisfy \eqref{eq:beta_bounds} and the curvatures
\eqref{eq:rho_domain}, and that final demand is reachable from every
firm. Then a stationary competitive equilibrium exists and is unique up
to the choice of the nominal wage.
\end{proposition}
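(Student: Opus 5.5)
The plan is to reduce the equilibrium to a fixed-point problem in prices alone, solve it by a contraction argument, and then recover quantities from a linear system. In a stationary equilibrium all variables are constant across dates, so lead times play no role in prices or quantities. Each firm's technology \eqref{eq:top_cd}--\eqref{eq:materials_bundle} is constant returns to scale in $(l_i,(x_{ji})_j)$, so it has a well-defined unit cost function
\[
c_i(w,\mathbf p)=\kappa_i\,w^{\beta_i}\,P_i(\mathbf p)^{1-\beta_i}.
\]
Here $P_i$ is the nested CES price index of the intermediate composite. It is itself CES in the vintage indices $P_i^{(k)}(\mathbf p)$, with exponent $\rho_c/(\rho_c-1)$, and each $P_i^{(k)}$ is a CES index over $\{p_j\}_{j\in\mathcal S_i^{(k)}}$, with exponent $\rho/(\rho-1)$. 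Because $\rho,\rho_c<0$ by \eqref{eq:rho_domain}, these exponents lie in $(0,1)$ and all indices are finite and well defined. Zero profits require $p_i=c_i(w,\mathbf p)$.

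\textbf{Step 1 (prices).} Normalize $w=1$ and pass to logs, $\pi_i=\log p_i$. The map $T(\boldsymbol\pi)_i=\log\kappa_i+(1-\beta_i)\log P_i(e^{\boldsymbol\pi})$ has the following properties.
\begin{itemize}
\item It is monotone.
\item It satisfies $T(\boldsymbol\pi+\lambda\mathbf 1)=T(\boldsymbol\pi)+(1-\beta_i)\lambda$ componentwise, because each CES index is homogeneous of degree one.
\item Hence, by Blackwell-type reasoning, it is a contraction in the sup norm with modulus $1-\underline\beta<1$, using \eqref{eq:beta_bounds}.
\end{itemize}
Banach's theorem then gives a unique fixed point $\mathbf p^\ast$. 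Non-empty supplier sets guarantee that every $P_i$ is a genuine index and not degenerate. Rescaling $w$ scales $\mathbf p^\ast$ proportionally, because costs are homogeneous of degree one in $(w,\mathbf p)$. This gives uniqueness up to the nominal wage.

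\textbf{Step 2 (shares and quantities).} At $\mathbf p^\ast$, cost minimization by Shephard's lemma gives unique factor shares. Firm $i$ spends $\beta_i$ of its revenue on labor and $1-\beta_i$ on inputs, and it splits input spending according to the shares $a_{ji}$ of \eqref{eq:share_quantity_form}; these are strictly positive on $\mathcal S_i$ and the columns of $\mathbf A$ sum to one. Let $\mathbf R$ be nominal revenues. Goods market clearing reads
\[
\mathbf R=w\boldsymbol\gamma+\mathbf A\,\mathrm{diag}(1-\beta)\,\mathbf R.
\]
Let $\mathbf M=\mathbf A\,\mathrm{diag}(1-\beta)$. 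Its column sums are at most $1-\underline\beta<1$, so its spectral radius is below one, $\mathbf I-\mathbf M$ is invertible, and
\[
\mathbf R=w(\mathbf I-\mathbf M)^{-1}\boldsymbol\gamma=w\sum_m\mathbf M^m\boldsymbol\gamma\ge0.
\]
Reachability of final demand makes every entry strictly positive, since some power $\mathbf M^m$ carries mass from $\operatorname{supp}\boldsymbol\gamma$ to each $j$. Quantities then follow uniquely: $q_i=R_i/p_i$, $l_i=\beta_iR_i/w$, and $x_{ji}=a_{ji}(1-\beta_i)R_i/p_j$.

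\textbf{Step 3 (labor market).} Summing the column identities gives $\sum_i\beta_iR_i=w$, so total labor demand is $\sum_i l_i=1$ and the labor market clears, consistent with Walras' law. Household optimality holds by construction, since the household spends $\gamma_i w$ on good $i$. Uniqueness of real allocations follows from the uniqueness of $\mathbf p^\ast/w$ and of the linear solution for $\mathbf R/w$.

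The main obstacle is Step 1. The contraction property needs care with the nested CES under negative curvatures, and with suppliers that may be absent from a vintage; inactive vintages are handled by the convention $\chi_i^{(k)}=0$. I would first verify directly that $\log P_i$ is monotone and satisfies $\log P_i(e^{\boldsymbol\pi+\lambda\mathbf 1})=\log P_i(e^{\boldsymbol\pi})+\lambda$. Blackwell's sufficient conditions then apply without any further network assumption beyond \eqref{eq:beta_bounds}.
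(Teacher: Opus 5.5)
Your proposal is correct and follows essentially the paper's proof. It uses a log unit-cost map that contracts with modulus $1-\underline\beta$, Banach's theorem for the unique price vector at $w=1$, Shephard's lemma for the shares, the invertible sales system $(\mathbf I-\mathbf A\,\mathrm{diag}(\mathbf 1-\boldsymbol\beta))^{-1}\boldsymbol\gamma$ with reachability for strictly positive sales, and the column sums for labor-market clearing. The only variation is that you obtain the contraction from monotonicity plus the unit-shift property (Blackwell's conditions), whereas the paper bounds the Jacobian row sums by Shephard's lemma and applies the mean-value theorem; your route avoids a differentiability step but yields the same modulus.
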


\noindent
Proof in Appendix~\ref{app:proof_existence}.

The lead times enter the stationary equilibrium only through the
way the technology nests inputs by vintage. The dating of orders itself does not affect the stationary equilibrium. When every order is the same at every date, an input
ordered several periods ago is identical to one ordered today, and the
production identities \eqref{eq:top_cd}--\eqref{eq:materials_bundle}
hold at every date as if all inputs arrived at the same time. The stationary
equilibrium is also efficient among stationary allocations. This is
the First Welfare Theorem applied date by date. Markets clear
competitively, production has constant returns, and there are no
externalities, taxes or market power.

At the stationary equilibrium a firm's sales in one period are the balance it spends in the next period. The vector
\(\mathbf m^\ast=(m_i^\ast)_{i\in N}\) of stationary nominal sales,
\(m_i^\ast=p_i^\ast q_i^\ast\), is therefore also the vector of balances that firms
hold at the start of each period, and
\(\overline M:=\mathbf 1^\top\mathbf m^\ast\) is the money stock, the aggregate stock of
firm balances. A firm spends the fraction \(1-\beta_i\) of its revenue
on intermediate inputs and the fraction \(\beta_i\) on labor, so it
pays the fraction \((1-\beta_i)a_{ji}\) of its revenue to supplier
\(j\). We collect these firm-to-firm expenditure shares in
\begin{equation}
\mathbf A_\beta:=\mathbf A\,\mathrm{diag}(\mathbf 1-\boldsymbol\beta)
\label{eq:Ahat_def}
\end{equation}
where \(\boldsymbol\beta:=(\beta_i)_{i\in N}\). A firm's sales are what
the household spends on its good plus what its buyers spend on it,
\begin{equation}
m_i
=
\gamma_i w
+
\sum_{j\in N}(1-\beta_j)\,a_{ij}\,m_j,
\qquad
\mathbf m
=
w\boldsymbol\gamma
+
\mathbf A_\beta\mathbf m
\label{eq:firm_sales_accounting}
\end{equation}
Every unit of final spending is thus passed up the supply chain, each
firm passing on the fraction \(1-\beta_i\) of what it receives and
paying the rest to its workers. The Leontief inverse
\((\mathbf I-\mathbf A_\beta)^{-1}\) sums these successive rounds of spending.

\subsection{Monetary shock}
\label{subsec:money}

The only exogenous disturbance in the model is a one-time monetary
shock to firms' nominal balances. The economy is at the stationary
equilibrium and is hit at date \(t=0\) by a shock of total size
\(\pi\overline M\), the fraction \(\pi\) of the money stock.
The shock is a monetary expansion for \(\pi>0\) and a monetary contraction for
\(\pi<0\). We study
small shocks, \(|\pi|\le\overline\pi\), where the bound
\(\overline\pi>0\) is set by the constants of the
assumptions.\footnote{The constants of the assumptions are the fixed
constants in which the assumptions and the bounds
\eqref{eq:beta_bounds}, \eqref{eq:zero_lag_share},
\eqref{eq:rho_domain} and \eqref{eq:log_size_bound} are stated. A constant set by them takes the
same value in every economy that satisfies the assumptions, whatever
its number of firms. The bound \(\overline\pi\) keeps every balance
positive after the shock and keeps the economy in the neighborhood of
the stationary equilibrium on which the local approximations of the
paper apply. Every ``for \(|\pi|\) sufficiently small'' in the paper
means \(|\pi|\le\overline\pi\) unless a statement gives its own
radius. Appendix~\ref{app:proofs} lists the constants of the
assumptions and the conditions that define \(\overline\pi\).}

We measure the effects of the shock by how far it moves each variable
from its stationary value. For a variable \(z_{i,t}\) with
\(z_i^\ast>0\), the absolute and the proportional deviation are
\[
\Delta z_{i,t}\;:=\;z_{i,t}-z_i^\ast,
\qquad
\widehat z_{i,t}\;:=\;\frac{\Delta z_{i,t}}{z_i^\ast}
\]
A hat on a variable marks its proportional form, and a variable without a hat is in its absolute form. A
variable also depends on the size \(\pi\) of the shock. Its first-order
response per unit of shock carries a dot, and its curvature response
carries two dots,
\[
\dot z_{i,t}\;:=\;\left.\frac{\partial z_{i,t}(\pi)}{\partial\pi}\right|_{\pi=0},
\qquad
\ddot z_{i,t}\;:=\;\left.\frac{\partial^2 z_{i,t}(\pi)}{\partial\pi^2}\right|_{\pi=0}
\]
and both inherit the units of \(z\).\footnote{The stationary value does
not depend on \(\pi\), so dots and hats commute. We write the
composed objects as \(\dot{\widehat z}_{i,t}\) and
\(\ddot{\widehat z}_{i,t}\). Quantities introduced as proportional
deviations from the outset carry the hat in their names. All
conventions extend componentwise to vector- and matrix-valued objects.
An order symbol refers to the limit of a vanishing shock,
\(\pi\to0\), with the economy held fixed, and carries no uniformity
over dates, firms or the number of firms unless that is stated.}

The shock does not land on firms in proportion to their balances.
Smaller firms are more exposed to monetary shocks. We capture
this exposure by letting firm \(i\) receive the share \(\zeta_i\) of
the shock,
\(\Delta m_{i,0}=\pi\overline M\zeta_i\), where
\begin{equation}
\zeta_i
:=
\frac{(m_i^\ast)^{\theta}}{\sum_{j\in N}(m_j^\ast)^{\theta}},
\qquad
\theta\in(0,1)
\label{eq:zeta_def}
\end{equation}
The map \(x\mapsto x^\theta\) is concave for \(\theta\in(0,1)\), so the
incidence vector \(\boldsymbol\zeta=(\zeta_i)_{i\in N}\) is less
concentrated than the incidence vector of the proportional benchmark, obtained at \(\theta=1\).
Smaller firms therefore receive a larger transfer relative to their
stationary balances than under proportional incidence, which is to say
a relatively larger injection when \(\pi>0\) and a relatively larger
withdrawal when \(\pi<0\). Since \(\boldsymbol\zeta\) does not depend
on \(\pi\), the pattern of incidence across firms is the same for
shocks of every size. We characterize how unevenly the shock lands
with Assumption~\ref{assump:diffuse_incidence}.

\begin{assumption}[Diffuse incidence]
\label{assump:diffuse_incidence}
The shock lands unevenly, more heavily on smaller firms relative
to their balances, and by a bounded degree. That is, the concavity parameter
\(\theta\in(0,1)\) of the incidence rule \eqref{eq:zeta_def} satisfies
\begin{equation}
1-\theta
\;\le\;
\overline\epsilon_\theta
\;\le\;
\tfrac12
\label{eq:diffuse_incidence_bound}
\end{equation}
for the fixed constant \(\overline\epsilon_\theta\).
\end{assumption}

At the proportional benchmark \(\theta=1\) the shock rescales every
balance in the same proportion,
\(\overline M\boldsymbol\zeta(1)=\mathbf m^\ast\), and money is
neutral. Money is non-neutral in the model because it lands unevenly,
not because its quantity changes.\footnote{The cap
\(\overline\epsilon_\theta\) in \eqref{eq:diffuse_incidence_bound}
holds \(1-\theta\) away from \(1\). Together with small \(|\pi|\), it
keeps the joint parameter \(|\pi|(1-\theta)\) small enough for the
marginal analysis.}

An uneven shock misaligns balances. It leaves some firms with more than
their stationary share of the money stock and others with less. How unevenly the shock misaligns balances depends on the slack \(1-\theta\) and on how
dispersed the sizes of firms are. Write
\(\ell_i:=\log m_i^\ast-\sum_{j\in N}(m_j^\ast/\overline M)\log m_j^\ast\)
for firm \(i\)'s centered log size, and take the range of firm sizes to
be bounded by a fixed constant,
\begin{equation}
\max_{i\in N}|\ell_i|\;\le\;L_\ell
\label{eq:log_size_bound}
\end{equation}
The incidence ratio of firm \(i\), the share of the shock it receives
relative to its share of the money stock, is
\(\overline M\zeta_i/m_i^\ast=e^{-(1-\theta)\ell_i}/Z\), with \(Z:=\sum_{j\in N}(m_j^\ast/\overline M)\,e^{-(1-\theta)\ell_j}\). The incidence ratio falls with the size of the firm. The misalignment that the shock
creates at firm \(i\), per unit of the shock and relative to the
firm's balance, is the incidence ratio less one,
\begin{equation}
\iota_i
:=
\frac{\overline M\zeta_i}{m_i^\ast}-1
\label{eq:incidence_ratio}
\end{equation}
The misalignment \(\iota_i\) is positive at the firms whose share of the shock exceeds their share of the money stock, and negative at the others. To leading order it is the slack times the
firm's centered log size, with the sign reversed,
\begin{equation}
\iota_i
\;=\;
-(1-\theta)\,\ell_i+O\bigl((1-\theta)^2\bigr)
\label{eq:incidence_slack_expansion}
\end{equation}
uniformly over firms.\footnote{Every ``to leading order in
\(1-\theta\)'' in the paper refers to the expansion
\eqref{eq:incidence_slack_expansion}, whose order symbol refers to the
limit \(\theta\to1\) around the proportional benchmark, with the
economy held fixed.} We call
\begin{equation}
C_\zeta:=\frac{\max_{i\in N}|\iota_i|}{1-\theta},
\qquad\text{so that}\qquad
|\iota_i|\;\le\;C_\zeta\,(1-\theta)
\quad\text{for every }i\in N
\label{eq:impact_misalignment_constant}
\end{equation}
the \emph{incidence range} of the economy, per unit of slack. The incidence range measures how unevenly the shock lands, not how large the shock is.\footnote{The log-size bound \eqref{eq:log_size_bound} imposes no particular shape on the size distribution. Every finite economy
satisfies it at the constant \(L_\ell\) read off its own sizes,
however heavy-tailed. A wider range of firm sizes raises the incidence
range \(C_\zeta\), and with it the constants of the results.}

\subsection{Decentralized adjustment after a monetary shock}
\label{subsec:stability}
\label{subsec:temporal_depth_formal}

Once the shock has misaligned balances, no one coordinates their
return to the stationary distribution. Every firm spends the balance
it holds, and every market clears at the price that equates the
spending on a good to the output available. What a firm receives in one period is the balance it spends in the next period. Over the horizon we study, firms
do not revise their supplier relationships, so the supplier-share
matrix \(\mathbf A\) stays fixed. We characterize the labor market
along the adjustment with
Assumption~\ref{assump:short_run_nominal_rigidity}.

\begin{assumption}[Labor-market protocol]
\label{assump:short_run_nominal_rigidity}
For every firm \(i\in N\) and every date \(t\ge 0\),
\textnormal{(a)} the labor input is held at its stationary value,
\(l_{i,t}=l_i^\ast\), and \textnormal{(b)} the nominal wage is indexed to the money stock,
\begin{equation}
\wM \;=\; \frac{w^\ast}{\overline M}\,\mathbf 1^\top\mathbf m_t
\label{eq:nominal_wage_indexation}
\end{equation}
\end{assumption}

The shock at \(t=0\) is the only change to the money stock. The money stock rises once, from \(\overline M\) to \((1+\pi)\overline M\), and
is then conserved as balances circulate among firms. The indexed wage
is therefore \(\wM=(1+\pi)w^\ast\), constant along the post-shock
path. The two parts of
Assumption~\ref{assump:short_run_nominal_rigidity} together fix each
firm's wage bill after the shock at
\(\wM l_i^\ast=(1+\pi)\beta_im_i^\ast\). The whole of the adjustment of firms to the shock therefore falls on their intermediate inputs, the inputs that carry the vintage structure.

Each period unfolds in the same sequence, which
Appendix~\ref{app:within_period} records in full. Firms and the
household first express nominal demand. At date \(t\) firm \(i\) holds
the balance \(m_{i,t}\), hires its stationary labor \(l_i^\ast\) at the
indexed wage \(\wM\), and pays for its intermediate purchases out of
the balance that remains, a cash-in-advance constraint. Since the wage
bill is fixed, every change in a firm's balance falls on its
intermediate-input expenditure \(e_{i,t}:=m_{i,t}-\wM l_i^\ast\). The firm
divides that expenditure among its suppliers at the stationary shares,
so its real order on supplier \(j\) is
\(x_{ji,t}=a_{ji}\bigl(m_{i,t}-\wM l_i^\ast\bigr)/p_{j,t}\), equation
\eqref{eq:dynamic_real_alloc}. The household spends its wage on final
goods according to \(\boldsymbol\gamma\), and the nominal demand
\(d_{j,t}\) for good \(j\) is what its buyers and the household spend
on it, equation \eqref{eq:dyn_nominal_demand}. Firms then set prices to
clear goods markets given the output available at date \(t\),
\(p_{j,t}=d_{j,t}/q_{j,t}\), equation \eqref{eq:dyn_price_update}.
Production takes place last, through the technology
\eqref{eq:top_cd}--\eqref{eq:materials_bundle} with the stationary
labor and with each input evaluated at its order date \(t-k_{ji}\),
equation \eqref{eq:dynamic_output}.

These within-period actions imply the law of motion for nominal
balances. Since firms receive the nominal demand directed toward their
output as revenue, next period's balances equal current nominal sales,
\begin{equation}
\mathbf m_{t+1}
\;=\;
\mathbf A\bigl(\mathbf m_t-\wM\mathbf l^\ast\bigr)+\boldsymbol\gamma \wM
\label{eq:firm_balance_law_pre}
\end{equation}
With no further shocks after \(t=0\), the money stock is
conserved at the post-shock level,
\(\mathbf 1^\top\mathbf m_{t+1}=\mathbf 1^\top\mathbf m_t=(1+\pi)\overline M\)
for \(t\ge0\).

What moves in \eqref{eq:firm_balance_law_pre} is the distribution of
money across firms. The wage bill and the household's spending have
both risen in proportion to the money stock. A firm that holds more
than its stationary share of the money stock therefore spends the whole
of its excess on intermediate inputs, and passes it on to its suppliers
in proportion to its supplier shares. A firm that holds less passes on
its shortfall in the same way. Which is why a misalignment of balances is passed on through \(\mathbf A\) without loss, while a unit of final spending
loses the labor share at every stage of \(\mathbf A_\beta\). Per unit
of the shock, the part of the disturbance that moves balances from some
firms to others is the \emph{cross-sectional balance misalignment}
\begin{equation}
\dot{\mathbf m}_t^\perp
\;:=\;
\dot{\mathbf m}_t-\mathbf m^\ast
\label{eq:m_perp_def}
\end{equation}
The misalignment follows the exact law
\(\dot{\mathbf m}^\perp_{t+1}=\mathbf A\dot{\mathbf m}^\perp_t\). At
impact it equals the impact misalignment
\(\dot{\mathbf m}_0^\perp=\overline M\boldsymbol\zeta-\mathbf m^\ast\),
the difference between the distribution of the shock across firms and the stationary distribution of balances.

We now derive the speed at which a misalignment of balances dies out. What some
firms hold in excess others lack, so a misalignment \(\mathbf v\) always sums to zero. The columns of
\(\mathbf A\) sum to one, and a round of purchasing therefore
preserves the sum,
\(\mathbf 1^\top\mathbf A\mathbf v=\mathbf 1^\top\mathbf v=0\). The
matrix \(\mathbf A\) thus maps the zero-sum vectors of the
conservation subspace
\(\mathcal Z_{\mathbf A}:=\{\mathbf v:\mathbf 1^\top\mathbf v=0\}\)
into itself. The eigenvalue one of \(\mathbf A\) records only that the money stock, the total of balances, is conserved. The speed at which a misalignment dies out is set by the
eigenvalues of \(\mathbf A\) on \(\mathcal Z_{\mathbf A}\). We write
\(\lambda_2(\mathbf A)\) for the largest modulus among them, the
spectral radius of \(\mathbf A\) on \(\mathcal Z_{\mathbf A}\), and
call it the \emph{nominal redistribution rate}. After \(t\) rounds of
purchasing a misalignment has shrunk by a factor of the order of
\(\lambda_2(\mathbf A)^t\).

A given sum of money is a large misalignment for a small firm and a
small one for a large firm. We therefore measure a misalignment firm by
firm, relative to the firm's stationary balance. With
\(\mathbf D_m:=\mathrm{diag}(\mathbf m^\ast)\) the diagonal matrix of
stationary balances,
\[
\|\mathbf v\|_{m}\;:=\;\max_{i\in N}\frac{|v_i|}{m_i^\ast}\;=\;\|\mathbf D_m^{-1}\mathbf v\|_\infty
\]
is the largest proportional misalignment at any firm. We characterize
the speed at which money is redistributed with
Assumption~\ref{assump:monotone_decay_spectrum}.

\begin{assumption}[Nominal redistribution rate]
\label{assump:monotone_decay_spectrum}
For the fixed constants \(\overline\lambda\in(0,1)\) and
\(C_\lambda\ge1\), \textnormal{(a)} the nominal redistribution rate is
bounded away from one,
\[
\lambda_2(\mathbf A)\;\le\;\overline\lambda
\]
and \textnormal{(b)} after \(t\) rounds of purchasing a misalignment is
at most \(C_\lambda\lambda_2(\mathbf A)^t\) times its initial size,
\begin{equation}
\|\mathbf A^t\mathbf v\|_{m}\;\le\;C_\lambda\,\lambda_2(\mathbf A)^t\,\|\mathbf v\|_{m}
\qquad\text{for every }\mathbf v\in\mathcal Z_{\mathbf A}\text{ and every }t\ge0
\label{eq:power_bound}
\end{equation}
\end{assumption}

Part~(a) is natural in a supply chain whose firms are arranged in tiers, each buying from the tiers above it. There the slowest
pattern of misalignment relaxes steadily, without rotating or
alternating in sign. The content of part~(b) is that the
multiple \(C_\lambda\) does not depend on the number of firms. In such a supply chain a misalignment placed at the retail end travels upstream one tier per
period without decaying until it reaches the top of the supply chain. The
constant \(C_\lambda\) must therefore allow for as many rounds without decay as the supply chain has tiers.\footnote{The assumption is imposed on the supplier-share
matrix \(\mathbf A\) of \eqref{eq:share_quantity_form}, which is
column-stochastic with the support of the supplier sets whatever the
two curvatures. Part~(b) holds for a single economy with some constant
whenever the eigenvalues of largest modulus on
\(\mathcal Z_{\mathbf A}\) are semisimple. Part~(a) also makes the
eigenvalue one of \(\mathbf A\) simple and every other eigenvalue
smaller than one in modulus. The matrix \(\mathbf A\) then has
exactly one closed sourcing class, a set of firms that buy only from
one another, and that class is aperiodic. In such a supply chain it is the top of the supply chain, the producers of raw inputs who buy from one another, and a misalignment of balances passed up the supply chain ends up circulating among them.}

Balances are not the only link between one date and the next. Orders placed at different dates are in transit at the same time. Let \(x_{ji,s}\) be the
quantity that buyer \(i\) orders from supplier \(j\) at date \(s\). The
order becomes usable at date \(s+k_{ji}\), and at every date \(t\) in
between, \(s<t\le s+k_{ji}\), it is outstanding, with \(s+k_{ji}-t\)
periods of lead time remaining. The lead-time profile
\(\mathbf k=(k_{ji})_{j\in\mathcal S_i,\,i\in N}\) thereby turns the
sequence of order placements into a pipeline of outstanding orders,
which we record at the start of each period, before that period's
orders are placed, sorted by the lead time remaining.

\begin{definition}[Dated-input pipeline]
\label{def:pipeline}
For each date \(t\) and each \(k=0,1,\dots,\widehat k-1\), the pipeline
layer at remaining lead time \(k\) is the set of orders placed before date
\(t\) that mature \(k\) periods after it,
\begin{equation}
\boldsymbol{\mathcal X}_t^{(k)}
:=
\bigl\{x_{ji,t-k_{ji}+k}:\;j,i\in N,\;k_{ji}\ge k+1\bigr\}
\label{eq:pipeline_layer}
\end{equation}
The pipeline state at date \(t\) is the tuple of pipeline layers
\(\boldsymbol{\mathcal X}_t:=(\boldsymbol{\mathcal X}_t^{(0)},\dots,\boldsymbol{\mathcal X}_t^{(\widehat k-1)})\).
\end{definition}

The bottom pipeline layer, \(k=0\), holds the orders that mature at
date \(t\). With the zero-lead-time orders placed at date \(t\) itself, which
are used within the period and are never outstanding, it forms the
dated input bundle \(\{x_{ji,t-k_{ji}}:j,i\in N\}\) from which firms
produce at date \(t\). The higher pipeline layers hold the orders that have not yet matured. Before the shock every order is the stationary one, so each
pipeline layer holds stationary orders,
\begin{equation}
\boldsymbol{\mathcal X}^{(k)\ast}
=
\bigl\{x_{ji}^\ast:j,i\in N,\,k_{ji}\ge k+1\bigr\},
\qquad k=0,1,\dots,\widehat k-1
\label{eq:stationary_pipeline}
\end{equation}
Dated production thus adds a second state variable to the balances,
the time profile of the outstanding orders. The post-shock state at
date \(t\) is the pair
\(\mathbf S_t:=(\mathbf m_t,\boldsymbol{\mathcal X}_t)\), current
balances and the pipeline, taken at the start of period \(t\), before
the within-period sequence places that period's orders.\footnote{The
orders placed at date \(t\), including those with zero lead time, which are
used within the period, follow from the state through the
within-period sequence (Lemma~\ref{lem:propagation_linearization}).}
At impact, after the date-\(0\) shock and before any post-shock
production, the balances are
\(\mathbf m_0=\mathbf m^\ast+\pi\overline M\boldsymbol\zeta\) and the
pipeline is at its stationary value \(\boldsymbol{\mathcal X}^\ast\),
filled by orders placed at dates \(-k_{ji},\dots,-1\).

The question is whether the economy as a whole, and not only the
distribution of money, returns to a stationary equilibrium. Balances
return to their stationary distribution at the nominal redistribution
rate. But the adjustment of quantities has a feedback of its own. An input that arrives out of its stationary proportion to the firm's other inputs changes the firm's output and its clearing price. These in turn change the quantities that its customers order at that date, orders that mature only after their own lead times. Each round of this real feedback is damped, however. A firm's
labor is fixed, so its output moves by only the fraction \(1-\beta_i\)
of a proportional change in all its intermediate inputs. And each round takes at most \(\widehat k\) periods, the longest lead time. The real disturbance therefore dies out as the misalignment of balances does, though more slowly the longer the longest lead time. Dated inputs can thus prolong the real adjustment
beyond the horizon over which money is redistributed. Write
\(\widehat{\mathbf S}_t(\pi)\) for the state in proportional deviations
from the post-shock stationary values, with balance coordinates
\((m_{i,t}-(1+\pi)m_i^\ast)/m_i^\ast\) and pipeline coordinates
\((x_{ji,s}-x_{ji}^\ast)/x_{ji}^\ast\) over the outstanding orders.

\begin{proposition}[Local convergence after a monetary shock]
\label{prop:stability_stationary}
Suppose that Assumptions~\ref{assump:diffuse_incidence}
and~\ref{assump:monotone_decay_spectrum} hold. Then for \(|\pi|\le\overline\pi\) every intermediate-input expenditure stays
strictly positive, and the post-shock trajectory
\(\{\mathbf S_t(\pi)\}_{t\ge 0}\) exists at every date. And there are a
contraction rate \(\kappa\in(0,1)\) and a constant
\(C_S\), both set by the constants of the assumptions, such that for
\(|\pi|\le\overline\pi\)
\begin{equation}
\bigl\|\widehat{\mathbf S}_t(\pi)\bigr\|_\infty
\;\le\;
C_S\,\kappa^{\,t}\,|\pi|\,(1-\theta),
\qquad t\ge 0
\label{eq:post_shock_geometric_decay}
\end{equation}
with the same bound for outputs, prices and the orders placed at date
\(t\).
\end{proposition}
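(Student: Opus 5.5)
The proof runs in two layers: an exact, linear bound on the nominal side, and a perturbative contraction argument on the real side.

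\textbf{Nominal side.} Write \(\mathbf m_t-(1+\pi)\mathbf m^\ast=\pi\,\dot{\mathbf m}^\perp_t\). This is exact, because the law of motion \eqref{eq:firm_balance_law_pre} is affine in \(\mathbf m_t\) with \(\wM=(1+\pi)w^\ast\). It follows that \(\mathbf m_t-(1+\pi)\mathbf m^\ast=\pi\mathbf A^t(\overline M\boldsymbol\zeta-\mathbf m^\ast)\). The impact misalignment lies in \(\mathcal Z_{\mathbf A}\), and its \(\|\cdot\|_m\)-norm equals \(\max_i|\iota_i|\le C_\zeta(1-\theta)\) by \eqref{eq:impact_misalignment_constant}. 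Assumption~\ref{assump:monotone_decay_spectrum}(b) then gives
\[
\|\mathbf D_m^{-1}(\mathbf m_t-(1+\pi)\mathbf m^\ast)\|_\infty\le C_\lambda C_\zeta|\pi|(1-\theta)\overline\lambda^{\,t}.
\]
Since \(e_{i,t}=m_{i,t}-(1+\pi)\beta_im_i^\ast\), we have \(e_{i,t}/((1+\pi)(1-\beta_i)m_i^\ast)=1+O(|\pi|(1-\theta)/(1-\overline\beta))\). Choosing \(\overline\pi\) small against \(\overline\beta,C_\lambda,C_\zeta,\overline\epsilon_\theta\) keeps every expenditure positive. Because firms spend at fixed shares, the nominal demand \(d_{j,t}\) deviates proportionally by the same order. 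This settles positivity and existence for the nominal part.

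\textbf{Real side.} This is a recursion on the pipeline driven by the nominal forcing. I take logs: write \(\widehat q\), \(\widehat p\), \(\widehat x\) for log-deviations from the post-shock stationary values, in which prices scale by \(1+\pi\) and quantities are unchanged.
\begin{itemize}
\item The clearing condition gives \(\widehat p_{j,t}=\widehat d_{j,t}-\widehat q_{j,t}\).
\item Orders satisfy \(\widehat x_{ji,t}=\widehat e_{i,t}-\widehat p_{j,t}\).
\item Output \(\widehat q_{i,t}=\Phi_i(\{\widehat x_{ji,t-k_{ji}}\})\) is a smooth function. By the nested CES with \(\rho_c,\rho\) in the compact range \eqref{eq:rho_domain}, its gradient is nonnegative with total weight \(1-\beta_i\le1-\underline\beta\). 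Its second derivatives are uniformly bounded on a fixed neighborhood.
\end{itemize}
The dependence is causal. Date-\(t\) output uses orders from dates \(t-k_{ji}\), and the zero-lead-time orders from date \(t\) itself enter only through \(\underline k\). If \(\underline k=0\), the system for \((\widehat q_t,\widehat p_t,\widehat x_t^{(0)})\) is an implicit within-period fixed point. The map \(\widehat q\mapsto\Phi(\widehat d-\widehat e\ldots)\) has Lipschitz constant at most \(1-\underline\beta<1\) in sup norm, so the implicit function theorem (or Banach's fixed-point theorem) solves it uniquely and with bounded constants. This is also the content I would expect of the paper's Lemma~\ref{lem:propagation_linearization}.

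Collecting the pieces, the sup-norm \(X_t\) of the pipeline and order deviations satisfies
\[
X_t\le(1-\underline\beta)\max_{t-\widehat k\le s<t}X_s+C_1\,|\pi|(1-\theta)\,\overline\lambda^{\,t}+C_2X^2.
\]
Here the quadratic remainder is absorbed by shrinking \(\overline\pi\) through a bootstrap on the region \(X\le\delta\).

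\textbf{Main obstacle.} The hardest step is turning this delayed inequality into geometric decay with a rate that is uniform in \(n\). I would first pick \(\kappa\) with \(\max\{\overline\lambda,(1-\underline\beta/2)^{1/\widehat k}\}<\kappa<1\). I would then prove by induction the weighted bound \(\sup_s\kappa^{-s}X_s\le C_S|\pi|(1-\theta)\). This works because \((1-\underline\beta)\kappa^{-\widehat k}\le(1-\underline\beta)/(1-\underline\beta/2)<1\), which leaves room for the forcing term and the quadratic term.

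The delicate point is that the real feedback must be controlled in sup norm entrywise. Otherwise, in a long chain of tiers, amplification could compound. I would handle this by always using the gradient-sum bound \(1-\beta_i\) together with the fact that all constants (\(C_\lambda\), \(\widehat k\), \(\underline\beta\), the curvature bounds) are \(n\)-independent. The bounds on outputs, prices and current orders then follow from the same identities, since each is a bounded linear combination of the state plus forcing.
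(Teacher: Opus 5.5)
Your proposal follows the paper's proof essentially step for step. The shared steps are the exact nominal recursion $\mathbf m_t-(1+\pi)\mathbf m^\ast=\pi\mathbf A^t(\overline M\boldsymbol\zeta-\mathbf m^\ast)$ with the power bound, the exact log identity for orders (the supplier's log output plus a purely nominal contrast), the $(1-\underline\beta)$ sup-norm Lipschitz bound on log output with a Banach fixed point for the zero-lead-time block, and a weighted induction with $(1-\underline\beta)\kappa^{-\widehat k}<1$. The one difference is that the paper has no quadratic remainder or bootstrap. The nested-CES elasticities are nonnegative and sum to $1-\beta_i$ at every bundle, not only at the stationary one, so the mean-value theorem gives a global Lipschitz bound, and neither a Hessian estimate nor an extra shrinking of $\overline\pi$ is needed.
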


\noindent
Proof in Appendix~\ref{app:proof_stability}.

The simple reading of Proposition~\ref{prop:stability_stationary} is
that money is neutral in the long run. After the shock the economy
converges to a new stationary equilibrium, in which every price and
every balance has changed in proportion to the money stock and every real quantity returns to its stationary value. Money matters only along the transition from the old stationary equilibrium to the new one.


\section{Miscoordination within Firms}
\label{sec:inefficiency}

In the long run a monetary shock leaves every real quantity unchanged. But monetary shocks are
capable of distorting the input combinations, and consequently
production, in the process of transition from the old stationary equilibrium to the new one. We now derive the form of the loss of output due to this
distortion. At every date a firm produces from a bundle of inputs ordered at
different dates, before the shock or at different points of the
transition. And the shock has reached the firm's suppliers by different
routes. The inputs of the bundle therefore move by different proportions from their stationary levels. The technology absorbs this mismatch by substituting among
the inputs. But the inputs are poor substitutes for one another. The
firm therefore produces less than it would have produced had every
input moved by the average proportional change of the bundle. The
output lost in this way is the loss from production inefficiency, one of the two
parts of the deadweight loss of a monetary shock. Miscoordination within firms
is the output that the firms together lose to the mismatch of the
intermediate inputs within their bundles.

A firm's output responds to its bundle in two ways, to the scale of the
bundle and to its composition. The scale is how much the inputs have risen or fallen on average from their stationary levels. The composition is how far they have moved out of proportion with one another when compared to the proportions in which they are used at the stationary equilibrium. The two responses separate in a
second-order expansion of log output around the stationary equilibrium,
written with the supplier-averaging operator \(\mathcal F_i\), which we
now define.

\begin{definition}[Supplier averaging]
\label{def:supplier_operator}
\label{def:buyer_specific_lag_exposure}
For each firm \(i\in N\), the supplier-averaging operator
\(\mathcal F_i\) sends a supplier-indexed quantity
\(y=(y_j)_{j\in\mathcal S_i}\) to its stationary-share-weighted average over
firm \(i\)'s suppliers,
\begin{equation}
\mathcal F_i(y):=\sum_{j\in\mathcal S_i}a_{ji}\,y_{j}
\label{eq:bundle_scale_def}
\end{equation}
\end{definition}

The dated input bundle of firm \(i\) at date \(t\),
\(\mathbf x_{i,t}=(x_{ji,t})_{j\in\mathcal S_i}\), holds the inherited
orders of the suppliers with \(k_{ji}\ge1\) and the current orders of
those with \(k_{ji}=0\). The proportional change of input \(j\) is
\(\widehat x_{ji,t}=(x_{ji,t}-x_{ji}^\ast)/x_{ji}^\ast\). The deviation
of the bundle from the stationary bundle,
\(\Delta\mathbf x_{i,t}:=\mathbf x_{i,t}-\mathbf x_i^\ast\), is the
stationary bundle scaled by the average proportional change
\(\mathcal F_i(\widehat x_{i,t})\), plus a composition residual
\(\mathbf z_{i,t}\). The residual is orthogonal to the gradient
\(\nabla_i^\ast:=D_{\mathbf x}\log q_i(\mathbf x_i^\ast,l_i^\ast)\) of
log output at the stationary bundle,
\begin{equation}
\Delta\mathbf x_{i,t}=\mathcal F_i(\widehat x_{i,t})\,\mathbf x_i^\ast+\mathbf z_{i,t},
\qquad \nabla_i^\ast\!\cdot\mathbf z_{i,t}=0
\label{eq:scale_composition_raw}
\end{equation}
The second-order expansion of \(\log q_{i,t}\) around
\(\mathbf x_i^\ast\) is then
\begin{equation}
\begin{aligned}
\Delta\log q_{i,t}
={}&
(1-\beta_i)\,\mathcal F_i(\widehat x_{i,t})
-\tfrac12(1-\beta_i)\{\mathcal F_i(\widehat x_{i,t})\}^2\\
&-\tfrac12(1-\beta_i)
\Bigl[
(1-\rho)\,V_i^{\mathrm w}(\widehat x_{i,t})
+(1-\rho_c)\,V_i^{\mathrm b}(\widehat x_{i,t})
\Bigr]
+
o(\|\Delta\mathbf x_{i,t}\|^2)
\end{aligned}
\label{eq:log_q_scale_composition}
\end{equation}
(Lemma~\ref{lem:production_linearization}). Here
\(\mathcal F_i^{(k)}(y):=\sum_{j\in\mathcal S_i^{(k)}}\widetilde a_{ji}^{(k)}y_j\)
is the average of a supplier-indexed quantity \(y\) over vintage layer
\(k\) under the within-vintage shares. The within-vintage variance and
the between-vintage variance of \(y\) are
\begin{equation}
\begin{aligned}
V_i^{\mathrm w}(y)&:=\sum_{k}s_i^{(k)}\sum_{j\in\mathcal S_i^{(k)}}\widetilde a_{ji}^{(k)}\bigl(y_j-\mathcal F_i^{(k)}(y)\bigr)^2,\\
V_i^{\mathrm b}(y)&:=\sum_{k}s_i^{(k)}\bigl(\mathcal F_i^{(k)}(y)-\mathcal F_i(y)\bigr)^2
\end{aligned}
\label{eq:vintage_variances}
\end{equation}

The expansion \eqref{eq:log_q_scale_composition} separates the response
of output to the scale of the dated input bundle from its response to the
composition of the bundle. Its first two terms form the second-order
expansion of \((1-\beta_i)\log\bigl(1+\mathcal F_i(\widehat x_{i,t})\bigr)\).
They depend on the bundle only through its scale, and they record how
much the output of firm \(i\) moves because its dated input bundle rises or falls on average from its stationary level. The third term depends on the bundle only through the
dispersion of the proportional input changes within the bundle, split
by vintage as in \eqref{eq:vintage_variances}. It records how much
output is lost because the bundle is internally misaligned, with more of some dated inputs and less of others than the stationary proportions call for. Holding the average proportional change of the bundle fixed, a wider dispersion of the proportional changes of its inputs raises the loss in the third term and leaves the first two terms unchanged. The output lost through the third
term is the loss from production inefficiency.

\begin{definition}[Loss from production inefficiency]
\label{def:production_inefficiency}
The loss from production inefficiency of firm \(i\) at date \(t\) is the output
lost through the composition term of \eqref{eq:log_q_scale_composition},
\begin{equation}
\mathcal D_{i,t}(\pi)
:=
\tfrac12(1-\beta_i)
\Bigl[
(1-\rho)\,V_i^{\mathrm w}(\widehat x_{i,t})
+(1-\rho_c)\,V_i^{\mathrm b}(\widehat x_{i,t})
\Bigr]
=\tfrac12(1-\beta_i)\bigl\langle\widehat x_{i,t},\widehat x_{i,t}\bigr\rangle_i
\label{eq:firm_inefficiency_def}
\end{equation}
where
\(\langle y,y\rangle_i:=(1-\rho)\,V_i^{\mathrm w}(y)+(1-\rho_c)\,V_i^{\mathrm b}(y)\)
is the curvature-weighted dispersion of a supplier-indexed quantity
\(y\). A weighting is a vector \(\boldsymbol\omega=(\omega_i)_{i\in N}\) with
\(\omega_i\ge0\) and \(\sum_{i\in N}\omega_i=1\), and the firms with \(\omega_i>0\)
form its support. The loss from production inefficiency at
date \(t\) under the weighting \(\boldsymbol\omega\) is
\begin{equation}
\mathcal D_t^{(\boldsymbol\omega)}(\pi)
:=
\sum_{i\in N} \omega_i\,\mathcal D_{i,t}(\pi)
\label{eq:weighted_inefficiency_def}
\end{equation}
\end{definition}

The two variance components of the loss from production inefficiency
\eqref{eq:firm_inefficiency_def}, \(V_i^{\mathrm w}\) and
\(V_i^{\mathrm b}\), carry different weights, \(1-\rho\) and
\(1-\rho_c\), because the technology combines them in different nests.
When the inputs of one vintage arrive in proportions that differ from their stationary proportions, the technology absorbs the mismatch by substituting
among them, and the poorer the substitutes, the more output is lost.
When the vintage composites themselves have moved by different proportions from their stationary levels, the substitution has to run across vintages, and since \(\rho_c\le\rho\) this substitution is harder than the one within a vintage. 

\begin{lemma}[Sign of the loss from production inefficiency]
\label{lem:sign_production_inefficiency}
At every firm \(i\), date \(t\) and shock \(\pi\),
\(\mathcal D_{i,t}(\pi)\ge0\), with equality exactly when the dated
inputs of firm \(i\) all move in the same proportion. Hence
\(\mathcal D_t^{(\boldsymbol\omega)}(\pi)\ge0\) under every weighting
\(\boldsymbol\omega\) of the firms, with equality exactly when, at
every firm in the support of \(\boldsymbol\omega\), the dated inputs all move in the same proportion.
\end{lemma}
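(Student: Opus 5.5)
The plan is to read the claim directly off the definition \eqref{eq:firm_inefficiency_def}. It is a statement about a weighted sum of squares with strictly positive weights times strictly positive coefficients. First I would record the signs of the coefficients. By \eqref{eq:beta_bounds}, \(1-\beta_i\ge 1-\overline\beta>0\). By \eqref{eq:rho_domain}, \(\rho\le-\underline\rho<0\) and \(\rho_c\le\rho\), so \(1-\rho\ge 1+\underline\rho>1\) and \(1-\rho_c\ge 1-\rho>0\).

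Next I would check the weights in \eqref{eq:vintage_variances}. The within-vintage shares \(\widetilde a_{ji}^{(k)}=\nu_{ji}x_{ji}^{\ast\rho}/(X_i^{(k)\ast})^{\rho}\) are strictly positive because \(\nu_{ji}>0\) and stationary orders are positive; the latter holds since every firm has positive sales by reachability and positive intermediate spending \((1-\beta_i)m_i^\ast>0\). The cross-vintage shares \(s_i^{(k)}\) are strictly positive for every active \(k\in\mathcal K_i\), since \(\chi_i^{(k)}>0\) there, and both families sum to one. Hence \(V_i^{\mathrm w}(y)\) is a convex combination of squares and \(V_i^{\mathrm b}(y)\) is a variance of layer means, so both are nonnegative. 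This gives \(\mathcal D_{i,t}(\pi)\ge0\) for every \(y=\widehat x_{i,t}\), whatever its provenance along the trajectory; Proposition~\ref{prop:stability_stationary} only guarantees that \(\widehat x_{i,t}\) is well defined.

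For the equality case, the positivity of every weight and coefficient implies that \(\mathcal D_{i,t}=0\) iff \(V_i^{\mathrm w}=0\) and \(V_i^{\mathrm b}=0\). The within-vintage condition holds iff \(y_j=\mathcal F_i^{(k)}(y)\) for every \(j\in\mathcal S_i^{(k)}\) and every active \(k\), so \(y\) is constant on each layer. The between-vintage condition holds iff all layer means equal \(\mathcal F_i(y)\). Together these say \(y_j=\mathcal F_i(y)\) for all \(j\in\mathcal S_i\), that is, all dated inputs move in the same proportion. The converse is immediate, since a constant vector has zero dispersion at both levels. Here I rely on the law of total variance, \(\sum_j a_{ji}(y_j-\mathcal F_i(y))^2=V_i^{\mathrm w}+V_i^{\mathrm b}\), via \(a_{ji}=s_i^{(k)}\widetilde a_{ji}^{(k)}\) from \eqref{eq:share_quantity_form}. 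Strictly, I only need each component's zero set, and the total-variance identity shows that their intersection is the set of constant vectors.

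The aggregate statement then follows from \(\omega_i\ge0\). A sum of nonnegative terms is nonnegative, and it vanishes iff \(\mathcal D_{i,t}=0\) whenever \(\omega_i>0\), which by the firm-level equality case means uniform proportional movement at each support firm. The only delicate point, which I expect to be the main obstacle though a mild one, is making sure every share entering the variances is strictly positive. A zero weight on some supplier or vintage would break the "exactly when" direction. I would handle this with the convention that sums run only over \(\mathcal S_i\) and \(\mathcal K_i\), together with positivity of stationary orders.
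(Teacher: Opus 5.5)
Your proposal is correct and follows essentially the same route as the paper. Both arguments use the strict positivity of the coefficients \((1-\beta_i)(1-\rho)\) and \((1-\beta_i)(1-\rho_c)\) and of the within- and cross-vintage shares, the nonnegativity of \(V_i^{\mathrm w}\) and \(V_i^{\mathrm b}\), the equality case read off from the zero sets of the two variances, and aggregation with the nonnegative weights \(\omega_i\); your law-of-total-variance aside and your justification that stationary orders are positive are harmless additions.
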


\noindent
Proof in Appendix~\ref{app:proof_sign_production_inefficiency}.

\subsection{Cumulative loss from production inefficiency}
\label{subsec:cumulative_loss}

The loss from production inefficiency is not confined to the date of the impact of the monetary shock. The pipeline of dated orders carries the disturbance to the dates after impact, and the inputs of a bundle keep arriving in proportions that differ from their stationary proportions as long as the transition lasts.
The output that a monetary shock destroys within firms is therefore the
loss from production inefficiency summed over every date of the transition.
Define the \emph{cumulative loss from production inefficiency} as
\begin{equation}
\mathcal D^{(\boldsymbol\omega)}(\pi)
:=
\sum_{t=0}^{\infty}\mathcal D_t^{(\boldsymbol\omega)}(\pi)
\label{eq:cumulative_inefficiency_def}
\end{equation}

We now derive the size of the cumulative loss from production inefficiency due to
a monetary shock. At each date the loss from production inefficiency is a
dispersion of the proportional changes of a firm's inputs. It is
therefore of the order of the square of those changes. The changes
themselves are of the order of the shock times the slack \(1-\theta\),
and they die out geometrically along the transition
(Proposition~\ref{prop:stability_stationary}). Which means that the losses from production inefficiency at successive
dates sum to a finite value. 

\begin{theorem}[Order of the cumulative loss from production inefficiency]
\label{theorem:finite_cumulative_inefficiency}
Suppose that Assumptions~\ref{assump:diffuse_incidence} and~\ref{assump:monotone_decay_spectrum}
hold and fix a weighting \(\boldsymbol\omega\). Then for every
\(|\pi|\le\overline\pi\) the cumulative loss from production inefficiency
\(\mathcal D^{(\boldsymbol\omega)}(\pi)\) is finite and at most a fixed
multiple of \((1-\theta)^2\pi^2\), the multiple set by the constants of
the assumptions. As \(\pi\to0\),
\begin{equation}
\mathcal D^{(\boldsymbol\omega)}(\pi)=\pi^2\,C^{(\boldsymbol\omega)}+o(\pi^2),
\qquad
C^{(\boldsymbol\omega)}:=\tfrac12\sum_{t\ge0}\sum_{i\in N}\omega_i(1-\beta_i)\,
\bigl\langle\dot{\widehat x}_{i,t},\dot{\widehat x}_{i,t}\bigr\rangle_i
\label{eq:cumulative_inefficiency_expansion}
\end{equation}
where \(\dot{\widehat x}_{i,t}:=(\dot{\widehat x}_{ji,t})_{j\in\mathcal S_i}\)
is the vector of first-order responses of the proportional changes
of the dated inputs of firm \(i\). The coefficient \(C^{(\boldsymbol\omega)}\) is
strictly positive exactly when, at some date, the first-order responses
of some firm in the support of \(\boldsymbol\omega\) are not equal across
its suppliers.
\end{theorem}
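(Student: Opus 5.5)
The plan is to reduce everything to the geometric decay bound of Proposition~\ref{prop:stability_stationary} and the quadratic form in Definition~\ref{def:production_inefficiency}, and then to sum over dates.

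\emph{Step 1: the quadratic form is bounded.} For each firm, $\langle y,y\rangle_i\le(1-\rho_c)\bigl(V_i^{\mathrm w}(y)+V_i^{\mathrm b}(y)\bigr)$. Since $V_i^{\mathrm w}+V_i^{\mathrm b}$ is the total $a_{ji}$-weighted variance of $y$, the law of total variance bounds it by $\max_j y_j^2$. Hence
\[
\mathcal D_{i,t}(\pi)\le\tfrac12(1-\underline\beta)(1+\overline\rho)\,\|\widehat x_{i,t}\|_\infty^2 .
\]

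\emph{Step 2: the dated inputs decay geometrically.} Every dated input of firm $i$ at date $t$ is an order placed at some date $t-k_{ji}$. If that date is nonnegative, the order is bounded by Proposition~\ref{prop:stability_stationary}, which covers orders placed at each date. If it is negative, the order is stationary and contributes zero. Here I must check one point. Proposition~\ref{prop:stability_stationary} measures deviations from the post-shock stationary values, but real orders in the new stationary equilibrium equal $x_{ji}^\ast$ (real quantities are unchanged), so the real pipeline coordinates are exactly $\widehat x_{ji,s}$. With this settled,
\[
|\widehat x_{ji,t}|\le C_S\,\kappa^{\,t-\widehat k}\,|\pi|(1-\theta).
\]
Summing over $t$ with $\omega$ a probability vector gives
\[
\mathcal D^{(\boldsymbol\omega)}(\pi)\le \tfrac12(1-\underline\beta)(1+\overline\rho)\,C_S^2\,\kappa^{-2\widehat k}\,(1-\kappa^2)^{-1}\,(1-\theta)^2\pi^2 ,
\]
which is finite. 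The constant depends on $\widehat k$, which I take to be among the constants of the assumptions.

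\emph{Step 3: the expansion.} Each $\widehat x_{ji,t}(\pi)$ is smooth in $\pi$ with $\widehat x_{ji,t}(0)=0$; smoothness follows from the within-period maps of Lemma~\ref{lem:propagation_linearization}, which are compositions of smooth functions. Hence $\widehat x_{i,t}=\pi\dot{\widehat x}_{i,t}+r_{i,t}(\pi)$ with $r_{i,t}=o(\pi)$ for each fixed $t$. Because the form is quadratic,
\[
\mathcal D_t^{(\boldsymbol\omega)}(\pi)=\pi^2\cdot\tfrac12\sum_i\omega_i(1-\beta_i)\langle\dot{\widehat x}_{i,t},\dot{\widehat x}_{i,t}\rangle_i+o(\pi^2)
\]
for each $t$.

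\emph{Step 4: exchanging the limit and the infinite sum.} This is the main obstacle, since the $o(\pi^2)$ in Step 3 is not uniform in $t$. I would use dominated convergence on $\mathcal D_t^{(\boldsymbol\omega)}(\pi)/\pi^2$. By Steps 1 and 2 this ratio is dominated, uniformly in $|\pi|\le\overline\pi$, by the summable sequence $c\,\kappa^{2(t-\widehat k)}(1-\theta)^2$, and it converges termwise. This yields $\mathcal D^{(\boldsymbol\omega)}(\pi)/\pi^2\to C^{(\boldsymbol\omega)}$. The same domination shows $|\dot{\widehat x}_{ji,t}|\le C_S\kappa^{t-\widehat k}(1-\theta)$, so $C^{(\boldsymbol\omega)}$ is finite.

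\emph{Step 5: positivity.} $C^{(\boldsymbol\omega)}$ is a sum of nonnegative terms, and each $1-\beta_i>0$. It is therefore positive exactly when some term with $\omega_i>0$ has $\langle\dot{\widehat x}_{i,t},\dot{\widehat x}_{i,t}\rangle_i>0$. Since $1-\rho$ and $1-\rho_c$ are positive, this happens exactly when $V_i^{\mathrm w}+V_i^{\mathrm b}>0$, that is, when the components of $\dot{\widehat x}_{i,t}$ are not all equal. The argument is the same as in Lemma~\ref{lem:sign_production_inefficiency}, using that all weights $a_{ji}>0$ on $\mathcal S_i$.
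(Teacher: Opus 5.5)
Your proposal is correct and follows the paper's proof essentially step for step. Both arguments take the geometric envelope on the dated inputs from Proposition~\ref{prop:stability_stationary}, bound the curvature-weighted dispersion by \((1+\overline\rho)\) times the squared envelope, sum a geometric series for finiteness, apply dominated convergence over dates for the expansion, and use the sign argument of Lemma~\ref{lem:sign_production_inefficiency} for positivity.

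The only difference is cosmetic. You bound each inherited input through its order date, which gives \(\kappa^{t-\widehat k}\). The paper reads the same input off the pipeline coordinates of the state, whose constant \(C_S\) already absorbs the factor \(\kappa^{-\widehat k}\). Your multiple is therefore larger by \(\kappa^{-2\widehat k}\), but it is still set by the constants of the assumptions.
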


\noindent
Proof in Appendix~\ref{app:proof_finite_cumulative_inefficiency}.

The simple reading of
Theorem~\ref{theorem:finite_cumulative_inefficiency} is that the cumulative loss from production inefficiency of a small shock grows with the square of the shock. A small
monetary expansion and a small monetary contraction of the same size destroy the same
output within firms. This is because they misalign the dated inputs in mirror-image patterns. Naturally, the leading coefficient
\(C^{(\boldsymbol\omega)}\) rises linearly with the two
curvature weights \(1-\rho\) and \(1-\rho_c\), and so, at a given
\(\rho\), with the curvature gap \(\rho-\rho_c\)
(Lemma~\ref{lem:inefficiency_gap} of Appendix~\ref{app:analytical}).\footnote{The comparison across curvatures holds the stationary allocation fixed through the normalized nests of Lemma~\ref{lem:normalized_nests} of Appendix~\ref{app:analytical}, which rescale the CES weights so that every pair of curvatures reproduces the same stationary equilibrium.}

\subsection{Lead times and the loss from production inefficiency}
\label{subsec:lag_profile}

Naturally, the loss from production inefficiency due to monetary
shocks depends on the distribution of intermediate-input lead times. We now
state formally how the loss from production
inefficiency depends on the variance and the mean of the distribution of these lead times. Variance is
the simpler case, and by far. An increase in the variance of the lead times causes input bundle combinations to become more mismatched across the dates on
which the inputs were ordered. And production depends on the
composition of the input bundle in such a way, through equation
\eqref{eq:firm_inefficiency_def}, that a mismatch between inputs of
different vintages is penalized at the outer curvature \(1-\rho_c\).
This can be seen in the third term of the expansion
\eqref{eq:log_q_scale_composition} and in the between-vintage term
\((1-\rho_c)V_i^{\mathrm b}\) of \eqref{eq:firm_inefficiency_def}.

The mismatch within a bundle has two sources, the dates on which its
inputs were ordered and the routes by which the shock reached its
suppliers. To separate them, we split the response of each dated input
into a part that moves with its lead time and a part that does not. For a lead-time profile \(\mathbf k\), a firm \(i\) and a
date \(t\), let
\(\dot{\widehat x}^{\mathbf k}_{i,t}=(\dot{\widehat x}^{\mathbf k}_{ji,t-k_{ji}})_{j\in\mathcal S_i}\)
be the first-order response of the dated inputs of firm \(i\) under
\(\mathbf k\). Write \(\operatorname{Var}_{\mathcal F_i}(\mathbf k)\) for
the variance of the lead time under the supplier shares of firm \(i\).
The regression slope of the input response on the lead time and its
residual are
\begin{equation}
\varsigma^{\mathbf k}_{i,t}:=\frac{\operatorname{Cov}_{\mathcal F_i}(\dot{\widehat x}^{\mathbf k}_{i,t},\mathbf k)}{\operatorname{Var}_{\mathcal F_i}(\mathbf k)}
\ \ (\varsigma^{\mathbf k}_{i,t}:=0\text{ if }\operatorname{Var}_{\mathcal F_i}(\mathbf k)=0),
\qquad
u^{\mathbf k}_{ji,t}:=\dot{\widehat x}^{\mathbf k}_{ji,t-k_{ji}}-\mathcal F_i(\dot{\widehat x}^{\mathbf k}_{i,t})-\varsigma^{\mathbf k}_{i,t}\{k_{ji}-\mathcal F_i(\mathbf k)\}
\label{eq:lag_regression_slope}
\end{equation}
We call \(\varsigma^{\mathbf k}_{i,t}\{k_{ji}-\mathcal F_i(\mathbf k)\}\)
the \emph{lead-time component} of the response and
\(u^{\mathbf k}_{ji,t}\) its \emph{routing component}. The routing
component carries the part of the mismatch that comes from the routes by
which the shock reached the suppliers. We say that a lead-time profile
\(\mathbf k'\) is a \emph{buyerwise mean-preserving lead-time-variance
increase} of \(\mathbf k\) if every firm faces the same mean lead time and
a weakly larger lead-time variance under \(\mathbf k'\) than under
\(\mathbf k\). That is, \(\mathcal F_i(\mathbf k')=\mathcal F_i(\mathbf k)\)
and \(\operatorname{Var}_{\mathcal F_i}(\mathbf k')\ge\operatorname{Var}_{\mathcal F_i}(\mathbf k)\)
for every \(i\in N\). Write
\(\mathcal D_t^{(\boldsymbol\omega)}(\pi,\mathbf k)\) for the loss from
production inefficiency at date \(t\) under the profile \(\mathbf k\), and
\(\langle y,y\rangle_{i,\mathbf k}\) for the curvature-weighted dispersion
of \eqref{eq:firm_inefficiency_def} taken under the vintage partition of
\(\mathbf k\).

\begin{corollary}[Loss from production inefficiency under a wider lead-time profile]
\label{cor:firm_level_vintage_spread_aggregate_inefficiency}
Let \(\mathbf k'\) be a buyerwise mean-preserving lead-time-variance increase of
\(\mathbf k\) on the same network, with the same stationary shares, labor shares
and incidence. Fix a date \(t\) at which the slopes agree buyer by
buyer, \(\varsigma^{\mathbf k'}_{i,t}=\varsigma^{\mathbf k}_{i,t}=:\varsigma_{i,t}\). Then
\begin{equation}
\begin{aligned}
\lim_{\pi\to0}\frac{2\bigl[\mathcal D_t^{(\boldsymbol\omega)}(\pi,\mathbf k')-\mathcal D_t^{(\boldsymbol\omega)}(\pi,\mathbf k)\bigr]}{\pi^2}
={}&
(1-\rho_c)\sum_{i\in N}\omega_i(1-\beta_i)\,\varsigma_{i,t}^2\bigl(\operatorname{Var}_{\mathcal F_i}(\mathbf k')-\operatorname{Var}_{\mathcal F_i}(\mathbf k)\bigr)\\
&+\sum_{i\in N}\omega_i(1-\beta_i)\Bigl[\bigl\langle u^{\mathbf k'}_{i,t},u^{\mathbf k'}_{i,t}\bigr\rangle_{i,\mathbf k'}-\bigl\langle u^{\mathbf k}_{i,t},u^{\mathbf k}_{i,t}\bigr\rangle_{i,\mathbf k}\Bigr]
\end{aligned}
\label{eq:lag_variance_comparison}
\end{equation}
and the first sum is non-negative. Each curvature-weighted form is
taken under the vintage partition of its own lead-time profile.
\end{corollary}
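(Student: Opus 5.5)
The plan is to reduce the statement to a per-date, per-firm algebraic identity for the curvature-weighted dispersion \(\langle\cdot,\cdot\rangle_{i,\mathbf k}\). First I would record the per-date version of the expansion in Theorem~\ref{theorem:finite_cumulative_inefficiency}. For each fixed \(t\) and each profile,
\[
\mathcal D_t^{(\boldsymbol\omega)}(\pi,\mathbf k)=\tfrac12\pi^2\sum_{i\in N}\omega_i(1-\beta_i)\bigl\langle\dot{\widehat x}^{\mathbf k}_{i,t},\dot{\widehat x}^{\mathbf k}_{i,t}\bigr\rangle_{i,\mathbf k}+o(\pi^2).
\]
This follows directly from Definition~\ref{def:production_inefficiency}. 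The form \(\langle y,y\rangle_i\) is a fixed quadratic form in \(y\). Moreover, \(\widehat x_{ji,t-k_{ji}}(\pi)=\pi\dot{\widehat x}_{ji,t-k_{ji}}+O(\pi^2)\) by the differentiability used in the proof of that theorem, with the geometric bound of Proposition~\ref{prop:stability_stationary} controlling the remainder. Hence the left side of \eqref{eq:lag_variance_comparison} equals
\[
\sum_i\omega_i(1-\beta_i)\Bigl[\langle\dot{\widehat x}^{\mathbf k'}_{i,t},\dot{\widehat x}^{\mathbf k'}_{i,t}\rangle_{i,\mathbf k'}-\langle\dot{\widehat x}^{\mathbf k}_{i,t},\dot{\widehat x}^{\mathbf k}_{i,t}\rangle_{i,\mathbf k}\Bigr].
\]
It therefore suffices to show, for each firm and each profile, that
\[
\langle\dot{\widehat x}^{\mathbf k}_{i,t},\dot{\widehat x}^{\mathbf k}_{i,t}\rangle_{i,\mathbf k}=(1-\rho_c)\varsigma^2_{i,t}\operatorname{Var}_{\mathcal F_i}(\mathbf k)+\langle u^{\mathbf k}_{i,t},u^{\mathbf k}_{i,t}\rangle_{i,\mathbf k}.
\]
Subtracting the two profiles under the common slope \(\varsigma_{i,t}\) then gives \eqref{eq:lag_variance_comparison}.

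For the identity, I write \(\dot{\widehat x}^{\mathbf k}_{i,t}=c\mathbf 1+\varsigma\,\kappa+u\), with \(c=\mathcal F_i(\dot{\widehat x})\) and \(\kappa_j:=k_{ji}-\mathcal F_i(\mathbf k)\). The form \(\langle\cdot,\cdot\rangle_i\) is built from variances, so it is invariant to adding constants, and the \(c\mathbf 1\) term drops. Expanding bilinearly gives \(\varsigma^2\langle\kappa,\kappa\rangle+2\varsigma\langle\kappa,u\rangle+\langle u,u\rangle\), where the mixed form is defined by polarization as \((1-\rho)\)(within-vintage covariance) plus \((1-\rho_c)\)(between-vintage covariance).

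Three facts now finish the identity. First, \(\kappa\) is constant on each vintage layer, since the layers of the partition of \(\mathbf k\) are exactly the sets where \(k_{ji}\) is constant. So \(V^{\mathrm w}_i(\kappa)=0\), and its within-vintage covariance with anything vanishes. Second, because the stationary shares factor as \(a_{ji}=s_i^{(k)}\widetilde a^{(k)}_{ji}\), \(V^{\mathrm b}_i(\kappa)=\operatorname{Var}_{\mathcal F_i}(\mathbf k)\). For the same reason, the between-vintage covariance of \(\kappa\) with \(u\) equals the total covariance \(\operatorname{Cov}_{\mathcal F_i}(\mathbf k,u)\), again because \(\kappa\) is layer-constant. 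Third, \(\operatorname{Cov}_{\mathcal F_i}(\mathbf k,u)=0\) by the normal equation of the regression \eqref{eq:lag_regression_slope}. In the degenerate case \(\operatorname{Var}_{\mathcal F_i}(\mathbf k)=0\) the vector \(\kappa\) vanishes and the identity holds trivially. Together these give \(\langle\kappa,\kappa\rangle=(1-\rho_c)\operatorname{Var}\) and \(\langle\kappa,u\rangle=0\).

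Non-negativity of the first sum follows from \(\rho_c<0<1\), \(\omega_i\ge0\), \(\beta_i<1\), and the buyerwise variance increase. The stationary shares, and hence \(\mathcal F_i\), are the same under both profiles; this is what lets the same slope and variance comparison be used. The step I expect to need the most care is the first one. The per-date limit with the \(o(\pi^2)\) remainder must be justified for each profile separately, since \(\dot{\widehat x}^{\mathbf k}\) itself depends on \(\mathbf k\) through the propagation. I would take this from the same local differentiability argument (Lemma~\ref{lem:production_linearization} together with Proposition~\ref{prop:stability_stationary}) that underlies Theorem~\ref{theorem:finite_cumulative_inefficiency}, applied at the single date \(t\) rather than summed over dates.
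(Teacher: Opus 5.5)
Your proposal is correct and follows the paper's proof essentially step for step. It takes the per-date limit \(\mathcal D_t^{(\boldsymbol\omega)}(\pi,\mathbf k)/\pi^2\to\tfrac12\sum_i\omega_i(1-\beta_i)\langle\dot{\widehat x}^{\mathbf k}_{i,t},\dot{\widehat x}^{\mathbf k}_{i,t}\rangle_{i,\mathbf k}\) from differentiability at the fixed date, reproves the weighted-least-squares split of Lemma~\ref{lem:firm_spread_aggregate_impulse_dispersion} from the layer-constancy of the centered lead time and the normal equation, and subtracts under the common slope; the only quibble is that the differentiability step comes from Lemma~\ref{lem:propagation_linearization}, and neither Lemma~\ref{lem:production_linearization} nor the geometric envelope is needed at a single date.
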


\noindent
Proof in Appendix~\ref{app:proof_firm_spread_aggregate}.

The simple reading of
Corollary~\ref{cor:firm_level_vintage_spread_aggregate_inefficiency}
is that a
wider spread of lead times around an unchanged mean raises the loss from
production inefficiency at a date by the outer curvature \(1-\rho_c\) times the increase in
the variance of the lead times. Each firm's increase in lead-time variance is
weighted by how strongly its input orders respond to the lead time.
The exception is when the part of the mismatch that comes from the
routes by which the shock reached the suppliers falls by enough to
offset this rise in the loss from production inefficiency.

The question of what happens to the loss from production inefficiency with an increase in
the mean of the distribution of intermediate-input lead times is a
more difficult one. To ascertain the problem at hand, we must begin by
recognizing that a monetary shock sets in motion two intertwined processes, each of which propagates itself and generates the other. A monetary expansion influences money balances, therefore
prices and production, and therefore future money balances. In this
way a present monetary expansion generates a decaying wave of monetary
disturbances that runs through the system. But by influencing input
purchases and production, it also generates real disturbances, since
one firm's output is another firm's input. These waves of real
disturbances also course through the firm network. An increase in the mean of the distribution of intermediate-input lead times separates the monetary and the real waves in time. The monetary waves disturb prices and input orders, but
they do not immediately affect production, since production depends on
orders placed long before the shock occurred. And by the time the
`new' disturbed orders come to affect production (and prices), the
original monetary wave may have all but dampened out. Whether an
increase in the mean lead time of intermediate inputs increases
the loss from production inefficiency depends on whether the monetary and the real waves
dampen or amplify each other. If they dampen each other, then an increase in the mean lead time, by separating the monetary and the real waves in time, increases
the loss from production inefficiency. If they amplify one another, then separating them
reduces the loss from production inefficiency generated by a monetary shock.

Both waves can be read off the response of every dated input. The
first-order response of the real order that firm \(i\) places on
supplier \(j\) at a date \(s\ge0\) is the response of the expenditure of
firm \(i\) net of the response of the demand for good \(j\), plus the
response of the output of \(j\),
\(\dot{\widehat x}_{ji,s}=(\dot{\widehat e}_{i,s}-\dot{\widehat d}_{j,s})+\dot{\widehat q}_{j,s}\)
(Lemma~\ref{lem:order_response_decomposition} of
Appendix~\ref{app:analytical}). Evaluated at the order dates of the dated
input bundle of firm \(i\) at date \(t\), the first term is the monetary
wave and the second the real wave,
\begin{equation}
M_{ji,t}:=\dot{\widehat e}_{i,t-k_{ji}}-\dot{\widehat d}_{j,t-k_{ji}},
\qquad
Q_{ji,t}:=\dot{\widehat q}_{j,t-k_{ji}},
\qquad j\in\mathcal S_i
\label{eq:waves_main}
\end{equation}
The real wave arrives in rounds. Each round has travelled through one
more lead time than the round before. Write \(Q^{(r)}\) for the
\(r\)-th round, so that \(Q=\sum_{r\ge1}Q^{(r)}\)
(equation \eqref{eq:real_wave_rounds} of Appendix~\ref{app:proof_lag_shift}).
For collections \(y=(y_{i,t})\) and \(z=(z_{i,t})\) of supplier vectors
indexed by firm and date, and a weighting \(\boldsymbol\omega\), the
cumulative curvature-weighted form is
\begin{equation}
\langle y,z\rangle_H:=\sum_{t\ge0}\sum_{i\in N}\omega_i(1-\beta_i)\,\langle y_{i,t},z_{i,t}\rangle_i,
\qquad
\|y\|_H^2:=\langle y,y\rangle_H
\label{eq:H_form}
\end{equation}
whenever the sums converge absolutely. Here \(\langle y,z\rangle_i\) is
the curvature-weighted covariance \eqref{eq:curvature_weighted_cov}, whose
value at \(z=y\) is the curvature-weighted dispersion of
\eqref{eq:firm_inefficiency_def}. Let \(\mathbf k'\) be the profile in which every lead time is longer by
\(\Delta k\ge1\) periods, and \(C^{(\boldsymbol\omega)}(\mathbf k)\) the
leading coefficient of Theorem~\ref{theorem:finite_cumulative_inefficiency}
under the profile \(\mathbf k\). The longer lead times delay the monetary
wave by \(\Delta k\) periods and the \(r\)-th round of the real wave by
\((r+1)\Delta k\) periods. We therefore count the monetary wave as the
wave of round zero, \(W^{(0)}:=M\), and write \(W^{(r)}:=Q^{(r)}\) for
\(r\ge1\). The translation \(\mathsf S_\Delta\) delays a collection by
\(\Delta\) dates, \((\mathsf S_\Delta z)_t=z_{t-\Delta}\) for
\(t\ge\Delta\) and zero before.

\begin{corollary}[Loss from production inefficiency under a uniform lead-time shift]
\label{cor:lag_shift_inefficiency}
Suppose that Assumptions~\ref{assump:diffuse_incidence}
and~\ref{assump:monotone_decay_spectrum} hold under \(\mathbf k'\), with
the same stationary shares, labor shares and incidence as under
\(\mathbf k\). Then for every weighting \(\boldsymbol\omega\)
\begin{equation}
C^{(\boldsymbol\omega)}(\mathbf k')-C^{(\boldsymbol\omega)}(\mathbf k)
=\sum_{0\le r<r'}\Bigl[\bigl\langle W^{(r)},\mathsf S_{(r'-r)\Delta k}W^{(r')}\bigr\rangle_H
-\bigl\langle W^{(r)},W^{(r')}\bigr\rangle_H\Bigr]
\label{eq:lag_shift_exact}
\end{equation}
If, in addition, every purchase has the same lead time and the impact
misalignment \(\dot{\mathbf m}_0^\perp\) is an eigenvector of \(\mathbf A\)
with a real eigenvalue \(\lambda\in(0,1)\), then
\begin{equation}
C^{(\boldsymbol\omega)}(\mathbf k')-C^{(\boldsymbol\omega)}(\mathbf k)
=-\sum_{0\le r<r'}\bigl(1-\lambda^{(r'-r)\Delta k}\bigr)\bigl\langle W^{(r)},W^{(r')}\bigr\rangle_H
\label{eq:lag_shift_geometric}
\end{equation}
\end{corollary}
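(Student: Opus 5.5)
The plan is to express the leading coefficient of Theorem~\ref{theorem:finite_cumulative_inefficiency} as a squared norm in the cumulative form \eqref{eq:H_form}, $C^{(\boldsymbol\omega)}(\mathbf k)=\tfrac12\|\dot{\widehat x}\|_H^2$, and then track how each piece of $\dot{\widehat x}$ moves when every lead time is lengthened by $\Delta k$. By Lemma~\ref{lem:order_response_decomposition}, the dated input response is $\dot{\widehat x}_{ji,t-k_{ji}}=M_{ji,t}+Q_{ji,t}$. Writing $Q=\sum_{r\ge1}Q^{(r)}$ gives $\dot{\widehat x}=\sum_{r\ge0}W^{(r)}$. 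Since $\langle\cdot,\cdot\rangle_i$ is a bilinear, symmetric, positive semidefinite form and all sums converge absolutely (geometric decay from Proposition~\ref{prop:stability_stationary}, applied under both $\mathbf k$ and $\mathbf k'$), we can expand
\[
2C^{(\boldsymbol\omega)}(\mathbf k)=\sum_{r}\|W^{(r)}\|_H^2+2\sum_{0\le r<r'}\langle W^{(r)},W^{(r')}\rangle_H .
\]
The normalization $\tfrac12$ versus the factor $2$ on the cross terms must be matched against \eqref{eq:lag_shift_exact}. I would check that the real-wave rounds are defined so that the stated identity carries no extra factor. Otherwise I would absorb the constant into the form, as the paper's convention seems to intend.

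The key structural step is to show that under $\mathbf k'$, with the same $\mathbf A$, $\boldsymbol\beta$, $\boldsymbol\zeta$ and vintage partition (all lead times shift together, so the partition and the normalized nests are unchanged), the waves satisfy $W'^{(r)}=\mathsf S_{(r+1)\Delta k}W^{(r)}$.

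\emph{Round zero.} The balance law \eqref{eq:firm_balance_law_pre} does not involve lead times, so $\dot{\widehat e}$ is unchanged. The demand response $\dot{\widehat d}$ is likewise unchanged, being a linear function of balances. The monetary wave $M'$, evaluated at order dates $t-k_{ji}-\Delta k$, is therefore exactly $\mathsf S_{\Delta k}M$. Before date $\Delta k$ the orders are pre-shock and stationary, so the wave is zero there, which is what the translation convention records.

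\emph{Rounds $r\ge1$.} I would argue by induction on $r$, using the linearized propagation of Lemma~\ref{lem:propagation_linearization}. The output response $\dot{\widehat q}_{j,s}$ is $(1-\beta_j)$ times an average over $j$'s dated inputs, so round $r$ of output is driven by round $r-1$ of the wave. Each round passes through one more lead time and so picks up one more $\Delta k$ delay, and the order-date evaluation adds another.

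Translations preserve the form: $\langle\mathsf S_aY,\mathsf S_aZ\rangle_H=\langle Y,Z\rangle_H$, because the time sum in \eqref{eq:H_form} runs over all $t\ge0$ and the shifted-in entries are zero. Hence
\[
\langle W'^{(r)},W'^{(r')}\rangle_H=\langle W^{(r)},\mathsf S_{(r'-r)\Delta k}W^{(r')}\rangle_H .
\]
The diagonal terms are unchanged, and subtracting the two expansions gives \eqref{eq:lag_shift_exact}.

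For \eqref{eq:lag_shift_geometric}, I would use the eigenvector hypothesis. Then $\dot{\mathbf m}^\perp_t=\lambda^t\dot{\mathbf m}^\perp_0$, and with a single common lead time every wave component inherits the time profile $\lambda^t$ times a fixed spatial vector from some start date. Since the lead time is common, the within-firm dispersion is evaluated at one date for every input. Consequently $\mathsf S_aW^{(r')}$ agrees with $\lambda^{-a}$ times… no; more precisely, one obtains $\langle W^{(r)},\mathsf S_aW^{(r')}\rangle_H=\lambda^a\langle W^{(r)},W^{(r')}\rangle_H$. Substituting this gives the geometric formula.

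The main obstacle is this last step. The real rounds are not simply geometric: round $r$ may carry polynomial-in-$t$ factors from repeated convolution, and its start date matters. I would first try to show that under a uniform lead time each round is exactly $\lambda^t$ times a fixed vector for $t$ beyond its start. The route is to use linearity and the fact that a geometric input sequence through a time-invariant lag filter returns a geometric sequence with the same ratio. I would also check that the truncation at the start date is consistent with the shift, so that the factor is exactly $\lambda^{(r'-r)\Delta k}$.
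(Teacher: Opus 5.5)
Your first part follows the paper's argument. You write $C^{(\boldsymbol\omega)}=\tfrac12\|\sum_{r\ge0}W^{(r)}\|_H^2$, show $W'^{(r)}=\mathsf S_{(r+1)\Delta k}W^{(r)}$, use translation invariance of $\langle\cdot,\cdot\rangle_H$, and expand. Your induction over rounds is what the paper does via $\boldsymbol\xi'=\mathsf S_{\Delta k}\boldsymbol\xi$ and $\mathsf T'=\mathsf T\mathsf S_{\Delta k}=\mathsf S_{\Delta k}\mathsf T$, because $\dot{\widehat q}^{(r)}_{j,s}=(1-\beta_j)\mathcal F_j(W^{(r-1)}_{j,s})$. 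The object to cite is the Neumann series of Lemma~\ref{lem:supplier_recursion}, not Lemma~\ref{lem:propagation_linearization}.

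There is nothing to reconcile in the normalization. With your expansion the cross terms enter $C^{(\boldsymbol\omega)}$ with coefficient $\tfrac12\cdot2=1$, so subtracting gives \eqref{eq:lag_shift_exact} exactly, and no constant should be absorbed anywhere.

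What is missing is a justification of the expansion itself. Splitting $\|\sum_rW^{(r)}\|_H^2$ into own and cross terms needs the rounds to converge in $H$, for instance $\sum_r\|W^{(r)}\|_H<\infty$. The envelope of Proposition~\ref{prop:stability_stationary} controls only the total $\dot{\widehat x}$, not the individual rounds. You need a per-round bound. Combining $|\Xi_{ij,s}|\le C_\Xi\overline\lambda^{\,s}(1-\theta)$ with $\|\mathsf T\|_\infty\le1-\underline\beta$, round $r$ is at most $(1-\underline\beta)^r$ times a profile that is flat up to a delay of at most $r\widehat k$ and geometric after it. Hence $\|Q^{(r)}\|_H$ is at most a constant times $(1-\underline\beta)^r(r\widehat k+(1-\overline\lambda^2)^{-1})^{1/2}$, which is summable. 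The translated rounds under $\mathbf k'$ have the same norms.

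In the second part you stop at the step that carries the content, and the obstacle you anticipate does not arise. With a common lead time $k$, the operator $\mathsf T$ is a pure delay by $k$ composed with a fixed matrix. A round is a single pass through it, not a sum of convolutions, so no polynomial factors appear. From $\Xi_{ij,s}=\lambda^s\Xi_{ij,0}$, induction gives that $W^{(r)}$ vanishes before the date $(r+1)k$ and equals $\lambda^{t-(r+1)k}$ times a fixed supplier vector from then on.

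The truncation is consistent because the start dates increase with the round. For $r<r'$ and $a=(r'-r)\Delta k$, the profile $\mathsf S_aW^{(r')}$ is supported on $t\ge(r'+1)k+a$, where $W^{(r)}$ is already in its geometric regime. Reindexing $u=t-a$ then gives $\langle W^{(r)},\mathsf S_aW^{(r')}\rangle_H=\lambda^a\langle W^{(r)},W^{(r')}\rangle_H$ exactly.

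The paper makes the same point through the adjoint, $\langle W^{(r)},\mathsf S_aW^{(r')}\rangle_H=\langle\mathsf S^{*}_aW^{(r)},W^{(r')}\rangle_H$. It notes that $\mathsf S^{*}_aW^{(r)}-\lambda^aW^{(r)}$ lives only on the dates $(r+1)k-a\le t<(r+1)k$, where $W^{(r')}$ vanishes. Substituting into \eqref{eq:lag_shift_exact} gives \eqref{eq:lag_shift_geometric}.
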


\noindent
Proof in Appendix~\ref{app:proof_lag_shift}.

The simple reading of Corollary~\ref{cor:lag_shift_inefficiency} is that
lengthening every lead time by \(\Delta k\) periods changes the cumulative
loss from production inefficiency by exactly the change in the
comovements between waves that the longer lead times shift apart in
time. These are the comovement of the monetary wave with each round of
the real wave, and the comovements of the rounds of the real wave with
one another. The dispersion that each wave causes on its own does not
change, for a wave that arrives later is still the same wave. Whether
the monetary and the real waves dampen or amplify each other is a
question of alignment. It depends on whether the direction in which
intermediate-input demand travels through the network coincides with the
direction in which money travels from the firms on which the monetary
expansion lands.

The second part of Corollary~\ref{cor:lag_shift_inefficiency} says how
large this change is in the simplest case, in which every purchase has
the same lead time and the misalignment of balances dies out at a single
rate \(\lambda\). There every wave dies out at the rate \(\lambda\) as
well. Two waves that are \(r'-r\) rounds apart are delayed relative to
each other by \((r'-r)\Delta k\) periods. By the time the later wave
arrives, the earlier one has lost the fraction
\(1-\lambda^{(r'-r)\Delta k}\) of its strength, and the longer lead times
remove this fraction of their comovement. The fraction is larger the
faster balances re-equilibrate, the longer the added lead time
\(\Delta k\) and the more rounds separate the two waves. The loss from
production inefficiency therefore rises with the lead times when the
waves that they pull apart dampen each other, since the offset between
them shrinks. And it falls when they amplify each other.

In the simplest case the alignment of the two waves is easy to see. A
supplier on which the monetary expansion lands heavily keeps receiving
more than its share of spending while the misalignment of balances dies
out. Its buyers bid up the price of its good, and each of them orders
less of that good relative to its other inputs. This is the monetary
wave. But the same supplier also has more money to spend on its own
inputs, and a lead time later it produces more, which holds its price
down. This is the real wave. The two waves therefore move the price of
the supplier's good, and the orders placed with it, in opposite
directions. They dampen each other whenever averaging over suppliers does
not make the shock more uneven (Lemma~\ref{lem:wave_dampening} of
Appendix~\ref{app:proof_lag_shift}). That is, the shock is spread at
least as unevenly across the suppliers of a firm as its averages over the
firms further up the supply chain from which each of them buys, directly
or indirectly. The shock lands unevenly because firms differ in size. The
dampening condition therefore compares two dispersions among the
suppliers of a firm, that of their own sizes and that of the average
sizes of the firms further up the supply chain from which each of them
buys.

\section{Monetary Reallocation}
\label{sec:reallocation}
Monetary shocks distort the input combinations within firms, and consequently production. We now
formalize how the incidence and propagation of monetary shocks also
shift economic activity from one part of the supply chain to another.
A monetary shock lands more heavily on small firms, relative to their
balances. When the firms close to final demand are smaller than the other buyers of the inputs they use, a positive shock lets them bid these inputs away from the firms further up the supply chain. Downstream output rises at impact, upstream firms receive fewer inputs than before the shock, and the pipeline of dated orders carries both movements to later dates. When the resulting fall in
upstream output reaches the firms close to final demand, it can turn
their early expansion into a contraction. A negative shock produces the mirror image.

We begin by introducing a measure of
a firm's position within the supply chain. A firm's downstreamness is the
extent to which its sales are connected, directly or indirectly, to
household expenditure through chains of firm-to-firm transactions.

\begin{definition}[Downstreamness]
\label{def:downstreamness}
Let
\begin{equation}
\boldsymbol\Omega
:=
\mathbf D_m^{-1}\mathbf A_\beta\mathbf D_m,
\qquad
\Omega_{ji}
=
\frac{(1-\beta_i)\,a_{ji}\,m_i^\ast}{m_j^\ast}
=
\frac{p_j^\ast x_{ji}^\ast}{m_j^\ast}
\label{eq:output_share_matrix}
\end{equation}
be the output-share matrix, whose entry \(\Omega_{ji}\) is the share of the
stationary sales of supplier \(j\) bought by firm \(i\). Let
\begin{equation}
r_j
:=
\frac{\gamma_j w^\ast}{m_j^\ast}
=
\frac{c_j^\ast}{q_j^\ast}
\label{eq:retail_absorption_def}
\end{equation}
be the consumption share of firm \(j\), the share of its output sold to the household, and fix
a discount \(\delta\in(0,1)\), the same for every economy. The
downstreamness \(\widetilde\psi_j\) of firm \(j\) is the share of its
output sold to the household plus \(\delta\) times the sales-weighted
average downstreamness of its customers,
\begin{equation}
\widetilde\psi_j
=
r_j+\delta\sum_{i\in N}\Omega_{ji}\,\widetilde\psi_i,
\qquad
\widetilde{\boldsymbol\psi}
=
\delta\,\boldsymbol\Omega\,\widetilde{\boldsymbol\psi}
+
\mathbf r
\label{eq:downstream_raw_def}
\end{equation}
We call the downstreamness of a firm relative to that of the most
downstream firm,
\(\psi_i:=\widetilde\psi_i/\max_{j\in N}\widetilde\psi_j\), its
downstreamness index.
\end{definition}

Put simply, \(\widetilde\psi_j\) is the \(\delta\)-discounted share of the output of firm \(j\) that reaches the household, when that output is followed from each seller to its buyers in proportion to the output shares. The seed
\(\mathbf r\) records direct sales to the household, and each further
round adds the exposure through customers, then customers' customers,
and so on, discounted once per round.\footnote{The vector
\(\widetilde{\boldsymbol\psi}\) is the fixed point of the contraction
\(\mathbf v\mapsto\mathbf r+\delta\boldsymbol\Omega\mathbf v\) on
\([0,1]^n\), and it is strictly positive because final demand is
reachable from every firm. Without the discount the series gives every
firm the value one, since all of a firm's output eventually reaches the household. Without the division by own sales as well, it returns the
stationary sales vector
\((\mathbf I-\mathbf A_\beta)^{-1}\boldsymbol\gamma=\mathbf m^\ast/w^\ast\).
The division by own sales removes the effect of firm size, and the discount lets the
measure distinguish one position from another.}

\subsection{Supply-chain structure}
\label{subsec:downstreamness}

In a supply chain money flows from downstream to upstream, and
intermediate inputs flow in the opposite direction. We now impose the
four assumptions that give the production network this shape and so
make it a supply chain in earnest.

The first assumption gives sourcing a direction. Fix a level
\(\psi_T\) and call the firms at downstreamness \(\psi_T\) or below the
top tier,
\begin{equation}
\mathcal T:=\{i\in N:\psi_i\le\psi_T\}
\label{eq:top_tier}
\end{equation}

\begin{assumption}[Upstream sourcing]
\label{assump:transmission_delay}
For the fixed constant \(\Delta_\psi>0\), every firm outside the top tier
is at least \(\Delta_\psi\) more downstream than its suppliers are on
average, each supplier weighted by its share of the firm's input expenditure,
\begin{equation}
\psi_i-\mathcal F_i(\psi)
\;\ge\;
\Delta_\psi
\qquad\text{for every }i\in N\setminus\mathcal T
\label{eq:transmission_delay}
\end{equation}
\end{assumption}

One round of sourcing climbs at least \(\Delta_\psi\) up the
supply chain.\footnote{The climb
cannot hold at every firm of the top tier, since the most upstream
firms have no suppliers upstream of them. The top tier
is the part of the supply chain over which the climb per round of sourcing
has fallen below \(\Delta_\psi\), as it must as the top is approached.}
The constant \(\Delta_\psi\) is the height of a tier of the supply chain. This climb
of at least one tier per round of sourcing is what makes the supply
chain a `chain'.

We also assume that retailers are
predominantly final-goods producers, in that only a small fraction
\(1-\epsilon_h\) of their output is sold to other firms.

\begin{assumption}[Consumption-share floor at the retail tier]
\label{assump:retail_outflow_cap}
For the fixed constant \(\epsilon_h\in(0,1)\), the household buys at
least the share
\(\epsilon_h\) of the output of every firm it buys from,
\begin{equation}
r_i\;\ge\;\epsilon_h
\qquad
\text{for every }i\in\operatorname{supp}\boldsymbol\gamma
\label{eq:retail_consumption_share}
\end{equation}
The constant \(\epsilon_h\)
lies above the downstreamness discount of
Definition~\ref{def:downstreamness}, \(\epsilon_h>\delta\).
\end{assumption}

Note that Assumption~\ref{assump:retail_outflow_cap} places the firms
that the household buys from strictly downstream of every other firm,
\begin{equation}
\min_{i\in\operatorname{supp}\boldsymbol\gamma}\widetilde\psi_i
\;>\;
\max_{j\notin\operatorname{supp}\boldsymbol\gamma}\widetilde\psi_j
\label{eq:retail_tier_primacy}
\end{equation}
No other firm sells enough to retailers to be more downstream than a retailer.\footnote{By \eqref{eq:downstream_raw_def},
every retailer has \(\widetilde\psi_i\ge r_i\ge\epsilon_h\), while
every firm that the household does not buy from has
\(\widetilde\psi_j\le\delta\max_{l\in N}\widetilde\psi_l\le\delta<\epsilon_h\).}
We call the firms that the household buys from,
\(\mathcal R:=\operatorname{supp}\boldsymbol\gamma\), the \emph{retail
tier}, the network's consumer-facing tier. Write
\(\psi_R:=\min_{i\in\mathcal R}\psi_i\) for the position of the least
downstream retailer, so that \(\psi_R\ge\epsilon_h\), while every firm off the retail tier has a position \(\psi_j\le\delta\). We call
\(1-\epsilon_h\) the \emph{retail slack}, the largest share of its
output that a retailer may sell to other firms.

\begin{assumption}[Time to build]
\label{assump:lag_position_sorting}
For the fixed constant \(\epsilon_k\in(0,1)\) and a lead time
\(k_\ast\in\{\underline k+1,\dots,\widehat k\}\), every retailer places
all but the share \(\epsilon_k\) of its input expenditure on suppliers
whose inputs arrive within fewer than \(k_\ast\) periods. And every
firm of the top tier places all but the share \(\epsilon_k\) on
suppliers whose inputs take \(k_\ast\) periods or more to arrive,
\begin{equation}
\sum_{k<k_\ast}s_i^{(k)}\ge1-\epsilon_k\ \text{ for every }i\in\mathcal R,
\qquad
\sum_{k\ge k_\ast}s_i^{(k)}\ge1-\epsilon_k\ \text{ for every }i\in\mathcal T
\label{eq:time_to_build}
\end{equation}
\end{assumption}

The assumption means lead times lengthen along the supply chain away from final demand. The retail tier produces mostly with inputs that arrive within fewer than \(k_\ast\) periods, and the top tier mostly with inputs that take \(k_\ast\) periods or more to arrive.\footnote{Inventory productivity is materially lower upstream in
manufacturing supply networks \citep{agrawal_osadchiy_2024}, which is what
longer effective inventory horizons upstream would produce.}
Under Assumption~\ref{assump:lag_position_sorting} the lead times of the firms between the retail tier and the top tier may be long or short and mixed within a firm. And in the retail tier and in the top tier a firm may hold the share \(\epsilon_k\) of its inputs at any lead time.

For a fixed constant \(s_R\in(\epsilon_k,1)\), write
\begin{equation}
\mathcal W
:=
\Bigl\{t\in\{\underline k,\dots,\widehat k\}:
\sum_{i\in\mathcal R}m_i^\ast(1-\beta_i)\,s_i^{(t)}
\;\ge\;
s_R\sum_{i\in\mathcal R}m_i^\ast(1-\beta_i)\Bigr\}
\label{eq:retail_lag_dates}
\end{equation}
for the \emph{retail lead-time dates}, the lead times on which the retail tier
places at least the share \(s_R\) of its input expenditure. Here
\(s_i^{(k)}\) of \eqref{eq:share_quantity_form} is the share that
firm \(i\) places on suppliers of lead time \(k\). Under
Assumption~\ref{assump:lag_position_sorting} the retail lead-time dates all
fall before \(k_\ast\). This is because the retail tier places at most
the share \(\epsilon_k<s_R\) of its expenditure on lead times of \(k_\ast\)
or more.

At impact firm \(i\) holds the balance \(m_i^\ast+\pi\overline M\zeta_i\)
and pays the wage bill \((1+\pi)\beta_im_i^\ast\), so its
intermediate-input expenditure is
\((1+\pi)(1-\beta_i)m_i^\ast+\pi\,\iota_im_i^\ast\). Per unit of shock
and relative to its stationary level, the expenditure moves by
\(1+\sigma_i\), where
\begin{equation}
\sigma_i:=\frac{\iota_i}{1-\beta_i}
\label{eq:expenditure_impulse}
\end{equation}
is the expenditure impulse of firm \(i\), the misalignment \eqref{eq:incidence_ratio} at firm \(i\) divided by its intermediate-input share. The one in \(1+\sigma_i\) is the rise of the money stock, which every firm shares. And \(\sigma_i\) is the extra rise that the uneven incidence of the shock gives the firm's expenditure, since its excess balance goes wholly into its purchases.
The nominal demand for a good moves with the expenditures of its
buyers. Per unit of shock and relative to its stationary level, the
demand for good \(j\) therefore moves by \(1+\overline\sigma_j\), where
\begin{equation}
\overline\sigma_j
:=
\sum_{l\in N}\Omega_{jl}\,\sigma_l
\label{eq:buyer_average_impulse}
\end{equation}
averages \(\sigma\) over the buyers of good \(j\), each counted by its
purchases. The household enters this average with its share \(r_j\) and an expenditure impulse of zero. This is because its expenditure moves with
the money stock exactly. The quantity of a good that a buyer obtains is set by its expenditure relative to the total demand for the good. Which means that
what matters for the real order of firm \(i\) from supplier \(j\) is
the difference \(\sigma_i-\overline\sigma_j\). For a firm \(i\) and a lead time
\(k\), write
\begin{equation}
\xi_i^{(k)}
:=
\sum_{j\in\mathcal S_i^{(k)}}
\widetilde a^{(k)}_{ji}\,\bigl(\sigma_i-\overline\sigma_j\bigr)
\label{eq:vintage_impulse_def}
\end{equation}
for the \emph{relative expenditure impulse} of firm \(i\) on its
inputs of lead time \(k\). It is the excess of the expenditure impulse of firm \(i\) over the
average expenditure impulse of the other buyers of those inputs. This
excess is averaged over its suppliers of lead time \(k\) with the
within-vintage shares of \eqref{eq:share_quantity_form}. For a set of firms
\(\mathcal C\subseteq N\), write
\begin{equation}
\xi_{\mathcal C}^{(k)}
:=
\frac{\sum_{i\in\mathcal C}m_i^\ast(1-\beta_i)\,s_i^{(k)}\,\xi_i^{(k)}}
{e_{\mathcal C}^{(k)}},
\qquad
e_{\mathcal C}^{(k)}
:=
\sum_{i\in\mathcal C}m_i^\ast(1-\beta_i)\,s_i^{(k)}
\label{eq:set_impulse}
\end{equation}
for the relative expenditure impulse of \(\mathcal C\) on its inputs
of lead time \(k\), the relative expenditure impulses of its firms averaged over their expenditure on those inputs. The relative expenditure impulse of a firm is positive when the shock lands more heavily on it, relative to its balance, than on the other buyers of its inputs, and negative when the shock lands more lightly on it than on them.

\begin{assumption}[Incidence--position sorting]
\label{assump:impact_forcing_ordering}
For the fixed constant \(c_\xi>0\), at every retail lead-time date the shock
lands more heavily on the retail tier than on the other buyers of the
inputs it orders at that lead time. And the shock does so by the margin \(c_\xi\) per unit of
slack,
\begin{equation}
\xi_{\mathcal R}^{(k)}\;\ge\;c_\xi\,(1-\theta)
\qquad\text{for every }k\in\mathcal W
\label{eq:size_position_sorting}
\end{equation}
\end{assumption}

When firm size declines along the supply chain toward final demand, which we call the size--position sorting, the incidence rule turns the ordering of sizes into an ordering of relative expenditure impulses. The
incidence ratio \(e^{-(1-\theta)\ell_i}/Z\) falls with size, so a firm smaller than the other buyers of its inputs receives a larger expenditure impulse than they do. To leading order in the slack, the relative expenditure impulse \(\xi_i^{(k)}\) of a firm on its inputs of lead time \(k\) is the slack times the shortfall of its log size
below the average log size of the other buyers of those inputs. Each
log size enters divided by the intermediate-input share of its firm
(Lemma~\ref{lem:impulse_exposure} of
Appendix~\ref{app:lemmas_reallocation}). The assumption says that
the retail tier is smaller than the firms with which it shares its input markets, stated in terms of the relative expenditure impulses that the shock produces.\footnote{The average firm in the consumer-facing sectors, retail
trade, food service and personal services, is far smaller than in the
upstream primary and intermediate-goods sectors.} The assumption compares the
retail tier with the other buyers of the inputs it buys, and with
nothing else.\footnote{The assumption says nothing about the sizes of firms elsewhere in the supply chain, about how size varies from one stage to the next, or about any single retailer. A retailer larger than every firm in the economy is
allowed, since it enters the average only in proportion to what it
buys in the markets it shares with non-retail firms or in which the
household buys. A market in which retailers bid only against one
another at the same lead time, and in which the household does not buy,
contributes nothing to the relative expenditure impulse of the retail tier. In such a market what one retailer gains another loses. The assumption
fails only where retailers buy mainly in markets of their
own, so that the comparison is empty, or where the retailers that buy
alongside upstream firms are the larger ones.}

\subsection{Impact reallocation}
\label{subsec:impact_reallocation}

To measure where activity moves across the supply chain, we introduce a
one-parameter family of aggregate log-output objects. The parameter
controls how much weight is placed on different parts of the supply chain, and lets us track whether a monetary shock shifts production
toward or away from final demand. Position is measured by the
downstreamness index \(\boldsymbol\psi\) of
Definition~\ref{def:downstreamness}.

\begin{definition}[Position-tilted log-output index]
\label{def:position_tilted_output_index}
For \(b\in[0,\infty]\), the tilt weights and the
position-tilted log-output index are
\begin{equation}
\omega_i(b)
:=
\frac{m_i^\ast\,\widehat\psi_i^{\,b}}
{\sum_{j\in N}m_j^\ast\,\widehat\psi_j^{\,b}},
\qquad
\widehat\psi_i:=\min\Bigl\{\frac{\psi_i}{\psi_R},\,1\Bigr\},
\qquad
\mathcal Q_t(b)
:=
\sum_{i\in N}
\omega_i(b)\log q_{i,t}
\label{eq:w_b_def}
\end{equation}
with \(\widehat\psi_i^{\,\infty}:=\lim_{b\to\infty}\widehat\psi_i^{\,b}\),
which is one on the retail tier and zero off it.
\end{definition}

The capped position \(\widehat\psi_i\) is the downstreamness index of firm
\(i\) relative to that of the least downstream retailer, capped at
one, so it is one on the whole retail tier and at most
\(\delta/\epsilon_h<1\) off it. At \(b=0\) the weights are the sales
shares \(m_i^\ast/\overline M\), which are proportional to the Domar
weights \(m_i^\ast/w^\ast\), and the index is the sales-weighted
aggregate of log outputs. As \(b\) grows the weights shift toward firms closer to final demand, and at \(b=\infty\) they are the sales shares
within the retail tier, so the index is the sales-weighted log output
of the retail tier. We study the first-order response of the index to
the shock,
\[
D_\pi\mathcal Q_t(b)
:=
\frac{\partial \mathcal Q_t(b,\pi)}{\partial\pi}\Bigg|_{\pi=0}
\]
At first order the output of a firm responds only to the scale of its dated input bundle, and not to its composition. This is because the composition residual
\(\mathbf z_{i,t}\) is orthogonal to the gradient of log output,
\(\nabla_i^\ast\!\cdot\mathbf z_{i,t}=0\) in
\eqref{eq:scale_composition_raw}. And the loss from production inefficiency
\(\mathcal D_{i,t}\) of \eqref{eq:firm_inefficiency_def}, a quadratic
form in the bundle deviation, is of second order in the shock. The
response of the index is therefore the tilted average of the responses
of the bundle scales,
\(D_\pi\mathcal Q_t(b)=\sum_{i\in N}\omega_i(b)(1-\beta_i)\mathcal F_i(\dot{\widehat x}_{i,t})\).
The results of this section concern this first-order response alone. They hold in the general economy, in which the loss from production inefficiency is also present.

The first bundle that the shock affects is the one that matures at date
\(\underline k\), the shortest lead time in the economy, which we take to be
positive throughout this section. The shortest-lead-time part of that
bundle was ordered at the impact date. The rest of it, and every input
from which the suppliers produced those orders, was ordered before the
shock. The impact-date orders are therefore exchanged at prices that the shock moves, against outputs that were fixed before the shock. The real order of
firm \(i\) from a shortest-lead-time supplier \(j\) moves by the
difference \(\sigma_i-\overline\sigma_j\). The part of its dated input bundle that the shock affects therefore moves by
\begin{equation}
\xi_{i,\underline k}
:=
\sum_{j\in\mathcal S_i^{(\underline k)}}
a_{ji}\,\bigl(\sigma_i-\overline\sigma_j\bigr)
\;=\;
s_i^{(\underline k)}\,\xi_i^{(\underline k)}
\label{eq:impact_forcing_def}
\end{equation}
the relative expenditure impulse \eqref{eq:vintage_impulse_def} of
firm \(i\) on its shortest-lead-time inputs, weighted by the share of
its expenditure that those inputs carry. This relative expenditure impulse is a property of the economy, fixed by the network \(\mathbf A\), the lead-time profile, the
stationary allocation, the labor shares and the incidence
parameter \(\theta\).\footnote{The relative expenditure impulse of a
firm vanishes when none of its suppliers sells to the household and
all of each supplier's firm customers carry the same expenditure
impulse. When both hold on every active shortest-lead-time link, money is
neutral at impact.}

The log output of firm \(i\) at date \(\underline k\) moves by its
intermediate-input share times its relative expenditure impulse, and
on its inputs of lead time \(k\) alone by
\begin{equation}
\phi_i:=(1-\beta_i)\,\xi_{i,\underline k},
\qquad
\phi_i^{(k)}:=(1-\beta_i)\,\xi_i^{(k)},
\qquad
\phi_i=s_i^{(\underline k)}\phi_i^{(\underline k)}
\label{eq:scaled_impulse_def}
\end{equation}
the \emph{scaled} relative expenditure impulses. The scaled relative expenditure impulse \(\phi_i^{(k)}\) on a vintage of lead time \(k>\underline k\) enters production at date \(k\), when the orders placed at impact on that vintage mature. At date \(\underline k\) the position-tilted log-output index moves by the
tilted average of the scaled relative expenditure impulses.

Every impulse is bounded by the incidence range. A relative
expenditure impulse is an average of differences of two expenditure
impulses, each at most \(C_\zeta(1-\theta)/(1-\overline\beta)\) in
absolute value, so at every firm and every active vintage
\begin{equation}
\bigl|\xi_i^{(k)}\bigr|\;\le\;C_\xi\,(1-\theta),
\qquad
\bigl|\phi_i^{(k)}\bigr|\;\le\;C_\phi\,(1-\theta),
\qquad
C_\xi:=\frac{2C_\zeta}{1-\overline\beta},
\qquad
C_\phi:=(1-\underline\beta)\,C_\xi
\label{eq:forcing_upper_bound}
\end{equation}
At a retail lead-time date the retail tier places at least the share
\(s_R\) of its input expenditure on the inputs of that lead time. Under
Assumption~\ref{assump:impact_forcing_ordering} the sales-weighted
output response of the retail tier on those inputs is therefore at
least
\begin{equation}
\sum_{i\in\mathcal R}m_i^\ast\,s_i^{(k)}\phi_i^{(k)}
\;\ge\;
c_R\,(1-\theta)\sum_{i\in\mathcal R}m_i^\ast,
\qquad
c_R:=s_R\,(1-\overline\beta)\,c_\xi
\qquad\text{for every }k\in\mathcal W
\label{eq:retail_impulse_bound}
\end{equation}
the \emph{retail margin} \(c_R\) per unit of the sales of the retail tier
(Appendix~\ref{app:proof_impact_reallocation}).

Call a tilt \(b\in[0,\infty]\) \emph{downstream-concentrated} if its
weights place mass at least \(1-\overline\epsilon_w\) on the retail
tier, \(\sum_{i\in\mathcal R}\omega_i(b)\ge1-\overline\epsilon_w\), for
a threshold \(\overline\epsilon_w\in(0,1)\) set by the constants of the
assumptions and the incidence range (Appendix~\ref{app:reallocation}).
The retail-tier mass \(\sum_{i\in\mathcal R}\omega_i(b)\) is
nondecreasing in \(b\) and rises to one. This is because every
capped position off the retail tier is below one
(Appendix~\ref{app:proof_impact_reallocation}). The
downstream-concentrated tilts therefore form an interval
\([b_w,\infty]\), nonempty for every threshold.

\begin{proposition}[Impact reallocation]
\label{prop:impact_reallocation}
Suppose that Assumptions~\ref{assump:diffuse_incidence},
\ref{assump:retail_outflow_cap}
and~\ref{assump:impact_forcing_ordering} hold, that
\(\underline k\ge1\), and that the shortest lead time is a retail lead-time date,
\(\underline k\in\mathcal W\). Then for every downstream-concentrated
tilt \(b\)
\[
D_\pi\mathcal Q_{\underline k}(b)\;\ge\;\tfrac12\,c_R\,(1-\theta)\;>\;0
\]
The impact response of every firm is its scaled relative expenditure
impulse, \(D_\pi\log q_{i,\underline k}=\phi_i\). A firm on which the
shock lands more lightly than on the other buyers of its shortest-lead-time
inputs by the margin, \(\xi_i^{(\underline k)}\le-c_\xi(1-\theta)\),
contracts,
\[
D_\pi\log q_{i,\underline k}\;\le\;-s_i^{(\underline k)}(1-\beta_i)\,c_\xi\,(1-\theta)\;<\;0
\]
And a set of firms \(\mathcal C\) on which the shock lands more heavily by
the margin, \(\xi_{\mathcal C}^{(\underline k)}\ge c_\xi(1-\theta)\),
expands in the sales-weighted aggregate,
\[
\sum_{i\in\mathcal C}m_i^\ast\,D_\pi\log q_{i,\underline k}
\;\ge\;
c_\xi\,(1-\theta)\,e_{\mathcal C}^{(\underline k)}
\;>\;0
\]
For a negative shock all signs reverse.
\end{proposition}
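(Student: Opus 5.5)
The plan is to compute the first-order response of each firm's log output at date \(\underline k\) directly from the within-period sequence, and then aggregate with the tilt weights. The first step is the firm-level identity \(D_\pi\log q_{i,\underline k}=\phi_i\). By the first-order part of \eqref{eq:log_q_scale_composition}, \(D_\pi\log q_{i,\underline k}=(1-\beta_i)\mathcal F_i(\dot{\widehat x}_{i,\underline k})\); the composition term is quadratic and drops out. Because \(\underline k\ge1\), every input in the date-\(\underline k\) bundle with \(k_{ji}>\underline k\) was ordered before the shock, so \(\dot{\widehat x}_{ji,\underline k}=0\) for those suppliers. For \(k_{ji}=\underline k\) the order was placed at date \(0\), and by Lemma~\ref{lem:order_response_decomposition} its response is \(\dot{\widehat x}_{ji,0}=\dot{\widehat e}_{i,0}-\dot{\widehat d}_{j,0}+\dot{\widehat q}_{j,0}\).

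Each of these terms can be read off at impact. The output \(q_{j,0}\) is produced from pre-shock orders, using \(\underline k\ge1\) at the supplier as well, so \(\dot{\widehat q}_{j,0}=0\). From the impact expenditure \((1+\pi)(1-\beta_i)m_i^\ast+\pi\iota_i m_i^\ast\) we get \(\dot{\widehat e}_{i,0}=1+\sigma_i\). Summing the buyers' expenditure responses weighted by \(\Omega_{jl}\), with the household entering at weight \(r_j\) and impulse \(0\), and using \(\sum_l\Omega_{jl}+r_j=1\), gives \(\dot{\widehat d}_{j,0}=1+\overline\sigma_j\). Hence \(\dot{\widehat x}_{ji,0}=\sigma_i-\overline\sigma_j\). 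Averaging under \(a_{ji}\) over \(\mathcal S_i^{(\underline k)}\) reproduces \(\xi_{i,\underline k}=s_i^{(\underline k)}\xi_i^{(\underline k)}\) by \eqref{eq:impact_forcing_def}, and multiplying by \(1-\beta_i\) gives \(\phi_i\).

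The firm-level and set-level claims then follow at once. If \(\xi_i^{(\underline k)}\le-c_\xi(1-\theta)\), then \(\phi_i=(1-\beta_i)s_i^{(\underline k)}\xi_i^{(\underline k)}\) is bounded above as stated. For a set \(\mathcal C\), summing \(m_i^\ast\phi_i\) gives exactly \(e_{\mathcal C}^{(\underline k)}\xi_{\mathcal C}^{(\underline k)}\) by \eqref{eq:set_impulse}, which is at least \(c_\xi(1-\theta)e_{\mathcal C}^{(\underline k)}\). A negative shock flips every sign, since all responses are linear in \(\pi\).

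For the tilted index, split \(D_\pi\mathcal Q_{\underline k}(b)=\sum_i\omega_i(b)\phi_i\) into the retail part and the off-retail part. On \(\mathcal R\), \(\widehat\psi_i=1\), so \(\omega_i(b)=m_i^\ast/Z_b\) with \(Z_b:=\sum_j m_j^\ast\widehat\psi_j^{\,b}\). Since \(\underline k\in\mathcal W\), \eqref{eq:retail_impulse_bound} gives a retail part of at least \(c_R(1-\theta)\sum_{\mathcal R}m_i^\ast/Z_b=c_R(1-\theta)\sum_{\mathcal R}\omega_i(b)\ge c_R(1-\theta)(1-\overline\epsilon_w)\). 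The main point of care is the size of the off-retail part. By \eqref{eq:forcing_upper_bound} it is at most \(C_\phi(1-\theta)\overline\epsilon_w\) in absolute value. The total is therefore at least \((1-\theta)[c_R(1-\overline\epsilon_w)-C_\phi\overline\epsilon_w]\). Choosing the threshold \(\overline\epsilon_w\le c_R/(2(c_R+C_\phi))\), which is permitted because \(\overline\epsilon_w\) is set by the constants and the incidence range, makes this at least \(\tfrac12c_R(1-\theta)\). This choice is the one genuine design decision in the proof, and it is the step I would verify against the appendix definition of \(\overline\epsilon_w\).
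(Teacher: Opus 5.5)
Your proposal is correct and follows the paper's proof essentially step for step: the impact contrast \(\sigma_i-\overline\sigma_j\) because \(q_{j,0}\) is fixed before the shock, the bundle scale \(\xi_{i,\underline k}\) giving \(\phi_i\), and the split of the tilted index into a retail part with flat weights and an off-retail part. The threshold you flag is met by the paper's fixed choice \(\overline\epsilon_w=c_R/(8(c_R+C_\phi/\underline\beta))\), since \(8(c_R+C_\phi/\underline\beta)\ge 2(c_R+C_\phi)\).
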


\noindent
Proof in Appendix~\ref{app:proof_impact_reallocation}.

The simple reading of Proposition~\ref{prop:impact_reallocation} is
that a positive monetary shock moves activity toward final demand at
impact. A firm on which the shock lands more heavily than on the other
buyers of its shortest-lead-time inputs outbids them for outputs that
were fixed before the shock. The bound on the contraction of a firm is proportional to its shortest-lead-time share \(s_i^{(\underline k)}\). This is
because only that part of its bundle was ordered after the shock,
\(\phi_i=s_i^{(\underline k)}\phi_i^{(\underline k)}\). The last statement of the proposition, on a set of firms, applies the incidence--position sorting at any point
of the supply chain. Consider the firms whose downstreamness exceeds some
level. What these firms gain at impact is what the firms further up
the supply chain, and the household, give up in the markets they share. Their gain is positive whenever the shock lands more heavily on the buyers nearer final demand in those markets than on the other buyers there, whatever it does elsewhere. Taking the level just below the downstreamness of the least downstream retailer gives the retail tier.

\subsection{Persistence of the impact reallocation}
\label{subsec:finite_horizon_persistence}

A positive monetary shock expands activity downstream at impact, and
it contracts the firms on which it lands more lightly than on the other buyers of their shortest-lead-time inputs. A negative shock generates the mirror image. The question is how long
these signs survive as the shock propagates through the production
network, and why they might reverse. As it turns out, what can reverse these signs is the upstream contraction itself. The mechanism is as follows. The firms far
from final demand order less at impact when the shock lands more lightly on them than on the other buyers of their inputs. Their reduced output travels down the supply chain, one
lead time per link, until it enters the output of the retail tier.
If this reduced output arrives while the retailers' impact-date orders are still maturing, it offsets part of the downstream expansion. What matters is how long the output of a firm takes to reach final demand. Let \(F_i(T)\) be the share of firm \(i\)'s output that
reaches final demand within \(T\) periods. A unit that firm \(i\) sells to the household reaches final demand immediately. A unit that it sells to
a firm \(j\) arrives there after the lead time \(k_{ij}\) of that
purchase, enters firm \(j\)'s output, and moves on in the same way.
Which means that for \(T\ge0\)
\begin{equation}
F_i(T)
=
r_i+\sum_{j\in N}\Omega_{ij}\,F_j(T-k_{ij}),
\qquad
F_i(T)=0\ \text{ for }T<0
\label{eq:delivery_time}
\end{equation}
with \(r_i\) the consumption share of firm \(i\) and
\(\Omega_{ij}\) the share of its output sold to firm \(j\). The share \(F_i(T)\)
rises with \(T\) to one. The function \(T\mapsto F_i(T)\) is the
distribution of the delivery time of firm \(i\)'s output to final
demand, the position of the firm along the supply chain measured in periods
rather than in the discount of Definition~\ref{def:downstreamness}.
Followed back through the chains of purchases, each unit of the household's expenditure pays firm \(i\), in total, its Domar weight \(m_i^\ast/w^\ast\). Of this total, the part spent on output that
\(i\) delivers to final demand within \(T\) periods is
\((m_i^\ast/w^\ast)F_i(T)\) (Lemma~\ref{lem:gdp_pass_through} of
Appendix~\ref{app:lemmas_reallocation}).

The first-order response of the position-tilted log-output index at the dates after impact is governed by the same relative expenditure impulses, evaluated
through the pipeline at the dates on which the maturing orders were
placed. After impact the excess balances that the monetary expansion
placed on small firms move, round by round, to their suppliers. This
is because the cross-sectional balance misalignment propagates by the exact law
\(\dot{\mathbf m}^\perp_{t+1}=\mathbf A\dot{\mathbf m}^\perp_t\). At an order date \(s\) after impact the expenditure of a buyer \(i\) therefore
moves by \(\dot{\widehat e}_{i,s}\) per unit of shock, which reflects the part of this excess that the buyer holds at date \(s\). The demand for each of its inputs
\(j\) moves by \(\dot{\widehat d}_{j,s}\), which reflects the parts of the excess held by the other buyers of that input. Let
\begin{equation}
\Xi_{ij,s}:=\dot{\widehat e}_{i,s}-\dot{\widehat d}_{j,s}
\label{eq:dated_contrast}
\end{equation}
be the \emph{dated contrast} of buyer \(i\) on supplier \(j\) at order
date \(s\), the date-\(s\) counterpart of the difference
\(\sigma_i-\overline\sigma_j\) that sets the impact-date orders in
\eqref{eq:impact_forcing_def}. The dated contrasts at date \(s\) are set by the misalignment of balances after \(s\) rounds of purchasing. Evaluated at the order dates of a bundle, the dated contrasts are the monetary wave, \(M_{ji,t}=\Xi_{ij,t-k_{ji}}\). The first-order response of the position-tilted log-output index at
every date is a non-negative combination of dated contrasts,
\begin{equation}
D_\pi\mathcal Q_t(b)
=
\sum_{s=0}^{t}\;\sum_{(i,j):\,j\in\mathcal S_i}
\varpi^{(t,s)}_{ij}(b)\,\Xi_{ij,s}
\label{eq:forced_representation}
\end{equation}
with weights \(\varpi^{(t,s)}_{ij}(b)\ge0\) whose total at any date is at
most \((1-\underline\beta)/\underline\beta\)
(Lemma~\ref{lem:supplier_recursion}). The weight \(\varpi^{(t,s)}_{ij}(b)\) is the tilt mass that reaches the link from buyer \(i\) to supplier \(j\) at order date \(s\) along paths of suppliers whose lead times sum to \(t-s\). The paths of one link carry the dated contrasts of the buyer itself on its suppliers, at the dates on which it ordered from them. The longer paths carry the dated contrasts of its suppliers on their own suppliers, and so on up the supply chain, each at the date on which the corresponding order was placed. Every path is discounted by the pass-through factors
\(1-\beta\) of the firms along it, and this discount bounds the total
tilt mass at a date.

Note that the impact-date relative expenditure impulse \(\xi_{i,\underline k}\) does not
disappear at the impact date. An order
placed at date \(0\) on a supplier of lead time \(k\) matures at date \(k\). The dated contrast of the impact date therefore enters production at date \(k\) at every firm with a supplier of lead time \(k\). The pipeline carries the impact-date contrasts to later dates one vintage at a time, until the orders of the longest lead time have matured. Write
\(I_t(b)\) for the part of \eqref{eq:forced_representation} that comes from the dated contrasts of the impact date, the terms with \(s=0\), and \(J_t(b):=D_\pi\mathcal Q_t(b)-I_t(b)\) for the part that comes from the dated contrasts of the dates after impact. Write \(t_{\mathcal W}:=\max\mathcal W\) for the last
retail lead-time date.

An input ordered at date \(0\) from a supplier of lead time \(k\) arrives at
date \(k\) and enters the buyer's output, which reaches final demand
in the delivery time of \eqref{eq:delivery_time}. Over an interval of
dates \([T_1,T_2]\), the expenditure on the purchases whose product
reaches the household within the interval is, relative to GDP,
\begin{equation}
\Gamma[T_1,T_2]
:=
\sum_{l\in N}\sum_{k\in\mathcal K_l}\Gamma_{lk}[T_1,T_2],
\qquad
\Gamma_{lk}[T_1,T_2]
:=
\frac{m_l^\ast}{w^\ast}(1-\beta_l)\,s_l^{(k)}\bigl[F_l(T_2-k)-F_l(T_1-1-k)\bigr]
\label{eq:delivered_purchases}
\end{equation}
Each purchase is counted by the share of the buyer's output delivered to
the household in the interval. For a set of firms \(\mathcal C\), \(\Gamma_{\mathcal C}[T_1,T_2]\) is the sum in
\eqref{eq:delivered_purchases} over the purchases of the firms of
\(\mathcal C\).

Call the firms within a fixed distance \(\Delta_+>0\) upstream of the
retail tier its \emph{neighborhood},
\begin{equation}
\mathcal G:=\{i\in N\setminus\mathcal R:\psi_i\ge\psi_R-\Delta_+\}
\label{eq:retail_neighborhood}
\end{equation}
with \(\psi_R\) the position of the least downstream retailer, and take the top tier to lie beyond the neighborhood, \(\psi_T<\psi_R-\Delta_+\). Call the retail
tier with its neighborhood, \(\mathcal R^{+}:=\mathcal R\cup\mathcal G\),
the \emph{neighborhood of final demand}. The shock may land more
lightly on a firm of the neighborhood than on the other buyers of its
inputs, but by no more than a tolerance
\(\overline\epsilon_\xi(1-\theta)\),\footnote{Firms that sell mostly to retailers lie just below the retail tier in downstreamness, since most
of what they make is one step from the household. Many of them are
small firms that buy fuel, materials and machinery in markets shared
with upstream firms, where they are the smaller buyers.
Condition~\eqref{eq:hypothesis_N} limits how much larger than the other
buyers of its inputs any firm of the neighborhood may be.}
\begin{equation}
\xi_i^{(k)}\;\ge\;-\overline\epsilon_\xi\,(1-\theta)
\quad\text{for every }i\in\mathcal G\text{ and every active vintage }k
\tag{N}
\label{eq:hypothesis_N}
\end{equation}
We also ask that the firms beyond the neighborhood of final demand
have delivered little to the household by the last retail lead-time date,
and that the retail slack \(1-\epsilon_h\) be small, by the thresholds
\(\overline\epsilon_F\) and \(\overline\epsilon_h\),
\begin{equation}
\Gamma_{N\setminus\mathcal R^{+}}[0,t_{\mathcal W}]\le\overline\epsilon_F,
\qquad
1-\epsilon_h\le\overline\epsilon_h
\tag{L}
\label{eq:hypothesis_L}
\end{equation}
The tolerance \(\overline\epsilon_\xi\) and the two thresholds are set
by the constants of the assumptions and the incidence range
(Appendix~\ref{app:reallocation}).

\begin{proposition}[Sign persistence over a finite horizon]
\label{prop:finite_horizon_sign_persistence}
Suppose that Assumptions~\ref{assump:diffuse_incidence},
\ref{assump:monotone_decay_spectrum}, \ref{assump:retail_outflow_cap}
and~\ref{assump:impact_forcing_ordering} hold, that
\(\underline k\ge1\), and that conditions~\eqref{eq:hypothesis_N}
and~\eqref{eq:hypothesis_L} hold. Then for every
downstream-concentrated tilt \(b\) and every date \(t\in\mathcal W\)
\begin{equation}
I_t(b)\;\ge\;\tfrac12\,c_R\,(1-\theta)
\label{eq:persistence_impact_component}
\end{equation}
and hence \(D_\pi\mathcal Q_t(b)\ge\tfrac14c_R(1-\theta)>0\)
at every date \(t\in\mathcal W\) at which the post-impact part of the
response is small, in the sense that
\begin{equation}
|J_t(b)|\;\le\;\tfrac14\,c_R\,(1-\theta)
\tag{P}
\label{eq:hypothesis_P}
\end{equation}
\end{proposition}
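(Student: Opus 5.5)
The plan is to bound the impact component \(I_t(b)\) from below at a retail lead-time date \(t\in\mathcal W\); the second claim then follows at once. Since \(D_\pi\mathcal Q_t(b)=I_t(b)+J_t(b)\), condition~\eqref{eq:hypothesis_P} gives \(D_\pi\mathcal Q_t(b)\ge\tfrac12c_R(1-\theta)-\tfrac14c_R(1-\theta)\).

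\textbf{Step 1: explicit form of the impact component.} I first write \(I_t(b)\) explicitly from the representation~\eqref{eq:forced_representation} restricted to \(s=0\), using Lemma~\ref{lem:supplier_recursion}. At \(s=0\) the dated contrast is the impact contrast, \(\Xi_{ij,0}=\sigma_i-\overline\sigma_j\). The weight \(\varpi^{(t,0)}_{ij}(b)\) is the tilt mass reaching link \((i,j)\) along supplier paths whose lead times sum to \(t\). I split these paths into two classes.

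\emph{Direct terms (length-one paths).} These have \(k_{ji}=t\) and weight \(\omega_i(b)(1-\beta_i)a_{ji}\). Summed over \(i\), they give
\[
\sum_i\omega_i(b)\,s_i^{(t)}\phi_i^{(t)}.
\]

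\emph{Indirect terms (longer paths).} These start at a firm \(l\) with tilt weight and pass through an intermediate supplier \(i\). Their total mass at any date is bounded by \((1-\underline\beta)/\underline\beta\).

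\textbf{Step 2: the direct retail part.} The direct term restricted to the retail tier is \(\sum_{i\in\mathcal R}\omega_i(b)s_i^{(t)}\phi_i^{(t)}\). Within the retail tier, \(\omega_i(b)\) is proportional to \(m_i^\ast\), because \(\widehat\psi_i=1\) there. Hence, by the retail margin~\eqref{eq:retail_impulse_bound} and downstream concentration, this term is at least
\[
(1-\overline\epsilon_w)\,c_R\,(1-\theta).
\]
The non-retail direct terms carry tilt mass at most \(\overline\epsilon_w\), and each is bounded by \(C_\phi(1-\theta)\) through~\eqref{eq:forcing_upper_bound}. Their contribution is therefore at least \(-\overline\epsilon_wC_\phi(1-\theta)\).

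\textbf{Step 3: the indirect part (main obstacle).} These paths run from a tilted firm \(l\) (mostly retail) back to upstream links ordered at date \(0\) and maturing in time to feed \(l\)'s output by date \(t\). I sort each path by where its ordering link \((i,j)\) sits.

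\emph{Ordering buyer \(i\in\mathcal G\).} Here \(\Xi_{ij,0}\) is averaged into \(\xi_i^{(k)}\ge-\overline\epsilon_\xi(1-\theta)\) by~\eqref{eq:hypothesis_N}. The total weight is at most \((1-\underline\beta)/\underline\beta\), so the negative contribution is at most \(\overline\epsilon_\xi(1-\theta)(1-\underline\beta)/\underline\beta\).

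\emph{Ordering buyer \(i\in\mathcal R\).} This requires a retail firm to sell to another firm. The mass of such paths is controlled by the retail slack \(1-\epsilon_h\le\overline\epsilon_h\), so it contributes at most a multiple of \(\overline\epsilon_hC_\phi(1-\theta)\).

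\emph{Ordering buyer \(i\notin\mathcal R^+\).} The tilt mass of these paths must be identified with the delivered-purchase measure \(\Gamma_{N\setminus\mathcal R^+}[0,t_{\mathcal W}]\) of~\eqref{eq:delivered_purchases}. I would attempt this via Lemma~\ref{lem:gdp_pass_through} and the delivery-time recursion~\eqref{eq:delivery_time}. A date-\(0\) order by \(i\) matures at \(k_{ji}\) and reaches the tilted firms through \(i\)'s output within \(t-k_{ji}\) periods, with \(t\le t_{\mathcal W}\). So the weight should be dominated, up to a constant from normalizing tilt weights by retail sales relative to GDP, by \(\Gamma_{N\setminus\mathcal R^+}[0,t_{\mathcal W}]\). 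Condition~\eqref{eq:hypothesis_L} then bounds this part by a constant times \(\overline\epsilon_FC_\phi(1-\theta)\).

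The delicate points are matching path weights (tilt weights times \(1-\beta\) factors times shares) to the \(\Gamma\) and \(F\) objects, and handling the normalization of \(\omega(b)\) against \(m^\ast/w^\ast\). I expect this to be the main technical work.

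\textbf{Step 4: assembling the bound.} Collecting the pieces,
\[
I_t(b)\ge(1-\theta)\Bigl[(1-\overline\epsilon_w)c_R-C\bigl(\overline\epsilon_w+\overline\epsilon_\xi+\overline\epsilon_h+\overline\epsilon_F\bigr)\Bigr].
\]
Here \(C\) depends only on \(C_\phi\), \(\underline\beta\) and the retail-tier-to-GDP ratio (bounded below through \(\epsilon_h\) and \(\underline\beta\)). Choosing the thresholds small relative to \(c_R/C\), as in Appendix~\ref{app:reallocation}, gives \(I_t(b)\ge\tfrac12c_R(1-\theta)\).
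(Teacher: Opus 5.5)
Your decomposition is the paper's own. You split $I_t(b)$ by path length and bound the zero-link terms by the retail margin and $\overline\epsilon_w$. You then sort the longer paths by their last buyer into $\mathcal G$ (condition (N)), $\mathcal R$ (retail slack) and $N\setminus\mathcal R^{+}$ (delivered purchases, via Lemma~\ref{lem:gdp_pass_through} and condition (L)). Two steps of your Step~3 do not go through as stated, however.

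\textbf{The class $N\setminus\mathcal R^{+}$.} Here the path masses are not dominated by $\Gamma_{N\setminus\mathcal R^{+}}[0,t_{\mathcal W}]$ up to a normalization constant. The comparison that makes Lemma~\ref{lem:gdp_pass_through} usable is $\omega_i(b)\le\epsilon_h^{-1}\gamma_i$. It holds only on $\mathcal R$, because $\boldsymbol\gamma$ vanishes off the retail tier. A path starting at a non-retail firm with positive tilt weight can end at a buyer $l\notin\mathcal R^{+}$ whose delivery share $F_l(t-k)$ is zero, so no multiple of $\Gamma$ covers it. The paper therefore splits this class by starting firm:
\begin{itemize}
\item Paths starting off $\mathcal R$ are charged to $\overline\epsilon_w$, with mass at most $\overline\epsilon_w(1-\underline\beta)^r$ at length $r$.
\item Only paths starting on $\mathcal R$ are charged to $\epsilon_h^{-1}(m_l^\ast/w^\ast)F_l(t-k)$. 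With $|\phi_l^{(k)}|\le(1-\beta_l)C_\xi(1-\theta)$, they contribute at least $-C_\xi\,\epsilon_h^{-1}\,\Gamma_{N\setminus\mathcal R^{+}}[0,t_{\mathcal W}]\,(1-\theta)$.
\end{itemize}

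\textbf{Paths ending on the retail tier.} Let $\mu^{(r)}(l)$ be the mass of the $r$-link paths ending at $l$, and $\overline M(b)$ the denominator of the tilt weights. The retail slack does bound the mass at each length, $\mu^{(r)}(\mathcal R)\le(1-\epsilon_h)\,\omega_{\mathcal R}(b)$. This follows from the induction $\mu^{(r)}(l)\le(1-r_l)\,m_l^\ast/\overline M(b)$ on the stationary sales identity. But the bound holds at every length with no decay in $r$. On its own it does not sum over lengths, so your ``multiple of $\overline\epsilon_h C_\phi(1-\theta)$'' is unproved.

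The paper interpolates with the total-mass bound, $\min\{(1-\underline\beta)^r,1-\epsilon_h\}\le\bigl((1-\underline\beta)^r(1-\epsilon_h)\bigr)^{1/2}$. This gives only $(1-\epsilon_h)^{1/2}$ control, which is why its threshold $\overline\epsilon_h$ is a square. A bound linear in $1-\epsilon_h$ is also available. For instance, $t\le\widehat k$ and every lead time is at least one period, so only boundedly many lengths enter. Either way, you have to supply the argument.

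With both repairs your Step~4 closes. The constant $C$ then depends also on $\epsilon_h^{-1}$, on $C_\xi$, and, on the second route, on $\widehat k$.
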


\noindent
Proof in Appendix~\ref{app:proof_finite_horizon_sign_persistence}.

The simple reading of Proposition~\ref{prop:finite_horizon_sign_persistence} is that the impact advantage of the retail tier holds on every retail lead-time date. The
persistence of the downstream expansion comes from the pipeline of
dated orders.\footnote{Condition~\eqref{eq:hypothesis_P}
bounds the post-impact part \(J_t(b)\) of the response. After impact
the excess balances move on from the retailers to their suppliers.
\(J_t(b)\) collects the dated contrasts of the retailers on their suppliers at the dates after impact, and those of the firms further up their supply chains. In words, condition~\eqref{eq:hypothesis_P} says that these dated contrasts do not outweigh the maturing impact advantage of the retail tier. The condition holds at the
shortest lead time, where \(J_{\underline k}(b)=0\), since no order placed
after impact has matured by then. At a later retail lead-time date it holds
when the orders placed after impact carry little of the tilt mass at
that date. It also holds when the excess balances they were paid from
are spread evenly across the buyers of each input
(Appendix~\ref{app:proof_finite_horizon_sign_persistence}).} At each retail lead-time date a vintage of the orders that the retail tier placed at impact matures, orders placed when the shock let it outbid the other buyers of those inputs. The bound \eqref{eq:persistence_impact_component} is
what this maturing impact advantage is worth once two leakages are
netted out. One is what the firms beyond the neighborhood of final demand have
delivered to the household by the retail lead-time dates, and the other is
what retailers sell to other firms.

\subsection{Reversal of the impact reallocation}
\label{subsec:sign_reversing_irf}

A positive monetary shock keeps downstream output above its stationary level on the retail lead-time dates. But firms at the upstream end of the supply chain order less at impact when the shock lands more lightly on them than on the other buyers of their inputs, and these orders lower their output as they mature.
We now turn to the dates beyond the retail lead-time dates. Upstream output is the input of downstream output through
long chains of buyer--seller relations, so the
reduction in upstream production arrives at downstream firms, after the lead times of those relations, as a
shortage of dated inputs. The initial expansion of downstream output can then give way to a contraction. Whether it does depends on when the shortage arrives, on how much of it survives the passage down the supply chain, and on whether what survives outweighs what still keeps downstream output above its stationary level. We date the
arrival of the shortage, bound what survives of it, and state the comparison with what still keeps downstream output above its stationary level as an explicit condition.

Note that the reversal of the downstream expansion is driven by two
different forces, which equation \eqref{eq:forced_representation}
separates. One of them originates in
the monetary wave. At impact the excess balances are held by the small, predominantly downstream firms that received them. One round of purchasing later they are held by the suppliers of those firms, and so on up the supply chain. The dated contrasts of a retailer on its suppliers, positive at impact, therefore fall as the excess balances move from the retailer to the other customers of its suppliers, who then outbid it for the same goods. How far these dated contrasts fall, and whether they change sign, is a property of the particular supply chain, given by equation \eqref{eq:forced_representation} and the path of the misalignment of balances. At the dates beyond its longest lead time the bundle of a retailer contains only orders placed after the impact date. The dated contrasts that enter that bundle are therefore those of the dates after impact.

The other force driving the reversal is the propagation down the supply chain of the contraction of the firms at its upstream end. The top tier
\(\mathcal T\) orders less at impact when the shock lands more lightly
on its firms than on the other buyers of their inputs by the margin
\(c_\xi\) at every vintage,
\begin{equation}
\xi_i^{(k)}\;\le\;-c_\xi\,(1-\theta)
\qquad\text{for every }i\in\mathcal T\text{ and every active vintage }k
\tag{U}
\label{eq:hypothesis_U}
\end{equation}
These orders then lower its output as they mature, as in
Proposition~\ref{prop:impact_reallocation}, and we call the top tier
the contracting region when condition~\eqref{eq:hypothesis_U} holds.
For a set of firms \(\mathcal C\), write
\begin{equation}
I_t^{\mathcal C}(b)
:=
\sum_{(i,j):\,i\in\mathcal C}\varpi^{(t,0)}_{ij}(b)\,\Xi_{ij,0},
\qquad
\mu_t(\mathcal C)
:=
\sum_{(i,j):\,i\in\mathcal C}\frac{\varpi^{(t,0)}_{ij}(b)}{1-\beta_i}
\label{eq:arriving_contraction}
\end{equation}
for the part of the impact-date component \(I_t(b)\) that the dated contrasts of the impact date at buyers in \(\mathcal C\) carry to date \(t\),
and for the tilt mass that carries this part. The tilt mass is scaled
by the buyers' intermediate-input shares. The date-\(t\) response \eqref{eq:forced_representation} then
splits into three parts, the part carried by the contracting region,
the part carried by the other firms and the post-impact part,
\(D_\pi\mathcal Q_t(b)=I_t^{\mathcal T}(b)+I_t^{N\setminus\mathcal T}(b)+J_t(b)\).
We call \(-I_t^{\mathcal T}(b)\) the \emph{arriving contraction}. The
constant \(C_\phi\) of \eqref{eq:forcing_upper_bound} bounds the scaled
relative expenditure impulses, \(|\phi_i^{(k)}|\le C_\phi(1-\theta)\) at every firm and vintage.

\begin{theorem}[Two-phase reallocative impulse response]
\label{thm:sign_reversing_irf}
Suppose that Assumptions~\ref{assump:diffuse_incidence},
\ref{assump:monotone_decay_spectrum}, \ref{assump:transmission_delay},
\ref{assump:retail_outflow_cap}, \ref{assump:lag_position_sorting}
and~\ref{assump:impact_forcing_ordering} hold, that
\(\underline k\ge1\), and that conditions~\eqref{eq:hypothesis_U},
\eqref{eq:hypothesis_N} and~\eqref{eq:hypothesis_L} hold. Then for
every downstream-concentrated tilt \(b\) the first-order response of
the position-tilted log-output index is positive,
\(D_\pi\mathcal Q_t(b)\ge\tfrac14c_R(1-\theta)>0\)
at every date \(t\in\mathcal W\) at which condition~\eqref{eq:hypothesis_P}
of Proposition~\ref{prop:finite_horizon_sign_persistence} holds. There
are a horizon \(\overline\tau\) and a constant \(\underline\mu>0\), both set by the
constants of the assumptions, such that at some date
\(\tau^-\in[\underline k+k_\ast,\overline\tau]\), beyond every retail
lead-time date, the arriving contraction is bounded below,
\[
-I_{\tau^-}^{\mathcal T}(b)\;\ge\;\underline\mu\,(1-\overline\beta)\,c_\xi\,(1-\theta)\;>\;0
\]
And the horizon \(\overline\tau\) is a bounded multiple of the longest lead time
\(\widehat k\). Suppose that at the date \(\tau^-\) the tilt mass on the contracting region's impact-date
orders outweighs the tilt mass on the other impact-date orders together
with the post-impact part of the response,
\begin{equation}
(1-\overline\beta)\,c_\xi\,(1-\theta)\,\mu_{\tau^-}(\mathcal T)
\;>\;
C_\phi\,(1-\theta)\,\mu_{\tau^-}(N\setminus\mathcal T)+J_{\tau^-}(b)
\tag{R}
\label{eq:hypothesis_R}
\end{equation}
Then \(D_\pi\mathcal Q_{\tau^-}(b)<0\), so that the early expansion is
followed by a reversal. For a shock of the opposite sign the first-order
response is the mirror image.
\end{theorem}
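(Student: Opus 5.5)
The proof splits into three parts, each resting on the representation \eqref{eq:forced_representation} and on Proposition~\ref{prop:finite_horizon_sign_persistence}. The first claim, positivity at every $t\in\mathcal W$ at which \eqref{eq:hypothesis_P} holds, needs no new argument. Its hypotheses are contained in those of the theorem, so $I_t(b)\ge\tfrac12c_R(1-\theta)$ on $\mathcal W$, and subtracting $|J_t(b)|\le\tfrac14c_R(1-\theta)$ gives the bound. Two further facts follow from Assumption~\ref{assump:lag_position_sorting}. First, every retail lead-time date is below $k_\ast$, since the retail tier places at most the share $\epsilon_k<s_R$ of its expenditure at lead times $\ge k_\ast$. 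Second, any date $\tau\ge\underline k+k_\ast$ therefore lies beyond $\mathcal W$.

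The second part is the lower bound on the arriving contraction. I would fix $b$ and use the path interpretation of the weights $\varpi^{(t,0)}_{ij}(b)$ from Lemma~\ref{lem:supplier_recursion}. The weight is the tilt mass that starts on the retail tier and travels up chains of suppliers to a link $(i,j)$ with $i\in\mathcal T$, along paths whose lead times sum to $t$. By \eqref{eq:hypothesis_U} and $\Xi_{ij,0}=\sigma_i-\overline\sigma_j$, each top-tier term satisfies
\[
\textstyle\sum_j\varpi^{(t,0)}_{ij}\Xi_{ij,0}\le -(1-\beta_i)c_\xi(1-\theta)\sum_j\varpi^{(t,0)}_{ij}/(1-\beta_i),
\]
once the within-vintage averaging is matched to the vintage-wise weights. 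Summing over $i\in\mathcal T$ gives
\[
-I^{\mathcal T}_t(b)\ge(1-\overline\beta)c_\xi(1-\theta)\mu_t(\mathcal T).
\]
The task is then to find $\tau^-\in[\underline k+k_\ast,\overline\tau]$ with $\mu_{\tau^-}(\mathcal T)\ge\underline\mu$. The route is to note the following.
\begin{itemize}
\item The tilt puts mass at least $1-\overline\epsilon_w$ on $\mathcal R$.
\item By Assumption~\ref{assump:transmission_delay}, each round of sourcing climbs at least $\Delta_\psi$ on average. So after at most $\lceil 1/\Delta_\psi\rceil$ rounds, a fixed fraction of the supplier-share mass from $\mathcal R$ has reached $\mathcal T$. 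I would obtain this by a Markov-type argument on the supermartingale $\psi$ along the random walk defined by the columns of $\mathbf A$.
\item Each round loses at most the factor $1-\overline\beta$ of pass-through, and each link has lead time at most $\widehat k$.
\end{itemize}
This places total top-tier mass of at least $\underline\mu_{\rm tot}$ over the dates up to $\overline\tau:=(\lceil1/\Delta_\psi\rceil+1)\widehat k+k_\ast$. Pigeonholing over these $\overline\tau$ dates gives a single date carrying $\underline\mu=\underline\mu_{\rm tot}/\overline\tau$. The horizon is a bounded multiple of $\widehat k$, as claimed.

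Two points need care. The date must be at least $\underline k+k_\ast$. This holds because the last link into $\mathcal T$ carries weight mostly at lead times $\ge k_\ast$ (Assumption~\ref{assump:lag_position_sorting}, up to the share $\epsilon_k$), and the first retail link has lead time at least $\underline k$. Retail mass leaking to other firms costs only $1-\epsilon_h$ and the threshold $\overline\epsilon_w$.

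The third part is the reversal. Using $D_\pi\mathcal Q_{\tau^-}=I^{\mathcal T}+I^{N\setminus\mathcal T}+J$, the bound $|\Xi_{ij,0}|\le C_\xi(1-\theta)$ gives $|I^{N\setminus\mathcal T}_{\tau^-}|\le C_\phi(1-\theta)\mu_{\tau^-}(N\setminus\mathcal T)$, after rescaling by $1-\beta_i\le1-\underline\beta$. Combined with \eqref{eq:hypothesis_R}, this makes the sum negative. The mirror-image statement for $\pi<0$ follows because all of these are first-order derivatives, which are linear in the sign of the shock.

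I expect the main obstacle to be the quantitative lower bound on top-tier tilt mass at a single date. Turning the average climb of Assumption~\ref{assump:transmission_delay} into a uniform positive fraction of path mass reaching $\mathcal T$, and tracking the $\epsilon_k$ leakage that could put that mass at dates below $\underline k+k_\ast$, is where the constants must be chosen carefully.
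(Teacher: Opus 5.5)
Your proposal is correct and follows the paper's proof essentially step by step. The matching steps are Proposition~\ref{prop:finite_horizon_sign_persistence} for the first claim and vintage grouping with \eqref{eq:hypothesis_U} for $-I^{\mathcal T}_t(b)\ge(1-\overline\beta)c_\xi(1-\theta)\mu_t(\mathcal T)$. They continue with the supermartingale/Markov argument on the stopped supplier walk from the retail tier, converted to tilt mass through the pass-through factors and the $1-\epsilon_k$ share of top-tier orders at lead times $\ge k_\ast$, and pigeonholed over the window (Lemma~\ref{lem:layered_routing}), and they close with the three-part split under \eqref{eq:hypothesis_R}. The differences are only in constants: the paper takes $r_a=\lceil 2/\Delta_\psi\rceil$ to get absorption probability $\tfrac12$ and sets $\overline\tau=(r_a+1)\widehat k$. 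Two remarks of yours should be corrected. The $1-\epsilon_h$ ``leak'' plays no role, since the walk follows supplier shares and loses mass only through pass-through. And the link whose lead time is at least $k_\ast$ must be the top-tier buyer's own purchase $l\to j$, not the link by which the walk enters $\mathcal T$.
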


\noindent
Proof in Appendix~\ref{app:proof_sign_reversing_irf}.

The simple reading of Theorem~\ref{thm:sign_reversing_irf} is that the
impact reallocation carries the seed of its own reversal, and the
dated inputs are what carry it. The early expansion is produced with
inputs bid away from upstream firms, and those inputs reach the
retailers over the lead times on which they were ordered. The shortage that
ends the expansion is the reduced output that the upstream firms, deprived of inputs, go on to deliver, and that output reaches the
retail tier only after the lead times of the supply chain. Without dated inputs
money would still move activity at impact, but no pipeline of
outstanding orders would carry the expansion to later dates or bring the shortage down the supply chain. The two phases, the early expansion and the
reversal, are the borrowing from the future that the introduction
describes, written in the model's own terms.

Upstream sourcing (Assumption~\ref{assump:transmission_delay}) sets the horizon \(\overline\tau\). The shortage of upstream output travels one link per
lead time, and no link takes longer than the longest lead time. Upstream sourcing bounds the number of rounds of sourcing that separate the retail tier from the contracting region, so the contraction reaches the
retail tier within a bounded, datable horizon.
The time to build (Assumption~\ref{assump:lag_position_sorting}) sets the earliest date at which the contraction can arrive. Of the orders that the contracting region placed at impact, all but a
share \(\epsilon_k\) mature \(k_\ast\) periods or more later, and the reduced output they produce reaches a customer one lead time after
that. The contraction they carry arrives no earlier than
\(\underline k+k_\ast\), when every retail lead-time date has passed, and the
date \(\tau^-\) of the theorem is one at which at least the mass
\(\underline\mu\) of it has arrived
(Lemma~\ref{lem:layered_routing}). The expansion runs on the retail lead-time
dates and the reversal falls after them. The order of the two phases is set by the lead times of the retail tier and of the top tier, and the lead times of the firms in between do not enter it.

Whether the reversal occurs turns on
condition~\eqref{eq:hypothesis_R}. Its left side is the tilt mass \(\mu_{\tau^-}(\mathcal T)\) that the orders placed at impact by the firms of the contracting region carry to the date \(\tau^-\), at least \(\underline\mu\), times the margin \((1-\overline\beta)c_\xi\) by which those firms contract. The mass
\(\underline\mu\) is what survives the passage from the retail tier to
the contracting region, discounted by the pass-through at every round
of sourcing and by the share \(\epsilon_k\) of the orders placed at impact by the firms of the contracting region that mature before \(k_\ast\).
Condition~\eqref{eq:hypothesis_R} is therefore easiest to meet in short
supply chains with broad sourcing, where few rounds of sourcing lead from the
retail tier to the contracting region. The right side of
condition~\eqref{eq:hypothesis_R} bounds what else moves the position-tilted log-output index at the date \(\tau^-\), namely the impact-date orders of the firms outside the
contracting region, which carry the mass
\(\mu_{\tau^-}(N\setminus\mathcal T)\), and the post-impact part
\(J_{\tau^-}(b)\). The post-impact part \(J_{\tau^-}(b)\) reinforces the reversal when the dated contrasts that it collects at the dates after impact are negative. For a retailer this happens once the excess balances have moved from it to the other customers of its suppliers, who then outbid it for the same goods.

\section{Monetary Non-neutrality}
\label{sec:asymmetry}

A monetary shock misaligns a firm's dated input bundle and, at the same
time, moves purchasing power along the supply chain. The retailer that gains balances relative to the other buyers of its inputs, and so bids away from them a bundle larger than its stationary one, assembles that bundle from orders placed before and after the shock. The scale of its bundle and the composition of that bundle move together, and both affect what the household consumes. We now derive how
the deadweight loss and the reallocation combine in GDP along the transition from the old stationary equilibrium to the new one, and why their combined
response differs with the sign of the shock.

Throughout, GDP is the household-consumption aggregate
\begin{equation}
\Delta\log Y_t
:=
\sum_{i\in\mathcal R}\gamma_i\,\Delta\log c_{i,t}
\label{eq:gdp_def_main}
\end{equation}
the log change of the Cobb--Douglas consumption index at the household
shares, summed over the firms the household buys from. With the
household's Cobb--Douglas preferences, this aggregate is also the log
change in its flow utility. GDP is measured over intervals of dates, a
quarter at a time. And GDP over an interval is the average of the log
changes \(\Delta\log Y_t\) over its dates.\footnote{A quarterly figure
sums the output of the quarter, so its proportional change is the
average of the proportional changes \(Y_t/Y^\ast-1\) of its dates rather
than of their logarithms. The two agree to first order in the shock.}

Labor is fixed along the transition, so the
goods produced at a date are either consumed or ordered into the
pipeline again. Which means that the GDP response
\(f_t(\pi):=\Delta\log Y_t(\pi)\) at a date is what the economy draws
out of the goods in transit less what it wastes,
\begin{equation}
f_t(\pi)=\Pi_t(\pi)-\Lambda_t(\pi),
\qquad
\Pi_t:=\frac{P_t-P_{t+1}}{w^\ast}
\label{eq:resource_identity}
\end{equation}
Here \(P_t\) is the excess value of the goods in transit, the value at
stationary prices of the orders outstanding at date \(t\) in excess of the stationary pipeline. And \(\Pi_t\) is the drawdown of the goods in transit, the fall in their excess value from date \(t\) to date \(t+1\). The
\emph{deadweight loss} \(\Lambda_t\) is the output and the welfare that
the economy wastes at that date, in units of stationary GDP. It is nonnegative at every date and has two parts. The
first is the loss from production inefficiency, the output lost because
the inputs within a firm's bundle are out of their stationary
proportions with one another. The second is the \emph{loss from
misallocation}, the output lost because a firm's bundle as a whole is
larger or smaller than at the stationary equilibrium, relative to the
workers it employs. The loss from misallocation also includes the welfare lost because
consumption across goods and dates departs from its stationary
proportions.\footnote{Appendix~\ref{app:lemmas_asymmetry} defines the
excess value of the goods in transit and the deadweight loss, splits the
deadweight loss into its two parts and proves the identity
\eqref{eq:resource_identity} (Lemma~\ref{lem:resource_accounting_date}).
To second order in the shock, the first part is the loss from production
inefficiency of every firm, weighted by its Domar weight. The second is
a weighted sum of the squared changes in the scale of the bundles and in
consumption.}

Summed over an interval of dates that begins with the shock, the
drawdowns of the goods in transit add up to the fall in their excess
value by the end of the interval. That excess value is zero at the
shock, for the orders then in transit were placed before it. GDP over
such an interval therefore measures what the economy has drawn out of
transit by the end of the interval, less what it has wasted over the
interval.

\subsection{Asymmetric monetary non-neutrality}
\label{subsec:asymmetric_nonneutrality}

At first order the deadweight loss vanishes, and GDP moves with the
reallocation by the same amount after a monetary expansion and after a
monetary contraction of the same size, with opposite signs. The
difference between the responses to the monetary expansion and to the
monetary contraction is therefore of second order in the shock. It comes
from two forces acting together, the first of which is the deadweight
loss. The deadweight loss lowers GDP whatever the sign of the shock. The
second force is the reallocation, and it is larger after a monetary
contraction than after a monetary expansion of the same size. This is
because, measured against the money stock after the shock, the monetary
contraction shifts more balances across firms than the monetary
expansion, and so moves more output along the supply chain. Which means that on the early
dates at which the reallocation raises GDP, a monetary contraction
lowers GDP by more than a monetary expansion of the same size raises it.

A response that rises at first order in the shock and is concave at
second order is smaller in absolute value after a monetary expansion
than after a monetary contraction of the same size. Whether the
money--output relation is asymmetric therefore turns on the signs of the
first and second derivatives of the GDP response at zero,
\begin{equation}
f_t(\pi)=\dot f_t\,\pi+\tfrac12\,\ddot f_t\,\pi^2+o(\pi^2),
\qquad
\dot f_t=A_t+\dot\Theta_t
\label{eq:gdp_expansion_coefficients}
\end{equation}
The first derivative \(\dot f_t\) is the reallocation effect as it
appears in GDP. It is the sum of the \emph{reallocation impulse}
\(A_t:=\sum_{i\in\mathcal R}\gamma_i(1-\beta_i)\,\mathcal F_i(\dot{\widehat x}_{i,t})\),
the first-order response of the retailers' input-bundle scales at the
household shares, and of the first-order part \(\dot\Theta_t\) of the
\emph{consumption--output wedge}
\(\Theta_t:=\sum_{i\in\mathcal R}\gamma_i\,\Delta\log(c_{i,t}/q_{i,t})\).
The consumption--output wedge is of the order of the retail slack, for a
retailer sells at most that share of its output to other firms
(Lemma~\ref{lem:gdp_labor_income_anchor}).

The reallocation is larger after a monetary contraction because real
quantities respond to the shift of balances across firms measured
against the money stock after the shock. A monetary expansion of size \(\pi\) raises every nominal
magnitude by the factor \(1+\pi\) and, in addition, moves balances
across firms by \(\pi h_\theta\). Here
\(h_\theta:=\overline M\boldsymbol\zeta-\mathbf m^\ast\) is the impact
misalignment, the value \(\dot{\mathbf m}_0^\perp\) of the
cross-sectional balance misalignment at impact. Dividing every nominal
magnitude by \(1+\pi\) leaves every real order, output and consumption
unchanged. What remains is an economy at the stationary wage whose
balances start at \(\mathbf m^\ast+\tilde\pi\,h_\theta\), with
\(\tilde\pi:=\pi/(1+\pi)\) the \emph{normalized shock}, the size of the
shock measured against the money stock after it. The real response is
therefore a fixed function of \(\tilde\pi\,h_\theta\). For every date
\(t\) there is a map \(\mathcal Y_t\) with \(\mathcal Y_t(\mathbf 0)=0\),
independent of \(\pi\), such that
\begin{equation}
f_t(\pi)=\mathcal Y_t\bigl(\tilde\pi\,h_\theta\bigr),
\qquad
\dot f_t=D\mathcal Y_t(\mathbf 0)\,h_\theta,
\qquad
\ddot f_t=-2\,\dot f_t+D^2\mathcal Y_t(\mathbf 0)[h_\theta,h_\theta]
\label{eq:concave_scaling}
\end{equation}
(Lemma~\ref{lem:concave_scaling}).

A monetary expansion shifts balances within a money stock that has grown
by the factor \(1+\pi\), and a monetary contraction of the same size
shifts them within one that has shrunk by the factor \(1-\pi\). Either way, the reallocation changes the real output of downstream
firms in proportion to the shift of balances, measured against the money
stock after the shock. The gain in real output after a
monetary expansion therefore grows less than in proportion to the
monetary expansion, and the fall after a monetary contraction grows more
than in proportion to the monetary contraction. In \eqref{eq:concave_scaling} the larger reallocation after a monetary
contraction appears as the curvature \(-2\dot f_t\), which the
normalized shock contributes on its own. Every other second-order effect of the network, the deadweight loss
among them, enters the residual curvature \(D^2\mathcal Y_t(\mathbf 0)[h_\theta,h_\theta]\) of the GDP response. This residual curvature is of the order of the square of the slack, for it is
quadratic in the impact misalignment, and the impact misalignment is of
the order of the slack. At a date at which a monetary expansion raises
GDP by a margin of the order of the slack, a small slack therefore makes
the second derivative \(\ddot f_t\) of the GDP response
negative.\footnote{The loss from production inefficiency of the
retailers and the curvature of their log output in the scale of their
bundles both lower \(\ddot f_t\). But the curvatures in the shock of
their input-bundle scales and of the consumption--output wedge can have either sign
(Lemma~\ref{lem:gdp_expansion}). The impact misalignment obeys
\(\|h_\theta\|_m\le C_\zeta(1-\theta)\) by
\eqref{eq:impact_misalignment_constant}. At a date with
\(\dot f_t\ge c\,(1-\theta)\) and \(c>0\), a slack below
\(c/(KC_\zeta^2)\), with \(K\) a bound on the Hessian of
\(\mathcal Y_t\) at zero, gives \(\ddot f_t\le-c\,(1-\theta)\)
(Lemma~\ref{lem:concave_scaling}). To leading order in the slack, the
GDP response is then the first-order response applied to the normalized
shock, \(f_t(\pi)=\dot f_t\,\pi/(1+\pi)\), up to a remainder of the
order of \((1-\theta)^2\pi^2\). At the proportional benchmark
\(\theta=1\) the impact misalignment vanishes, and so do \(\dot f_t\) and
\(\ddot f_t\).}

Such a margin appears over the first dates at which the shock reaches
production. At impact the inputs of every firm are already in transit,
for they were ordered before the shock. No firm therefore changes its
output before the orders it places at impact mature. Before the
shortest lead time \(\underline k\), GDP moves only through the
consumption--output wedge, the change in the share of retail output that
the household buys. At the shortest lead time the first vintage of the
orders that the retail tier placed at impact matures. The retail tier placed them when the shock let it outbid the
other buyers of those inputs, and GDP now takes up this impact advantage.
Nothing else has reached the household by then, neither an order placed
after impact nor the changed output of a supplier.

We call the dates from the shock to the shortest lead time, \(0\) to
\(\underline k\), the \emph{first reporting interval}. Write \(f_+(\pi)\) for GDP over
the first reporting interval, the average of the GDP responses
\(f_t(\pi)\) at its dates, and \(\Lambda_+(\pi)\) for the deadweight loss
averaged in the same way. The impact advantage of the retail tier enters GDP
at the household's weights rather than at the retailers' sales, and the
two weightings differ by the retail slack. Net of this difference in
weights and of the consumption--output wedge, the impact advantage of the
retail tier bounds the first-order response of GDP over the first
reporting interval from below. We call the coefficient of the slack
\(1-\theta\) in this lower bound the \emph{GDP margin} \(c_Y\). The GDP margin is positive when the retail slack is small
relative to the retail margin \(c_R\)
(Lemma~\ref{lem:first_interval_gdp}). The excess value \(P_t\) of the
goods in transit is a fixed function of the normalized shock, as GDP is,
for it values real quantities at fixed prices
(Lemma~\ref{lem:resource_accounting_date}). Let \(K_+\) bound the Hessian
at zero of the average of the maps \(\mathcal Y_t\) over the first
reporting interval, in the norm of Lemma~\ref{lem:concave_scaling}. And
let \(K_P\) bound, in the same norm, the Hessian at zero of the excess
value of the goods in transit at the end of the interval, per date of
the interval and in units of stationary GDP.

\begin{theorem}[Asymmetric monetary non-neutrality]
\label{thm:asymmetric_nonneutrality}
Suppose that Assumptions~\ref{assump:diffuse_incidence},
\ref{assump:monotone_decay_spectrum}, \ref{assump:retail_outflow_cap}
and~\ref{assump:impact_forcing_ordering} hold, that
\(\underline k\ge1\), that the shortest lead time is a retail lead-time
date, \(\underline k\in\mathcal W\), and that the GDP margin is positive,
\(c_Y>0\). Let the slack satisfy \(1-\theta\le c_Y/(K_+C_\zeta^2)\). Then
there is \(\pi_\ast>0\) such that, for every \(0<\pi\le\pi_\ast\), the
monetary expansion raises GDP over the first reporting interval, the
monetary contraction lowers it, and the response to the monetary
contraction is the larger in absolute value,
\begin{equation}
f_+(\pi)>0>f_+(-\pi),
\qquad
|f_+(-\pi)|>|f_+(\pi)|
\label{eq:asymmetry_inequality_main}
\end{equation}
Suppose that the slack also satisfies \(1-\theta\le c_Y/(K_PC_\zeta^2)\).
Then \(\pi_\ast\) can be chosen so that, in addition, the response to the
monetary contraction exceeds that to the monetary expansion in absolute
value by more than the sum of their deadweight losses,
\[
|f_+(-\pi)|-|f_+(\pi)|\;>\;\Lambda_+(\pi)+\Lambda_+(-\pi)
\]
\end{theorem}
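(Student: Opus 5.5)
The plan is to work entirely with the second-order expansion of the interval-averaged GDP response, $f_+(\pi)=\dot f_+\pi+\tfrac12\ddot f_+\pi^2+o(\pi^2)$. Here $\dot f_+$ and $\ddot f_+$ are the averages over $t=0,\dots,\underline k$ of the coefficients in \eqref{eq:gdp_expansion_coefficients}. The first step is to sign the first-order term. By Lemma~\ref{lem:first_interval_gdp}, under Assumptions~\ref{assump:retail_outflow_cap} and~\ref{assump:impact_forcing_ordering} with $\underline k\in\mathcal W$, the impact advantage of Proposition~\ref{prop:impact_reallocation} enters GDP at the shortest lead time. This holds net of the retail slack and of the wedge $\dot\Theta_t$, and it gives $\dot f_+\ge c_Y(1-\theta)>0$.

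The second step signs the curvature using the scaling identity \eqref{eq:concave_scaling} of Lemma~\ref{lem:concave_scaling}. Average \eqref{eq:concave_scaling} over the interval. This gives $\ddot f_+=-2\dot f_+ + D^2\bar{\mathcal Y}_+(\mathbf 0)[h_\theta,h_\theta]$, where $\bar{\mathcal Y}_+$ is the average of the maps $\mathcal Y_t$. The residual is bounded by $K_+\|h_\theta\|_m^2\le K_+C_\zeta^2(1-\theta)^2$, using \eqref{eq:impact_misalignment_constant}. The slack condition $1-\theta\le c_Y/(K_+C_\zeta^2)$ then makes the residual at most $c_Y(1-\theta)\le\dot f_+$, so $\ddot f_+\le-\dot f_+<0$.

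With both signs fixed, $f_+(\pi)-f_+(-\pi)=2\dot f_+\pi+o(\pi^2)$ and $f_+(\pi)+f_+(-\pi)=\ddot f_+\pi^2+o(\pi^2)<0$ for small $\pi$. Together these give $f_+(\pi)>0>f_+(-\pi)$ and $|f_+(-\pi)|-|f_+(\pi)|=-(f_+(\pi)+f_+(-\pi))>0$. Choosing $\pi_\ast$ small enough to absorb the $o(\pi^2)$ remainders gives \eqref{eq:asymmetry_inequality_main}.

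For the strengthened statement I would use the resource identity \eqref{eq:resource_identity} averaged over the interval. The drawdowns telescope, and $P_0=0$, so
\[
f_+(\pi)=-\frac{P_{\underline k+1}(\pi)}{(\underline k+1)w^\ast}-\Lambda_+(\pi).
\]
Hence
\[
|f_+(-\pi)|-|f_+(\pi)|-\Lambda_+(\pi)-\Lambda_+(-\pi)=\frac{P_{\underline k+1}(\pi)+P_{\underline k+1}(-\pi)}{(\underline k+1)w^\ast}.
\]
By Lemma~\ref{lem:resource_accounting_date}, $P_t$ is a fixed function of $\tilde\pi h_\theta$, just as GDP is. So the same scaling computation gives the symmetric part of the scaled excess value as $\tfrac12\pi^2\bigl(-2\dot P+D^2\bar P[h_\theta,h_\theta]\bigr)+o(\pi^2)$, where $\dot P$ denotes its first-order coefficient.

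The main obstacle is relating $\dot P$ to $\dot f_+$. Since $\Lambda$ is second order, the first-order identity reads $\dot f_+=-\dot P$, so $-2\dot P=2\dot f_+\ge 2c_Y(1-\theta)$. The residual Hessian is at most $K_PC_\zeta^2(1-\theta)^2\le c_Y(1-\theta)$ under the second slack condition. Therefore the symmetric part of the excess value is positive at order $\pi^2$, and shrinking $\pi_\ast$ completes the argument. I expect the care to go into confirming that the normalizations of $P$ per date, and the signs in the telescoping, match those of $K_P$.
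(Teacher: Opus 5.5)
Your proposal is correct and follows the paper's proof. The first part uses the same ingredients: Lemma~\ref{lem:first_interval_gdp}, the interval-averaged scaling identity bounded by $K_+$, and the concave-response argument of Lemma~\ref{lem:concave_response_asymmetry}. In the second part you rederive inline the telescoped identity and the normalized-shock expansion of $P_{\underline k+1}$ (with $\dot P_{\underline k+1}=-w^\ast\sum_{t\le\underline k}\dot f_t$) that the paper packages as Lemma~\ref{lem:gap_exact}, and your normalization of the Hessian matches that of $K_P$.
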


\noindent
Proof in Appendix~\ref{app:proof_asymmetric_nonneutrality}.

The simple reading of Theorem~\ref{thm:asymmetric_nonneutrality} is
that money is non-neutral, and asymmetrically so, although every goods
price clears its market at every date. Nothing in the adjustment of
prices or wages differs between a monetary expansion and a monetary
contraction. The larger reallocation after a monetary contraction
creates the difference between the responses to a monetary expansion
and to a monetary contraction, and the deadweight loss widens that
difference.\footnote{Of the deadweight loss of a date, only the part
that falls on the output the household consumes, the loss on consumed output, widens this difference at that date. The other part, the loss in
transit, falls on the output that goes into transit. It lowers the goods
put into transit by its full amount and raises the drawdown \(\Pi_t\) of the goods in transit by as much,
so it leaves GDP at the date unchanged. Summed over the transition, it
lowers GDP by its full amount (Lemma~\ref{lem:consumed_transit}).} The
second part of the theorem says more. When the slack is small enough,
the monetary contraction puts more goods into transit over the first
reporting interval than the monetary expansion draws out of transit. Which means that
the deadweight loss of the monetary expansion and the monetary
contraction is only a part of the difference between their responses.

The asymmetry between the responses to a monetary expansion and to a
monetary contraction of the same size is carried by the even part of the
GDP response, \(\overline f_t:=\tfrac12\bigl(f_t(\pi)+f_t(-\pi)\bigr)\).
This is because the first-order response cancels when the two responses
are added. The identity
\eqref{eq:resource_identity} splits the even part of the GDP response exactly into the even parts of
the drawdown \(\Pi_t\) of the goods in transit and of the deadweight loss
\(\Lambda_t\). The first derivative of the drawdown of the goods in transit
at zero is \(\dot f_t\), for the deadweight loss is
second order in the shock. Then
\begin{equation}
-\overline f_t
\;=\;
\underbrace{\tfrac12\bigl[\Lambda_t(\pi)+\Lambda_t(-\pi)\bigr]}_{\text{deadweight loss}}
\;+\;
\underbrace{\dot f_t\,\pi^2}_{\text{normalization term}}
\;-\;
\underbrace{\tfrac12\pi^2\,D^2\mathcal Y^{\Pi}_t(\mathbf 0)[h_\theta,h_\theta]}_{\substack{\text{curvature term of the drawdown}\\\text{of the goods in transit}}}
\;+\;o(\pi^2)
\label{eq:even_part}
\end{equation}
(Lemma~\ref{lem:resource_accounting_date}). The negative of the even part of the GDP response is the deadweight loss,
averaged over the two signs of the shock, plus the normalization term,
less the curvature term of the drawdown of the goods in transit.
And this is precisely the sense in which the asymmetry comes from the
deadweight loss and the reallocation acting together.
The deadweight loss arises from the curvatures of the technology and of
the household's preferences, and it lowers GDP after a monetary
expansion and after a monetary contraction alike. The normalization term
is what the larger reallocation after a monetary contraction contributes
to the even part. It belongs to the reallocation
effect, and it would be present even with perfectly substitutable
inputs. The curvature term of the drawdown of the goods in transit moves second-order gains and
losses from one date to another, among them the losses of upstream
firms, which reach GDP only when their reduced output is delivered
downstream. It can have either sign, but it is second order in the
slack.\footnote{The same decomposition holds exactly for every shock
with \(|\pi|\le\overline\pi\) (Lemma~\ref{lem:gap_exact}). The
normalization term is then the first-order response times half the
difference in size between the normalized shocks of the monetary
contraction and of the monetary expansion. On its own, at a date with
\(\dot f_t>0\), the normalization term makes the ratio \(|f_t(-\pi)|/f_t(\pi)\) of the
absolute responses to the monetary contraction and to the monetary
expansion equal to \((1+\pi)/(1-\pi)\). The deadweight loss and the
curvature term of the drawdown of the goods in transit move this ratio by
a term of the order of \((1-\theta)\pi\) (Lemma~\ref{lem:concave_scaling}). Where the first-order response is
negative, the normalization term makes the rise in GDP after the
monetary contraction exceed the fall after the monetary expansion. What remains of the curvature term of the drawdown of the goods in transit, once the loss in transit cancels against it, can have either sign even at the shortest lead
time (Lemma~\ref{lem:shortest_lag_curvature} and
Example~\ref{ex:drawdown_curvature_sign}).}

The same two forces decide what larger shocks do to GDP. Beyond the
threshold \(\pi_\ast\), the identity \eqref{eq:resource_identity} says
that a monetary expansion raises GDP at a date exactly when the drawdown
of the goods in transit exceeds the deadweight loss at that size of the
shock. To leading order this comparison is between the reallocative
gain, the part of the reallocation effect that survives in GDP, and the
deadweight loss. The reallocative gain grows in proportion to the shock.
This is because the shift of nominal balances across firms is itself
proportional to the monetary expansion. The deadweight loss, by
contrast, grows with the square of the shock, for it is the curvature
of the technology and of the household's preferences acting on input and
consumption changes that are themselves proportional to the monetary
expansion. A monetary expansion large enough for the deadweight loss to
exceed the reallocative gain would therefore lower GDP, as a monetary
contraction does. But after a monetary contraction the reallocation
effect and the deadweight loss both lower GDP over the first reporting
interval.

The decomposition of the GDP response into the drawdown of the goods in
transit and the deadweight loss bears on how a GDP figure is to be
interpreted. When a monetary expansion raises GDP, it does so by
drawing resources toward downstream firms, where they weigh most
heavily in final expenditure. The same shock still inflicts a
deadweight loss, a loss from production inefficiency at the firms whose inputs no
longer fit together and a loss from misallocation across firms. And the
downstream gain is purchased in part through an upstream contraction.
A rise in measured GDP after a monetary expansion is therefore
consistent with a fall in upstream production and with inputs spread
less efficiently, within firms and across them, than at the stationary
equilibrium.

\subsection{The intertemporal effects of monetary shocks}
\label{subsec:intertemporal_reallocation}

Later in the transition the upstream contraction that a monetary
expansion sets off (Theorem~\ref{thm:sign_reversing_irf}) reaches the
firms whose output enters GDP. The deadweight loss then deepens the fall
in GDP that this contraction brings, for it is nonnegative at every
date.

What the early gain in GDP costs over the whole transition follows from the
resource identity \eqref{eq:resource_identity}, without any of the
reallocation hypotheses. GDP above its stationary level is financed by
running down the goods in transit, and the goods in transit return to
their stationary value as the transition dies out.

\begin{proposition}[Intertemporal resource accounting]
\label{prop:resource_accounting}
Suppose that Assumptions~\ref{assump:diffuse_incidence} and~\ref{assump:monotone_decay_spectrum}
hold. Then the excess value of the goods in transit starts and ends at
zero, \(P_0=0\) and \(P_t\to0\), and the first-order responses sum to zero,
\(\sum_{t\ge0}\dot f_t=0\). And the GDP responses sum to minus the deadweight
loss of the transition,
\begin{equation}
\sum_{t\ge0}f_t(\pi)
\;=\;
-\sum_{t\ge0}\Lambda_t(\pi)
\;\le\;0
\label{eq:cumulative_gdp}
\end{equation}
with strict inequality when household consumption departs from its
stationary value at some date.
\end{proposition}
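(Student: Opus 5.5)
The proof telescopes the resource identity \eqref{eq:resource_identity}. Summing \(f_t=\Pi_t-\Lambda_t\) over \(t=0,\dots,T\) and using \(\Pi_t=(P_t-P_{t+1})/w^\ast\) gives
\[
\sum_{t=0}^{T}f_t(\pi)=\frac{P_0-P_{T+1}}{w^\ast}-\sum_{t=0}^{T}\Lambda_t(\pi).
\]
The statement then follows from three facts: \(P_0=0\), \(P_T\to0\), and convergence of \(\sum_t\Lambda_t\).

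The first fact, \(P_0=0\), is immediate from the definition of the excess value of goods in transit. At the start of date \(0\) the pipeline is \(\boldsymbol{\mathcal X}^\ast\), since every outstanding order was placed at a negative date, before the shock. Its value at stationary prices therefore equals that of the stationary pipeline.

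For \(P_t\to0\), I would use Proposition~\ref{prop:stability_stationary}. The pipeline coordinates of \(\widehat{\mathbf S}_t(\pi)\) are bounded by \(C_S\kappa^t|\pi|(1-\theta)\). \(P_t\) is a finite sum, over at most \(n\widehat k\) outstanding orders, of \(p_j^\ast(x_{ji,s}-x_{ji}^\ast)\), so \(|P_t|\le C\kappa^t|\pi|\) and \(P_t\to0\) geometrically. The same geometric decay bounds \(f_t\): consumption deviations are controlled by outputs and prices, which obey the same bound. Hence \(\sum_t f_t\) converges absolutely, and so does \(\sum_t\Lambda_t=\sum_t(\Pi_t-f_t)\), since \(\Lambda_t\ge0\) and the partial sums are bounded. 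Letting \(T\to\infty\) gives \eqref{eq:cumulative_gdp}. The inequality \(\le0\) follows from \(\Lambda_t\ge0\) (Lemma~\ref{lem:resource_accounting_date}).

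For \(\sum_t\dot f_t=0\), I would differentiate the telescoped identity at \(\pi=0\). The deadweight loss is second order in \(\pi\), so \(\dot\Lambda_t=0\), and \(\dot f_t=(\dot P_t-\dot P_{t+1})/w^\ast\) with \(\dot P_0=0\). Exchanging the derivative with the infinite sum needs uniformity. The bound \(|P_t(\pi)|\le C\kappa^t|\pi|\) gives \(|\dot P_t|\le C\kappa^t\), which is enough: for instance, linearize Lemma~\ref{lem:propagation_linearization} and use the same contraction for the linearized system. Then \(\sum_{t\le T}\dot f_t=-\dot P_{T+1}/w^\ast\to0\).

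The strict inequality is the main obstacle. Strictness needs \(\Lambda_t>0\) at some date whenever consumption departs from its stationary value. I would use the decomposition of Lemma~\ref{lem:resource_accounting_date}, whose misallocation part includes the welfare lost when consumption departs from its stationary proportions across goods and dates. The household's wage income \(\wM\) and its Cobb--Douglas spending shares are fixed, and labor is fixed. So at every date the stationary allocation is the efficient one at which the waste functional is uniquely minimized, with value zero; this is the First Welfare Theorem applied date by date. The strictly concave utility, or the curvature of the misallocation term, then makes that term strictly positive at any date where \(c_t\neq c^\ast\). I expect the delicate point to be this exact, not merely second-order, positivity. If the lemma provides only a quadratic form, I would argue instead directly from strict concavity of the production and utility aggregates along the resource constraint.
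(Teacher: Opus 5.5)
Your proof is correct and follows the paper's argument. You telescope \eqref{eq:resource_identity}, take \(P_0=0\) and the geometric decay of \(P_t\) and \(f_t\) from Proposition~\ref{prop:stability_stationary}, and get \(w^\ast\sum_{t\le T}\dot f_t=-\dot P_{T+1}\to0\) by differentiating the per-date identity with \(\dot\Lambda_t=0\). The strictness you flag as delicate is exact and immediate, so no welfare-theorem argument is needed: Lemma~\ref{lem:resource_accounting_date} gives \(\Lambda_t\ge H_t/w^\ast\), where \(H_t=w^\ast\sum_{i\in\mathcal R}\gamma_i\bigl[c_{i,t}/c_i^\ast-1-\log(c_{i,t}/c_i^\ast)\bigr]\) is strictly positive whenever some \(c_{i,t}\ne c_i^\ast\), because \(z-1-\log z>0\) for \(z\ne1\).
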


\noindent
Proof in Appendix~\ref{app:proof_resource_accounting}.

The simple reading of Proposition~\ref{prop:resource_accounting} is
that a monetary shock cannot raise GDP cumulated over the whole
transition. At
first order the deadweight loss vanishes. And the drawdowns \(\Pi_t\) of the goods in transit sum to zero, for their excess value starts and ends at zero. The gain in GDP at one date is therefore matched, to first order, by a loss of the same size at other dates. What remains once the gains and losses of GDP at the different dates are netted is the deadweight loss of the transition, second
order in the shock. Summed over the transition,
the even part of the GDP response is minus the deadweight loss averaged
over the monetary expansion and the monetary contraction,
\(-\sum_{t\ge0}\overline f_t=\tfrac12\sum_{t\ge0}\bigl[\Lambda_t(\pi)+\Lambda_t(-\pi)\bigr]\),
by \eqref{eq:cumulative_gdp} at \(\pi\) and at \(-\pi\). The even part of the drawdown of the goods in transit, which carries the
normalization term and the curvature term of \eqref{eq:even_part}, sums
to zero as well, since these drawdowns sum to zero at every size of the
shock. And without
goods in transit there is nothing to draw down. With every lead time zero the pipeline is empty,
\(P_t\equiv0\), and \eqref{eq:resource_identity} says that GDP cannot
rise at any date (Lemma~\ref{lem:no_dated_inputs} of Appendix~\ref{app:lemmas_reallocation}).

Under the conditions of Theorem~\ref{thm:asymmetric_nonneutrality}, a
monetary expansion raises GDP over the first reporting interval. By
\eqref{eq:cumulative_gdp}, GDP cumulated over the rest of the transition
then falls by at least as much as GDP cumulated over the first reporting
interval rose, for the
deadweight loss of the transition is nonnegative.

Write
\[
G_Y(\pi):=\sum_{t\ge0}\max\{f_t(\pi),0\},
\qquad
L_Y(\pi):=\sum_{t\ge0}\max\{-f_t(\pi),0\},
\qquad
\mathcal I(\pi):=\frac{L_Y(\pi)}{G_Y(\pi)}-1
\]
for the cumulative gain of the monetary expansion, its cumulative loss, and the
implicit interest rate.

\begin{proposition}[Implicit interest cost of a monetary expansion]
\label{prop:implicit_interest}
Suppose that Assumptions~\ref{assump:diffuse_incidence} and~\ref{assump:monotone_decay_spectrum}
hold and that GDP moves at some date, \(f_t(\pi)\ne0\) for some
\(t\). Then the cumulative loss exceeds the cumulative gain by the deadweight
loss of the transition,
\[
L_Y(\pi)-G_Y(\pi)\;=\;\sum_{t\ge0}\Lambda_t(\pi)\;>\;0
\]
If the GDP responses accumulated up to some date \(T_0\) are positive,
\(\sum_{t\le T_0}f_t(\pi)>0\), then GDP lies below its stationary level
at some date after \(T_0\), within a horizon set by that accumulated
response and the constants of Proposition~\ref{prop:stability_stationary}.
If the first-order response is positive at some date, \(\dot f_{t_1}>0\),
then for every sufficiently small \(\pi>0\) the implicit interest rate is positive
and of the order of the shock, \(0<\mathcal I(\pi)=O(\pi)\).
\end{proposition}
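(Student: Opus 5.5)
The argument rests on Proposition~\ref{prop:resource_accounting}. By \eqref{eq:cumulative_gdp}, the series \(\sum_t f_t(\pi)\) converges to \(-\sum_t\Lambda_t(\pi)\). Since \(f_t=\max\{f_t,0\}-\max\{-f_t,0\}\), absolute convergence lets us write
\[
G_Y(\pi)-L_Y(\pi)=\sum_{t\ge0}f_t(\pi)=-\sum_{t\ge0}\Lambda_t(\pi).
\]
Absolute convergence holds because \(|f_t|\) is bounded by a multiple of \(\kappa^t|\pi|(1-\theta)\): GDP is a smooth function of the retail outputs and consumption, which obey the geometric bound of Proposition~\ref{prop:stability_stationary}. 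This gives the identity for \(L_Y-G_Y\).

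For strict positivity, suppose \(f_t(\pi)\ne0\) at some \(t\) but \(\sum_t\Lambda_t=0\). Proposition~\ref{prop:resource_accounting} gives strict inequality whenever household consumption departs from its stationary value at some date. A nonzero \(f_t=\sum_{i\in\mathcal R}\gamma_i\Delta\log c_{i,t}\) forces some \(c_{i,t}\ne c_i^\ast\), which is a contradiction.

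For the second claim, let \(S:=\sum_{t\le T_0}f_t(\pi)>0\). Because the total sum is \(-\sum\Lambda_t\le0\), the tail \(\sum_{t>T_0}f_t\) is at most \(-S\). The geometric bound \(|f_t|\le C\kappa^t|\pi|(1-\theta)\) implies that the tail beyond any \(T_1\) is at most \(C|\pi|(1-\theta)\kappa^{T_1+1}/(1-\kappa)\) in absolute value. Choose \(T_1\) so that this is below \(S/2\). Then \(\sum_{T_0<t\le T_1}f_t\le -S/2<0\), so some \(f_t<0\) with \(T_0<t\le T_1\). The horizon \(T_1\) depends only on \(S\), \(C_S\), \(\kappa\) and \(|\pi|(1-\theta)\), as the statement requires.

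For the third claim, expand each \(f_t(\pi)=\dot f_t\pi+O(\pi^2)\). By Proposition~\ref{prop:resource_accounting}, \(\sum_t\dot f_t=0\). With \(\dot f_{t_1}>0\), the positive part \(G_Y(\pi)\) is at least \(\pi\sum_t\max\{\dot f_t,0\}-o(\pi)\), which is of order \(\pi\) and bounded below by \(c\pi\). Separately, \(L_Y-G_Y=\sum_t\Lambda_t=O(\pi^2)\), because Theorem~\ref{theorem:finite_cumulative_inefficiency} and the second-order nature of the misallocation loss bound the cumulative deadweight loss by \(K\pi^2\). Hence \(\mathcal I(\pi)=\sum_t\Lambda_t/G_Y\) lies in \((0,K\pi/c]\). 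Positivity follows from the first part, since \(f_{t_1}(\pi)\ne0\) for small \(\pi\).

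The main obstacle is making the lower bound on \(G_Y\) uniform over infinitely many dates. We need \(\sum_t\max\{f_t(\pi),0\}\ge\pi\sum_t\max\{\dot f_t,0\}-o(\pi)\), which requires the Taylor remainders to be summable in \(t\). A simpler route avoids this. Since \(\max\{\cdot,0\}\) is monotone and only one date is needed, use \(G_Y\ge\max\{f_{t_1}(\pi),0\}=\dot f_{t_1}\pi+O(\pi^2)\ge\tfrac12\dot f_{t_1}\pi\) for small \(\pi\). This needs a single-date expansion only. The remaining work is the uniform \(O(\pi^2)\) bound on the cumulative deadweight loss, which is obtained by summing the second-order bounds against the geometric decay \(\kappa^{2t}\).
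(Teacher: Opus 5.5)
Your proof is correct and follows the paper's argument step for step: the absolute-convergence split of \(\sum_t f_t\) via Proposition~\ref{prop:resource_accounting}, the tail estimate that fixes the horizon, and the single-date lower bound \(G_Y(\pi)\ge\tfrac12\dot f_{t_1}\pi\) set against a uniform \(O(\pi^2)\) bound on \(\sum_t\Lambda_t\). The one place to tighten is that last bound. Theorem~\ref{theorem:finite_cumulative_inefficiency} controls only the quadratic proxy \(\mathcal D\), not the exact loss \(\Lambda_t=(E_t+H_t)/w^\ast\). The paper instead writes \(E_t=\sum_i m_i^\ast\mathcal E_i(\widehat{\mathbf x}_{i,t})\), where each \(\mathcal E_i\) is a fixed smooth nonnegative function minimized at zero and hence satisfies \(\mathcal E_i\le C_{\mathcal E}\|\widehat{\mathbf x}\|_\infty^2\) on a fixed ball; it uses \(z-1-\log z\le2(z-1)^2\) for \(H_t\) and sums against \(\kappa^{2t}\), which is your closing sentence made precise.
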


\noindent
Proof in Appendix~\ref{app:proof_implicit_interest}.

The simple reading of Proposition~\ref{prop:implicit_interest} is
that the economy repays what a monetary expansion lets it borrow,
with interest. The implicit interest rate \(\mathcal I(\pi)\) is measured in summed log
deviations, neither annualized nor discounted. Its name comes from the
analogy with a loan. The early gain is output moved into the present by running down the goods in transit, and the later loss repays that output with interest. At first order the economy repays exactly what it borrowed.
The interest paid is the deadweight loss of the transition, the loss from production
inefficiency and the loss from misallocation summed over its dates. And the interest grows with the square of the shock, while the early gain grows in proportion to the shock. Under the conditions of Theorem~\ref{thm:asymmetric_nonneutrality}, the
first-order response of GDP is positive at some date of the first
reporting interval, so the implicit interest rate is positive and of the
order of the shock.

\section{Computational Experiments}
\label{sec:quant}
\label{sec:quant-economies}
\label{sec:quant-validation}
\label{sec:quant-setup}
\label{sec:quant-incidence}

Our analytical results hold for small shocks, at fixed supplier shares,
under a condition on the spectrum of the supplier-share matrix, and on
networks whose range of firm sizes is bounded. The computational
experiments of this section relax these restrictive assumptions. We also measure the size of monetary non-neutrality in a
calibrated economy of the size and structure of the United
States.\footnote{The measurement is a computational experiment in the
sense of \citet{kydland1996computational}. The network, the firm sizes and the within-vintage curvature are taken
from data, and the mapping from the monetary shock to the response of
GDP is that of the model.}

We work with two kinds of supply chains, the first of which consists of synthetic supply chains built in tiers, with lead times that lengthen away
from final demand and with firm size falling toward final demand. Put simply, the firms in the synthetic supply chains are placed with a granular structure that meets the theoretical setting.\footnote{Firms are placed in five tiers, from the top of the
supply chain to the retail tier. Each firm draws its suppliers from the tiers upstream of its own, with a probability that decays in the
distance up the supply chain, and its expenditure shares across them
are independent exponential draws normalized to sum to one. Firm sizes
within a tier are Pareto, and the household buys from the retail tier
with weights that are exponential draws normalized to sum to one. The
rank correlation between firm size and downstreamness is \(-0.35\).} The second is the US supply chain, reconstructed using the algorithm of \citet{bhattathiripad2026reconstructing}. The supply chain generated by their algorithm has no lead times for the delivery of intermediate inputs. We therefore impose a distribution of lead times on the reconstructed supply chain, with shorter lead times for buyers closer to final demand. The degrees,
supplier shares and firm sizes of the reconstructed US supply chain are heavy-tailed.\footnote{Table~\ref{tab:baseline} of Appendix~\ref{app:construction} lists every
parameter of both supply chains.}

We generate the synthetic supply chains at five sizes, from \(10^3\) to
\(10^5\) firms, with twenty independent realizations at each size,
obtained from different random seeds. And we reconstruct the US supply
chain at four sizes, from \(10^4\) firms to the full census of
\(6.46\times10^{6}\) firms. At the baseline the within-vintage curvature
is set to \(\rho=-0.8\), the value implied by the most direct firm-level
estimate of input substitutability (Appendix~\ref{app:calibration}).\footnote{We set the supplier
weights of the within-vintage nest \eqref{eq:within_vintage_bundle} on
both supply chains to \(\nu_{ji}=a_{ji}^{\,1-\rho}\), so that the
cost-minimizing expenditure shares equal \(a_{ji}\) at equal input
prices for every curvature \(\rho\). A sweep over \(\rho\) then varies
complementarity while holding technology fixed.}

Before the shock, every economy is simulated until it reaches its
stationary equilibrium. Labor is then held at its stationary value at
every firm and the wage is indexed to the money stock, as in
Assumption~\ref{assump:short_run_nominal_rigidity}, so that firms adjust
to the shock only through their intermediate inputs.\footnote{The
initial transient, \(300\) months and \(450\) at \(\rho\le-1.1\), is
discarded, and the response is recorded over a window of \(150\)
months.}

We then shock each economy once. The monetary expansion is \(5\%\) of the money
stock, \(\pi=0.05\), distributed by the size-biased incidence rule
\eqref{eq:zeta_def} at \(\theta=0.5\), and the monetary contraction is its mirror
image. Unlike in the model, the household holds balances and receives a share of the shock.\footnote{The household, with balance \(M_{\mathrm h}\),
receives the share
\(M_{\mathrm h}^{\theta_{\mathrm h}}/(M_{\mathrm h}^{\theta_{\mathrm h}}+\overline M^{\theta_{\mathrm h}})\)
of the shock, the two-agent form of the incidence rule
\eqref{eq:zeta_def}, with \(\theta_{\mathrm h}=0.9\). At the
stationary equilibrium \(M_{\mathrm h}/\overline M=0.4\) on every
supply chain, so the household's share of the shock is \(0.3\).} The
household buys from the
retail tier, so its share adds to the downstream tilt of the initial
demand shift. The incidence rule and the household's share are applied
in the same way to monetary expansions and to monetary contractions, so neither contributes
an asymmetry of its own.\footnote{Two further details of the protocol differ from the model. The transfer to each firm is
capped so that no balance moves by more than half of itself, under
either sign of the shock, and the cap binds only in the extreme tail of
the size distribution. And along the transition a firm's wage bill is kept between a fifth and four fifths of its balance. This bound never binds on the synthetic supply chain at the baseline, and on the US supply chain it binds at firms holding at most \(0.2\%\) of the money stock.} We report the percentage deviation of GDP, the household's consumption aggregate \eqref{eq:gdp_def_main}, from its pre-shock level, the median of the five months before the shock. Each quarterly point is the GDP response cumulated over the three months of the quarter and divided by three, the counterpart of GDP over an interval in the theory.

\subsection{Experiments on synthetic supply chains and the robustness of analytical results}
\label{sec:quant-robustness}
\label{sec:quant-mor}

We begin with the response of GDP on the synthetic supply chain, which
Figure~\ref{fig:quant_irfsyn} reports at its five sizes. It shows the two
phases of Theorem~\ref{thm:sign_reversing_irf}, an early rise and a
delayed fall below the pre-shock level. And the response to a monetary
contraction is larger in absolute value than the response to a monetary
expansion of the same size. After the monetary expansion GDP rises in the first quarter and falls
slightly below its pre-shock level in the second. From the third quarter
to the sixth it is above its pre-shock level, with a peak of \(0.2\%\)
in the fourth. It then falls below its
pre-shock level from the seventh quarter, to a trough of \(0.2\%\) in the tenth, and returns to its pre-shock level by about the seventeenth quarter.
The monetary contraction traces the mirror pattern with a deeper
trough, of \(0.3\%\) in the fourth quarter. Across the twenty realizations of each size the paths vary little, so
the asymmetry of the GDP response does not depend on a particular
realization of the network.

\begin{figure}[t]
\centering
\includegraphics[width=0.9\textwidth]{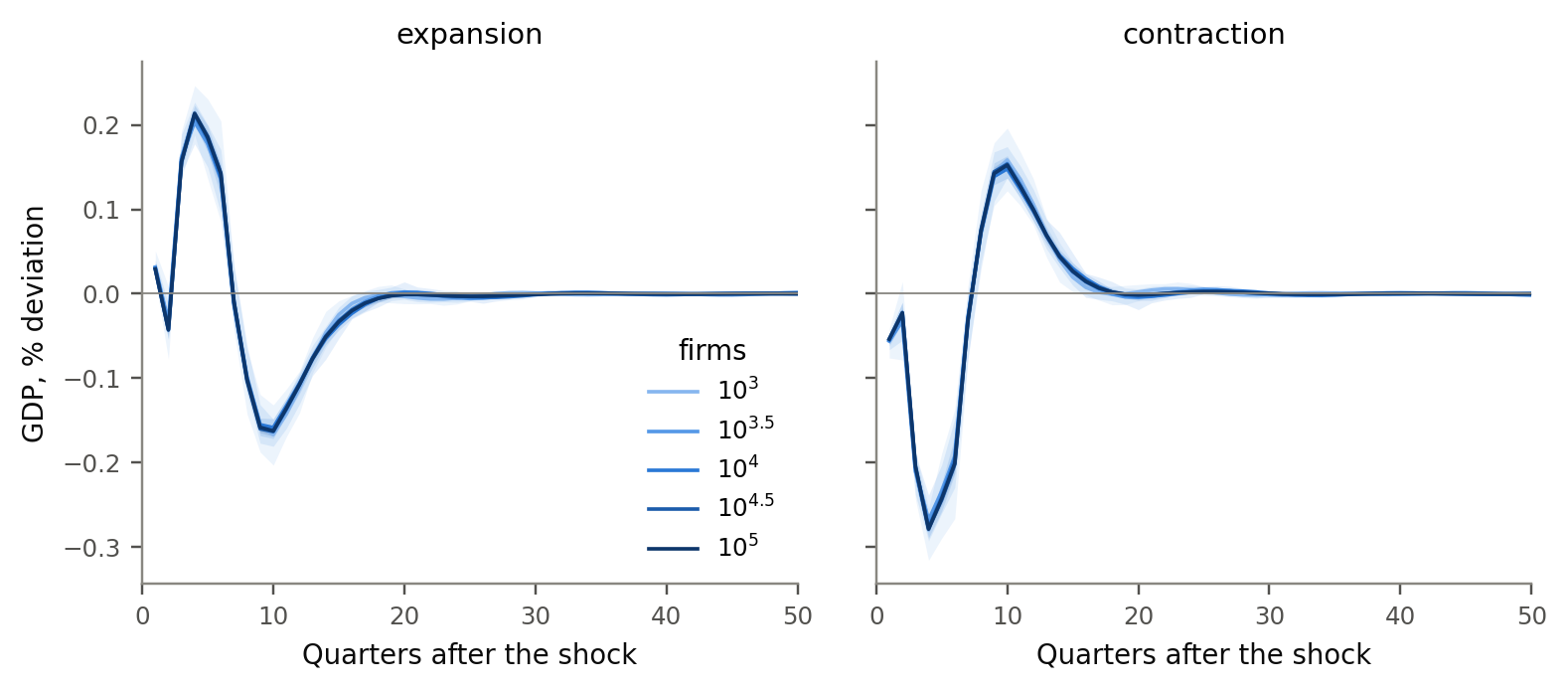}
\caption{Quarterly response of GDP to a monetary expansion and a
monetary contraction of \(\pi=0.05\), synthetic supply chain. Lines are means over twenty realizations, bands the tenth to ninetieth percentiles.}
\label{fig:quant_irfsyn}
\end{figure}

We summarize the asymmetry of the GDP response by the \emph{peak
ratio}, the deepest fall in GDP after a monetary contraction divided by
the highest rise after a monetary expansion of the same
size. On the synthetic supply chain the peak ratio is \(1.3\) at every size
from \(10^3\) to \(10^5\) firms.

Proposition~\ref{prop:implicit_interest} says that a monetary expansion
borrows GDP from the future and repays it at a strictly positive
implicit interest rate \(\mathcal I(\pi)=L_Y/G_Y-1\), set by the
deadweight loss of the transition. The size of this rate can be measured on the synthetic supply chain, for its transition is complete within the window
(Figure~\ref{fig:quant_irfsyn}). Summed over the window at \(10^5\)
firms, the GDP gained is \(0.7\) percentage-point-quarters and the GDP lost \(0.9\), so that \(\mathcal I(\pi)\) is about \(0.3\). For every
unit of GDP that the monetary expansion brings forward, the economy
gives up about \(1.3\) units later. The monetary expansion is therefore
contractionary once its later fall below the pre-shock level is counted, by \(0.2\) percentage-point-quarters, and the monetary contraction
lowers GDP cumulated over the window by \(0.3\). Both figures are the
same to the first decimal at every size from \(10^3\) to \(10^5\)
firms.

We now vary the size of the shock and the within-vintage curvature
one at a time, holding everything else at the baseline. The sweeps are carried out on the US supply chain at \(10^5\) firms and
on three realizations of the synthetic supply chain at \(10^4\) firms,
for which we report the median.\footnote{At \(\rho=-1.4\) one of the three realizations does not return to a
stationary state within the window after the monetary contraction. The
median is then the larger of the peak ratios of the other two.} Figure~\ref{fig:quant_sweeps} reports the peak ratio against the size of the shock and against the within-vintage curvature. The decomposition \eqref{eq:even_part} says how the peak
ratio should move with the size of the shock. The even part
\(\overline f_t\) of the GDP response grows with the square of the
shock and the first-order response in proportion to it, so the excess
of the peak ratio over unity should grow in proportion to the shock.
It does. The peak ratio tends to unity as the shock vanishes, at \(1.1\) on the US supply chain and \(1.0\) on the synthetic supply chain at
\(\pi=0.002\), and reaches \(7.3\) and \(1.7\) at \(\pi=0.10\).
Regressing the logarithm of its excess over unity on the logarithm of
the shock size over this fiftyfold range gives an elasticity of \(1.0\) on the US supply chain and \(1.1\) on
the synthetic supply chain. The within-vintage curvature sets how much output a mismatched bundle
loses, for the loss from production inefficiency
\eqref{eq:firm_inefficiency_def} scales with the curvature weights
\(1-\rho\) and \(1-\rho_c\). The asymmetry should therefore rise as inputs
become more complementary. The peak ratio rises
from \(2.4\) at \(\rho=0.3\) to \(6.0\) at \(\rho=-1.4\) on the US
supply chain and from \(1.2\) to \(1.5\) on the synthetic supply chain. And the
asymmetry survives when inputs are substitutes, outside the domain on
which the theory is proved.

\begin{figure}[t]
\centering
\includegraphics[width=0.9\textwidth]{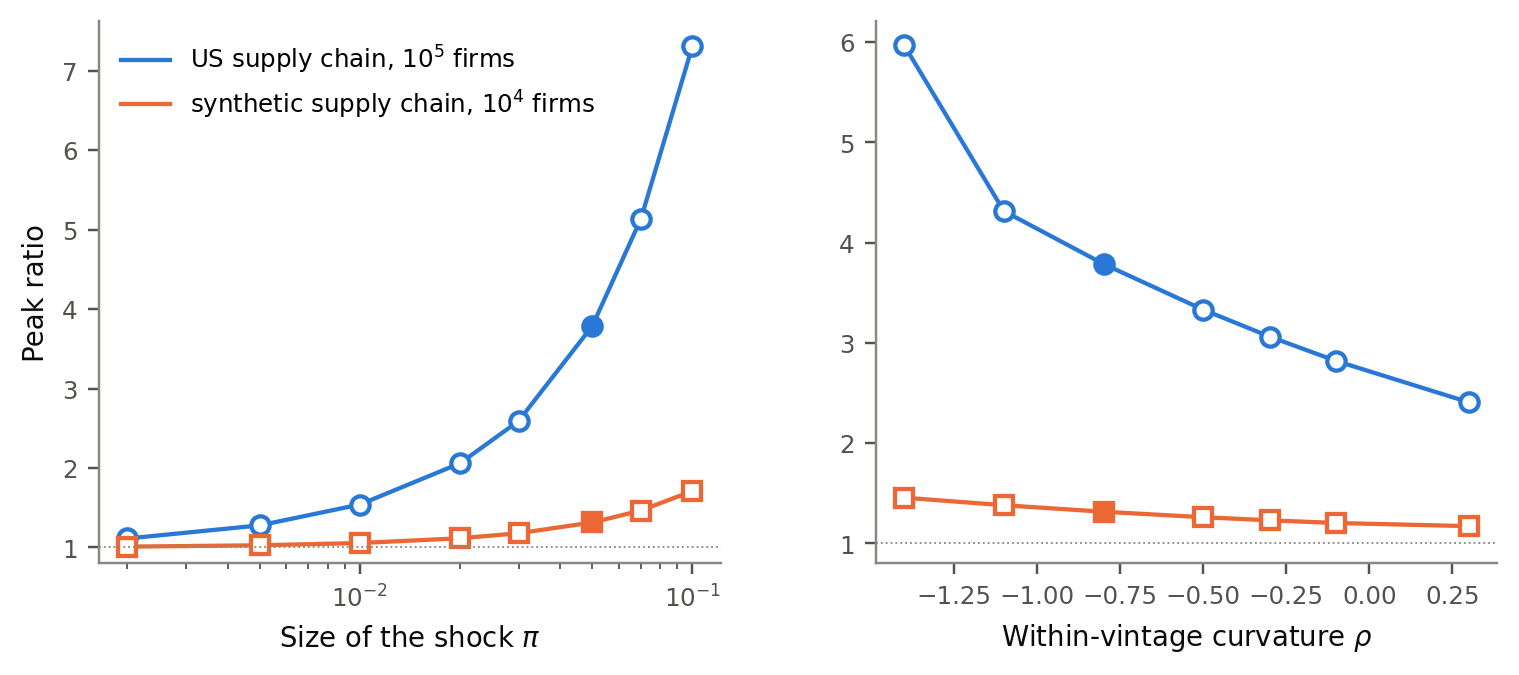}
\caption{Peak ratio against the size of the monetary shock (left) and
the within-vintage curvature (right). The synthetic supply chain is the median over three realizations. Filled markers are the baseline.}
\label{fig:quant_sweeps}
\end{figure}

\subsection{The size of monetary non-neutrality in the US economy}
\label{sec:quant-size}

We now measure the size of monetary non-neutrality in the US economy.
Figure~\ref{fig:quant_us_response} shows that on the US supply chain,
too, the response of GDP exhibits the two phases of
Theorem~\ref{thm:sign_reversing_irf}. The figure reports the US supply
chain at its four sizes and superimposes on the response to the
monetary expansion the sign-reversed response to the monetary
contraction. In the first quarter GDP falls after the monetary
expansion. Few of the orders placed at impact have matured by the end of the first quarter, and the small firms that buy alongside the household outbid it
for retail output produced from orders placed before the shock. In the
second quarter most of the orders that the retail tier placed at impact mature.
GDP then rises after the monetary expansion, by \(0.5\%\) at the full census, and falls after the monetary contraction, by \(1.5\%\), about three times as far. Were the response symmetric in the sign of the shock, the response to the monetary expansion and the sign-reversed response to the monetary contraction would coincide. The gap between them is
\(-2\overline f_t\), twice the size of the even part \(\overline f_t\)
of the GDP response. By \eqref{eq:even_part}, \(-\overline f_t\) splits
into the deadweight loss, the normalization term and the curvature term of the drawdown of the goods in transit. The even part is concentrated in the first year. From the fourth quarter after the monetary expansion, the reduced output of the upstream firms reaches the retail tier and holds GDP below its pre-shock level over the following two years. At \(10^6\) firms and at the full census this shortfall of GDP is about a tenth of a percent. At these two sizes GDP then follows a slow oscillation that has not decayed by the end of the window, down to between a quarter and a third of a percent below its pre-shock level in the tenth year.\footnote{The experiments follow a single monetary shock. In the data monetary shocks of different sizes arrive one after another, and the slow oscillations they set off overlap. An oscillation of this length therefore tends not to show up in the data.}

\begin{figure}[t]
\centering
\includegraphics[width=0.9\textwidth]{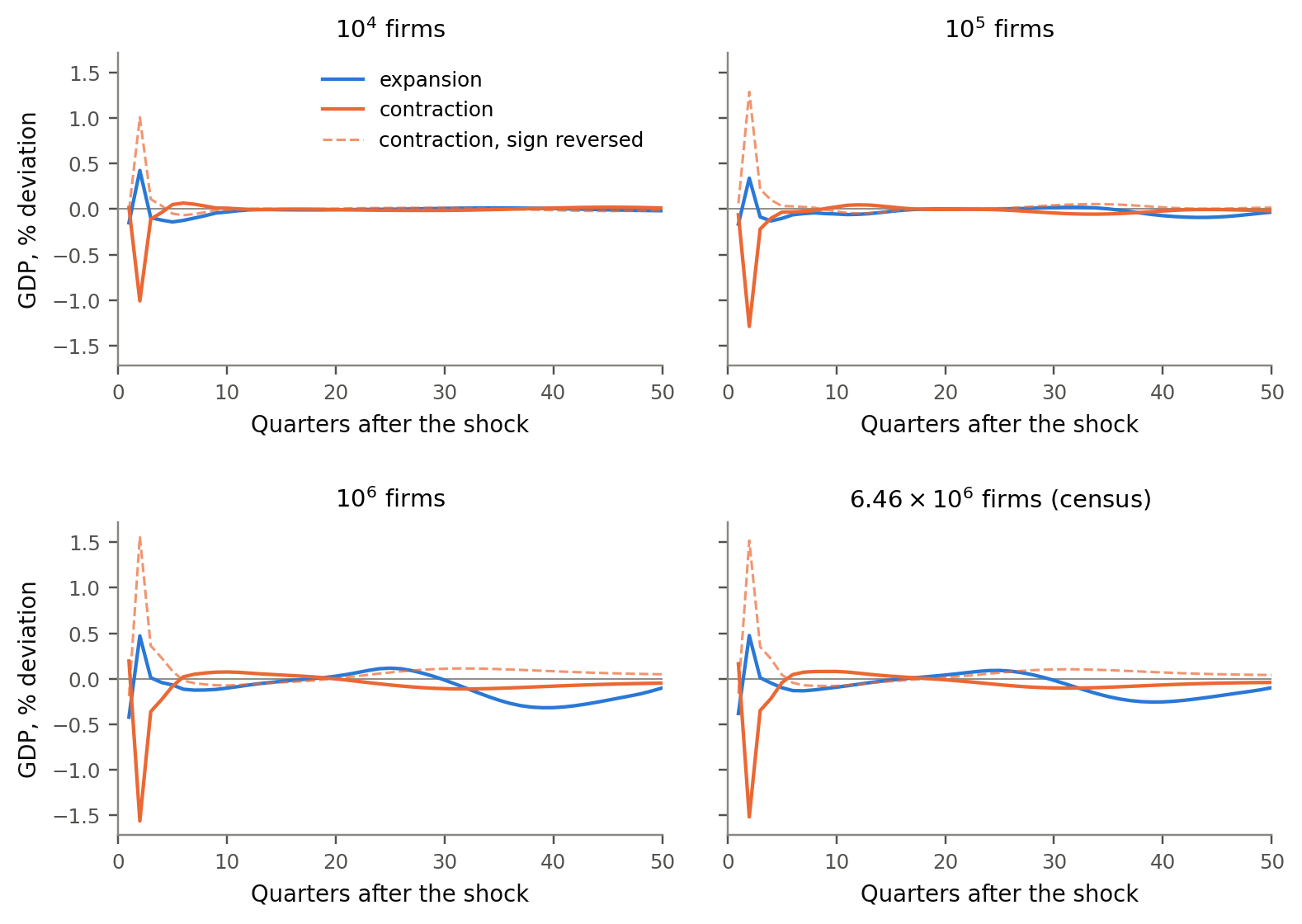}
\caption{Quarterly response of GDP to a monetary expansion and a
monetary contraction of \(\pi=0.05\), US supply chain.}
\label{fig:quant_us_response}
\end{figure}

The transition on the US supply chain is therefore not complete within
the window, so GDP cumulated over the window depends on where the window
ends. On the US supply chain we report the peak responses instead.

Both responses peak in the second quarter at every size of the US supply
chain, so the peak ratio compares them within one quarter. It is
\(2.4\) at \(10^4\) firms, and from \(10^5\) firms to the full census,
nearly two orders of magnitude, it lies between \(3.2\) and \(3.8\), with
\(3.2\) at the full census. Which means that at the full census the peak
ratio on the US supply chain is about two and a half times that on the
synthetic supply chain.

Figure~\ref{fig:quant_peaks} reports the peak responses of GDP against
the number of firms. On the US supply chain the peak fall in GDP after a monetary contraction
grows from \(1.0\%\) at \(10^4\) firms to \(1.5\) to \(1.6\%\) from
\(10^6\) firms to the full census. The peak rise after a monetary expansion stays between
\(0.3\) and \(0.5\%\). Note that monetary non-neutrality does not shrink as the economy
grows, which is consistent with the theory. The peak responses are the same on the synthetic supply chain at \(10^3\) firms and at \(10^5\), and on the US supply chain they are as large at the full census as
at \(10^6\) firms. The constants in every bound of the theory are set by the assumptions,
and the number of firms enters the hypotheses only through proportions. A
monetary shock is not an idiosyncratic disturbance, and no law of
large numbers dilutes it. The shock moves real quantities only through
the misalignment of balances that its uneven incidence creates, a
cross-sectional object, positive at some firms and negative at others,
with a size-weighted sum of zero. Were the firm-level values of this misalignment independent draws, their weighted sums would shrink at the rate
\(1/\sqrt n\), and money would be neutral in a large enough economy.
But the incidence rule apportions one aggregate shock by size, so the
firm-level values are not independent draws. The relative expenditure
impulse of a firm is a fixed function of its size relative to the
sizes of the other buyers of its inputs. The sums that matter for output, the average expenditure impulse of the buyers of a good and the household-consumption aggregate, are weighted means of firm-level impulses. Doubling the number
of firms doubles the terms in each sum and leaves the weighted mean
unchanged. Nor can the deadweight loss average out, for it is a sum of
nonnegative losses at the individual firms and in the household's
consumption of the individual goods.

\begin{figure}[t]
\centering
\includegraphics[width=0.62\textwidth]{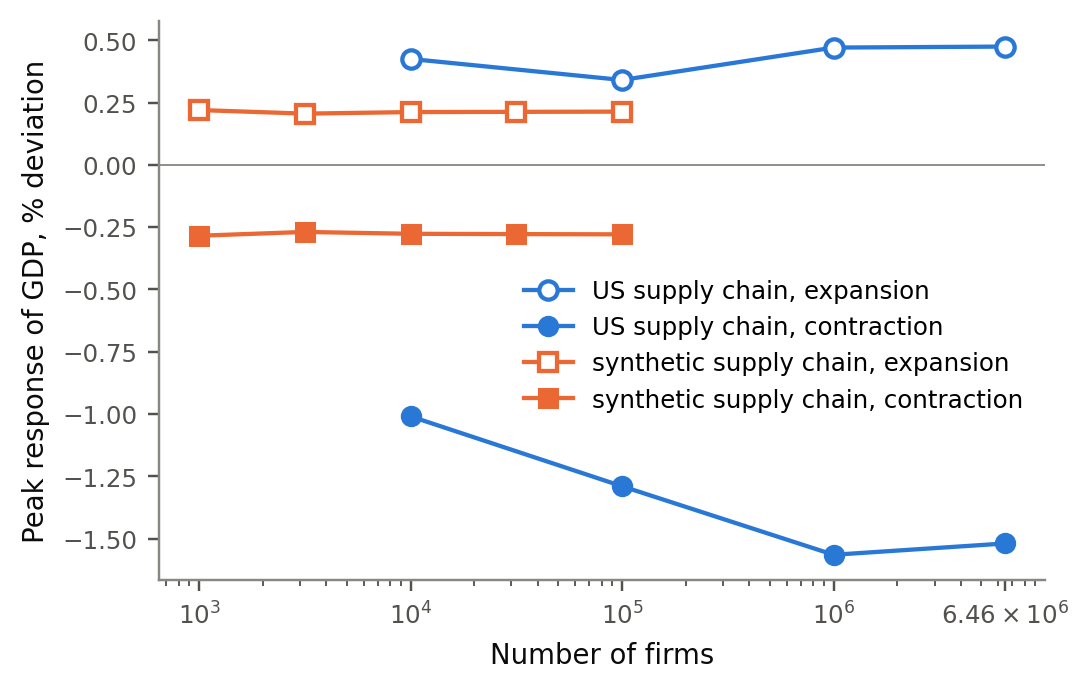}
\caption{Peak quarterly response of GDP to a monetary expansion and a
monetary contraction of \(\pi=0.05\), against the number of firms. The
synthetic supply chain is the mean over twenty realizations.}
\label{fig:quant_peaks}
\end{figure}

\label{sec:quant-anatomy}
To see where along the supply chain output moves, we measure, firm by
firm, the two parts of the response that the theory separates, the part
that changes sign with the shock and the part that does not. We simulate the same economy under \(+\pi\) and under \(-\pi\). Half the difference
between the two output responses of a firm is its antisymmetric
component, and half their sum is its symmetric component. The
antisymmetric component is, to first order, the reallocation effect,
the first-order term of the expansion
\eqref{eq:gdp_expansion_coefficients}, and in the aggregate the
symmetric component is the even part \(\overline f_t\) of the GDP response.
Figure~\ref{fig:quant_o1decomp} reports the first-quarter decomposition
at \(10^5\) firms by decile of upstreamness, the average number of
production stages that a firm's output passes through before it reaches
final demand, computed from the Leontief inverse as in
\citet{antras2018upstreamness}. Upstreamness falls as the downstreamness index of Definition~\ref{def:downstreamness} rises, and it is
the measure an empirical test would construct from input--output
accounts.

The antisymmetric component rises toward final demand, from about
zero over the upstream half of the supply chain to \(3.0\%\) in the most
downstream decile, the pattern of the impact reallocation of
Proposition~\ref{prop:impact_reallocation}. The output of the upstream deciles has changed little in the first quarter, since their inputs arrive with the
longest lead times and the orders they placed at impact mature later.
In the most downstream decile the antisymmetric component is about nine
times the peak rise of GDP after the monetary expansion. The ordering
by position owes nothing to price setting, since prices are equally
flexible everywhere in the model. The symmetric component is negative
in every decile. Under a monetary expansion it offsets part of what
the antisymmetric component adds at a downstream firm, while under a
monetary contraction it deepens the contraction. The contraction at
that firm therefore exceeds its expansion by twice the absolute value of the symmetric component. Over the deciles, the mean absolute symmetric component is \(23\%\) of the mean absolute antisymmetric component on the synthetic supply chain and \(36\%\) on the US supply chain, in
line with the larger peak ratio of the US supply chain.\footnote{The synthetic supply chain is tiered,
so its firms take only a few distinct values of upstreamness and its
deciles collapse into four groups. The ratio for the synthetic supply chain is taken over these four groups at \(10^4\) firms.}

\begin{figure}[t]
\centering
\includegraphics[width=0.62\textwidth]{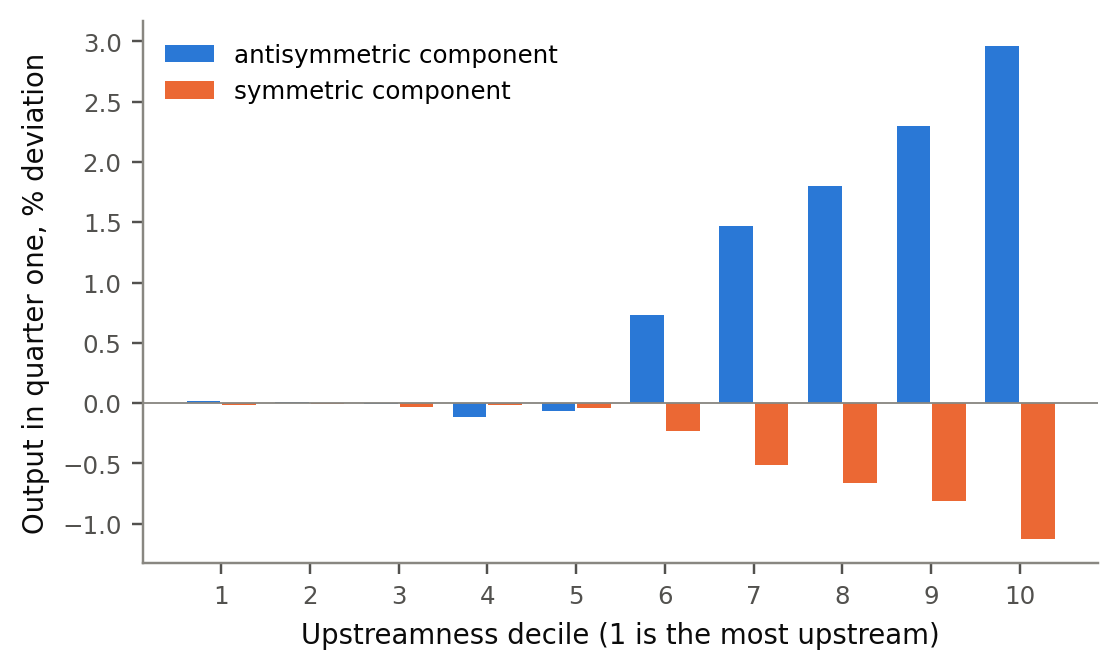}
\caption{First-quarter response of output by upstreamness decile, split
into the components antisymmetric and symmetric in the sign of the
shock, US supply chain at \(10^5\) firms.}
\label{fig:quant_o1decomp}
\end{figure}

\section{Conclusion}
\label{sec:conclusion}

About a quarter millennium ago, \citeauthor{hume1742} asked why changes in money
affect output, even though such changes can seem little more than a change in bookkeeping,
like moving from the Arabian notation, which requires few characters, to the Roman, which
requires many. The question has proved remarkably difficult.\footnote{As \citet{lucas1996nobel} put it, ``So much thought has been devoted to this
question and so much evidence is available that one might reasonably assume that it had been
solved long ago.''}  While the long and organized effort devoted to this problem has produced a wide range of
empirical methods and an extensive body of facts, it has
produced only a handful of theories. These theories differ in important respects in their treatment of information, market structure, and price formation, but they
 share a common generative mechanism: monetary shocks move the quantity of labor used
in production in the direction of the shock. This approach is natural once one begins from
the premise that monetary shocks change output \emph{broadly construed}.   If monetary shocks do not alter productivity, an increase in production \emph{everywhere} must come from greater use of inputs. In principle, that could mean more intermediate inputs or more labor. In practice, however, the former is much harder to generate on impact, so the adjustment is usually made to fall on labor.

In this paper, we begin from a different premise, a Wicksellian position, if you will.
\citet[p.~143]{wicksell1936} argued that a positive monetary shock can raise output in one
sector ``only at the expense of the other branches of production from which labor and liquid
capital have to be drawn.'' He went on to say that a general expansion of goods and services
across all sectors is impossible.
Starting from this position, the problem becomes one of explaining why positive monetary
shocks can increase output measured as GDP, and what this means for production in other
parts of the economy. This is precisely what we theorized in this paper.

We developed a supply-chain network model in which monetary shocks disproportionately affect
small firms, which happen to be overrepresented downstream. Firms produce using
dated inputs, so the supply chain is overlaid with a time structure.  In this environment, positive monetary shocks tend to reallocate resources
from upstream producers toward downstream producers, while negative shocks tend to shift
resources in the opposite direction. Such reallocation is not costless however. Because firms bid
for resources using newly acquired nominal balances in decentralized markets, the resulting
changes in resource availability are generally not proportional to the equilibrium
combinations in which inputs are efficiently used. This generates a deadweight loss. The reallocation of intermediate inputs, and therefore production, when combined with the deadweight loss generates a contraction in GDP in response to negative monetary shocks, and a milder expansion in response to positive ones.

The welfare implications of our model differ sharply from those emphasized in the two main
traditions of monetary non-neutrality. In the New Keynesian tradition,
monetary policy matters because nominal rigidities prevent the economy from attaining the
relevant efficient or constrained-efficient allocation, so appropriate monetary stabilization
can improve welfare by offsetting those distortions \citep{gali2015monetary}. In the New
Classical tradition, monetary disturbances are inefficient because they lead agents to act
on misleading or imperfectly understood price signals, thereby distorting labor supply and
production \citep{lucas1972expectations}. Our mechanism is different. We
assume no price stickiness: markets clear at every date, including along the transition from
one equilibrium to another. Monetary non-neutrality arises here not because prices fail to
adjust, nor because agents misread price signals, but because the reallocation of purchasing
power across the supply chain alters the temporal composition of production and moves
resources in temporally inappropriate proportions. A positive
monetary shock can raise output today by drawing on tomorrow's productive base, while a
negative shock can preserve more of that base for later use. But because such intertemporal
reallocations are costly, society does not borrow and lend across time at par. Output is
lost in transit. Monetary shocks, therefore, are generally welfare-reducing. The magnitude of that welfare loss depends on the implied intertemporal cost of
reallocation, which is itself shaped by the structure of the economy: the architecture of
the supply chain and the size distribution of firms. In economies with long chains of flow of intermediate inputs between firms, the welfare consequences of monetary
changes cannot be deduced from aggregate objects such as the elasticity of
labor supply or an index of price stickiness. The dynamics of monetary non-neutrality, and therefore its welfare consequences, emerge from the structural features of an economy intricately tied to questions of industrial organization.

\newpage
\bibliography{ref}
\bibliographystyle{aea}

\newpage
\appendix
\counterwithin{figure}{section}
\counterwithin{table}{section}
\renewcommand{\theHfigure}{\thesection.\arabic{figure}}
\renewcommand{\theHtable}{\thesection.\arabic{table}}
\section{Mathematical Appendix}
\label{app:proofs}

Throughout this appendix, constants depend only on the constants of
the assumptions, and not on the shock size \(\pi\), the date \(t\) or the
number of firms, unless stated otherwise. The constants of the
assumptions are
\(\underline\beta,\overline\beta,\underline\rho,\overline\rho,
\overline\lambda,C_\lambda,L_\ell,\overline\epsilon_\theta,\underline k\)
and \(\widehat k\), and in the proofs of
Sections~\ref{sec:reallocation} and~\ref{sec:asymmetry} also
\(\delta,\psi_T,\Delta_+,c_\xi,\Delta_\psi,\epsilon_h,s_R,\epsilon_k\)
and \(k_\ast\). An economy is admissible if it satisfies
Assumptions~\ref{assump:diffuse_incidence}--\ref{assump:monotone_decay_spectrum}
and the bounds \eqref{eq:beta_bounds}, \eqref{eq:zero_lag_share},
\eqref{eq:rho_domain} and \eqref{eq:log_size_bound} at these values. On
this class the incidence ratio \(\overline M\zeta_i/m_i^\ast\) lies in
\([e^{-2(1-\theta)L_\ell},e^{2(1-\theta)L_\ell}]\), so the incidence range
obeys \(C_\zeta\le2L_\ell e^{2\overline\epsilon_\theta L_\ell}\). The bound \(\overline\pi\) is the smallest of three radii set by these constants. They are the
radii of the geometric decay of Appendix~\ref{app:proof_stability}, of
the no-shutdown margin \eqref{eq:no_shutdown_margin} and of the bound on
the consumption--output wedge in Lemma~\ref{lem:gdp_labor_income_anchor}. The threshold \(\pi_\ast\) of
Theorem~\ref{thm:asymmetric_nonneutrality}, the order constant of
Proposition~\ref{prop:implicit_interest} and the Hessian bounds \(K\),
\(K_+\) and \(K_P\) depend on the economy under study.
Assumption~\ref{assump:short_run_nominal_rigidity} and the bounds above are
maintained in every proof, and each statement lists the further
hypotheses its proof uses. The lemmas are stated and proved in
Appendix~\ref{app:analytical}, an equation number that carries a letter
refers to Appendices~\ref{app:analytical} and~\ref{app:construction}, and the dependency map of the results
is in Appendix~\ref{app:depgraph}.

\subsection{Proofs of Section \ref{sec:model}: The Model}
\label{app:model}

\subsubsection{The within-period sequence}
\label{app:within_period}

The within-period sequence is as follows. Firms and the household first express nominal demand. At date \(t\) firm \(i\) holds the balance \(m_{i,t}\),
hires its stationary labor \(l_i^\ast\) at the indexed wage \(\wM\), and
pays for its intermediate purchases out of the balance that remains, a
cash-in-advance constraint. Its intermediate-input expenditure is
therefore
\begin{equation}
\sum_{j\in N}p_{j,t}x_{ji,t}
\;=\;
m_{i,t}-\wM l_i^\ast
\label{eq:dynamic_intermediate_budget}
\end{equation}
Supplier shares are not revised in the short run, so the expenditure is
allocated across suppliers at the stationary shares,
\begin{equation}
p_{j,t}x_{ji,t}
\;=\;
a_{ji}\bigl(m_{i,t}-\wM l_i^\ast\bigr),
\qquad i,j\in N
\label{eq:dynamic_nom_alloc}
\end{equation}
and the real order on each supplier is the nominal order deflated by
the supplier's price,
\begin{equation}
x_{ji,t}
\;=\;
a_{ji}\frac{m_{i,t}-\wM l_i^\ast}{p_{j,t}},
\qquad i,j\in N
\label{eq:dynamic_real_alloc}
\end{equation}
Labor-market clearing \(\sum_{i\in N}l_i^\ast=1\) holds by construction.

The household spends its wage income on final goods according to
\(\boldsymbol\gamma\), so nominal demand for good \(j\) is the
intermediate demand of its buyers plus household demand,
\begin{equation}
d_{j,t}
\;=\;
\sum_{i\in N}a_{ji}\bigl(m_{i,t}-\wM l_i^\ast\bigr)+\gamma_j \wM,
\qquad j\in N
\label{eq:dyn_nominal_demand}
\end{equation}

Prices then clear goods markets given the output available at date \(t\),
\begin{equation}
p_{j,t}
\;=\;
\frac{d_{j,t}}{q_{j,t}},
\qquad j\in N
\label{eq:dyn_price_update}
\end{equation}

Production then uses the stationary labor and the dated inputs
through the nests \eqref{eq:within_vintage_bundle}--\eqref{eq:materials_bundle},
\begin{equation}
q_{i,t}
\;=\;
(l_i^\ast)^{\beta_i}
\left[
\sum_{k\in\mathcal K_i}\chi_i^{(k)}
\left(
\sum_{j\in \mathcal S_i^{(k)}}
\nu_{ji}\,x_{ji,t-k}^{\rho}
\right)^{\rho_c/\rho}
\right]^{\frac{1-\beta_i}{\rho_c}},
\qquad i\in N
\label{eq:dynamic_output}
\end{equation}
where the purchases in vintage layer \(k\) of firm \(i\) share the lead
time \(k_{ji}=k\), so that \(x_{ji,t-k_{ji}}=x_{ji,t-k}\).

\subsubsection{Proof of existence and uniqueness of the stationary equilibrium}
\label{app:proof_existence}

\begin{proof}[Proof of Proposition~\ref{prop:existence_stationary}]
At a stationary equilibrium an order placed \(k\) periods ago equals an
order placed today. Hence the dated production identity
\eqref{eq:dynamic_output} reduces at every date to the static composite
\eqref{eq:materials_bundle}, and existence reduces to a fixed point of the
static unit-cost system. Fix \(w=1\). The two elasticities \(\eta=1/(1-\rho)\) and
\(\eta_c=1/(1-\rho_c)\) lie in \((0,1)\) on the domain
\eqref{eq:rho_domain}, and the argument below uses only \(\rho,\rho_c<1\),
\(\rho,\rho_c\neq0\). At supplier prices \(\mathbf p\), the unit cost
of firm \(i\)'s vintage composite of lead time \(k\) is the CES price index of
that vintage layer. And the unit cost of the intermediate composite is
the outer CES price index of the active vintage layers,
\begin{equation}
\overline p_i^{(k)}(\mathbf p)
:=
\Bigl(\sum_{j\in\mathcal S_i^{(k)}}\nu_{ji}^{\,\eta}\,p_j^{\,1-\eta}\Bigr)^{1/(1-\eta)},
\qquad
\overline p_i(\mathbf p)
:=
\Bigl(\sum_{k\in\mathcal K_i}\bigl(\chi_i^{(k)}\bigr)^{\eta_c}
\bigl[\overline p_i^{(k)}(\mathbf p)\bigr]^{1-\eta_c}\Bigr)^{1/(1-\eta_c)}
\label{eq:ces_price_indices}
\end{equation}
Shephard's lemma applied to the two indices gives the within-vintage
and cross-vintage shares at those prices,
\begin{equation}
\begin{gathered}
\widetilde a_{ji}^{(k)}(\mathbf p)
=
\frac{\nu_{ji}^{\,\eta}\,p_j^{\,1-\eta}}{\sum_{u\in\mathcal S_i^{(k)}}\nu_{ui}^{\,\eta}\,p_u^{\,1-\eta}},
\qquad
s_i^{(k)}(\mathbf p)
=
\frac{(\chi_i^{(k)})^{\eta_c}[\overline p_i^{(k)}(\mathbf p)]^{1-\eta_c}}
{\sum_{r\in\mathcal K_i}(\chi_i^{(r)})^{\eta_c}[\overline p_i^{(r)}(\mathbf p)]^{1-\eta_c}},\\
a_{ji}(\mathbf p)=s_i^{(k_{ji})}(\mathbf p)\,\widetilde a_{ji}^{(k_{ji})}(\mathbf p)
\end{gathered}
\label{eq:shares_from_prices}
\end{equation}
which at the stationary prices is the two-stage form
\eqref{eq:share_quantity_form} evaluated at the stationary bundle. The
unit cost of gross output at the numeraire wage is the Cobb--Douglas
unit cost of the top nest,
\begin{equation}
\Psi_i(\mathbf p)
:=
\beta_i^{-\beta_i}(1-\beta_i)^{-(1-\beta_i)}\,
\bigl[\overline p_i(\mathbf p)\bigr]^{1-\beta_i},
\qquad i\in N
\label{eq:stationary_price_map}
\end{equation}
and under competitive pricing a stationary price vector is a positive
fixed point of \(\mathbf p=\Psi(\mathbf p)\).

We first show that the log-unit-cost map is a global contraction. Set
\(\check\Psi_i(\check{\mathbf p}):=\log \Psi_i(e^{\check{\mathbf p}})\) for
\(\check{\mathbf p}=\log\mathbf p\in\mathbb R^n\). By Shephard's lemma the
derivative of \(\log \overline p_i\) with respect to \(\log p_j\) is the
supplier share \(a_{ji}(e^{\check{\mathbf p}})\) of
\eqref{eq:shares_from_prices}, non-negative and summing to one over the
suppliers of \(i\). Hence
\[
\frac{\partial \check\Psi_i}{\partial \check p_j}(\check{\mathbf p})
=
(1-\beta_i)\,a_{ji}(e^{\check{\mathbf p}})\;\ge\;0,
\qquad
\sum_{j\in N}\Bigl|\frac{\partial \check\Psi_i}{\partial \check p_j}(\check{\mathbf p})\Bigr|
=
1-\beta_i\;\le\;1-\underline\beta
\]
for every \(\check{\mathbf p}\in\mathbb R^n\). For any pair \(\check{\mathbf p},\check{\mathbf p}'\),
the mean-value theorem along the segment joining them and the row-sum bound
give
\[
\bigl\|\check\Psi(\check{\mathbf p})-\check\Psi(\check{\mathbf p}')\bigr\|_\infty
\le
(1-\underline\beta)\,\bigl\|\check{\mathbf p}-\check{\mathbf p}'\bigr\|_\infty
\]
so \(\check\Psi\) is a contraction on \((\mathbb R^n,\|\cdot\|_\infty)\). By
Banach's fixed-point theorem it has a unique fixed point
\(\check{\mathbf p}^\ast\), and \(\mathbf p^\ast:=e^{\check{\mathbf p}^\ast}\) is the
unique strictly positive solution of \(\mathbf p=\Psi(\mathbf p)\).

We next recover the shares, the sales and the allocation from the fixed
point. Evaluate the cost-minimizing shares at the fixed point,
\eqref{eq:shares_from_prices}, and assemble \(\mathbf A\). Its columns sum
to one, \(\mathbf A_\beta=\mathbf A\,\mathrm{diag}(\mathbf 1-\boldsymbol\beta)\)
has column sums \(1-\beta_i\le1-\underline\beta\), so
the spectral radius of \(\mathbf A_\beta\) is at most \(1-\underline\beta<1\),
\(\mathbf I-\mathbf A_\beta\) is invertible, and
\(\mathbf m^\ast=(\mathbf I-\mathbf A_\beta)^{-1}\boldsymbol\gamma
=\sum_{s\ge0}\mathbf A_\beta^{\,s}\boldsymbol\gamma\).
The \(j\)-th entry of \(\mathbf A_\beta^{\,s}\boldsymbol\gamma\) is
positive exactly when a chain of \(s\) sales leads from \(j\) to a firm the
household buys from, so reachability of final demand gives
\(\mathbf m^\ast\gg\mathbf 0\). Recover the allocation by
\begin{equation}
q_i^\ast=\frac{m_i^\ast}{p_i^\ast},
\qquad
l_i^\ast=\beta_i m_i^\ast,
\qquad
x_{ji}^\ast=(1-\beta_i)a_{ji}\frac{m_i^\ast}{p_j^\ast},
\qquad
c_i^\ast=\frac{\gamma_i}{p_i^\ast}
\label{eq:stationary_quantities}
\end{equation}
Cost minimization holds by construction,
since the orders \(x_{ji}^\ast=(1-\beta_i)a_{ji}m_i^\ast/p_j^\ast\) are the
cost-minimizing bundle at \(\mathbf p^\ast\) scaled to the intermediate-input
expenditure, and the labor input is the Cobb--Douglas share of revenue at the
numeraire wage. Zero profit holds because \(p_i^\ast\) is the unit cost.
The production identity holds because, at the cost-minimizing bundle, the
composite equals expenditure divided by its unit cost,
\(X_i(\mathbf x_i^\ast)=(1-\beta_i)m_i^\ast/\overline p_i(\mathbf p^\ast)\), and
\(l_i^\ast=\beta_im_i^\ast\), so
\((l_i^\ast)^{\beta_i}X_i(\mathbf x_i^\ast)^{1-\beta_i}
=m_i^\ast/\Psi_i(\mathbf p^\ast)=m_i^\ast/p_i^\ast=q_i^\ast\).
Goods-market clearing is the sales identity, in which firm \(j\)'s nominal
sales \(m_j^\ast\) equal intermediate demand \(\sum_i(1-\beta_i)a_{ji}m_i^\ast\)
plus household demand \(\gamma_j\), which is \eqref{eq:firm_sales_accounting}.
Labor-market clearing follows from
\(\mathbf 1^\top(\mathbf I-\mathbf A_\beta)=\boldsymbol\beta^\top\), so that
\(\sum_il_i^\ast=\boldsymbol\beta^\top\mathbf m^\ast=\mathbf 1^\top\boldsymbol\gamma=1\).

Uniqueness follows in the same order. The unit costs determine
\(\mathbf p^\ast\) uniquely, \(\mathbf p^\ast\) determines the shares, the
shares determine \(\mathbf m^\ast\) through the invertible sales system, and
the recovery formulas determine the allocation and the pipeline. The unit
cost \(\Psi_i\) is homogeneous of degree one in \((w,\mathbf p)\), so
replacing the numeraire \(w=1\) by any \(w>0\) rescales
\(\mathbf p^\ast\) and \(\mathbf m^\ast\) by \(w\) and leaves every real
quantity in \eqref{eq:stationary_quantities} unchanged.

Finally, time-invariant prices and balances make every order
time-invariant through the order rule, \(x_{ji,s}=x_{ji}^\ast\) for every
\(s\in\mathbb Z\). Hence the pipeline consists of the stationary pipeline
layers \eqref{eq:stationary_pipeline} and the dated production identity
\eqref{eq:dynamic_output} reduces at every date to the static identity
verified above. The constructed allocation and prices form a stationary
competitive equilibrium, unique up to the choice of the nominal wage.
\end{proof}

\subsubsection{Proof of local convergence after a monetary shock}
\label{app:proof_stability}

\begin{proof}[Proof of Proposition~\ref{prop:stability_stationary}]
Throughout, \(\varepsilon_\pi:=|\pi|(1-\theta)\), and \(\lambda_2\) stands for \(\lambda_2(\mathbf A)\).
Fix a contraction rate \(\kappa\) with
\begin{equation}
\max\bigl\{\overline\lambda,\;(1-\underline\beta)^{1/\widehat k}\bigr\}
\;<\;\kappa\;<\;1
\label{eq:contraction_rate_choice}
\end{equation}
Both terms on the left are below one, so such a \(\kappa\) exists, and
any choice serves, with the constant \(C_S\) depending on it. When
\(\widehat k=0\) any \(\kappa\in(\overline\lambda,1)\) serves.

We first show that the nominal block is exact. Let
\(\mathbf m_t^\perp:=\mathbf m_t-(1+\pi)\mathbf m^\ast\) be the misalignment of
balances from the post-shock stationary distribution. Subtracting the
scaled stationary identity
\((1+\pi)\mathbf m^\ast=\mathbf A\bigl((1+\pi)\mathbf m^\ast-(1+\pi)\,\mathrm{diag}(\boldsymbol\beta)\,\mathbf m^\ast\bigr)+\boldsymbol\gamma(1+\pi)w^\ast\)
from the balance law \eqref{eq:firm_balance_law_pre}, in which the wage bill
is \(\wM l_i^\ast=(1+\pi)\beta_im_i^\ast\), gives
\begin{equation}
\mathbf m_{t+1}^\perp=\mathbf A\,\mathbf m_t^\perp,
\qquad
\mathbf m_0^\perp=\pi\bigl(\overline M\boldsymbol\zeta-\mathbf m^\ast\bigr),
\qquad
\mathbf 1^\top\mathbf m_t^\perp=0
\label{eq:stab_nominal_block}
\end{equation}
Since \(\mathbf m_t^\perp\in\mathcal Z_{\mathbf A}\), the power bound
\eqref{eq:power_bound} gives
\(\|\mathbf m_t^\perp\|_m\le C_\lambda\lambda_2^t\|\mathbf m_0^\perp\|_m\), that is,
in proportional coordinates \(\widehat{\mathbf m}_t^\perp:=\mathbf D_m^{-1}\mathbf m_t^\perp\),
\(\|\widehat{\mathbf m}_t^\perp\|_\infty\le C_\lambda\lambda_2^t\|\widehat{\mathbf m}_0^\perp\|_\infty\). At
impact \(1+\widehat m_{i,0}^\perp/\pi=\overline M\zeta_i/m_i^\ast\), which lies in
\([e^{-2(1-\theta)L_\ell},e^{2(1-\theta)L_\ell}]\) by the log-size bound,
so \(|\widehat m_{i,0}^\perp|\le|\pi|(e^{2(1-\theta)L_\ell}-1)\le C_\zeta|\pi|(1-\theta)\)
with \(C_\zeta\) the incidence range of
\eqref{eq:impact_misalignment_constant}. Hence
\begin{equation}
\|\widehat{\mathbf m}_t^\perp\|_\infty
\;\le\;
C_\lambda C_\zeta\,\varepsilon_\pi\,\lambda_2^t,
\qquad t\ge0
\label{eq:stab_nominal_envelope}
\end{equation}

We next show that expenditures and demands stay positive. The
intermediate-input expenditure
and the nominal demand are, in units of their post-shock stationary values,
\[
\frac{e_{i,t}}{(1+\pi)(1-\beta_i)m_i^\ast}
=
1+\frac{\widehat m_{i,t}^\perp}{(1+\pi)(1-\beta_i)},
\qquad
\frac{d_{j,t}}{(1+\pi)m_j^\ast}
=
\frac{m_{j,t+1}}{(1+\pi)m_j^\ast}
=
1+\frac{\widehat m_{j,t+1}^\perp}{1+\pi}
\]
where \(d_{j,t}=m_{j,t+1}\) because nominal sales equal nominal demand under
market clearing. The bound \(\overline\pi\) is at most \(\tfrac12\) and small enough that
\(2C_\lambda C_\zeta\overline\pi\,\overline\epsilon_\theta/(1-\overline\beta)\le\tfrac12\),
this being one of the conditions that define it. Then by \eqref{eq:stab_nominal_envelope} both ratios lie in
\([\tfrac12,\tfrac32]\) at every date, and every expenditure and every demand is
strictly positive. Every balance also exceeds the wage bill by a fixed
fraction of itself. The bound \(e_{i,t}\ge\tfrac12(1+\pi)(1-\beta_i)m_i^\ast\)
and \(m_{i,t}=(1+\pi)\beta_im_i^\ast+e_{i,t}\) give
\(e_{i,t}/m_{i,t}\ge(1-\beta_i)/(1+\beta_i)\ge(1-\overline\beta)/(1+\overline\beta)\),
which is the no-shutdown margin
\begin{equation}
\frac{\wM\,l_i^\ast}{m_{i,t}}
\;\le\;
1-c_\beta,
\qquad
c_\beta:=\frac{1-\overline\beta}{1+\overline\beta}
\label{eq:no_shutdown_margin}
\end{equation}
at every firm and date.

We now derive an exact identity for the orders. By the order rule
\eqref{eq:dynamic_real_alloc} and the clearing rule
\eqref{eq:dyn_price_update}, the quantity buyer \(i\) orders from supplier
\(j\) at date \(t\) is \(x_{ji,t}=a_{ji}e_{i,t}q_{j,t}/d_{j,t}\), and at the
stationary equilibrium \(x_{ji}^\ast=a_{ji}e_i^\ast q_j^\ast/d_j^\ast\). Dividing,
the common factor \(1+\pi\) cancels, and
\begin{equation}
\begin{aligned}
\check x_{ji,t}:=\log\frac{x_{ji,t}}{x_{ji}^\ast}
&\;=\;
\check q_{j,t}+\check\Xi_{ij,t},
\qquad
\check q_{j,t}:=\log\frac{q_{j,t}}{q_j^\ast},\\
\check\Xi_{ij,t}
&:=\log\Bigl(1+\frac{\widehat m_{i,t}^\perp}{(1+\pi)(1-\beta_i)}\Bigr)-\log\Bigl(1+\frac{\widehat m_{j,t+1}^\perp}{1+\pi}\Bigr)
\end{aligned}
\label{eq:stab_order_identity}
\end{equation}
Since \(|\log(1+v)|\le2|v|\) for \(|v|\le\tfrac12\) and
\(1+\pi\ge\tfrac12\), the bounds on the two ratios and
\eqref{eq:stab_nominal_envelope} give
\begin{equation}
\max_{j,i}|\check\Xi_{ij,t}|
\;\le\;
C_{\check\Xi}\,\varepsilon_\pi\,\lambda_2^t,
\qquad
C_{\check\Xi}:=\frac{8\,C_\lambda C_\zeta}{1-\overline\beta}
\label{eq:stab_forcing_bound}
\end{equation}
Orders placed before the shock carry no deviation, so \(\check x_{ji,s}=0\),
\(\check q_{j,s}=0\) and \(\check\Xi_{ij,s}=0\) for \(s<0\).

We now show that log output is a delayed contraction of the log inputs.
With labor fixed at \(l_i^\ast\), the dated production identity
\eqref{eq:dynamic_output} and the two-stage shares
\eqref{eq:share_quantity_form} evaluated at the stationary bundle give
\[
\check q_{i,t}
=
\mathcal L_i\bigl((\check x_{ji,t-k_{ji}})_{j\in\mathcal S_i}\bigr),
\qquad
\mathcal L_i(\check{\mathbf x})
:=
\frac{1-\beta_i}{\rho_c}
\log\Bigl\{\sum_{k\in\mathcal K_i}s_i^{(k)}
\Bigl[\sum_{j\in\mathcal S_i^{(k)}}\widetilde a_{ji}^{(k)}e^{\rho \check x_{ji}}\Bigr]^{\rho_c/\rho}\Bigr\}
\]
because \(X_i^{(k)}(\mathbf x)/X_i^{(k)}(\mathbf x^\ast)
=(\sum_j\widetilde a_{ji}^{(k)}e^{\rho \check x_{ji}})^{1/\rho}\) and
\(X_i(\mathbf x)/X_i(\mathbf x^\ast)=(\sum_ks_i^{(k)}[X_i^{(k)}(\mathbf x)/X_i^{(k)}(\mathbf x^\ast)]^{\rho_c})^{1/\rho_c}\).
The map \(\mathcal L_i\) satisfies \(\mathcal L_i(\mathbf 0)=0\), and its
partial derivatives are \((1-\beta_i)\) times the supplier shares of the
nested composite at the perturbed bundle, hence non-negative with sum
\(1-\beta_i\le1-\underline\beta\). By the mean-value theorem
\(|\mathcal L_i(\check{\mathbf x})-\mathcal L_i(\check{\mathbf x}')|\le(1-\underline\beta)\|\check{\mathbf x}-\check{\mathbf x}'\|_\infty\).
Zero-lead-time entries \(\check x_{ji,t}=\check q_{j,t}+\check\Xi_{ij,t}\) involve the current outputs,
so at each date \(\check{\mathbf q}_t\) solves a fixed-point problem. By
Lemma~\ref{lem:propagation_linearization} it is a contraction in those
coordinates, and the solution exists and is unique date by date.

We now establish the geometric envelope. Set
\(\kappa_{\beta}:=(1-\underline\beta)\kappa^{-\widehat k}\), which is below one by
\eqref{eq:contraction_rate_choice}, and \(C_q:=\kappa_{\beta}C_{\check\Xi}/(1-\kappa_{\beta})\). We show by
induction on \(t\) that \(\|\check{\mathbf q}_t\|_\infty\le C_q\varepsilon_\pi\kappa^t\). For negative dates
the bound holds with zero deviations. At date \(t\), the entries of the dated input bundle are
\(\check x_{ji,t-k_{ji}}=\check q_{j,t-k_{ji}}+\check\Xi_{ij,t-k_{ji}}\) with \(k_{ji}\in\{0,\dots,\widehat k\}\), so by
the Lipschitz bound and \eqref{eq:stab_forcing_bound}, using
\(\lambda_2\le\kappa\),
\[
\begin{aligned}
\|\check{\mathbf q}_t\|_\infty
&\;\le\;
(1-\underline\beta)\Bigl(\max\bigl\{\|\check{\mathbf q}_t\|_\infty,\;\max_{1\le k\le \widehat k}\|\check{\mathbf q}_{t-k}\|_\infty\bigr\}+C_{\check\Xi}\varepsilon_\pi\kappa^{t-\widehat k}\Bigr)\\
&\;\le\;
(1-\underline\beta)\Bigl(\max\{\|\check{\mathbf q}_t\|_\infty,\;C_q\varepsilon_\pi\kappa^{t-\widehat k}\}+C_{\check\Xi}\varepsilon_\pi\kappa^{t-\widehat k}\Bigr)
\end{aligned}
\]
Put \(\Upsilon:=\kappa^{-t}\|\check{\mathbf q}_t\|_\infty\) and divide by \(\kappa^t\). Since
\(\kappa^{-\widehat k}\ge1\),
\[
\Upsilon\;\le\;\kappa_{\beta}\bigl(\max\{\Upsilon,\,C_q\varepsilon_\pi\}+C_{\check\Xi}\varepsilon_\pi\bigr)
\]
If \(\Upsilon>C_q\varepsilon_\pi\) this reads \(\Upsilon\le \kappa_{\beta}(\Upsilon+C_{\check\Xi}\varepsilon_\pi)\), hence
\(\Upsilon\le \kappa_{\beta}C_{\check\Xi}\varepsilon_\pi/(1-\kappa_{\beta})=C_q\varepsilon_\pi\), a contradiction. Therefore
\(\Upsilon\le C_q\varepsilon_\pi\), which completes the induction and gives
\begin{equation}
\max_i|\check q_{i,t}|\le C_q\,\varepsilon_\pi\,\kappa^t,
\qquad
\max_{j,i}|\check x_{ji,t}|\le(C_q+C_{\check\Xi})\,\varepsilon_\pi\,\kappa^t,
\qquad t\ge0
\label{eq:stab_real_envelope}
\end{equation}
When \(\widehat k=0\) every input is ordered at date \(t\), the fixed-point inequality
is \(\|\check{\mathbf q}_t\|_\infty\le(1-\underline\beta)(\|\check{\mathbf q}_t\|_\infty+C_{\check\Xi}\varepsilon_\pi\lambda_2^t)\), and
\(\|\check{\mathbf q}_t\|_\infty\le(1-\underline\beta)C_{\check\Xi}\varepsilon_\pi\lambda_2^t/\underline\beta\), so any \(\kappa\in(\lambda_2,1)\)
serves.

It remains to convert the log bounds into proportional deviations. An
order outstanding in the
start-of-period state at date \(t\) was placed at some date
\(s\in\{t-\widehat k,\dots,t-1\}\), so its log deviation is at most
\((C_q+C_{\check\Xi})\varepsilon_\pi\kappa^{s}\le(C_q+C_{\check\Xi})\kappa^{-\widehat k}\varepsilon_\pi\kappa^t\).
Reducing \(\overline\pi\) further so that every log deviation on the
trajectory is at most \(\tfrac12\), the inequality
\(|e^{z}-1|\le2|z|\) for \(|z|\le\tfrac12\) converts the log bounds into
proportional-deviation bounds. Together with the balance bound
\eqref{eq:stab_nominal_envelope}, in which \(\lambda_2\le\kappa\), this gives
\eqref{eq:post_shock_geometric_decay} with
\[
C_S:=\max\Bigl\{C_\lambda C_\zeta,\;2(C_q+C_{\check\Xi})\kappa^{-\widehat k}\Bigr\}
\]
and the same envelope, with the constants \(2C_q\) and \(2(C_q+C_{\check\Xi})\), for
outputs and for the orders placed at date \(t\). Prices follow from
\(p_{j,t}/((1+\pi)p_j^\ast)=(1+\widehat m_{j,t+1}^\perp/(1+\pi))e^{-\check q_{j,t}}\). Every
constant depends only on \(C_\lambda\), \(L_\ell\),
\(\overline\epsilon_\theta\), \(\underline\beta\), \(\overline\beta\), \(\widehat k\)
and \(\kappa\), and not on the number of firms.
\end{proof}

\subsection{Proofs of Section \ref{sec:inefficiency}: Miscoordination within Firms}
\label{app:proofs_inefficiency}

\subsubsection{Proof of the sign of the loss from production inefficiency}
\label{app:proof_sign_production_inefficiency}

\begin{proof}[Proof of Lemma~\ref{lem:sign_production_inefficiency}]
Both curvatures lie below one and \(\beta_i<1\), so the weights
\((1-\beta_i)(1-\rho)\) and \((1-\beta_i)(1-\rho_c)\) in
\eqref{eq:firm_inefficiency_def} are strictly positive, and each of
\(V_i^{\mathrm w}\) and \(V_i^{\mathrm b}\) in
\eqref{eq:vintage_variances} is a combination of squares with
non-negative weights. Hence \(\mathcal D_{i,t}(\pi)\ge0\), with equality
exactly when both variances vanish. Every active supplier carries a
strictly positive within-vintage share and every nonempty vintage layer
a strictly positive cross-vintage share. Hence
\(V_i^{\mathrm w}(\widehat x_{i,t})=0\) exactly when
\(\widehat x_{ji,t}=\mathcal F_i^{(k)}(\widehat x_{i,t})\) for every
supplier \(j\) of every nonempty vintage layer \(k\), and
\(V_i^{\mathrm b}(\widehat x_{i,t})=0\) exactly when
\(\mathcal F_i^{(k)}(\widehat x_{i,t})=\mathcal F_i(\widehat x_{i,t})\)
for every nonempty vintage layer. Both hold exactly when \(\widehat x_{ji,t}\)
takes the same value at every \(j\in\mathcal S_i\), that is, when the
dated inputs of firm \(i\) all move in the same proportion. The
statement for \(\mathcal D_t^{(\boldsymbol\omega)}(\pi)\) follows, since it
is the sum of the \(\mathcal D_{i,t}(\pi)\) with the non-negative
weights \(\omega_i\).
\end{proof}

\subsubsection{Proof of the order of the cumulative loss from production inefficiency}
\label{app:proof_finite_cumulative_inefficiency}

\begin{proof}[Proof of Theorem~\ref{theorem:finite_cumulative_inefficiency}]
We first record the envelope on the dated inputs.
Proposition~\ref{prop:stability_stationary} bounds the
proportional-deviation state, whose coordinates include the inherited
pipeline entries, by \(C_S\kappa^t|\pi|(1-\theta)\), and it bounds the
orders placed at date \(t\), which are the zero-lead-time dated inputs, by a
bound of the same form. Together these two bounds give, for every proportional
change of a dated input, inherited or current,
\begin{equation}
\max_{i\in N,\,j\in\mathcal S_i}\bigl|\widehat x_{ji,t}(\pi)\bigr|
\;\le\;
C_S\,\kappa^t\,|\pi|\,(1-\theta),
\qquad t\ge0,\ |\pi|\le\overline\pi
\label{eq:usable_input_envelope}
\end{equation}
The
supplier count \(|\mathcal S_i|\) enters no bound below, since
\(\mathcal F_i\) averages against weights that sum to one.

We next prove the bound on the cumulative loss from production inefficiency. By the
firm-level definition
\eqref{eq:firm_inefficiency_def}, the loss from production inefficiency is a non-negative combination of
the within- and between-vintage variance components
\eqref{eq:vintage_variances}, whose sum is the total within-bundle variance
\(V_i^{\mathrm w}(\widehat x_{i,t})+V_i^{\mathrm b}(\widehat x_{i,t})=\sum_{j\in\mathcal S_i}a_{ji}\bigl(\widehat x_{ji,t}-\mathcal F_i(\widehat x_{i,t})\bigr)^2\le\mathcal F_i(\widehat x_{i,t}^{\,2})\).
And both curvature weights are at most \(1+\overline\rho\) on the curvature
domain, so
\[
\mathcal D_{i,t}(\pi)
\;\le\;
\tfrac12(1-\beta_i)(1+\overline\rho)\,\mathcal F_i(\widehat x_{i,t}^{\,2})
\;\le\;
\tfrac12(1-\underline\beta)(1+\overline\rho)\,C_S^2\kappa^{2t}\pi^2(1-\theta)^2
\]
by \eqref{eq:usable_input_envelope}. The \(\boldsymbol\omega\)-weighted loss from production
inefficiency at a date is a weighted average of the firms' losses from production inefficiency and obeys the same
bound. And summing the geometric series over \(t\ge0\) shows that the
series is finite for every \(|\pi|\le\overline\pi\) and gives the bound
of the theorem, \(\mathcal D^{(\boldsymbol\omega)}(\pi)\le\overline C^{\mathcal D}(1-\theta)^2\pi^2\) with
\(\overline C^{\mathcal D}=(1-\underline\beta)(1+\overline\rho)C_S^2/\{2(1-\kappa^2)\}\),
a constant set by the constants of the assumptions.

We now prove the expansion \eqref{eq:cumulative_inefficiency_expansion}. Fix a date
\(t\) and a firm \(i\). Differentiability
at zero gives \(\widehat x_{i,t}(\pi)/\pi\to\dot{\widehat x}_{i,t}\), and the
curvature-weighted form is a continuous quadratic form on the finite
supplier vector, so \(\mathcal D_{i,t}(\pi)/\pi^2\to\tfrac12(1-\beta_i)\langle\dot{\widehat x}_{i,t},\dot{\widehat x}_{i,t}\rangle_i\).
The per-date ratios \(\mathcal D_t^{(\boldsymbol\omega)}(\pi)/\pi^2\) are
dominated, by the firm-level bound above, by the summable sequence
\(\tfrac12(1-\underline\beta)(1+\overline\rho)C_S^2(1-\theta)^2\kappa^{2t}\),
so dominated convergence for the counting measure on dates gives
\(\mathcal D^{(\boldsymbol\omega)}(\pi)/\pi^2\to C^{(\boldsymbol\omega)}\), which is
\eqref{eq:cumulative_inefficiency_expansion}. The coefficient is a sum of non-negative
terms, each \(\tfrac12\omega_i(1-\beta_i)\) times a curvature-weighted variance of
the first-order responses. It therefore vanishes if and only if every such
variance vanishes at every supported firm and date, that is, if and only if
the first-order responses are constant across suppliers there, which is
the characterization of the coefficient in the statement. The bound has
the same constant for every normalized
weight vector, and it depends on the economy only through \(C_S\), \(\kappa\),
\(\underline\beta\) and \(\overline\rho\), so it is uniform over any class on
which the envelope is uniform.
\end{proof}

\subsection{Proofs of Section \ref{sec:reallocation}: Monetary Reallocation}
\label{app:reallocation}

The threshold of the downstream-concentrated tilts is
\begin{equation}
\overline\epsilon_w
:=
\frac{c_R}{8\,(c_R+C_\phi/\underline\beta)}
\;\in\;(0,1)
\label{eq:downstream_threshold}
\end{equation}
and the tolerance of condition~\eqref{eq:hypothesis_N} and the two
thresholds of condition~\eqref{eq:hypothesis_L} are
\begin{equation}
\overline\epsilon_\xi:=\frac{c_R\,\underline\beta}{8\,(1-\underline\beta)^2},
\qquad
\overline\epsilon_F:=\frac{c_R\,\epsilon_h}{8\,C_\xi},
\qquad
\overline\epsilon_h:=\Biggl[\frac{c_R\bigl(1-(1-\underline\beta)^{1/2}\bigr)}{8\,C_\phi\,(1-\underline\beta)^{1/2}}\Biggr]^2
\label{eq:persistence_thresholds}
\end{equation}

\subsubsection{Proof of impact reallocation}
\label{app:proof_impact_reallocation}

By the ordering \eqref{eq:retail_tier_primacy}, \(\psi_R\ge\epsilon_h\) and
\(\psi_j\le\delta\) off the retail tier, so the capped position of
Definition~\ref{def:position_tilted_output_index} is
\(\widehat\psi_i=1\) on \(\mathcal R\) and
\(\widehat\psi_j\le\delta/\epsilon_h<1\) off it. Write
\(M_{\mathcal R}:=\sum_{i\in\mathcal R}m_i^\ast\) for the sales of the
retail tier. The retail-tier mass of the tilt weights
\eqref{eq:w_b_def} is
\[
\omega_{\mathcal R}(b):=\sum_{i\in\mathcal R}\omega_i(b)
=
\frac{M_{\mathcal R}}{M_{\mathcal R}+\sum_{j\notin\mathcal R}m_j^\ast\widehat\psi_j^{\,b}}
\]
which is nondecreasing in \(b\), since every \(\widehat\psi_j^{\,b}\)
off the retail tier is nonincreasing, and rises to \(\omega_{\mathcal R}(\infty)=1\).
With \(a_m:=(\overline M-M_{\mathcal R})/M_{\mathcal R}\), the sales
off the retail tier relative to those on it,
\(\omega_{\mathcal R}(b)\ge1/(1+a_m(\delta/\epsilon_h)^b)\), so the
downstream-concentrated tilts form an interval \([b_w,\infty]\) with
\(b_w\le\log\bigl(a_m/\overline\epsilon_w\bigr)/\log(\epsilon_h/\delta)\).
On the retail tier the weights are flat,
\(\omega_i(b)=m_i^\ast/(M_{\mathcal R}+\sum_{j\notin\mathcal R}m_j^\ast\widehat\psi_j^{\,b})=\omega_{\mathcal R}(b)\,m_i^\ast/M_{\mathcal R}\)
for every \(i\in\mathcal R\) and every \(b\).

\begin{proof}[Proof of Proposition~\ref{prop:impact_reallocation}]
We first compute the dated responses of expenditure and demand. Fix a date
\(s\ge0\). By the nominal recursion \eqref{eq:stab_nominal_block} in the
proof of Proposition~\ref{prop:stability_stationary}, the balances are \(\mathbf m_s=(1+\pi)\mathbf m^\ast+\mathbf m_s^\perp\) with
\(\mathbf m_s^\perp=\mathbf A^s\mathbf m_0^\perp\) and
\(\mathbf m_0^\perp=\pi(\overline M\boldsymbol\zeta-\mathbf m^\ast)\), and the wage
bill is \(\wM l_i^\ast=(1+\pi)\beta_im_i^\ast\). Firm \(i\)'s
intermediate-input expenditure is therefore
\(e_{i,s}=m_{i,s}-\wM l_i^\ast=(1+\pi)(1-\beta_i)m_i^\ast+m^\perp_{i,s}\), and its
first-order proportional response is
\begin{equation}
\dot{\widehat e}_{i,s}
=
1+\frac{\dot m^\perp_{i,s}}{(1-\beta_i)\,m_i^\ast},
\qquad
\dot{\mathbf m}_s^\perp=\mathbf A^s\bigl(\overline M\boldsymbol\zeta-\mathbf m^\ast\bigr)
\label{eq:centered_budget_impulse}
\end{equation}
which at \(s=0\) is
\(\dot{\widehat e}_{i,0}=1+\iota_i/(1-\beta_i)=1+\sigma_i\), the
neutral common mode plus the expenditure impulse
\eqref{eq:expenditure_impulse}. Nominal demand
\eqref{eq:dyn_nominal_demand} for good \(j\) is
\(d_{j,s}=\sum_la_{jl}e_{l,s}+\gamma_j\wM\). Differentiating at \(\pi=0\)
and dividing by the stationary sales identity
\(m_j^\ast=\sum_la_{jl}(1-\beta_l)m_l^\ast+\gamma_jw^\ast\),
\begin{equation}
\dot{\widehat d}_{j,s}
=
\frac{\sum_la_{jl}(1-\beta_l)m_l^\ast\,\dot{\widehat e}_{l,s}+\gamma_jw^\ast}{m_j^\ast}
=
\sum_{l\in N}\Omega_{jl}\,\dot{\widehat e}_{l,s}+r_j
\label{eq:co_customer_average}
\end{equation}
with \(\Omega_{jl}=(1-\beta_l)a_{jl}m_l^\ast/m_j^\ast\) the output shares
\eqref{eq:output_share_matrix} and \(r_j=\gamma_jw^\ast/m_j^\ast\) the
consumption share \eqref{eq:retail_absorption_def}. The household enters
with weight \(r_j\) and a response of one, because its expenditure
\(\gamma_j\wM\) is indexed to the money stock. Since
\(\sum_l\Omega_{jl}=1-r_j\), the dated contrast \eqref{eq:dated_contrast}
is, at every date,
\[
\Xi_{ij,s}
=
\dot{\widehat e}_{i,s}-\dot{\widehat d}_{j,s}
=
\frac{\dot m^\perp_{i,s}}{(1-\beta_i)\,m_i^\ast}
-\sum_{l\in N}\Omega_{jl}\,\frac{\dot m^\perp_{l,s}}{(1-\beta_l)\,m_l^\ast}
\]
the misalignment of buyer \(i\) in proportional coordinates net of the
output-share-weighted misalignment of the other customers of \(j\). At
\(s=0\), \(\dot{\widehat d}_{j,0}=1+\overline\sigma_j\) with
\(\overline\sigma_j\) the average \eqref{eq:buyer_average_impulse}, so
the impact-date contrast is \(\Xi_{ij,0}=\sigma_i-\overline\sigma_j\). And
averaging over \(i\)'s suppliers of lead time \(k\) with the
within-vintage shares gives the relative expenditure impulse
\eqref{eq:vintage_impulse_def} and, scaled by \(1-\beta_i\), the
scaled relative expenditure impulse \eqref{eq:scaled_impulse_def}. Summed over a set of
firms with the sales weights, by \eqref{eq:set_impulse},
\begin{equation}
\sum_{i\in\mathcal C}m_i^\ast\,s_i^{(k)}\phi_i^{(k)}
\;=\;
\sum_{i\in\mathcal C}m_i^\ast(1-\beta_i)\,s_i^{(k)}\,\xi_i^{(k)}
\;=\;
e_{\mathcal C}^{(k)}\,\xi_{\mathcal C}^{(k)}
\label{eq:set_impulse_identity}
\end{equation}

We next bound the size of the contrasts. By
\eqref{eq:impact_misalignment_constant},
\(|\sigma_i|\le C_\zeta(1-\theta)/(1-\overline\beta)\) at every firm,
and \(|\overline\sigma_j|\) obeys the same bound as an average of
\(|\sigma_l|\) over the firm customers of \(j\) with total weight
\(1-r_j\). Every impact-date contrast is therefore at most
\(C_\xi(1-\theta)\) in absolute value, and so is every relative
expenditure impulse \(\xi_i^{(k)}\), which is an average of contrasts,
while \(|\phi_i^{(k)}|\le(1-\underline\beta)C_\xi(1-\theta)=C_\phi(1-\theta)\).
This is \eqref{eq:forcing_upper_bound}. The scaled relative expenditure impulse
\(\phi_i=s_i^{(\underline k)}\phi_i^{(\underline k)}\) obeys the same bound,
since \(0<s_i^{(\underline k)}\le1\). At a retail lead-time date
\(k\in\mathcal W\), the identity \eqref{eq:set_impulse_identity} on the
retail tier, Assumption~\ref{assump:impact_forcing_ordering} and the
definition \eqref{eq:retail_lag_dates} of the retail lead-time dates give
\[
\sum_{i\in\mathcal R}m_i^\ast\,s_i^{(k)}\phi_i^{(k)}
=
e_{\mathcal R}^{(k)}\,\xi_{\mathcal R}^{(k)}
\;\ge\;
s_R\sum_{i\in\mathcal R}m_i^\ast(1-\beta_i)\;c_\xi(1-\theta)
\;\ge\;
s_R(1-\overline\beta)\,c_\xi\,(1-\theta)\,M_{\mathcal R}
\]
which is the retail impulse bound \eqref{eq:retail_impulse_bound}.

We now compute the impact response. Firm \(i\)'s
dated input bundle at date \(\underline k\) consists of the orders placed at the
dates \(\underline k-k_{ji}\le0\), \(j\in\mathcal S_i\). The orders on suppliers
of lead time \(k_{ji}>\underline k\) were placed before the shock and are independent
of \(\pi\). The orders on the shortest-lead-time suppliers, \(k_{ji}=\underline k\),
were placed at date \(0\) at the price \(p_{j,0}=d_{j,0}/q_{j,0}\). And
\(q_{j,0}\) is produced from the labor \(l_j^\ast\) and from inputs ordered
at the dates \(-k_{lj}\le-\underline k<0\), so it too is independent of
\(\pi\), and \(\dot{\widehat p}_{j,0}=\dot{\widehat d}_{j,0}\). By the order
rule \eqref{eq:dynamic_real_alloc} and \eqref{eq:centered_budget_impulse},
\(\dot{\widehat x}_{ji,0}=\dot{\widehat e}_{i,0}-\dot{\widehat d}_{j,0}\)
for every \(j\in\mathcal S_i^{(\underline k)}\). The input-bundle scale
\eqref{eq:bundle_scale_def}, evaluated at the order dates, is therefore at date
\(\underline k\)
\[
\mathcal F_i(\dot{\widehat x}_{i,\underline k})
=
\sum_{j\in\mathcal S_i^{(\underline k)}}a_{ji}\bigl(\dot{\widehat e}_{i,0}-\dot{\widehat d}_{j,0}\bigr)
=
\xi_{i,\underline k}
\]
and the first-order production identity
\(D_\pi\log q_{i,t}=(1-\beta_i)\mathcal F_i(\dot{\widehat x}_{i,t})\) of
Lemma~\ref{lem:production_linearization} gives
\(D_\pi\log q_{i,\underline k}=(1-\beta_i)\xi_{i,\underline k}=\phi_i\) at every firm. No
cascade enters, because no supplier's output has moved by the time the
date-\(0\) orders are placed. Aggregating with the tilt weights, which
are flat on the retail tier, \(\omega_i(b)=\omega_{\mathcal R}(b)m_i^\ast/M_{\mathcal R}\)
for \(i\in\mathcal R\),
\begin{equation}
D_\pi\mathcal Q_{\underline k}(b)
=
\frac{\omega_{\mathcal R}(b)}{M_{\mathcal R}}\sum_{i\in\mathcal R}m_i^\ast s_i^{(\underline k)}\phi_i^{(\underline k)}
+\sum_{i\notin\mathcal R}\omega_i(b)\phi_i
\;\ge\;
(1-\overline\epsilon_w)\,c_R\,(1-\theta)-\overline\epsilon_wC_\phi(1-\theta)
\label{eq:impact_bound_positive_lag}
\end{equation}
by the retail impulse bound \eqref{eq:retail_impulse_bound} at the lead time
\(\underline k\), a retail lead-time date by hypothesis. This is because the retail
tier carries tilt mass \(\omega_{\mathcal R}(b)\ge1-\overline\epsilon_w\), and the
bound \eqref{eq:forcing_upper_bound} holds off it. At a firm with
\(\xi_i^{(\underline k)}\le-c_\xi(1-\theta)\),
\begin{equation}
D_\pi\log q_{i,\underline k}=\phi_i=s_i^{(\underline k)}(1-\beta_i)\,\xi_i^{(\underline k)}\le-s_i^{(\underline k)}(1-\beta_i)\,c_\xi(1-\theta)<0
\label{eq:impact_upstream_positive_lag}
\end{equation}
and for a set of firms \(\mathcal C\) with
\(\xi_{\mathcal C}^{(\underline k)}\ge c_\xi(1-\theta)\), the identity
\eqref{eq:set_impulse_identity} at the lead time \(\underline k\) gives
\begin{equation}
\sum_{i\in\mathcal C}m_i^\ast D_\pi\log q_{i,\underline k}
=
\sum_{i\in\mathcal C}m_i^\ast s_i^{(\underline k)}\phi_i^{(\underline k)}
=
e_{\mathcal C}^{(\underline k)}\,\xi_{\mathcal C}^{(\underline k)}
\;\ge\;
c_\xi(1-\theta)\,e_{\mathcal C}^{(\underline k)}
\label{eq:impact_cut_positive_lag}
\end{equation}

The bound \eqref{eq:impact_bound_positive_lag} gives
\(D_\pi\mathcal Q_{\underline k}(b)\ge\tfrac12c_R(1-\theta)\)
as soon as \(\overline\epsilon_w\le c_R/(2(c_R+C_\phi))\), which the
threshold \eqref{eq:downstream_threshold} satisfies, since
\(8(c_R+C_\phi/\underline\beta)\ge2(c_R+C_\phi)\). The bounds
\eqref{eq:impact_upstream_positive_lag} and
\eqref{eq:impact_cut_positive_lag} hold without a threshold.

Finally, every object in the argument is linear in the per-unit shock, so
a negative shock reverses both signs.
\end{proof}

\subsubsection{Proof of sign persistence over a finite horizon}
\label{app:proof_finite_horizon_sign_persistence}

\begin{proof}[Proof of Proposition~\ref{prop:finite_horizon_sign_persistence}]
We use the path expansion \eqref{eq:forced_representation} of
Lemma~\ref{lem:supplier_recursion}, whose proof in
Appendix~\ref{app:lemmas_reallocation} does not depend on this
one, grouping the supplier paths by their last buyer. For \(r\ge0\) let
\(\mathcal P'=(i_0\to\cdots\to i_r)\) be a path of \(r\) links, each \(i_m\) a
supplier of \(i_{m-1}\), ending at the buyer \(l=i_r\), with summed
lead time \(k(\mathcal P'):=\sum_{m=1}^rk_{i_mi_{m-1}}\), the sum of the lead
times of its links. The tilt mass of the path \(\mathcal P'\) is
\(\mu(\mathcal P'):=\omega_{i_0}(b)(1-\beta_{i_0})\prod_{m=1}^{r-1}(1-\beta_{i_m})\prod_{m=1}^{r}a_{i_mi_{m-1}}\)
for \(r\ge1\) and \(\mu(\mathcal P'):=\omega_{i_0}(b)\) for \(r=0\). The lemma says
that extending \(\mathcal P'\) by one link \(l\to j\) carries the mass
\(\mu(\mathcal P')(1-\beta_l)a_{jl}\) to the contrast \(\Xi_{lj,\,t-k(\mathcal P')-k_{jl}}\), and
that \(D_\pi\mathcal Q_t(b)\) is the sum of these terms over all paths
with \(k(\mathcal P')+k_{jl}\le t\). Write \(\mathcal P\) for \(\mathcal P'\) extended by the link \(l\to j\), and \(k(\mathcal P):=k(\mathcal P')+k_{jl}\) for its summed lead time. When the contrast is evaluated at the impact
date, \(k_{jl}=t-k(\mathcal P')=:k\), the links of that vintage aggregate, by the
two-stage shares \(a_{jl}=s_l^{(k)}\widetilde a^{(k)}_{jl}\) of
\eqref{eq:share_quantity_form}, to
\begin{equation}
\sum_{j\in\mathcal S_l^{(k)}}\mu(\mathcal P')\,(1-\beta_l)\,a_{jl}\,\Xi_{lj,0}
=
\mu(\mathcal P')\,(1-\beta_l)\,s_l^{(k)}\,\xi_l^{(k)}
=
\mu(\mathcal P')\,s_l^{(k)}\phi_l^{(k)}
\label{eq:vintage_grouping}
\end{equation}
The sum in \eqref{eq:vintage_grouping} is the scaled relative expenditure impulse of the buyer \(l\) on its
inputs of lead time \(k\), weighted by its cross-vintage share \(s_l^{(k)}\) and by the tilt mass of the path \(\mathcal P'\). Hence
\begin{equation}
I_t(b)
=
\sum_{r\ge0}\;\sum_{\mathcal P':\,k(\mathcal P')\le t-\underline k}\mu(\mathcal P')\,s_{i_r}^{(t-k(\mathcal P'))}\phi_{i_r}^{(t-k(\mathcal P'))}
\label{eq:impact_component_paths}
\end{equation}
where \(s_l^{(k)}=0\) when the lead time \(k\) is not active at \(l\). Two
bounds on the masses are used below. Summing \(\mu(\mathcal P')\) over all paths of
\(r\) links gives at most \((1-\underline\beta)^r\), because the
supplier shares of each buyer sum to one and each pass-through factor is
at most \(1-\underline\beta\). The paths that start off the retail tier, whose
tilt mass is at most \(\overline\epsilon_w\), have mass at most
\(\overline\epsilon_w(1-\underline\beta)^r\) at each length. And the mass of a path
is linear in the weight of its starting firm. Hence by
Lemma~\ref{lem:gdp_pass_through} the paths that start on the retail
tier, end at the buyer \(l\) and have summed lead time at most \(L\) have
mass at most \(\epsilon_h^{-1}(m_l^\ast/w^\ast)F_l(L)\)
over all lengths together. Here \(F_l\) is the distribution of the delivery time
\eqref{eq:delivery_time}.

We first bound the impact-date component \(I_t(b)\). Fix
\(t\in\mathcal W\). The \(r=0\) terms
of \eqref{eq:impact_component_paths} are the retailers' own maturing
orders, \(\sum_i\omega_i(b)s_i^{(t)}\phi_i^{(t)}\). The tilt weights are flat
on the retail tier, \(\omega_i(b)=\omega_{\mathcal R}(b)m_i^\ast/M_{\mathcal R}\), so the
retail impulse bound \eqref{eq:retail_impulse_bound} at the retail lead-time
date \(t\) gives at least \(\omega_{\mathcal R}(b)c_R(1-\theta)\ge(1-\overline\epsilon_w)c_R(1-\theta)\)
on the retail tier. Off
\(\mathcal R\) the tilt mass is at most \(\overline\epsilon_w\),
\(s_i^{(t)}\le1\), and \(\phi_i^{(t)}\ge-C_\phi(1-\theta)\) by
\eqref{eq:forcing_upper_bound}. Hence the \(r=0\) terms are at least
\([(1-\overline\epsilon_w)c_R-\overline\epsilon_wC_\phi](1-\theta)\).
For \(r\ge1\), the scaled relative expenditure impulse of a buyer in the
neighborhood \(\mathcal G\) is at least
\(-(1-\underline\beta)\overline\epsilon_\xi(1-\theta)\) by
\eqref{eq:scaled_impulse_def} under condition~\eqref{eq:hypothesis_N},
and elsewhere it is at least \(-C_\phi(1-\theta)\). The
total mass of the \(r\)-link paths is at most \((1-\underline\beta)^r\). Of the paths
ending outside \(\mathcal R^{+}=\mathcal G\cup\mathcal R\), those starting
off the retail tier have mass at most \(\overline\epsilon_w(1-\underline\beta)^r\).
Those starting on the retail tier and ending at a buyer
\(l\notin\mathcal R^{+}\) with the vintage \(k\) at \(l\) have summed lead time
\(t-k\), so over all lengths together they have mass at most
\(\epsilon_h^{-1}(m_l^\ast/w^\ast)F_l(t-k)\). And their impulse obeys
\(|\phi_l^{(k)}|=(1-\beta_l)|\xi_l^{(k)}|\le(1-\beta_l)C_\xi(1-\theta)\)
by \eqref{eq:forcing_upper_bound}. Their terms are therefore
at least
\begin{align*}
-C_\xi(1-\theta)\,\epsilon_h^{-1}
\sum_{l\notin\mathcal R^{+}}\sum_{k\in\mathcal K_l}\frac{m_l^\ast}{w^\ast}(1-\beta_l)\,s_l^{(k)}\,F_l(t-k)
&\;=\;
-C_\xi(1-\theta)\,\epsilon_h^{-1}\,\Gamma_{N\setminus\mathcal R^{+}}[0,t]\\
&\;\ge\;
-C_\xi(1-\theta)\,\epsilon_h^{-1}\,\Gamma_{N\setminus\mathcal R^{+}}[0,t_{\mathcal W}]
\end{align*}
by the definition \eqref{eq:delivered_purchases} of the delivered
purchases, which are nondecreasing in the horizon, and
\(t\le t_{\mathcal W}\). The
mass of those ending on the retail tier is at most
\(\min\{(1-\underline\beta)^r,1-\epsilon_h\}\). To prove this bound,
write \(\mu^{(r)}(l)\) for the mass of the \(r\)-link paths ending at the
buyer \(l\) and \(\overline M(b):=\sum_{j\in N}m_j^\ast\widehat\psi_j^{\,b}\).
Then \(\mu^{(0)}(l)=\omega_l(b)\le m_l^\ast/\overline M(b)\) and
\(\mu^{(r)}(l)=\sum_{l'}\mu^{(r-1)}(l')(1-\beta_{l'})a_{ll'}\), so by
induction, with \(m_{l'}^\ast(1-\beta_{l'})a_{ll'}=m_l^\ast\Omega_{ll'}\),
\(\mu^{(r)}(l)\le(m_l^\ast/\overline M(b))\sum_{l'}\Omega_{ll'}=(1-r_l)m_l^\ast/\overline M(b)\)
for every \(r\ge1\), where \(r_l\) is the consumption share \eqref{eq:retail_absorption_def} of firm \(l\). On the retail tier \(1-r_l\le1-\epsilon_h\) and
\(m_l^\ast/\overline M(b)=\omega_l(b)\), so summing over it gives
\(\mu^{(r)}(\mathcal R)\le(1-\epsilon_h)\omega_{\mathcal R}(b)\le1-\epsilon_h\).
Since \(\min\{(1-\underline\beta)^r,1-\epsilon_h\}\le((1-\underline\beta)^r(1-\epsilon_h))^{1/2}\),
\(\sum_{r\ge1}\mu^{(r)}(\mathcal R)\le(1-\epsilon_h)^{1/2}(1-\underline\beta)^{1/2}/(1-(1-\underline\beta)^{1/2})\).
Summing over \(r\ge1\) with \(\sum_r(1-\underline\beta)^r=(1-\underline\beta)/\underline\beta\),
\begin{equation}
I_t(b)
\;\ge\;
\Bigl[c_R
-\overline\epsilon_w\Bigl(c_R+\frac{C_\phi}{\underline\beta}\Bigr)
-\frac{(1-\underline\beta)^2\,\overline\epsilon_\xi}{\underline\beta}
-\frac{C_\xi}{\epsilon_h}\,\Gamma_{N\setminus\mathcal R^{+}}[0,t_{\mathcal W}]
-\frac{C_\phi\,(1-\epsilon_h)^{1/2}(1-\underline\beta)^{1/2}}{1-(1-\underline\beta)^{1/2}}\Bigr](1-\theta)
\label{eq:persistence_impact_bound}
\end{equation}
The bound \eqref{eq:persistence_impact_component} follows as soon as each
of the four negative terms is at most \(c_R/8\), that is, under
\begin{gather*}
\overline\epsilon_w\le\frac{c_R}{8\,(c_R+C_\phi/\underline\beta)},
\qquad
\overline\epsilon_\xi\le\frac{c_R\,\underline\beta}{8\,(1-\underline\beta)^2},
\qquad
\Gamma_{N\setminus\mathcal R^{+}}[0,t_{\mathcal W}]\le\frac{c_R\,\epsilon_h}{8\,C_\xi},\\
1-\epsilon_h\le\Bigl[\frac{c_R\,(1-(1-\underline\beta)^{1/2})}{8\,C_\phi\,(1-\underline\beta)^{1/2}}\Bigr]^2
\end{gather*}
The first two hold with equality by the definitions
\eqref{eq:downstream_threshold} of the threshold of the
downstream-concentrated tilts and \eqref{eq:persistence_thresholds} of the
tolerance of condition~\eqref{eq:hypothesis_N}. The other two are the hypotheses
\(\Gamma_{N\setminus\mathcal R^{+}}[0,t_{\mathcal W}]\le\overline\epsilon_F\) and
\(1-\epsilon_h\le\overline\epsilon_h\) of the statement, with the
thresholds \eqref{eq:persistence_thresholds}.
The bound \eqref{eq:persistence_impact_component} is uniform over the dates of \(\mathcal W\), since the
delivered purchases of the firms beyond the neighborhood of final demand are taken at the last retail lead-time
date and the same four thresholds serve at every date.

We finally turn to the post-impact component \(J_t(b)\), which
collects the paths with \(k(\mathcal P)<t\) and carries contrasts at dates
\(s=t-k(\mathcal P)\ge1\). Every path has at least one link and every link a
lead time of at least \(\underline k\), so \(k(\mathcal P)\ge\underline k\). No
contrast of a date \(s\ge1\) therefore enters before the date
\(\underline k+1\), and \(J_{\underline k}(b)=0\). At every date,
\(|J_t(b)|\) is at most the tilt mass that the paths with \(k(\mathcal P)<t\)
carry, times the largest contrast \(|\Xi_{ij,s}|\) at the dates
\(1\le s\le t-\underline k\). Condition \eqref{eq:hypothesis_P}
together with \eqref{eq:persistence_impact_component} gives
\(D_\pi\mathcal Q_t(b)=I_t(b)+J_t(b)\ge\tfrac14c_R(1-\theta)\).
\end{proof}

\subsubsection{Proof of the two-phase reallocative impulse response}
\label{app:proof_sign_reversing_irf}

\begin{proof}[Proof of Theorem~\ref{thm:sign_reversing_irf}]
The positivity of \(D_\pi\mathcal Q_t(b)\) on the retail lead-time dates is
Proposition~\ref{prop:finite_horizon_sign_persistence}.

For the arriving contraction, set
\begin{equation}
r_a:=\Bigl\lceil\frac{2}{\Delta_\psi}\Bigr\rceil,
\qquad
\overline\tau:=(r_a+1)\,\widehat k,
\qquad
\underline\mu:=\frac{7\,(1-\epsilon_k)(1-\overline\beta)^{r_a}}{16\,(\overline\tau-\underline k-k_\ast+1)}
\label{eq:reversal_constants}
\end{equation}
with \(\Delta_\psi\) the height of a tier in
Assumption~\ref{assump:transmission_delay} and \(\epsilon_k\), \(k_\ast\)
the constants of Assumption~\ref{assump:lag_position_sorting}. Here
\(r_a\) is the number of rounds of sourcing within which the supplier
walk from the retail tier has reached the contracting region with
probability at least one half. And \(1-\epsilon_k\) is the share of the
contracting region's impact-date orders that mature \(k_\ast\) periods
or more later. Lemma~\ref{lem:layered_routing} gives the date
\(\tau^-\in[\underline k+k_\ast,\overline\tau]\), beyond the retail lead-time
dates, with \(\mu_{\tau^-}(\mathcal T)\ge\underline\mu\).
Grouping the impact-date atoms at each buyer \(l\in\mathcal T\) by vintage,
as in \eqref{eq:vintage_grouping}, and using
\eqref{eq:scaled_impulse_def}, which under condition~\eqref{eq:hypothesis_U}
gives
\(\phi_l^{(k)}=(1-\beta_l)\xi_l^{(k)}\le-(1-\overline\beta)c_\xi(1-\theta)\)
at every firm of \(\mathcal T\) and every active vintage,
\begin{align*}
-I_{\tau^-}^{\mathcal T}(b)
&=
-\sum_{(l,j):\,l\in\mathcal T}\varpi^{(\tau^-,0)}_{lj}(b)\,\Xi_{lj,0}
=
-\sum_{\mathcal P'}\mu(\mathcal P')\,s_{l}^{(k)}\phi_{l}^{(k)}\\
&\ge\;
(1-\overline\beta)\,c_\xi(1-\theta)\sum_{\mathcal P'}\mu(\mathcal P')\,s_l^{(k)}
=
(1-\overline\beta)\,c_\xi(1-\theta)\,\mu_{\tau^-}(\mathcal T)
\;\ge\;
\underline\mu\,(1-\overline\beta)\,c_\xi\,(1-\theta)
\end{align*}
Here the sums run over the paths \(\mathcal P'\) that end at a buyer
\(l\in\mathcal T\) with \(k:=\tau^--k(\mathcal P')\ge\underline k\), and the last
identity uses \(\sum_{j\in\mathcal S_l^{(k)}}a_{jl}=s_l^{(k)}\).

For the reversal, grouping the impact-date terms at each buyer
\(l\notin\mathcal T\) by vintage as in \eqref{eq:vintage_grouping},
\(I_{\tau^-}^{N\setminus\mathcal T}(b)=\sum_{\mathcal P'}\mu(\mathcal P')s_l^{(k)}\phi_l^{(k)}\le C_\phi(1-\theta)\sum_{\mathcal P'}\mu(\mathcal P')s_l^{(k)}=C_\phi(1-\theta)\,\mu_{\tau^-}(N\setminus\mathcal T)\)
by the bound \eqref{eq:forcing_upper_bound}. With the lower bound on
the arriving contraction, the split of \eqref{eq:forced_representation}
at \(\tau^-\) gives
\(D_\pi\mathcal Q_{\tau^-}(b)=I_{\tau^-}^{\mathcal T}(b)+I_{\tau^-}^{N\setminus\mathcal T}(b)+J_{\tau^-}(b)\le C_\phi(1-\theta)\,\mu_{\tau^-}(N\setminus\mathcal T)+J_{\tau^-}(b)-(1-\overline\beta)\,c_\xi(1-\theta)\,\mu_{\tau^-}(\mathcal T)\),
which is negative under \eqref{eq:hypothesis_R}. Linearity in the
per-unit shock gives the mirror image for a negative shock.
\end{proof}

\subsection{Proofs of Section \ref{sec:asymmetry}: Monetary Non-neutrality}
\label{app:asymmetry}

\subsubsection{Proof of asymmetric monetary non-neutrality}
\label{app:proof_asymmetric_nonneutrality}

\begin{proof}[Proof of Theorem~\ref{thm:asymmetric_nonneutrality}]
By \eqref{eq:concave_scaling} at each date, GDP over the first reporting
interval is
\(f_+(\pi)=(\underline k+1)^{-1}\sum_{t\le\underline k}\mathcal Y_t(\tilde\pi\,h_\theta)\),
a finite average of maps real-analytic near zero that vanish there
(Lemma~\ref{lem:concave_scaling}). So \(f_+(0)=0\),
\(\dot f_+\ge c_Y(1-\theta)\) by Lemma~\ref{lem:first_interval_gdp}, and
\[
\ddot f_+
=-2\dot f_++\frac{1}{\underline k+1}\sum_{t\le\underline k}D^2\mathcal Y_t(\mathbf 0)[h_\theta,h_\theta]
\;\le\;-2c_Y(1-\theta)+K_+C_\zeta^2(1-\theta)^2
\;\le\;-c_Y(1-\theta)\;<\;0
\]
by \(\|h_\theta\|_m\le C_\zeta(1-\theta)\), which is
\eqref{eq:impact_misalignment_constant}, and the slack condition.
Lemma~\ref{lem:concave_response_asymmetry} uses only these two
derivative signs of a function of \(\pi\) that vanishes at zero. Applied
to \(f_+\), it gives the two signs and the asymmetry, with its radius
intersected with the bound \(\overline\pi\) of
Proposition~\ref{prop:stability_stationary}.

For the second part, let the slack also satisfy
\(1-\theta\le c_Y/(K_PC_\zeta^2)\), and suppose that the two signs of the
first part hold. The identity \eqref{eq:gap_interval_exact} at
\(T=\underline k\) and the expansions \eqref{eq:gap_expansions} of
Lemma~\ref{lem:gap_exact}, divided by \(\underline k+1\), give
\begin{align*}
&|f_+(-\pi)|-|f_+(\pi)|-\bigl[\Lambda_+(\pi)+\Lambda_+(-\pi)\bigr]\\
&\qquad=\frac{2\pi^2}{1-\pi^2}\,\dot f_++\frac{P^\sharp_{\underline k+1}(\pi)+P^\sharp_{\underline k+1}(-\pi)}{(\underline k+1)w^\ast}
=\Bigl(2\dot f_++\frac{D^2\mathcal Y^{P}_{\underline k+1}(\mathbf 0)[h_\theta,h_\theta]}{(\underline k+1)w^\ast}\Bigr)\pi^2+O(\pi^4)
\end{align*}
The bilinear bound and \eqref{eq:impact_misalignment_constant} give
\(|D^2\mathcal Y^{P}_{\underline k+1}(\mathbf 0)[h_\theta,h_\theta]|/((\underline k+1)w^\ast)\le K_PC_\zeta^2(1-\theta)^2\le c_Y(1-\theta)\),
and \(\dot f_+\ge c_Y(1-\theta)\). The right side is therefore at least
\(c_Y(1-\theta)\pi^2+O(\pi^4)\), and it is positive once \(\pi_\ast\) is
reduced enough.
\end{proof}

A quarterly figure reports the average of the proportional changes,
\((\underline k+1)^{-1}\sum_{t\le\underline k}\bigl(e^{f_t(\pi)}-1\bigr)\)
over the first reporting interval. It has the same first derivative at
zero as \(f_+\), and its Hessian adds the squares of the first-order
responses, which are of the order of \((1-\theta)^2\). The first part of
Theorem~\ref{thm:asymmetric_nonneutrality} therefore holds for it as
well, with \(K_+\) raised by
\((\underline k+1)^{-1}\sum_{t\le\underline k}\|D\mathcal Y_t(\mathbf 0)\|_m^2\).

\subsubsection{Proof of intertemporal resource accounting}
\label{app:proof_resource_accounting}

\begin{proof}[Proof of Proposition~\ref{prop:resource_accounting}]
By Lemma~\ref{lem:resource_accounting_date} of
Appendix~\ref{app:analytical}, the identity \eqref{eq:resource_identity}
holds at every date, with \(\Lambda_t\ge0\), and the consumption loss
\(H_t\) is positive when \(c_{i,t}\ne c_i^\ast\) for some
\(i\in\mathcal R\).

Every order outstanding
at date \(0\) was placed before the shock, so \(P_0=0\). An order
placed at a date \(s\ge0\) deviates proportionally by at most
\(C_S\kappa^s|\pi|(1-\theta)\), by \eqref{eq:post_shock_geometric_decay}
and the bound it carries for the orders placed at date \(s\), and an
order placed before the shock does not deviate. Since
\(\sum_i\sum_{j\in\mathcal S_i}p_j^\ast x_{ji}^\ast=\sum_i(1-\beta_i)m_i^\ast\le\overline M\),
\(|P_t|\le\widehat k\,\overline MC_S\kappa^{t-\widehat k}|\pi|(1-\theta)\to0\).

In the notation of
Appendix~\ref{app:proof_stability}, household consumption obeys
\(c_{j,t}/c_j^\ast=(1+\pi)p_j^\ast/p_{j,t}=e^{\check q_{j,t}}/(1+\widehat m_{j,t+1}^\perp/(1+\pi))\),
so the envelopes \eqref{eq:stab_real_envelope} and
\eqref{eq:stab_nominal_envelope} give
\begin{equation}
|f_t(\pi)|
\;\le\;
\sum_{i\in\mathcal R}\gamma_i\Bigl|\log\frac{c_{i,t}}{c_i^\ast}\Bigr|
\;\le\;
C_c\,\kappa^t|\pi|(1-\theta),
\qquad
C_c:=C_q+4C_\lambda C_\zeta
\label{eq:consumption_envelope}
\end{equation}
Summing \eqref{eq:resource_identity} from \(0\) to \(T\),
\(\sum_{t\le T}f_t(\pi)=-P_{T+1}/w^\ast-\sum_{t\le T}\Lambda_t(\pi)\). As
\(T\to\infty\) the left side converges by
\eqref{eq:consumption_envelope} and \(P_{T+1}\to0\), so the
nondecreasing partial sums of the nonnegative losses converge to the
same limit, which is \eqref{eq:cumulative_gdp}. The inequality is
strict when consumption departs from its stationary value at some
date, since the consumption loss is then positive.

The maps \(\pi\mapsto f_t(\pi)\), \(P_t\)
and \(\Lambda_t\) are differentiable at zero
(Lemma~\ref{lem:propagation_linearization}). \(\Lambda_t\) is
nonnegative and vanishes at \(\pi=0\), so its derivative at zero
vanishes, and \eqref{eq:resource_identity} gives
\(w^\ast\dot f_t=\dot P_t-\dot P_{t+1}\), whence
\(w^\ast\sum_{t\le T}\dot f_t=-\dot P_{T+1}\). The bound on the orders
placed after the shock holds for every \(|\pi|\le\overline\pi\), so the
first-order responses of the orders obey \(|\dot x_{ji,s}|/x_{ji}^\ast\le C_S\kappa^s(1-\theta)\)
and \(\dot P_{T+1}\to0\), while \eqref{eq:consumption_envelope} gives
\(|\dot f_t|\le C_c\kappa^t(1-\theta)\). The series therefore converges
absolutely, to zero.
\end{proof}

\subsubsection{Proof of implicit interest cost of a monetary expansion}
\label{app:proof_implicit_interest}

\begin{proof}[Proof of Proposition~\ref{prop:implicit_interest}]
Since \(f_t=\max\{f_t,0\}-\max\{-f_t,0\}\) and
both series converge absolutely by \eqref{eq:consumption_envelope},
\(L_Y(\pi)-G_Y(\pi)=-\sum_tf_t(\pi)\), which is \(\sum_t\Lambda_t(\pi)\)
by \eqref{eq:cumulative_gdp}. When GDP moves at some date some consumed
quantity has moved, the corresponding consumption loss \(H_t\) is
positive, and so is the excess.

For the second claim, write \(G_0:=\sum_{t\le T_0}f_t(\pi)>0\). By
\eqref{eq:cumulative_gdp}, \(\sum_{t>T_0}f_t(\pi)\le-G_0<0\), so some
date after \(T_0\) has \(f_t(\pi)<0\). For the horizon, let \(T>T_0\)
satisfy \(C_c\kappa^{T+1}|\pi|(1-\theta)\le\tfrac12(1-\kappa)G_0\). Then
by \eqref{eq:consumption_envelope}
\(\sum_{t=T_0+1}^{T}f_t(\pi)\le-G_0+\sum_{t>T}|f_t(\pi)|\le-\tfrac12G_0\),
and one of the \(T-T_0\) terms is at most \(-G_0/(2(T-T_0))\).

For the last claim, fix the economy and write
\(\widehat{\mathbf x}_{i,t}\) for the proportional changes of the dated inputs
of firm \(i\) at date \(t\), as in \eqref{eq:usable_input_envelope}.
By \(p_j^\ast x_{ji}^\ast=(1-\beta_i)a_{ji}m_i^\ast\), the value \(E_t\) of
\eqref{eq:production_loss} in Appendix~\ref{app:lemmas_asymmetry} is
\(E_t=\sum_im_i^\ast\mathcal E_i(\widehat{\mathbf x}_{i,t})\), with
\(\mathcal E_i(\widehat{\mathbf x}):=(1-\beta_i)\mathcal F_i(\widehat{\mathbf x})-q_i\bigl((x_{ji}^\ast(1+\widehat x_{ji}))_j,l_i^\ast\bigr)/q_i^\ast+1\),
where \(q_i(\mathbf x_i,l_i)\) is the production function \eqref{eq:dynamic_output}.
Each \(\mathcal E_i\) is smooth on \(\|\widehat{\mathbf x}\|_\infty\le\tfrac12\),
nonnegative by the argument of
Lemma~\ref{lem:resource_accounting_date}, and zero at zero, so its
minimum is at zero and its gradient vanishes there. Hence
\(\mathcal E_i(\widehat{\mathbf x})\le C_{\mathcal E}\|\widehat{\mathbf x}\|_\infty^2\)
on that ball for a constant \(C_{\mathcal E}\). And \(z-1-\log z\le2(z-1)^2\) on
\([\tfrac12,\tfrac32]\) with \(|z-1|\le2|\log z|\). The dated-input
envelope \eqref{eq:usable_input_envelope} and
\eqref{eq:consumption_envelope} therefore give
\(E_t+H_t\le(C_{\mathcal E}\overline MC_S^2+8w^\ast C_c^2)\kappa^{2t}\pi^2(1-\theta)^2\),
and \(\sum_t(E_t+H_t)\le C_\Lambda\pi^2\) with
\(C_\Lambda:=(C_{\mathcal E}\overline MC_S^2+8w^\ast C_c^2)/(1-\kappa^2)\). If
\(\dot f_{t_1}>0\), then \(f_{t_1}(\pi)\ge\tfrac12\dot f_{t_1}\pi\) for every
sufficiently small \(\pi>0\), so \(G_Y(\pi)\ge\tfrac12\dot f_{t_1}\pi\), and
by the first part
\(0<\mathcal I(\pi)=\bigl(L_Y(\pi)-G_Y(\pi)\bigr)/G_Y(\pi)\le2C_\Lambda\pi/(w^\ast\dot f_{t_1})\).
\end{proof}

\appendix
\setcounter{section}{1}
\counterwithin{lemma}{section}
\renewcommand{\thelemma}{\thesection.\arabic{lemma}}
\renewcommand{\theHlemma}{\thesection.\arabic{lemma}}
\counterwithin{equation}{section}
\counterwithin{figure}{section}
\counterwithin{table}{section}
\renewcommand{\theHequation}{\thesection.\arabic{equation}}
\renewcommand{\theHfigure}{\thesection.\arabic{figure}}
\renewcommand{\theHtable}{\thesection.\arabic{table}}

\onehalfspacing
\newpage
\section{Analytical Results and Proofs}
\label{app:analytical}
\label{app:lemmas}

The conventions of Appendix~\ref{app:proofs} of the paper, in
particular the standing
Assumption~\ref{assump:short_run_nominal_rigidity} of
Section~\ref{subsec:stability} and the bound \(\overline\pi\) of
Section~\ref{subsec:money}, hold throughout without restatement.
\subsection{Lemmas for Section~\ref{sec:model}: The Model}
\label{app:lemmas_model}

Let \(\boldsymbol{\mathcal X}_t\) be the dated-input pipeline of
Definition~\ref{def:pipeline} and \(\mathbf S_t=(\mathbf
m_t,\boldsymbol{\mathcal X}_t)\) the augmented state. Since the lead-time
support is finite, the state space is finite-dimensional, and the
decentralized economy induces a first-order system \(\mathbf
S_{t+1}=\Phi(\mathbf S_t)\) with stationary state \(\mathbf
S^\ast=(\mathbf m^\ast,\boldsymbol{\mathcal X}^\ast)\). At the stationary state \(\mathbf S^\ast\) all
balances and pipeline entries are strictly positive
(Proposition~\ref{prop:existence_stationary}) and every balance exceeds
the wage bill by the intermediate expenditure,
\(m_i^\ast-w^\ast l_i^\ast=(1-\beta_i)m_i^\ast>0\). Intermediate-input
expenditures are therefore strictly positive on a neighborhood of
\(\mathbf S^\ast\), and along the local trajectory by the no-shutdown
margin \eqref{eq:no_shutdown_margin} of the paper. So
the production function, the clearing rule, the wage rule and the order
rule are smooth in a neighborhood of \(\mathbf S^\ast\), the transition
map \(\Phi\) is \(C^1\) there, and \(\mathbf J_\Phi^\ast:=D\Phi(\mathbf
S^\ast)\) is its Jacobian at the stationary state.

\begin{lemma}[Linearization of the transition map]
\label{lem:propagation_linearization}
In a neighborhood of the stationary state \(\mathbf S^\ast\) on which
intermediate-input expenditures are strictly positive, the within-period
clearing problem, which determines the current outputs, prices and
orders given the start-of-period state, has a unique positive
solution. The transition map \(\Phi\)
together with the dated-input observation maps
\(\mathbf x_{i,t}=\mathcal V_i(\mathbf S_t)\), which return firm \(i\)'s dated input bundle
at date \(t\) (inherited entries for \(k_{ji}\ge1\), current orders for
\(k_{ji}=0\)), are real-analytic there. The impact state
\(\mathbf S_0(\pi)=(\mathbf m^\ast+\pi\overline M\boldsymbol\zeta,\,
\boldsymbol{\mathcal X}^\ast)\) is affine in \(\pi\). Hence, for every
fixed finite date \(t\ge 0\), with
\(\mathbf J_\Phi^\ast:=D\Phi(\mathbf S^\ast)\) and
\(\dot{\mathbf S}_0:=D_\pi\mathbf S_0(\pi)|_{\pi=0}=(\overline M\boldsymbol\zeta,\,\mathbf 0)\),
\begin{equation}
\mathbf S_t(\pi)=\mathbf S^\ast+\pi\,\mathbf J_\Phi^{\ast\,t}\dot{\mathbf S}_0+O_t(\pi^2),
\qquad
\Delta\mathbf x_{i,t}(\pi)\;=\;\pi\,\dot{\mathbf x}_{i,t}\;+\;O_{i,t}(\pi^2),
\qquad
\dot{\mathbf x}_{i,t}:=D\mathcal V_i(\mathbf S^\ast)\,\mathbf J_\Phi^{\ast\,t}\dot{\mathbf S}_0
\label{eq:bundle_first_order_taylor}
\end{equation}
\end{lemma}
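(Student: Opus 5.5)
The plan is to reduce the within-period clearing problem to a fixed-point problem in log outputs alone, show that it is a contraction with a real-analytic map, and then obtain everything else by composition and the implicit function theorem. Fix a start-of-period state \(\mathbf S=(\mathbf m,\boldsymbol{\mathcal X})\) near \(\mathbf S^\ast\), so that every expenditure \(e_i=m_i-\wM l_i^\ast\) is positive. The wage \(\wM\) is linear in \(\mathbf m\) by \eqref{eq:nominal_wage_indexation}, so the expenditures \(e_i\) and the nominal demands \(d_j\) of \eqref{eq:dyn_nominal_demand} are affine in \(\mathbf m\) and positive. As in the proof of Proposition~\ref{prop:stability_stationary}, the order rule and the clearing rule give \(x_{ji}=a_{ji}e_iq_j/d_j\). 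In logs this reads \(\check x_{ji}=\check q_j+\check\Xi_{ij}(\mathbf m)\), where \(\check\Xi_{ij}\) is a real-analytic function of \(\mathbf m\) alone.

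Substituting into \eqref{eq:dynamic_output} gives the current-date system
\[
\check q_i=\mathcal L_i\bigl((\check q_j+\check\Xi_{ij})_{j\in\mathcal S_i^{(0)}},\;(\text{inherited log entries of }\boldsymbol{\mathcal X}^{(0)})\bigr),
\qquad i\in N.
\]
This is a map \(G(\check{\mathbf q};\mathbf S)\) that depends on the current outputs only through the zero-lead-time suppliers. The partial derivatives of \(\mathcal L_i\) are \(1-\beta_i\) times nonnegative shares summing to one, so \(\check{\mathbf q}\mapsto G(\check{\mathbf q};\mathbf S)\) has \(\infty\)-norm Lipschitz constant at most \(1-\underline\beta<1\) on the neighborhood. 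Banach's theorem then gives a unique solution \(\check{\mathbf q}(\mathbf S)\). Outputs are positive, so prices \(p_j=d_j/q_j\) and orders \(x_{ji}\) are positive and unique as well.

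For analyticity, note that \(\mathcal L_i\) is a composition of exponentials, positive powers and logarithms of positive sums, and so is real-analytic. The Jacobian \(I-D_{\check{\mathbf q}}G\) is invertible, since \(\|D_{\check{\mathbf q}}G\|_\infty\le1-\underline\beta\). The real-analytic implicit function theorem therefore makes \(\check{\mathbf q}(\mathbf S)\) real-analytic in \(\mathbf S\). The maps built from it are then real-analytic as compositions:
\begin{itemize}
\item prices and current orders;
\item the observation maps \(\mathcal V_i\), whose entries are either inherited pipeline coordinates (linear) or current orders;
\item next balances \(\mathbf m'=\mathbf A(\mathbf m-\wM\mathbf l^\ast)+\boldsymbol\gamma\wM\), which are affine in \(\mathbf m\);
\item the shifted pipeline, whose layers either move down by one period or are filled with current orders.
\end{itemize}
Hence \(\Phi\) is real-analytic.

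The one point needing care is that the neighborhood must contain every \(\mathbf S_t(\pi)\) for \(t\le T\) when \(|\pi|\) is small. This follows by continuity of \(\Phi\) for a fixed finite \(t\), and uniformly in \(t\) from Proposition~\ref{prop:stability_stationary}.

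The impact state \(\mathbf S_0(\pi)=(\mathbf m^\ast+\pi\overline M\boldsymbol\zeta,\boldsymbol{\mathcal X}^\ast)\) is affine by construction. For a fixed \(t\), the map \(\pi\mapsto\mathbf S_t(\pi)=\Phi^t(\mathbf S_0(\pi))\) is a composition of real-analytic maps, and \(\Phi(\mathbf S^\ast)=\mathbf S^\ast\). The chain rule then gives \(D_\pi\mathbf S_t|_{0}=\mathbf J_\Phi^{\ast\,t}\dot{\mathbf S}_0\), and Taylor's theorem gives an \(O_t(\pi^2)\) remainder. One subtlety is that \(\Phi\) fixes \(\mathbf S^\ast\) but not the post-shock stationary state. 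The expansion is nonetheless taken around \(\mathbf S^\ast\) at \(\pi=0\), so this does not matter. Composing with \(\mathcal V_i\) gives the bundle expansion with \(\dot{\mathbf x}_{i,t}=D\mathcal V_i(\mathbf S^\ast)\mathbf J_\Phi^{\ast\,t}\dot{\mathbf S}_0\). The main obstacle is the implicit within-period solve with zero lead times. The contraction bound on \(D_{\check{\mathbf q}}G\) settles both existence and the invertibility needed for analyticity.
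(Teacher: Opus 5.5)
Your proposal is correct and follows the paper's proof essentially step for step. Both reduce the within-period solve to a fixed point in the log outputs, whose Jacobian is nonnegative with row sums at most \(1-\underline\beta\). Banach's theorem and the analytic implicit-function theorem then give existence, uniqueness and analyticity, and the chain rule and Taylor's theorem, applied to \(\Phi^t(\mathbf S_0(\pi))\) with continuity keeping the first \(t\) iterates in the neighborhood, give the expansion.

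Two asides are wrong or unnecessary, though the argument does not use either of them. First, with the indexed wage \(\Phi\) is homogeneous of degree one in the balances, so it does fix the post-shock stationary state \(((1+\pi)\mathbf m^\ast,\boldsymbol{\mathcal X}^\ast)\). Second, appealing to Proposition~\ref{prop:stability_stationary} for uniformity in \(t\) is not needed for a fixed-\(t\) statement, and it is best avoided because that proposition's proof cites this lemma for the within-period fixed point.
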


\begin{proof}[Proof of Lemma~\ref{lem:propagation_linearization}]
Fix a start-of-period state
\(\mathbf S=(\mathbf m,\boldsymbol{\mathcal X})\) with strictly positive
balances, expenditures and inherited orders, with the wage given by
the rule of Assumption~\ref{assump:short_run_nominal_rigidity}(b) of
the paper, \(\wM=(w^\ast/\overline M)\mathbf 1^\top\mathbf m\). By
\eqref{eq:dynamic_intermediate_budget} and \eqref{eq:dyn_nominal_demand}
the expenditures \(e_i=m_i-\wM l_i^\ast\) and the demands
\(d_j=\sum_ia_{ji}e_i+\gamma_j\wM\) are affine in \(\mathbf m\) and do
not involve quantities. For a supplier with \(k_{ji}\ge1\) the dated
input \(x_{ji}\) is a coordinate of \(\boldsymbol{\mathcal X}\). For a
zero-lead-time supplier the order rule \eqref{eq:dynamic_real_alloc} and the
clearing rule \eqref{eq:dyn_price_update} give
\(x_{ji}=a_{ji}e_iq_j/d_j\), so that
\(\log x_{ji}=\log(a_{ji}e_i/d_j)+\log q_j\). Write
\(\check{\mathbf q}=(\log(q_i/q_i^\ast))_{i\in N}\) for the log outputs relative to their stationary values. Substituting the zero-lead-time orders into
the production identity \eqref{eq:dynamic_output} expresses each
\(\check q_i\) as a function of \(\check{\mathbf q}\) and \(\mathbf S\),
\begin{equation}
\check q_i=\mathcal N_i(\check{\mathbf q};\mathbf S),
\qquad
\frac{\partial\mathcal N_i}{\partial\check q_j}(\check{\mathbf q};\mathbf S)
=(1-\beta_i)\,a_{ji}(\check{\mathbf q};\mathbf S)\,\ind_{\{k_{ji}=0\}}
\label{eq:within_period_fixed_point}
\end{equation}
Here \(a_{ji}(\check{\mathbf q};\mathbf S)\ge0\) is the elasticity of the
intermediate composite \(X_i\) with respect to the input \(x_{ji}\),
evaluated at the bundle in question, and
\(\sum_{j\in\mathcal S_i}a_{ji}(\check{\mathbf q};\mathbf S)=1\), since
the composite is homogeneous of degree one.
Hence \(\sum_j|\partial\mathcal N_i/\partial\check q_j|\le1-\beta_i\le1-\underline\beta<1\)
at every \(\check{\mathbf q}\in\mathbb R^n\), and the mean-value theorem gives
\(\|\mathcal N(\check{\mathbf q};\mathbf S)-\mathcal N(\check{\mathbf q}';\mathbf S)\|_\infty\le(1-\underline\beta)\|\check{\mathbf q}-\check{\mathbf q}'\|_\infty\).
By Banach's fixed-point theorem \eqref{eq:within_period_fixed_point}
has a unique solution \(\check{\mathbf q}=\check{\mathbf q}(\mathbf S)\), and the outputs, the
prices \(p_j=d_j/q_j\), the orders \(x_{ji}=a_{ji}e_i/p_j\) and the
next state, \(\mathbf m'=\mathbf d\) by the balance law and
\(\boldsymbol{\mathcal X}'\) the pipeline aged by one period, follow.

The map
\((\check{\mathbf q},\mathbf S)\mapsto\check{\mathbf q}-\mathcal N(\check{\mathbf q};\mathbf S)\)
is real-analytic on the open set of states with strictly positive
balances, expenditures and inherited orders, being composed of
exponentials, logarithms, powers of positive quantities, sums and
quotients with positive denominators. Its derivative in \(\check{\mathbf q}\)
at the solution is \(\mathbf I-\mathbf J_{\mathcal N}(\mathbf S)\), with
\(\mathbf J_{\mathcal N}(\mathbf S):=(\partial\mathcal N_i/\partial\check q_j)_{i,j}\)
non-negative with row sums at most \(1-\underline\beta<1\), so
\(\|\mathbf J_{\mathcal N}(\mathbf S)\|_\infty<1\) and \(\mathbf I-\mathbf J_{\mathcal N}(\mathbf S)\)
is invertible. By the analytic implicit-function theorem \(\check{\mathbf q}(\mathbf S)\), and
with it the prices, the orders, the transition map \(\Phi\) and the
observation maps \(\mathcal V_i\), are real-analytic on a neighborhood
\(\mathcal U\) of \(\mathbf S^\ast\), on which
expenditures stay positive.

The impact state
\(\mathbf S_0(\pi)=(\mathbf m^\ast+\pi\overline M\boldsymbol\zeta,\boldsymbol{\mathcal X}^\ast)\)
is affine in \(\pi\) with \(\mathbf S_0(0)=\mathbf S^\ast\), and
\(\Phi(\mathbf S^\ast)=\mathbf S^\ast\). Fix \(t\). Since \(\Phi\) is
continuous on \(\mathcal U\), there is \(\pi_t>0\) such that
\(\Phi^{\,u}(\mathbf S_0(\pi))\in\mathcal U\) for all \(u\le t\) and
\(|\pi|\le\pi_t\), and on that interval
\(\mathbf S_t(\pi)=\Phi^{\,t}(\mathbf S_0(\pi))\) is a composition of
real-analytic maps, hence real-analytic in \(\pi\). By the chain rule,
with every derivative evaluated at the fixed point \(\mathbf S^\ast\),
\(D_\pi\mathbf S_t(\pi)|_{\pi=0}=\mathbf J_\Phi^{\ast\,t}\dot{\mathbf S}_0\),
and Taylor's theorem gives the first display of
\eqref{eq:bundle_first_order_taylor} with a remainder constant that
depends on \(t\). Composing with \(\mathcal V_i\) and applying the chain rule
once more gives the second display, with
\(\dot{\mathbf x}_{i,t}=D\mathcal V_i(\mathbf S^\ast)\mathbf J_\Phi^{\ast\,t}\dot{\mathbf S}_0\).
The leading coefficient depends only on \(\mathbf J_\Phi^\ast\),
\(D\mathcal V_i(\mathbf S^\ast)\) and \(\dot{\mathbf S}_0\), which are fixed by
\(\mathbf A\), \(\boldsymbol\beta\), \(\boldsymbol\gamma\),
\(\boldsymbol\zeta\), the lead-time profile and the stationary allocation,
and not on \(\pi\).
\end{proof}

\subsection{Lemmas for Section~\ref{sec:inefficiency}: Miscoordination within Firms}
\label{app:lemmas_inefficiency}

The supplier shares \(a_{ji}=s_i^{(k_{ji})}\widetilde a_{ji}^{(k_{ji})}\) of
\eqref{eq:share_quantity_form} give the supplier-averaging operator
\(\mathcal F_i\) of Definition~\ref{def:supplier_operator} a two-stage
structure. With
\(\mathcal F_i^{(k)}(y):=\sum_{j\in\mathcal S_i^{(k)}}\widetilde a_{ji}^{(k)}y_j\)
the within-vintage average over vintage layer \(k\) and \(s_i^{(k)}\) the
cross-vintage share of that vintage layer, the tower property gives
\begin{equation}
\mathcal F_i(y)=\sum_{k=0}^{\widehat k}s_i^{(k)}\,\mathcal F_i^{(k)}(y)
\label{eq:operator_tower}
\end{equation}
For supplier-indexed quantities \(y=(y_j)_{j\in\mathcal S_i}\) and
\(z=(z_j)_{j\in\mathcal S_i}\), let
\(\operatorname{Cov}_{\mathcal F_i}(y,z):=\mathcal F_i(yz)-\mathcal F_i(y)\mathcal F_i(z)\),
with \(yz\) the entrywise product, and
\(\operatorname{Var}_{\mathcal F_i}(y):=\operatorname{Cov}_{\mathcal F_i}(y,y)\)
be the covariance and variance against the supplier-share weights of
firm \(i\). And let \(\operatorname{Cov}_{\mathcal F_i^{(k)}}\) and
\(\operatorname{Var}_{\mathcal F_i^{(k)}}\) be the same objects taken
within vintage layer \(k\). The law of total variance splits the supplier-share
variance into the within-vintage and the between-vintage variance
\eqref{eq:vintage_variances} of the paper,
\begin{equation}
\operatorname{Var}_{\mathcal F_i}(y)
=
\underbrace{\sum_{k}s_i^{(k)}\,\operatorname{Var}_{\mathcal F_i^{(k)}}(y)}_{\textstyle V_i^{\mathrm w}(y)}
\;+\;
\underbrace{\sum_{k}s_i^{(k)}\bigl(\mathcal F_i^{(k)}(y)-\mathcal F_i(y)\bigr)^2}_{\textstyle V_i^{\mathrm b}(y)}
\label{eq:ltv_decomposition}
\end{equation}
and the covariance splits in the same way,
\(\operatorname{Cov}_{\mathcal F_i}=\operatorname{Cov}^{\mathrm w}_{i}+\operatorname{Cov}^{\mathrm b}_{i}\),
with
\begin{equation}
\begin{aligned}
\operatorname{Cov}^{\mathrm w}_{i}(y,z)&:=\sum_k s_i^{(k)}\operatorname{Cov}_{\mathcal F_i^{(k)}}(y,z),\\
\operatorname{Cov}^{\mathrm b}_{i}(y,z)&:=\sum_k s_i^{(k)}\bigl(\mathcal F_i^{(k)}(y)-\mathcal F_i(y)\bigr)\bigl(\mathcal F_i^{(k)}(z)-\mathcal F_i(z)\bigr)
\end{aligned}
\label{eq:cov_split}
\end{equation}
The \emph{curvature-weighted covariance form} of the paper is
\begin{equation}
\langle y,z\rangle_i
:=
(1-\rho)\,\operatorname{Cov}^{\mathrm w}_{i}(y,z)
+(1-\rho_c)\,\operatorname{Cov}^{\mathrm b}_{i}(y,z)
\label{eq:curvature_weighted_cov}
\end{equation}
It is symmetric and bilinear, and positive semidefinite because both
weights are positive. Its quadratic form is the curvature-weighted
dispersion of Definition~\ref{def:production_inefficiency} of the paper,
\(\langle y,y\rangle_i=(1-\rho)V_i^{\mathrm w}(y)+(1-\rho_c)V_i^{\mathrm b}(y)\),
and in the flat case \(\rho_c=\rho\) the form reduces to
\((1-\rho)\operatorname{Cov}_{\mathcal F_i}\).

\begin{lemma}[Linearization of production at the stationary equilibrium]
\label{lem:production_linearization}
Let the active stationary inputs of firm \(i\) be strictly positive and let
its labor be held at \(l_i^\ast\). Then the input-bundle deviation
decomposes as in \eqref{eq:scale_composition_raw}, with
\(\nabla_i^\ast\!\cdot\mathbf z_{i,t}=0\), and the second-order expansion of
\(\log q_{i,t}\) around \(\mathbf x_i^\ast\) is
\eqref{eq:log_q_scale_composition}.
\end{lemma}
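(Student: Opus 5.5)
The plan is to reduce everything to a second-order Taylor expansion of the nested CES in log coordinates. We then convert from log deviations to proportional deviations, and the $-\tfrac12 y^2$ correction produces both the scale-curvature term and the shift of the curvature weights from $\rho,\rho_c$ to $1-\rho,1-\rho_c$.

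\textbf{Step 1: the decomposition \eqref{eq:scale_composition_raw}.} Fix firm $i$ and date $t$, and write $y_j:=\widehat x_{ji,t}$. Since labor is held at $l_i^\ast$, differentiating \eqref{eq:dynamic_output} gives the gradient of log output at the stationary bundle as
\[
\partial\log q_i/\partial x_{ji}=(1-\beta_i)\,a_{ji}/x_{ji}^\ast .
\]
This uses the two-stage shares \eqref{eq:share_quantity_form}: the elasticity of $X_i$ with respect to $x_{ji}$ is $s_i^{(k)}\widetilde a^{(k)}_{ji}=a_{ji}$. Now set $\mathbf z_{i,t}:=\Delta\mathbf x_{i,t}-\mathcal F_i(y)\,\mathbf x_i^\ast$. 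Then
\[
\nabla_i^\ast\!\cdot\mathbf z_{i,t}=(1-\beta_i)\bigl[\textstyle\sum_j a_{ji}y_j-\mathcal F_i(y)\sum_j a_{ji}\bigr]=0,
\]
because $\sum_j a_{ji}=1$.

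\textbf{Step 2: expansion in log coordinates.} Put $u_j:=\log(1+y_j)$. Exactly as in the proof of Proposition~\ref{prop:stability_stationary},
\[
\Delta\log q_{i,t}=(1-\beta_i)L_i(u),\qquad
L_i(u)=\frac{1}{\rho_c}\log\sum_k s_i^{(k)}\exp\bigl(\rho_c v_k\bigr),\qquad
v_k:=\frac1\rho\log\sum_{j\in\mathcal S_i^{(k)}}\widetilde a^{(k)}_{ji}e^{\rho u_j}.
\]
We then use the cumulant expansion $\frac1c\log\mathbb E_w e^{cX}=\mathbb E_wX+\frac c2\operatorname{Var}_w X+O(\|X\|^3)$ twice. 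Applied to the inner nest it gives $v_k=\mathcal F_i^{(k)}(u)+\frac\rho2\operatorname{Var}_{\mathcal F_i^{(k)}}(u)+O(\|u\|^3)$. Applied to the outer nest, and using the tower property \eqref{eq:operator_tower}, it gives
\[
L_i(u)=\mathcal F_i(u)+\tfrac\rho2V_i^{\mathrm w}(u)+\tfrac{\rho_c}2V_i^{\mathrm b}(u)+O(\|u\|^3).
\]
The variance of the $v_k$ equals $V_i^{\mathrm b}(u)$ up to cubic terms.

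\textbf{Step 3: back to proportional changes.} Substitute $u=y-\tfrac12y^2+O(|y|^3)$. The variances change only at third order. The mean picks up $-\tfrac12\mathcal F_i(y^2)$. The law of total variance \eqref{eq:ltv_decomposition} gives
\[
\mathcal F_i(y^2)=\mathcal F_i(y)^2+V_i^{\mathrm w}(y)+V_i^{\mathrm b}(y).
\]
Collecting terms yields
\[
L_i=\mathcal F_i(y)-\tfrac12\mathcal F_i(y)^2-\tfrac12\bigl[(1-\rho)V_i^{\mathrm w}(y)+(1-\rho_c)V_i^{\mathrm b}(y)\bigr]+O(|y|^3).
\]
Multiplying by $1-\beta_i$ gives \eqref{eq:log_q_scale_composition}.

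\textbf{Remainder.} The remainder is $o(\|\Delta\mathbf x_{i,t}\|^2)$ because the stationary inputs are strictly positive, so $y$ is comparable to $\Delta\mathbf x$ and all the maps involved are smooth near zero.

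\textbf{Main obstacle.} The only real difficulty is the bookkeeping in Step 3: verifying that the cross-vintage variance of the inner outputs $v_k$ agrees with $V_i^{\mathrm b}$ to second order, and that the $-\tfrac12y^2$ correction splits cleanly into the scale term and the two unit shifts of the curvature weights. I would check this first in the flat case $\rho_c=\rho$, where it reduces to the standard single-nest CES expansion.
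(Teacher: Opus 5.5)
Your proof is correct. It reaches the expansion by different bookkeeping from the paper's.

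\textbf{The paper's route.} The paper stays in proportional deviations throughout. It uses the single-nest expansion \eqref{eq:ces_level_expansion}, which already carries the weight $1-\varrho$ and the scale term $-\tfrac12\overline z^{\,2}$. It applies this to each inner nest, converts $\log X_i^{(k)}$ back to the relative deviation $\widehat X_{i,t}^{(k)}$ by re-adding $\tfrac12\{\mathcal F_i^{(k)}(\widehat x_{i,t})\}^2$, and then applies the expansion again to the outer nest with the $\widehat X_{i,t}^{(k)}$ as inputs.

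\textbf{Your route.} You work in log coordinates $u=\log(1+y)$. There the two nests compose exactly as nested cumulant generating functions, so the inner and outer expansions chain with no intermediate conversion and carry the raw curvatures $\rho$ and $\rho_c$. A single change of variables at the end then does all the work, through $\mathcal F_i(y^2)=\mathcal F_i(y)^2+V_i^{\mathrm w}(y)+V_i^{\mathrm b}(y)$. It produces both the scale term $-\tfrac12\mathcal F_i(y)^2$ and the shifts of the weights to $1-\rho$ and $1-\rho_c$ at once.

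\textbf{What each buys.} Your route makes transparent that the ``$1$'' in each curvature weight is the Jensen gap of passing from log to proportional changes. It also extends to more nests with no extra work. The paper's route keeps every intermediate object in the proportional units in which the statement is written.

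\textbf{Shared step.} Both rely on the fact you flag as the main obstacle: the cross-vintage variance of the inner aggregates equals $V_i^{\mathrm b}$ up to third order. This is immediate, since $v_k-\mathcal F_i^{(k)}(u)=O(\|u\|^2)$ and $u-y=O(\|y\|^2)$. In Step~1 you read the elasticity directly off the quantity form \eqref{eq:share_quantity_form}, where the paper cites the envelope property of CES nests; the content is the same.
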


\begin{proof}[Proof of Lemma~\ref{lem:production_linearization}]
The production identity is
\[
\log q_i
=
\beta_i\log l_i^\ast
+
(1-\beta_i)\log X_i
\]
Since \(l_i^\ast\) is fixed in the input-bundle expansion, the gradient of
\(\log q_i\) with respect to \(\mathbf x_i\) is
\(\nabla_i^\ast=(1-\beta_i)D_{\mathbf x}\log X_i(\mathbf x_i^\ast)\). For the
two-level composite \eqref{eq:materials_bundle}, cost minimization at the
stationary equilibrium makes the elasticity of \(\log X_i\) with respect
to \(\log x_{ji}\) equal to the supplier share
\(s_i^{(k)}\widetilde a_{ji}^{(k)}=a_{ji}\) of
\eqref{eq:share_quantity_form}, by the envelope property
of CES nests. The supplier share is the cross-vintage share of the supplier's
vintage layer times the within-vintage share of the supplier. Hence \(\nabla_{ji}^\ast=(1-\beta_i)a_{ji}/x_{ji}^\ast\).

We next verify the scale--composition decomposition. With \(\mathcal F_i\)
the supplier-averaging operator of Definition~\ref{def:supplier_operator},
set \(\mathbf z_{i,t}:=\Delta\mathbf x_{i,t}-\mathcal F_i(\widehat x_{i,t})\,\mathbf x_i^\ast\),
so that \(\Delta\mathbf x_{i,t}=\mathcal F_i(\widehat x_{i,t})\,\mathbf x_i^\ast+\mathbf z_{i,t}\).
Since
\(\nabla_i^\ast\cdot\Delta\mathbf x_{i,t}=(1-\beta_i)\sum_{j\in\mathcal S_i}a_{ji}\widehat x_{ji,t}=(1-\beta_i)\mathcal F_i(\widehat x_{i,t})\)
and \(\nabla_i^\ast\cdot\mathbf x_i^\ast=(1-\beta_i)\sum_{j\in\mathcal S_i}a_{ji}=1-\beta_i\),
we have \(\nabla_i^\ast\cdot\mathbf z_{i,t}=0\), so \(\mathbf z_{i,t}\) lies in the
composition subspace \(\{\mathbf z:\nabla_i^\ast\cdot\mathbf z=0\}\). The
decomposition of the first-order response follows by applying the same
linear decomposition to \(\dot{\mathbf x}_{i,t}\).

The second-order term of the Taylor expansion of \(\log q_i\) around
\(\mathbf x_i^\ast\) is obtained by expanding \(\log X_i\) through the two
nests. Write \(\widehat x_{ji,t}=\Delta x_{ji,t}/x_{ji}^\ast\) for the relative
input deviation. For a CES aggregator with share weights \(\nu_m\ge0\),
\(\sum_m\nu_m=1\), curvature \(\varrho\), and relative input deviations
\(z_m\), the expansions
\((1+z)^{\varrho}=1+\varrho z+\tfrac12\varrho(\varrho-1)z^2+o(z^2)\)
and \(\log(1+u)=u-\tfrac12u^2+o(u^2)\), together with
\(\sum_m\nu_mz_m^2=\operatorname{Var}_\nu(z)+\overline z^{\,2}\),
give
\begin{equation}
\frac{1}{\varrho}\log\sum_m\nu_m(1+z_m)^{\varrho}
=
\overline z-\tfrac12\overline z^{\,2}
-\tfrac12(1-\varrho)\operatorname{Var}_\nu(z)
+o(\|z\|^2),
\qquad
\overline z:=\sum_m\nu_mz_m
\label{eq:ces_level_expansion}
\end{equation}
Applying \eqref{eq:ces_level_expansion} to the inner nest \(k\) (weights
\(\widetilde a_{ji}^{(k)}\), curvature \(\rho\)) and passing from
\(\log X_i^{(k)}\) to the relative deviation
\(\widehat X_{i,t}^{(k)}:=\Delta X_i^{(k)}/X_i^{(k)\ast}\), which adds back
\(+\tfrac12\{\mathcal F_i^{(k)}(\widehat x_{i,t})\}^2\) and cancels the
\(-\tfrac12\overline z^{\,2}\) term, gives
\[
\widehat X_{i,t}^{(k)}
=
\mathcal F_i^{(k)}(\widehat x_{i,t})
-\tfrac12(1-\rho)\operatorname{Var}_{\mathcal F_i^{(k)}}(\widehat x_{i,t})
+o(\|\widehat x_{i,t}\|^2)
\]
Applying \eqref{eq:ces_level_expansion} to the outer nest (weights
\(s_i^{(k)}\), curvature \(\rho_c\), inputs \(\widehat X_{i,t}^{(k)}\)),
\[
\Delta\log X_i
=
\sum_k s_i^{(k)}\widehat X_{i,t}^{(k)}
-\tfrac12\Bigl(\sum_k s_i^{(k)}\widehat X_{i,t}^{(k)}\Bigr)^2
-\tfrac12(1-\rho_c)\operatorname{Var}_{s_i}\!\bigl(\widehat X_{i,t}^{(k)}\bigr)
+o
\]
By the tower property \eqref{eq:operator_tower},
\(\sum_k s_i^{(k)}\mathcal F_i^{(k)}(\widehat x)=\mathcal F_i(\widehat x)\) and
\(\sum_k s_i^{(k)}\operatorname{Var}_{\mathcal F_i^{(k)}}(\widehat x)=V_i^{\mathrm w}(\widehat x)\),
so \(\sum_k s_i^{(k)}\widehat X_{i,t}^{(k)}=\mathcal F_i(\widehat x_{i,t})-\tfrac12(1-\rho)V_i^{\mathrm w}(\widehat x_{i,t})+o\),
and to leading order
\(\operatorname{Var}_{s_i}(\widehat X_{i,t}^{(k)})=\operatorname{Var}_{s_i}(\mathcal F_i^{(k)}(\widehat x_{i,t}))=V_i^{\mathrm b}(\widehat x_{i,t})\),
the between-vintage component of \eqref{eq:ltv_decomposition} (the inner-curvature
corrections to \(\widehat X_{i,t}^{(k)}\) enter \(\operatorname{Var}_{s_i}\) only
beyond second order). Substituting and discarding terms beyond second order,
\[
\Delta\log X_i
=
\mathcal F_i(\widehat x_{i,t})
-\tfrac12\{\mathcal F_i(\widehat x_{i,t})\}^2
-\tfrac12\bigl[(1-\rho)V_i^{\mathrm w}(\widehat x_{i,t})+(1-\rho_c)V_i^{\mathrm b}(\widehat x_{i,t})\bigr]+o
\]
and multiplying by \((1-\beta_i)\) yields the scale--composition form
\eqref{eq:log_q_scale_composition}.
\end{proof}

\begin{lemma}[Normalized nests and the fixed-share curvature comparison]
\label{lem:normalized_nests}
Consider economies that share the stationary allocation and differ only
in the curvature pair \((\rho,\rho_c)\). Fix a stationary equilibrium
\((\mathbf p^\ast,\mathbf q^\ast,\boldsymbol{\mathcal X}^\ast,\mathbf l^\ast,w^\ast)\)
of one admissible economy, with supplier shares
\(a_{ji}=s_i^{(k_{ji})}\widetilde a_{ji}^{(k_{ji})}\). For an admissible
pair \((\rho,\rho_c)\) let \(q_i(\mathbf x_i,l_i)\) be the production
function with the normalized composite
\begin{equation}
X_i^{(k)}(\mathbf x)
:=
X_i^\ast\Bigl(\sum_{j\in\mathcal S_i^{(k)}}\widetilde a_{ji}^{(k)}
\Bigl(\frac{x_{ji}}{x_{ji}^\ast}\Bigr)^{\rho}\Bigr)^{1/\rho},
\qquad
X_i(\mathbf x)
:=
X_i^\ast\Bigl(\sum_{k\in\mathcal K_i}s_i^{(k)}\Bigl(\frac{X_i^{(k)}(\mathbf x)}{X_i^\ast}\Bigr)^{\rho_c}\Bigr)^{1/\rho_c}
\label{eq:normalized_nests}
\end{equation}
in the top nest \eqref{eq:top_cd}. The normalized composite is the technology
\eqref{eq:within_vintage_bundle}--\eqref{eq:materials_bundle} of the
paper with the supplier weights
\(\nu_{ji}=\widetilde a_{ji}^{(k_{ji})}\bigl(X_i^\ast/x_{ji}^\ast\bigr)^{\rho}\)
and the vintage weights \(\chi_i^{(k)}=s_i^{(k)}\). Then at every
admissible pair the following hold. \textnormal{(i)} The stationary
levels and the elasticities at the stationary bundle are those of the
given economy, so the supplier shares and the matrix \(\mathbf A\)
are the same. \textnormal{(ii)} The stationary allocation is the
unique stationary equilibrium of the normalized economy up to the
choice of the nominal wage, with the same stationary balances, incidence and impact
state. \textnormal{(iii)} The first-order responses of the dated
inputs are the same.
\end{lemma}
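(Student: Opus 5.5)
The plan is to verify the three claims in order, working directly with the normalized composite \eqref{eq:normalized_nests}.

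First, check that the normalized composite is a reparametrization of the paper's technology. Expanding \(X_i^{(k)}(\mathbf x)^\rho\) gives \(\sum_j \widetilde a^{(k)}_{ji}(X_i^\ast/x_{ji}^\ast)^\rho x_{ji}^\rho\), which is the inner CES \eqref{eq:within_vintage_bundle} with \(\nu_{ji}=\widetilde a_{ji}^{(k)}(X_i^\ast/x_{ji}^\ast)^\rho\). The outer nest, written in \(X_i^{(k)}/X_i^\ast\), carries the same factor \(X_i^\ast\) on both sides, so it is \eqref{eq:materials_bundle} with \(\chi_i^{(k)}=s_i^{(k)}\). These weights are positive, and \(\sum_k\chi_i^{(k)}=1\), so the technology is admissible at every pair \((\rho,\rho_c)\).

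For (i), evaluate at \(\mathbf x=\mathbf x^\ast\). Each inner ratio is one, so \(X_i^{(k)}=X_i^\ast\), and then \(X_i=X_i^\ast\); the stationary levels are therefore independent of the curvatures. For the elasticities, differentiate the log of \eqref{eq:normalized_nests} at \(\mathbf x^\ast\). The elasticity of \(X_i^{(k)}\) in \(x_{ji}\) is \(\widetilde a_{ji}^{(k)}\), and the elasticity of \(X_i\) in \(X_i^{(k)}\) is \(s_i^{(k)}\), because all ratios equal one. By \eqref{eq:share_quantity_form} the product is \(a_{ji}\), so \(\mathbf A\) is unchanged.

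For (ii), invoke Proposition~\ref{prop:existence_stationary}, which gives uniqueness up to the wage, and then show that the given allocation satisfies the equilibrium conditions of the normalized economy.
\begin{itemize}
\item Cost minimization holds at \(\mathbf p^\ast\), since the CES first-order conditions equate price ratios to ratios of marginal products, which are \(a_{ji}/x_{ji}^\ast\) up to a common factor, and \(p_j^\ast x_{ji}^\ast\propto a_{ji}\).
\item Unit-cost pricing holds because the levels and shares agree.
\item Market clearing holds because the quantities are those of the given equilibrium.
\end{itemize}
Uniqueness then identifies the normalized equilibrium with the given one. The balances \(\mathbf m^\ast=p^\ast q^\ast\) therefore coincide, so \(\boldsymbol\zeta\) in \eqref{eq:zeta_def} and the impact state coincide as well.

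For (iii), use Lemma~\ref{lem:propagation_linearization}: \(\dot{\mathbf x}_{i,t}=D\mathcal V_i(\mathbf S^\ast)\mathbf J_\Phi^{\ast t}\dot{\mathbf S}_0\). It suffices to show that \(\mathbf J_\Phi^\ast\), \(D\mathcal V_i(\mathbf S^\ast)\) and \(\dot{\mathbf S}_0\) do not depend on the curvatures.
\begin{itemize}
\item \(\dot{\mathbf S}_0\) is unchanged by (ii).
\item The order rule, clearing rule, wage rule and balance law involve only \(\mathbf A\), \(\boldsymbol\beta\), \(\boldsymbol\gamma\) and stationary values.
\item The production map enters \(\mathbf J_\Phi^\ast\) only through its first derivative at \(\mathbf x^\ast\), namely \(D\log q_i=(1-\beta_i)a_{ji}/x_{ji}^\ast\), which is curvature-free by (i).
\item The within-period fixed point \eqref{eq:within_period_fixed_point} linearizes with coefficients \((1-\beta_i)a_{ji}\ind_{\{k_{ji}=0\}}\).
\end{itemize}
The main obstacle is confirming that no second-order production term enters \(\mathbf J_\Phi^\ast\). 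It does not, since a Jacobian at a fixed point uses only first derivatives of each component map, and curvature appears only in second derivatives, as \eqref{eq:log_q_scale_composition} shows. With all three ingredients curvature-free, the first-order responses coincide.
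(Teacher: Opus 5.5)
Your proposal is correct and follows the paper's proof essentially step for step: the same log-differentiation at the stationary bundle for (i), and for (ii) verification of the equilibrium conditions followed by the uniqueness of Proposition~\ref{prop:existence_stationary}. For (iii) you argue that first-order dynamics use only first derivatives of production at the stationary bundle, phrased through $\mathbf J_\Phi^\ast$ and $D\mathcal V_i(\mathbf S^\ast)$ where the paper uses the linearized order and production identities. The one step to tighten is cost minimization in (ii): matching marginal-product ratios to price ratios is only a first-order condition, and sufficiency needs each $q_i$ to be concave (CES nests with curvature below one inside a Cobb--Douglas top nest). The paper uses that concavity with Euler's identity to get $q_i(\mathbf x_i,l_i)\le Dq_i(\mathbf x_i^\ast,l_i^\ast)[(\mathbf x_i,l_i)]$, so the stationary bundle is a global cost minimizer with unit cost $p_i^\ast$.
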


\begin{proof}
(i) At \(\mathbf x_i^\ast\) every
ratio \(X_i^{(k)}(\mathbf x_i^\ast)/X_i^\ast\) equals one, so \(X_i(\mathbf x_i^\ast)=X_i^\ast\) and
\(q_i(\mathbf x_i^\ast,l_i^\ast)=q_i^\ast\). Logarithmic
differentiation at that point gives
\(\partial\log X_i^{(k)}/\partial\log x_{ji}=\widetilde a_{ji}^{(k)}\)
and \(\partial\log X_i/\partial\log X_i^{(k)}=s_i^{(k)}\), hence
\(\partial\log q_i/\partial\log x_{ji}=(1-\beta_i)a_{ji}\) and
\(\partial\log q_i/\partial\log l_i=\beta_i\). The supplier shares at
the stationary bundle are therefore the same at every pair, and so is
the matrix \(\mathbf A\).

(ii) By
the stationary identities
\(p_j^\ast x_{ji}^\ast=(1-\beta_i)a_{ji}p_i^\ast q_i^\ast\) and
\(w^\ast l_i^\ast=\beta_ip_i^\ast q_i^\ast\), the elasticities give
\(p_i^\ast\,\partial q_i/\partial x_{ji}=p_j^\ast\) and
\(p_i^\ast\,\partial q_i/\partial l_i=w^\ast\) at the stationary
inputs. Each production function \(q_i(\mathbf x_i,l_i)\) is concave and homogeneous of degree one, since a
CES nest with positive weights and negative curvature is concave,
increasing and homogeneous of degree one, and the Cobb--Douglas top
nest preserves the three properties under composition. Concavity and
Euler's identity give, for every positive input bundle \((\mathbf x_i,l_i)\),
\(q_i(\mathbf x_i,l_i)\le q_i(\mathbf x_i^\ast,l_i^\ast)+Dq_i(\mathbf x_i^\ast,l_i^\ast)[(\mathbf x_i-\mathbf x_i^\ast,l_i-l_i^\ast)]=Dq_i(\mathbf x_i^\ast,l_i^\ast)[(\mathbf x_i,l_i)]\),
so that revenue at the stationary prices is at most input cost, with
equality at the stationary inputs. Hence the stationary bundle is
cost-minimizing at \(\mathbf p^\ast\) and \(w^\ast\), and the unit cost
of every firm at \(\mathbf p^\ast\) is \(p_i^\ast\). Household
optimality and market clearing involve only prices and quantities,
which are unchanged. The stationary allocation is therefore a
stationary equilibrium of the normalized economy at every pair. And by
Proposition~\ref{prop:existence_stationary} of the paper, which applies
to every choice of positive nest weights, it is the unique one up to
the choice of the nominal wage. The stationary balances, the incidence
\(\boldsymbol\zeta\) and the impact state are then the same at every
pair.

(iii) The balance law
\eqref{eq:firm_balance_law_pre} depends on \(\mathbf A\), the wage bill
and the incidence alone. The linearized order and production
identities,
\(\dot{\widehat x}_{ji,s}=\dot{\widehat e}_{i,s}-\dot{\widehat d}_{j,s}+\dot{\widehat q}_{j,s}\)
(Lemma~\ref{lem:order_response_decomposition}) and
\(\dot{\widehat q}_{i,t}=(1-\beta_i)\sum_{j\in\mathcal S_i}a_{ji}\dot{\widehat x}_{ji,t-k_{ji}}\)
(Lemma~\ref{lem:production_linearization}), have coefficients and
initial conditions that are the same at every pair, and a unique
solution by Lemma~\ref{lem:propagation_linearization}, which handles
the zero-lead-time block. The first-order dated-input responses, and with
them the variances \(V_i^{\mathrm w}\) and \(V_i^{\mathrm b}\) of
\eqref{eq:vintage_variances} of the paper, are therefore the same across the
comparison.
\end{proof}

In the comparison of Lemma~\ref{lem:normalized_nests} the first-order
responses \(\dot{\widehat x}_{i,t}\) are the same at every curvature
pair, by part (iii). For a weighting \(\boldsymbol\omega\) write
\begin{equation}
V^{\mathrm w}:=\sum_{t\ge0}\sum_{i\in N}\omega_i(1-\beta_i)\,V_i^{\mathrm w}(\dot{\widehat x}_{i,t}),
\qquad
V^{\mathrm b}:=\sum_{t\ge0}\sum_{i\in N}\omega_i(1-\beta_i)\,V_i^{\mathrm b}(\dot{\widehat x}_{i,t})
\label{eq:gap_dispersion}
\end{equation}
for the cumulative within-vintage and between-vintage dispersions of
the responses.

\begin{lemma}[The leading coefficient is affine in the curvature gap]
\label{lem:inefficiency_gap}
Consider the comparison of Lemma~\ref{lem:normalized_nests}, and
suppose that Assumptions~\ref{assump:diffuse_incidence}
and~\ref{assump:monotone_decay_spectrum} of the paper hold. Then the
leading coefficient of Theorem~\ref{theorem:finite_cumulative_inefficiency} of
the paper is
\begin{equation}
C^{(\boldsymbol\omega)}(\rho,\rho_c)
=
\tfrac12(1-\rho)(V^{\mathrm w}+V^{\mathrm b})+\tfrac12(\rho-\rho_c)\,V^{\mathrm b}
\label{eq:inefficiency_gap_decomposition}
\end{equation}
\end{lemma}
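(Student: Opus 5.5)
The plan is to start from the closed form of the leading coefficient in Theorem~\ref{theorem:finite_cumulative_inefficiency},
\[
C^{(\boldsymbol\omega)}=\tfrac12\sum_{t\ge0}\sum_{i\in N}\omega_i(1-\beta_i)\bigl\langle\dot{\widehat x}_{i,t},\dot{\widehat x}_{i,t}\bigr\rangle_i,
\]
which applies at every admissible curvature pair. The theorem's hypotheses, Assumptions~\ref{assump:diffuse_incidence} and~\ref{assump:monotone_decay_spectrum}, are assumed in the statement. Part~(i) of Lemma~\ref{lem:normalized_nests} shows that \(\mathbf A\) is the same at every pair, so these assumptions hold uniformly across the comparison.

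Next I expand the curvature-weighted form by its definition \eqref{eq:curvature_weighted_cov}, \(\langle y,y\rangle_i=(1-\rho)V_i^{\mathrm w}(y)+(1-\rho_c)V_i^{\mathrm b}(y)\). The key input is part~(iii) of Lemma~\ref{lem:normalized_nests}: the first-order responses \(\dot{\widehat x}_{i,t}\) do not depend on \((\rho,\rho_c)\). The shares \(s_i^{(k)}\) and \(\widetilde a_{ji}^{(k)}\) are also fixed, by part~(i). Hence \(V_i^{\mathrm w}(\dot{\widehat x}_{i,t})\) and \(V_i^{\mathrm b}(\dot{\widehat x}_{i,t})\) are curvature-free numbers, and the curvatures enter only through the two scalar weights.

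Before summing term by term, I need the double series to converge. The envelope \eqref{eq:usable_input_envelope}, after dividing by \(|\pi|\) and letting \(\pi\to0\), gives \(|\dot{\widehat x}_{ji,t}|\le C_S\kappa^t(1-\theta)\). Both variances are bounded by \(\mathcal F_i(\dot{\widehat x}_{i,t}^{\,2})\), so \(V^{\mathrm w}\) and \(V^{\mathrm b}\) in \eqref{eq:gap_dispersion} are finite. The series of nonnegative terms therefore splits as
\[
C^{(\boldsymbol\omega)}=\tfrac12(1-\rho)V^{\mathrm w}+\tfrac12(1-\rho_c)V^{\mathrm b}.
\]
Writing \(1-\rho_c=(1-\rho)+(\rho-\rho_c)\) then gives \eqref{eq:inefficiency_gap_decomposition}.

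The only delicate step is the invariance of the responses across curvature pairs, and part~(iii) of Lemma~\ref{lem:normalized_nests} supplies it. A lesser point is that \(\dot{\widehat x}_{i,t}\) means the same object, the derivative of \(\widehat x_{i,t}(\pi)\), in both the theorem and the lemma.
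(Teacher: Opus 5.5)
Your proof is correct and follows the paper's own argument. Part~(iii) of Lemma~\ref{lem:normalized_nests} makes $V^{\mathrm w}$ and $V^{\mathrm b}$ curvature-free, the envelope \eqref{eq:usable_input_envelope} makes them finite, substituting into the coefficient of Theorem~\ref{theorem:finite_cumulative_inefficiency} gives $\tfrac12\{(1-\rho)V^{\mathrm w}+(1-\rho_c)V^{\mathrm b}\}$, and adding and subtracting $\tfrac12(1-\rho)V^{\mathrm b}$ finishes it. Your extra remarks only spell out what the paper leaves implicit: that the assumptions hold at every pair because $\mathbf A$ is unchanged, and that both variances are bounded by $\mathcal F_i(\dot{\widehat x}_{i,t}^{\,2})$.
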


\begin{proof}
By Lemma~\ref{lem:normalized_nests} the first-order responses
\(\dot{\widehat x}_{i,t}\) are the same at every curvature pair, so the
two dispersions \eqref{eq:gap_dispersion} are curvature-free. They are
finite by the envelope \eqref{eq:usable_input_envelope} of the paper,
each summand being at most the squared envelope. Substituting
\(\langle\dot{\widehat x}_{i,t},\dot{\widehat x}_{i,t}\rangle_i=(1-\rho)V_i^{\mathrm w}(\dot{\widehat x}_{i,t})+(1-\rho_c)V_i^{\mathrm b}(\dot{\widehat x}_{i,t})\)
into the coefficient \eqref{eq:cumulative_inefficiency_expansion} of the paper gives
\(C^{(\boldsymbol\omega)}=\tfrac12\{(1-\rho)V^{\mathrm w}+(1-\rho_c)V^{\mathrm b}\}\),
and adding and subtracting \(\tfrac12(1-\rho)V^{\mathrm b}\) gives
\eqref{eq:inefficiency_gap_decomposition}.
\end{proof}

For collections \(y=(y_{i,t})\) and \(z=(z_{i,t})\) of supplier vectors
indexed by firm and date, and a weighting \(\boldsymbol\omega\), let
\(\langle y,z\rangle_H\) and \(\|y\|_H\) be the cumulative
curvature-weighted form \eqref{eq:H_form} of the paper, whenever the sums
converge absolutely. The form
\(\langle\cdot,\cdot\rangle_H\) is symmetric, bilinear and positive
semidefinite, so \(|\langle y,z\rangle_H|\le\|y\|_H\|z\|_H\). With
\(\dot{\widehat e}_{i,s}\) the first-order response of the
intermediate-input expenditure of firm \(i\) and
\(\dot{\widehat p}_{j,s}\) that of the price of supplier \(j\)
(Lemma~\ref{lem:order_response_decomposition}), define
\begin{align}
\mathcal D_{\mathrm M}^{(\boldsymbol\omega)}(\pi)
&:=
\tfrac{\pi^2}{2}\sum_{t\ge 0}\sum_{i\in N}
\omega_i(1-\beta_i)\,
\bigl\langle \dot{\widehat e}_{i,\,t-k_{ji}},\,\dot{\widehat e}_{i,\,t-k_{ji}}\bigr\rangle_i,
\label{eq:inefficiency_monetary_component}\\[0.2em]
\mathcal D_{\mathrm I}^{(\boldsymbol\omega)}(\pi)
&:=
\tfrac{\pi^2}{2}\sum_{t\ge 0}\sum_{i\in N}
\omega_i(1-\beta_i)\,
\bigl\langle \dot{\widehat p}_{j,\,t-k_{ji}},\,\dot{\widehat p}_{j,\,t-k_{ji}}\bigr\rangle_i,
\label{eq:inefficiency_input_component}\\[0.2em]
\mathcal D_{\mathrm C}^{(\boldsymbol\omega)}(\pi)
&:=
\tfrac{\pi^2}{2}\sum_{t\ge 0}\sum_{i\in N}
\omega_i(1-\beta_i)\,
\bigl\langle \dot{\widehat e}_{i,\,t-k_{ji}},\,
\dot{\widehat p}_{j,\,t-k_{ji}}\bigr\rangle_i
\label{eq:inefficiency_cross_component}
\end{align}

\begin{lemma}[Nominal-expenditure and input-price decomposition of the cumulative loss from production inefficiency]
\label{lem:monetary_input_decomposition}
Suppose that Assumptions~\ref{assump:diffuse_incidence} and~\ref{assump:monotone_decay_spectrum} of the paper hold, let
\(\kappa\) be the contraction rate of Proposition~\ref{prop:stability_stationary}, and fix a weighting
\(\boldsymbol\omega\). Then
\begin{equation}
\mathcal D^{(\boldsymbol\omega)}(\pi)
=
\mathcal D_{\mathrm M}^{(\boldsymbol\omega)}(\pi)
+
\mathcal D_{\mathrm I}^{(\boldsymbol\omega)}(\pi)
-
2\,\mathcal D_{\mathrm C}^{(\boldsymbol\omega)}(\pi)
+
o(\pi^2)
\label{eq:monetary_input_decomposition}
\end{equation}
where
\(\bigl|\mathcal D_{\mathrm C}^{(\boldsymbol\omega)}\bigr|\le\sqrt{\mathcal D_{\mathrm M}^{(\boldsymbol\omega)}\,\mathcal D_{\mathrm I}^{(\boldsymbol\omega)}}\).
\end{lemma}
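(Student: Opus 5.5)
The plan starts from the expansion \eqref{eq:cumulative_inefficiency_expansion} of Theorem~\ref{theorem:finite_cumulative_inefficiency}, which already gives
\(\mathcal D^{(\boldsymbol\omega)}(\pi)=\tfrac{\pi^2}{2}\sum_{t\ge0}\sum_i\omega_i(1-\beta_i)\langle\dot{\widehat x}_{i,t},\dot{\widehat x}_{i,t}\rangle_i+o(\pi^2)\).
So the lemma reduces to decomposing each first-order dated-input response and expanding the bilinear form. The first step invokes the order-response decomposition (Lemma~\ref{lem:order_response_decomposition}). From the order rule \eqref{eq:dynamic_real_alloc}, \(x_{ji,s}=a_{ji}e_{i,s}/p_{j,s}\), so differentiating the logarithm at \(\pi=0\) gives
\(\dot{\widehat x}_{ji,s}=\dot{\widehat e}_{i,s}-\dot{\widehat p}_{j,s}\)
for every order date \(s\). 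This is the price form of the identity \(\dot{\widehat x}_{ji,s}=\dot{\widehat e}_{i,s}-\dot{\widehat d}_{j,s}+\dot{\widehat q}_{j,s}\), since \(\dot{\widehat p}_{j,s}=\dot{\widehat d}_{j,s}-\dot{\widehat q}_{j,s}\) by the clearing rule \eqref{eq:dyn_price_update}. For \(s<0\) all three terms vanish. Evaluating at the order dates \(s=t-k_{ji}\) writes the supplier vector \(\dot{\widehat x}_{i,t}\) as \(E_{i,t}-P_{i,t}\). Here \(E_{i,t}:=(\dot{\widehat e}_{i,t-k_{ji}})_{j\in\mathcal S_i}\) is the buyer's dated expenditure vector, and \(P_{i,t}:=(\dot{\widehat p}_{j,t-k_{ji}})_{j\in\mathcal S_i}\) is the vector of dated supplier prices. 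These are precisely the supplier vectors appearing in \eqref{eq:inefficiency_monetary_component}--\eqref{eq:inefficiency_cross_component}.

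The second step uses the symmetry and bilinearity of the curvature-weighted form \eqref{eq:curvature_weighted_cov}:
\(\langle E-P,E-P\rangle_i=\langle E,E\rangle_i+\langle P,P\rangle_i-2\langle E,P\rangle_i\).
Multiplying by \(\tfrac{\pi^2}{2}\omega_i(1-\beta_i)\) and summing over firms and dates gives \eqref{eq:monetary_input_decomposition}, with the \(o(\pi^2)\) remainder inherited from Theorem~\ref{theorem:finite_cumulative_inefficiency}. The termwise split of the series requires each of the three sums to converge absolutely. This is where I expect the only real care to be needed. I would bound \(|\dot{\widehat e}_{i,s}-1|\) by the nominal envelope \eqref{eq:stab_nominal_envelope}, divided by \(1-\overline\beta\). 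The constant part of \(\dot{\widehat e}_{i,s}\) does not vary within a vintage, but it does jump between \(s<0\) and \(s\ge0\), so it contributes to the dispersion only at the finitely many dates \(t<\widehat k\). For the prices, \(\dot{\widehat p}_{j,s}\) is bounded by the nominal envelope plus the output envelope \eqref{eq:stab_real_envelope}, after passing to derivatives as in the proof of Proposition~\ref{prop:resource_accounting}.

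It follows that the dispersion of each vector is bounded by a constant times \(\kappa^{2t}\) for \(t\ge\widehat k\), and by a fixed constant for earlier dates. Hence \(\|E\|_H\) and \(\|P\|_H\) are finite, and the forms in \eqref{eq:H_form} are well defined.

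Finally, positive semidefiniteness of \(\langle\cdot,\cdot\rangle_H\) gives the Cauchy--Schwarz inequality \(|\langle E,P\rangle_H|\le\|E\|_H\|P\|_H\). Since \(\mathcal D_{\mathrm M}=\tfrac{\pi^2}{2}\|E\|_H^2\), \(\mathcal D_{\mathrm I}=\tfrac{\pi^2}{2}\|P\|_H^2\) and \(\mathcal D_{\mathrm C}=\tfrac{\pi^2}{2}\langle E,P\rangle_H\), this yields \(|\mathcal D_{\mathrm C}|\le\sqrt{\mathcal D_{\mathrm M}\mathcal D_{\mathrm I}}\).
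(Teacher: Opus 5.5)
Your proposal is correct and follows the paper's proof essentially step for step. Both start from the expansion of Theorem~\ref{theorem:finite_cumulative_inefficiency} and use the order-response identity \(\dot{\widehat x}_{ji,t-k_{ji}}=\dot{\widehat e}_{i,t-k_{ji}}-\dot{\widehat p}_{j,t-k_{ji}}\) with a bilinear expansion of the form; convergence comes from dropping the common constant at dates \(t\ge\widehat k\) and applying the nominal and output envelopes, and Cauchy--Schwarz bounds the cross term. One small point to tidy: the centering must be applied to the prices as well, because the two envelopes bound \(\dot{\widehat p}_{j,s}-1\), not \(\dot{\widehat p}_{j,s}\) itself.
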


\begin{proof}
By the expansion \eqref{eq:cumulative_inefficiency_expansion} of
Theorem~\ref{theorem:finite_cumulative_inefficiency}, for every sufficiently
small \(\pi\),
\[
\mathcal D^{(\boldsymbol\omega)}(\pi)
=
\tfrac12\,\pi^2\sum_{t\ge 0}\sum_{i\in N}
\omega_i(1-\beta_i)\,\bigl\langle\dot{\widehat x}_{i,t},\dot{\widehat x}_{i,t}\bigr\rangle_i
+o(\pi^2)
\]
the date sum converging absolutely because
\(\|\dot{\widehat x}_{i,t}\|_\infty\le C_S\kappa^t(1-\theta)\) by the
envelope \eqref{eq:usable_input_envelope}. In the dated form of
\(\mathcal F_i\) each supplier is evaluated at its order date, so
\(\langle\dot{\widehat x}_{i,t},\dot{\widehat x}_{i,t}\rangle_i\) is the
curvature-weighted dispersion over \(j\in\mathcal S_i\) of the dated
responses \(\dot{\widehat x}_{ji,\,t-k_{ji}}\). By the order-response
decomposition \eqref{eq:bundle_response_decomp} of
Lemma~\ref{lem:order_response_decomposition}, read at supplier
\(j\)'s order date,
\[
\dot{\widehat x}_{ji,\,t-k_{ji}}
=
\dot{\widehat e}_{i,\,t-k_{ji}}-\dot{\widehat p}_{j,\,t-k_{ji}}
\]
Both terms are supplier-indexed quantities, the first varying over
\(j\) only through the order date. The form \(\langle\cdot,\cdot\rangle_i\)
is symmetric, bilinear and positive semidefinite, so
\(\langle y-z,y-z\rangle_i=\langle y,y\rangle_i+\langle z,z\rangle_i-2\langle y,z\rangle_i\),
and substituting into the expansion and grouping the three double sums
against \eqref{eq:inefficiency_monetary_component}--\eqref{eq:inefficiency_cross_component}
gives \eqref{eq:monetary_input_decomposition}, provided the three sums
converge.

A constant across suppliers drops out of the form, so at every date
\(t\ge\widehat k\), at which every order date in the bundle is post-shock, the
responses may be replaced by the centered responses
\(\dot{\widehat e}_{i,s}-1\) and \(\dot{\widehat p}_{j,s}-1\). By
\eqref{eq:centered_budget_impulse} and the nominal envelope
\eqref{eq:stab_nominal_envelope} of the paper,
\(|\dot{\widehat e}_{i,s}-1|\le C_\lambda C_\zeta\lambda_2(\mathbf A)^s(1-\theta)/(1-\overline\beta)\).
Since \(\dot{\widehat d}_{j,s}-1=\sum_l\Omega_{jl}(\dot{\widehat e}_{l,s}-1)\)
with \(\sum_l\Omega_{jl}\le1\), the same bound holds for
\(|\dot{\widehat d}_{j,s}-1|\), and since
\(\dot{\widehat p}_{j,s}=\dot{\widehat d}_{j,s}-\dot{\widehat q}_{j,s}\)
by \eqref{eq:price_response_identity} with
\(|\dot{\widehat q}_{j,s}|\le C_S\kappa^s(1-\theta)\) by the output
envelope of Proposition~\ref{prop:stability_stationary},
\(|\dot{\widehat p}_{j,s}-1|\le[C_\lambda C_\zeta/(1-\overline\beta)+C_S]\kappa^s(1-\theta)\),
using \(\lambda_2(\mathbf A)\le\kappa\). As
\(\langle y,y\rangle_i\le(1+\overline\rho)\max_jy_j^2\) for a supplier
vector \(y\), the date-\(t\) terms of
\(\mathcal D_{\mathrm M}^{(\boldsymbol\omega)}\) and
\(\mathcal D_{\mathrm I}^{(\boldsymbol\omega)}\) are bounded by constants
times \(\kappa^{2(t-\widehat k)}\) for \(t\ge\widehat k\), and the dates
before \(\widehat k\) are finitely many, so both sums converge. The Cauchy--Schwarz inequality for the form \eqref{eq:H_form} gives
the convergence of \(\mathcal D_{\mathrm C}^{(\boldsymbol\omega)}\) and
the bound
\(|\mathcal D_{\mathrm C}^{(\boldsymbol\omega)}|\le\sqrt{\mathcal D_{\mathrm M}^{(\boldsymbol\omega)}\mathcal D_{\mathrm I}^{(\boldsymbol\omega)}}\).
The remainder is the \(o(\pi^2)\) of
Theorem~\ref{theorem:finite_cumulative_inefficiency}.

\end{proof}

\subsection{The lead-time profile and the loss from production inefficiency}
\label{app:supplementary}

\subsubsection{An increase in the variance of the lead-time profile}
\label{app:proof_firm_spread_aggregate}

The first-order dated-input response \(\dot{\widehat x}^{\mathbf k}_{i,t}\),
the regression slope \(\varsigma^{\mathbf k}_{i,t}\) and its residual
\(u^{\mathbf k}_{i,t}\) are those of \eqref{eq:lag_regression_slope} of the
paper. The vintage partition of the bundle of firm \(i\), and with it the
curvature-weighted form \eqref{eq:curvature_weighted_cov}, depends on the
lead-time profile. Write \(\langle\cdot,\cdot\rangle_{i,\mathbf k}\) for the
form under the partition of \(\mathbf k\).

\begin{lemma}[Regression of the input response on the lead time]
\label{lem:firm_spread_aggregate_impulse_dispersion}
For a lead-time profile \(\mathbf k\), a firm \(i\) and a date \(t\), let
\(\varsigma^{\mathbf k}_{i,t}\) be the regression slope \eqref{eq:lag_regression_slope}
of the first-order dated-input response on the lead time and \(u^{\mathbf k}_{i,t}\)
its residual. Then \(u^{\mathbf k}_{i,t}\) has zero mean and is uncorrelated
with the lead time under \(\mathcal F_i\), and
\begin{equation}
\bigl\langle\dot{\widehat x}^{\mathbf k}_{i,t},\dot{\widehat x}^{\mathbf k}_{i,t}\bigr\rangle_{i,\mathbf k}
=(1-\rho_c)(\varsigma^{\mathbf k}_{i,t})^2\operatorname{Var}_{\mathcal F_i}(\mathbf k)+\bigl\langle u^{\mathbf k}_{i,t},u^{\mathbf k}_{i,t}\bigr\rangle_{i,\mathbf k}
\label{eq:firm_spread_aggregate_impulse_decomposition}
\end{equation}
\end{lemma}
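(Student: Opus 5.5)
The plan is to treat \(\dot{\widehat x}^{\mathbf k}_{i,t}\) as a supplier-indexed vector with entries \(y_j:=\dot{\widehat x}^{\mathbf k}_{ji,t-k_{ji}}\), and to read the definition \eqref{eq:lag_regression_slope} as the weighted least-squares decomposition
\[
y_j=\mathcal F_i(y)+\varsigma\,\{k_{ji}-\mathcal F_i(\mathbf k)\}+u_j,
\qquad \varsigma:=\varsigma^{\mathbf k}_{i,t},\quad u:=u^{\mathbf k}_{i,t}.
\]
The lead time \(\mathbf k\) is viewed as the supplier-indexed quantity \((k_{ji})_{j\in\mathcal S_i}\).

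First I establish the two orthogonality properties, which are the normal equations of that regression. Applying \(\mathcal F_i\) to the definition of \(u\) gives
\[
\mathcal F_i(u)=\mathcal F_i(y)-\mathcal F_i(y)-\varsigma\cdot 0=0.
\]
Next, \(\operatorname{Cov}_{\mathcal F_i}(u,\mathbf k)=\operatorname{Cov}_{\mathcal F_i}(y,\mathbf k)-\varsigma\operatorname{Var}_{\mathcal F_i}(\mathbf k)\), and this vanishes by the definition of \(\varsigma\) when \(\operatorname{Var}_{\mathcal F_i}(\mathbf k)>0\). In the degenerate case \(\operatorname{Var}_{\mathcal F_i}(\mathbf k)=0\), the lead time is constant across suppliers with positive weight, since every \(a_{ji}>0\). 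Every covariance with \(\mathbf k\) then vanishes, and \(\varsigma=0\) by convention, so the claims hold trivially.

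Second, I expand the form. The constant \(\mathcal F_i(y)\) drops out of \(\langle\cdot,\cdot\rangle_{i,\mathbf k}\), because \(\operatorname{Cov}^{\mathrm w}_i\) and \(\operatorname{Cov}^{\mathrm b}_i\) in \eqref{eq:cov_split} are covariances, and adding a constant vector changes neither within-vintage deviations nor the deviations \(\mathcal F_i^{(k)}-\mathcal F_i\). By bilinearity and symmetry of \eqref{eq:curvature_weighted_cov},
\[
\langle y,y\rangle_{i,\mathbf k}=\varsigma^2\langle\mathbf k,\mathbf k\rangle_{i,\mathbf k}+2\varsigma\langle\mathbf k,u\rangle_{i,\mathbf k}+\langle u,u\rangle_{i,\mathbf k}.
\]

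The key step is to evaluate the form against the lead time. Under the partition of \(\mathbf k\), the lead time is constant within each vintage layer \(\mathcal S_i^{(k)}\), where it equals \(k\). Hence \(\operatorname{Cov}_{\mathcal F_i^{(k)}}(\mathbf k,z)=0\) for every \(z\) and every layer, so \(\operatorname{Cov}^{\mathrm w}_i(\mathbf k,z)=0\). The covariance split \(\operatorname{Cov}_{\mathcal F_i}=\operatorname{Cov}^{\mathrm w}_i+\operatorname{Cov}^{\mathrm b}_i\), which is the bilinear law of total covariance behind \eqref{eq:ltv_decomposition}, then gives \(\operatorname{Cov}^{\mathrm b}_i(\mathbf k,z)=\operatorname{Cov}_{\mathcal F_i}(\mathbf k,z)\). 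Therefore
\[
\langle\mathbf k,z\rangle_{i,\mathbf k}=(1-\rho_c)\operatorname{Cov}_{\mathcal F_i}(\mathbf k,z).
\]

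With \(z=\mathbf k\), this gives the first term \((1-\rho_c)\varsigma^2\operatorname{Var}_{\mathcal F_i}(\mathbf k)\). With \(z=u\), the cross term is \((1-\rho_c)\varsigma\operatorname{Cov}_{\mathcal F_i}(\mathbf k,u)=0\) by the first step. This yields \eqref{eq:firm_spread_aggregate_impulse_decomposition}.

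The only delicate point, and the one I would check carefully, is that the form is taken under the vintage partition of the same profile \(\mathbf k\) that defines the regressor. That is exactly what makes the lead time constant on each layer, and so kills its within-vintage covariance. The other thing to verify is that the dated responses \(y_j\), evaluated at different order dates, are still legitimate supplier-indexed quantities for \(\mathcal F_i\). They are, because the form only uses the vector \((y_j)_{j\in\mathcal S_i}\) and the weights \(a_{ji}\).
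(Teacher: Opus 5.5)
Your proof is correct and follows the paper's argument essentially step for step. You use the normal equations to get \(\mathcal F_i(u)=0\) and \(\operatorname{Cov}_{\mathcal F_i}(u,\mathbf k)=0\), with the degenerate case handled the same way, and you use the constancy of the lead time on each layer of its own vintage partition to kill its within-vintage covariance, so that its between-vintage covariance is the total covariance and the cross term vanishes; your packaging of this as the single identity \(\langle\mathbf k,z\rangle_{i,\mathbf k}=(1-\rho_c)\operatorname{Cov}_{\mathcal F_i}(\mathbf k,z)\) is just a tidier statement of what the paper does term by term.
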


\begin{proof}[Proof of Lemma~\ref{lem:firm_spread_aggregate_impulse_dispersion}]
The identities are those of weighted least squares against the
supplier-share weights. Write \(y_j:=\dot{\widehat x}^{\mathbf k}_{ji,t-k_{ji}}\),
\(\overline y:=\mathcal F_i(y)\), \(\overline k:=\mathcal F_i(\mathbf k)\) and
\(\varsigma:=\varsigma^{\mathbf k}_{i,t}\). By construction
\(u_j=y_j-\overline y-\varsigma(k_{ji}-\overline k)\), so
\(\mathcal F_i(u)=0\), and
\(\operatorname{Cov}_{\mathcal F_i}(u,\mathbf k)=\operatorname{Cov}_{\mathcal F_i}(y,\mathbf k)-\varsigma\operatorname{Var}_{\mathcal F_i}(\mathbf k)=0\)
when \(\operatorname{Var}_{\mathcal F_i}(\mathbf k)>0\), while for \(\operatorname{Var}_{\mathcal F_i}(\mathbf k)=0\) the lead time is constant on
the support and every covariance with it vanishes. Hence
\(\operatorname{Var}_{\mathcal F_i}(y)=\varsigma^2\operatorname{Var}_{\mathcal F_i}(\mathbf k)+\operatorname{Var}_{\mathcal F_i}(u)+2\varsigma\operatorname{Cov}_{\mathcal F_i}(\mathbf k,u)=\varsigma^2\operatorname{Var}_{\mathcal F_i}(\mathbf k)+\operatorname{Var}_{\mathcal F_i}(u)\).
For \eqref{eq:firm_spread_aggregate_impulse_decomposition}, the centered lead time is constant inside
every vintage layer, so its within-vintage variance and its within-vintage
covariance with any supplier vector vanish, and its between-vintage
variance is the total \(\operatorname{Var}_{\mathcal F_i}(\mathbf k)\). Its between-vintage covariance with
\(u\) therefore equals the total covariance, which is zero. Expanding the
form \(\langle\cdot,\cdot\rangle_{i,\mathbf k}\) of \(y-\overline y=\varsigma(\mathbf k-\overline k)+u\),
with the within- and between-vintage covariances taken under the
partition of \(\mathbf k\),
\[
\langle y,y\rangle_{i,\mathbf k}
=(1-\rho)\operatorname{Cov}^{\mathrm w}_i(u,u)
+(1-\rho_c)\bigl[\varsigma^2\operatorname{Var}_{\mathcal F_i}(\mathbf k)+\operatorname{Cov}^{\mathrm b}_i(u,u)\bigr]
=(1-\rho_c)\,\varsigma^2\operatorname{Var}_{\mathcal F_i}(\mathbf k)+\langle u,u\rangle_{i,\mathbf k}
\]
which is \eqref{eq:firm_spread_aggregate_impulse_decomposition}.
\end{proof}

\begin{proof}[Proof of Corollary~\ref{cor:firm_level_vintage_spread_aggregate_inefficiency}]
At the fixed date \(t\) the dated inputs are differentiable at \(\pi=0\)
(Lemma~\ref{lem:propagation_linearization}) and, under each profile,
the curvature-weighted form is a continuous quadratic form on the
finite supplier vector. Hence
\(\mathcal D_t^{(\boldsymbol\omega)}(\pi,\mathbf k)/\pi^2\to\tfrac12\sum_i\omega_i(1-\beta_i)\langle\dot{\widehat x}^{\mathbf k}_{i,t},\dot{\widehat x}^{\mathbf k}_{i,t}\rangle_{i,\mathbf k}\)
under each profile, with \(\dot{\widehat x}^{\mathbf k}_{i,t}\) the first-order
dated-input response under \(\mathbf k\). Subtracting the split
\eqref{eq:firm_spread_aggregate_impulse_decomposition} under \(\mathbf k'\) and
under \(\mathbf k\), with equal slopes
\(\varsigma_{i,t}\),
\begin{align*}
\lim_{\pi\to0}\frac{2[\mathcal D_t^{(\boldsymbol\omega)}(\pi,\mathbf k')-\mathcal D_t^{(\boldsymbol\omega)}(\pi,\mathbf k)]}{\pi^2}
&=
(1-\rho_c)\sum_i\omega_i(1-\beta_i)\,\varsigma_{i,t}^2\bigl(\operatorname{Var}_{\mathcal F_i}(\mathbf k')-\operatorname{Var}_{\mathcal F_i}(\mathbf k)\bigr)\\
&\qquad
+\sum_i\omega_i(1-\beta_i)\Bigl[\langle u^{\mathbf k'}_{i,t},u^{\mathbf k'}_{i,t}\rangle_{i,\mathbf k'}-\langle u^{\mathbf k}_{i,t},u^{\mathbf k}_{i,t}\rangle_{i,\mathbf k}\Bigr]
\end{align*}
which is \eqref{eq:lag_variance_comparison}. The first sum is
non-negative, since \(\operatorname{Var}_{\mathcal F_i}(\mathbf k')\ge \operatorname{Var}_{\mathcal F_i}(\mathbf k)\) at every firm.
\end{proof}

\subsubsection{A uniform shift of the lead-time profile}
\label{app:proof_lag_shift}
\label{app:wave_comovement}

By \eqref{eq:bundle_closed} of Lemma~\ref{lem:order_response_decomposition}
and \eqref{eq:supply_response_identity}, the first-order response of
the real order placed by firm \(i\) on supplier \(j\) at a date
\(s\ge0\) is
\(\dot{\widehat x}_{ji,s}=(\dot{\widehat e}_{i,s}-\dot{\widehat d}_{j,s})+\dot{\widehat q}_{j,s}\).
Evaluated at the order dates of the dated input bundle of firm \(i\) at
date \(t\), its two terms are the monetary wave \(M\) and the real wave
\(Q\) of \eqref{eq:waves_main} of the paper. With
\(\langle\cdot,\cdot\rangle_H\) the cumulative form \eqref{eq:H_form} of the
paper, write
\begin{equation}
C_{EQ}:=\sum_{t\ge0}\sum_{i\in N}\omega_i(1-\beta_i)\bigl\langle\dot{\widehat e}_{i,t-k_{ji}},Q_{i,t}\bigr\rangle_i,
\qquad
C_{DQ}:=\sum_{t\ge0}\sum_{i\in N}\omega_i(1-\beta_i)\bigl\langle\dot{\widehat d}_{j,t-k_{ji}},Q_{i,t}\bigr\rangle_i
\label{eq:DMR_def}
\end{equation}

\begin{lemma}[Wave decomposition of the cumulative loss from production inefficiency]
\label{lem:wave_comovement}
Suppose that Assumptions~\ref{assump:diffuse_incidence} and~\ref{assump:monotone_decay_spectrum} of the paper hold, and fix a weighting
\(\boldsymbol\omega\). Then \(\|M\|_H\) and \(\|Q\|_H\) are finite,
\begin{equation}
\mathcal D^{(\boldsymbol\omega)}(\pi)
=
\tfrac{\pi^2}{2}\bigl\{\langle M,M\rangle_H+\langle Q,Q\rangle_H+2\langle M,Q\rangle_H\bigr\}+o(\pi^2)
\label{eq:wave_split_main}
\end{equation}
and \(\langle M,Q\rangle_H=C_{EQ}-C_{DQ}\).
\end{lemma}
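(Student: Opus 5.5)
The proof follows the template of Lemma~\ref{lem:monetary_input_decomposition}, with the split of each dated response into expenditure and price replaced by the split into the monetary and the real wave. First, I would start from the expansion \eqref{eq:cumulative_inefficiency_expansion} of Theorem~\ref{theorem:finite_cumulative_inefficiency}, which gives
\[
\mathcal D^{(\boldsymbol\omega)}(\pi)=\tfrac{\pi^2}{2}\,\|\dot{\widehat x}\|_H^2+o(\pi^2).
\]
Here \(\dot{\widehat x}\) is the collection of dated first-order input responses \(\dot{\widehat x}_{ji,t-k_{ji}}\). By Lemma~\ref{lem:order_response_decomposition}, each dated response splits exactly as \(\dot{\widehat x}_{ji,t-k_{ji}}=M_{ji,t}+Q_{ji,t}\), supplier by supplier and date by date. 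Orders placed before the shock have zero response, so both waves are set to zero at order dates \(s<0\).

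Second, I would establish that \(\|M\|_H\) and \(\|Q\|_H\) are finite. The form \(\langle\cdot,\cdot\rangle_i\) annihilates constants across suppliers, so at dates \(t\ge\widehat k\) the common mode \(1\) can be subtracted from \(\dot{\widehat e}\) and \(\dot{\widehat d}\). The centered expenditure and demand responses are bounded by \(C_\lambda C_\zeta\lambda_2^s(1-\theta)/(1-\overline\beta)\), using \eqref{eq:centered_budget_impulse}, \eqref{eq:co_customer_average} and the nominal envelope \eqref{eq:stab_nominal_envelope}. The real wave is bounded by the output envelope \(|\dot{\widehat q}_{j,s}|\le C_S\kappa^s(1-\theta)\) of Proposition~\ref{prop:stability_stationary}. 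Since \(\langle y,y\rangle_i\le(1+\overline\rho)\max_j y_j^2\), the date-\(t\) terms of both norms are dominated by a constant times \(\kappa^{2(t-\widehat k)}\). The finitely many dates \(t<\widehat k\) are finite sums. Both series therefore converge absolutely, and Cauchy--Schwarz for the positive semidefinite form \eqref{eq:H_form} gives absolute convergence of \(\langle M,Q\rangle_H\).

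Third, bilinearity and symmetry of \(\langle\cdot,\cdot\rangle_i\), applied termwise and summed, give
\[
\|M+Q\|_H^2=\langle M,M\rangle_H+\langle Q,Q\rangle_H+2\langle M,Q\rangle_H.
\]
Rearranging the sums is legitimate because all three series converge absolutely. This yields \eqref{eq:wave_split_main}.

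Finally, the monetary wave is \(M_{ji,t}=\dot{\widehat e}_{i,t-k_{ji}}-\dot{\widehat d}_{j,t-k_{ji}}\), so linearity in the first slot splits \(\langle M,Q\rangle_H\) into \(C_{EQ}-C_{DQ}\) of \eqref{eq:DMR_def}. For this split to hold, \(C_{EQ}\) and \(C_{DQ}\) must converge separately. The uncentered expenditure and demand responses carry the common mode \(1\), which is not summable in \(t\). I expect this to be the main obstacle.

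To handle it, I would note that adding a constant vector across suppliers to either argument leaves \(\langle\cdot,\cdot\rangle_i\) unchanged, since both the within-vintage and between-vintage covariances vanish against constants. Hence \(\dot{\widehat e}_{i,t-k_{ji}}\) and \(\dot{\widehat d}_{j,t-k_{ji}}\) may be replaced in \eqref{eq:DMR_def} by their centered versions at dates \(t\ge\widehat k\). Their \(H\)-norms are then finite by the bounds of the second step, and Cauchy--Schwarz against \(\|Q\|_H\) gives absolute convergence of each of \(C_{EQ}\) and \(C_{DQ}\). At the finitely many early dates \(t<\widehat k\), some order dates precede the shock and the common mode is not constant across suppliers. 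There I would simply note that the terms are finite in number, so no convergence issue arises and the identity holds termwise.
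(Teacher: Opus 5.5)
Your proposal is correct and follows essentially the paper's own proof. You use the same three ingredients: the exact split \(\dot{\widehat x}_{ji,t-k_{ji}}=M_{ji,t}+Q_{ji,t}\) from Lemma~\ref{lem:order_response_decomposition}, the geometric envelopes on the centered nominal responses and on \(\dot{\widehat q}\) that make both \(H\)-norms finite, and bilinearity of \(\langle\cdot,\cdot\rangle_i\) substituted into \eqref{eq:cumulative_inefficiency_expansion}. Your centering argument for the separate convergence of \(C_{EQ}\) and \(C_{DQ}\) is sound; the paper leaves this detail implicit (``summed in the same way''), though it uses the same device in the proof of Lemma~\ref{lem:monetary_input_decomposition}.
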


\begin{proof}
At every order date \(s\ge0\),
\(\dot{\widehat x}_{ji,s}=(\dot{\widehat e}_{i,s}-\dot{\widehat d}_{j,s})+\dot{\widehat q}_{j,s}\)
by \eqref{eq:bundle_closed} and \eqref{eq:supply_response_identity},
and both sides vanish for \(s<0\). At \(s=t-k_{ji}\) this is
\(\dot{\widehat x}_{ji,t-k_{ji}}=M_{ji,t}+Q_{ji,t}\). For
\(t\ge\widehat k\) every order date is post-shock. By
\eqref{eq:centered_budget_impulse} and the nominal envelope
\eqref{eq:stab_nominal_envelope} of the paper,
\(|\dot{\widehat e}_{i,s}-1|\le C_\lambda C_\zeta\lambda_2(\mathbf A)^s(1-\theta)/(1-\overline\beta)\),
and since \(\dot{\widehat d}_{j,s}-1=\sum_l\Omega_{jl}(\dot{\widehat e}_{l,s}-1)\)
with \(\sum_l\Omega_{jl}\le1\), the same bound holds for
\(|\dot{\widehat d}_{j,s}-1|\), so
\(|M_{ji,t}|\le2C_\lambda C_\zeta(1-\theta)\lambda_2(\mathbf A)^{t-\widehat k}/(1-\overline\beta)\).
By the output envelope of Proposition~\ref{prop:stability_stationary}
differentiated at zero, \(|Q_{ji,t}|\le C_S\kappa^{t-\widehat k}(1-\theta)\),
with \(\kappa\) the contraction rate of that proposition. Since
\(\langle y,y\rangle_i\le(1+\overline\rho)\max_jy_j^2\) for a
supplier vector \(y\), the date-\(t\) terms of \(\|M\|_H^2\) and
\(\|Q\|_H^2\) are bounded by constants times \(\kappa^{2(t-\widehat k)}\)
for \(t\ge\widehat k\), using \(\lambda_2(\mathbf A)\le\kappa\), and the
finitely many dates before \(\widehat k\) contribute finite terms. Both
norms are finite, and \(|\langle M,Q\rangle_H|\le\|M\|_H\|Q\|_H\).
Substituting \(\dot{\widehat x}_{ji,t-k_{ji}}=M_{ji,t}+Q_{ji,t}\) into
the expansion \eqref{eq:cumulative_inefficiency_expansion} of
Theorem~\ref{theorem:finite_cumulative_inefficiency} and using the
bilinearity and symmetry of \(\langle\cdot,\cdot\rangle_i\),
\(\langle\dot{\widehat x}_{i,t},\dot{\widehat x}_{i,t}\rangle_i=\langle M_{i,t},M_{i,t}\rangle_i+\langle Q_{i,t},Q_{i,t}\rangle_i+2\langle M_{i,t},Q_{i,t}\rangle_i\),
and summing against \(\omega_i(1-\beta_i)\) over firms and dates gives
\eqref{eq:wave_split_main}. Bilinearity in the first argument,
\(\langle M_{i,t},Q_{i,t}\rangle_i=\langle\dot{\widehat e}_{i,t-k_{ji}},Q_{i,t}\rangle_i-\langle\dot{\widehat d}_{j,t-k_{ji}},Q_{i,t}\rangle_i\),
summed in the same way gives \(\langle M,Q\rangle_H=C_{EQ}-C_{DQ}\).
\end{proof}

Let \(\mathbf k'\), the translation \(\mathsf S_\Delta\) and the leading
coefficient \(C^{(\boldsymbol\omega)}(\mathbf k)\) be as in
Section~\ref{subsec:lag_profile} of the paper, so that
\(\mathcal D^{(\boldsymbol\omega)}(\pi,\mathbf k)=\pi^2C^{(\boldsymbol\omega)}(\mathbf k)+o(\pi^2)\).
The advance \(\mathsf S^{*}_\Delta\) moves a collection forward by
\(\Delta\) dates, \((\mathsf S^{*}_\Delta z)_t=z_{t+\Delta}\). A prime
marks an object under \(\mathbf k'\), so that \(M'\) and \(Q'\) are the
monetary wave and the real wave under \(\mathbf k'\).
Write \(\Xi_{ij,s}\) for the dated contrast \eqref{eq:dated_contrast}
of buyer \(i\) on supplier \(j\) at order date \(s\), so that
\(M_{ji,t}=\Xi_{ij,t-k_{ji}}\), and
\(\xi_{i,t}=\sum_{j\in\mathcal S_i:\,k_{ji}\le t}a_{ji}\Xi_{ij,t-k_{ji}}\)
for the dated relative expenditure impulse \eqref{eq:dated_forcing}.
By Lemma~\ref{lem:supplier_recursion}, the input-bundle scale is the
Neumann series \(\mathcal F=\sum_{r\ge0}\mathsf T^{r}\boldsymbol\xi\)
of the dated transmission operator \eqref{eq:transmission_operator},
so by \eqref{eq:supply_response_identity}
\begin{equation}
\dot{\widehat q}_{j,s}=\sum_{r\ge1}\dot{\widehat q}^{(r)}_{j,s},
\qquad
\dot{\widehat q}^{(r)}_{j,s}:=(1-\beta_j)\,(\mathsf T^{r-1}\boldsymbol\xi)_{j,s}
\label{eq:real_wave_rounds}
\end{equation}
Write \(Q^{(r)}_{ji,t}:=\dot{\widehat q}^{(r)}_{j,t-k_{ji}}\) for
\(t\ge k_{ji}\) and zero before, the \(r\)-th round of the real wave,
so that \(Q=\sum_{r\ge1}Q^{(r)}\). With \(W^{(0)}:=M\) and
\(W^{(r)}:=Q^{(r)}\) for \(r\ge1\), as in the paper,
\(M+Q=\sum_{r\ge0}W^{(r)}\).

\begin{proof}[Proof of Corollary~\ref{cor:lag_shift_inefficiency}]
By \eqref{eq:dot_e_main} and \eqref{eq:dyn_nominal_demand} of the paper,
\(\dot{\widehat e}_{i,s}\) and \(\dot{\widehat d}_{j,s}\) are linear
functionals of the first-order response of the balance vector, whose
path is given by the balance law \eqref{eq:firm_balance_law_pre} of the
paper and does not involve the lead-time profile. The contrasts \(\Xi_{ij,s}\)
are therefore the same functions of the order date \(s\) under
\(\mathbf k\) and \(\mathbf k'\), and vanish for \(s<0\). Hence
\(M'_{ji,t}=\Xi_{ij,t-k_{ji}-\Delta k}=M_{ji,t-\Delta k}\) for
\(t\ge\Delta k\) and \(M'_{ji,t}=0\) for \(t<\Delta k\), that is,
\(M'=\mathsf S_{\Delta k}M\). Under \(\mathbf k'\) the dated relative
expenditure impulse is
\(\xi'_{i,t}=\sum_{j:\,k_{ji}+\Delta k\le t}a_{ji}\Xi_{ij,t-k_{ji}-\Delta k}=\xi_{i,t-\Delta k}\)
for \(t\ge\Delta k\) and zero before, so
\(\boldsymbol\xi'=\mathsf S_{\Delta k}\boldsymbol\xi\), and for every
bounded profile \(z\),
\((\mathsf T'z)_{i,t}=\sum_{j:\,k_{ji}+\Delta k\le t}(1-\beta_j)a_{ji}z_{j,t-k_{ji}-\Delta k}=(\mathsf T\mathsf S_{\Delta k}z)_{i,t}=(\mathsf S_{\Delta k}\mathsf Tz)_{i,t}\),
so \(\mathsf T'=\mathsf T\mathsf S_{\Delta k}=\mathsf S_{\Delta k}\mathsf T\).
Therefore \(\mathsf T'^{\,r-1}\boldsymbol\xi'=\mathsf S_{r\Delta k}\mathsf T^{r-1}\boldsymbol\xi\),
\(\dot{\widehat q}'^{(r)}=\mathsf S_{r\Delta k}\dot{\widehat q}^{(r)}\),
and
\(Q'^{(r)}_{ji,t}=\dot{\widehat q}'^{(r)}_{j,t-k_{ji}-\Delta k}=\dot{\widehat q}^{(r)}_{j,t-k_{ji}-(r+1)\Delta k}=(\mathsf S_{(r+1)\Delta k}Q^{(r)})_{ji,t}\).
Every wave is therefore delayed, \(W'^{(r)}=\mathsf S_{(r+1)\Delta k}W^{(r)}\)
for every \(r\ge0\).

The shift of every lead time by \(\Delta k\) preserves the vintage
partition \(\{\mathcal S_i^{(k)}\}_k\) of every bundle and, by
hypothesis, the shares \(a_{ji}\), the labor shares and the weighting.
Hence the forms \(\langle\cdot,\cdot\rangle_i\) and
\(\langle\cdot,\cdot\rangle_H\) are the same under the two profiles.
For profiles \(y,z\) with finite \(H\)-norm and
\(\Delta\ge\Delta'\ge0\), reindexing with \(u=t-\Delta\),
\begin{equation}
\begin{aligned}
\langle\mathsf S_\Delta y,\mathsf S_{\Delta'}z\rangle_H
&=\sum_{t\ge\Delta}\sum_{i\in N}\omega_i(1-\beta_i)\langle y_{i,t-\Delta},z_{i,t-\Delta'}\rangle_i
=\sum_{u\ge0}\sum_{i\in N}\omega_i(1-\beta_i)\langle y_{i,u},z_{i,u+\Delta-\Delta'}\rangle_i\\
&=\langle y,\mathsf S^{*}_{\Delta-\Delta'}z\rangle_H
=\langle\mathsf S_{\Delta-\Delta'}y,z\rangle_H
\end{aligned}
\label{eq:translation_identity}
\end{equation}
the last equality by the same reindexing. In particular
\(\|\mathsf S_\Delta y\|_H=\|y\|_H\) and
\(\langle y,\mathsf S_\Delta z\rangle_H=\langle\mathsf S^{*}_\Delta y,z\rangle_H\).

The rounds are summable. By \eqref{eq:centered_budget_impulse} and
the nominal envelope \eqref{eq:stab_nominal_envelope} of the paper,
as in the proof of Lemma~\ref{lem:wave_comovement},
\(|\Xi_{ij,s}|\le C_\Xi\overline\lambda^{\,s}(1-\theta)\) for every
\(s\ge0\), with \(C_\Xi:=2C_\lambda C_\zeta/(1-\overline\beta)\) and
\(\overline\lambda\in(0,1)\) the bound on \(\lambda_2(\mathbf A)\) of
Assumption~\ref{assump:monotone_decay_spectrum}. Hence
\(|\xi_{i,u}|\le C_\Xi(1-\theta)\overline\lambda^{\,u-\widehat k}\)
for every \(u\ge0\), and since
\(|(\mathsf Tz)_{i,t}|\le(1-\underline\beta)\max_{j:\,k_{ji}\le t}|z_{j,t-k_{ji}}|\)
with \(t-k_{ji}\ge t-\widehat k\), induction gives
\(|(\mathsf T^{r-1}\boldsymbol\xi)_{j,s}|\le(1-\underline\beta)^{r-1}C_\Xi(1-\theta)\overline\lambda^{\,\max\{s-(r-1)\widehat k,0\}-\widehat k}\),
so
\(|Q^{(r)}_{ji,t}|\le(1-\underline\beta)^{r}C_\Xi(1-\theta)\overline\lambda^{\,\max\{t-r\widehat k,0\}-\widehat k}\).
Since \(\langle y,y\rangle_i\le(1+\overline\rho)\max_jy_j^2\) and
\(\sum_i\omega_i(1-\beta_i)\le1\),
\begin{equation}
\|Q^{(r)}\|_H
\;\le\;
(1+\overline\rho)^{1/2}C_\Xi(1-\theta)\overline\lambda^{\,-\widehat k}\,(1-\underline\beta)^{r}
\Bigl(r\widehat k+\frac{1}{1-\overline\lambda^{2}}\Bigr)^{1/2}
\label{eq:round_norm_bound}
\end{equation}
which is summable in \(r\). The series \(\sum_rQ^{(r)}\) converges to
\(Q\) at every firm and date, by the Neumann series of
Lemma~\ref{lem:supplier_recursion}, and in \(H\)-norm by
\eqref{eq:round_norm_bound}, so \(Q=\sum_rQ^{(r)}\) in every inner
product of the form \(\langle\cdot,\cdot\rangle_H\). The same holds
under \(\mathbf k'\), with \(\widehat k+\Delta k\) in place of
\(\widehat k\), and by \eqref{eq:translation_identity} the translated
rounds have the same norms. Every double sum over rounds below is
therefore absolutely convergent, by the Cauchy--Schwarz inequality for
the positive semidefinite form.

By \eqref{eq:wave_split_main} under each profile,
\(C^{(\boldsymbol\omega)}(\mathbf k)=\tfrac12\bigl\|\sum_{r\ge0}W^{(r)}\bigr\|_H^2\)
and
\(C^{(\boldsymbol\omega)}(\mathbf k')=\tfrac12\bigl\|\sum_{r\ge0}\mathsf S_{(r+1)\Delta k}W^{(r)}\bigr\|_H^2\).
Expanding both squares, the own terms agree, since
\(\|\mathsf S_\Delta y\|_H=\|y\|_H\). For \(r<r'\), the symmetry of the
form and \eqref{eq:translation_identity} give
\(\langle\mathsf S_{(r+1)\Delta k}W^{(r)},\mathsf S_{(r'+1)\Delta k}W^{(r')}\rangle_H=\langle W^{(r)},\mathsf S_{(r'-r)\Delta k}W^{(r')}\rangle_H\).
Subtracting gives \eqref{eq:lag_shift_exact}.

For the second part, the contrast \(\Xi_{ij,s}\) is a linear functional
of the balance misalignment
\(\dot{\mathbf m}_s^\perp=\mathbf A^{s}\dot{\mathbf m}_0^\perp\), by
\eqref{eq:centered_budget_impulse} and \eqref{eq:dyn_nominal_demand} of
the paper. When \(\dot{\mathbf m}_0^\perp\) is an eigenvector of
\(\mathbf A\) with the eigenvalue \(\lambda\),
\(\Xi_{ij,s}=\lambda^{s}\Xi_{ij,0}\) for every link and every
\(s\ge0\). Let \(k\) be the common lead time. Then
\(M_{ji,t}=\lambda^{t-k}\Xi_{ij,0}\) for \(t\ge k\) and
\(\xi_{i,u}=\lambda^{u-k}\xi_{i,k}\) for \(u\ge k\), and both vanish
before \(k\). With the common lead time,
\((\mathsf Tz)_{i,t}=\sum_{j\in\mathcal S_i}(1-\beta_j)a_{ji}z_{j,t-k}\)
for \(t\ge k\) and zero before. The operator \(\mathsf T\) therefore maps
a profile that vanishes before a date \(s_0\) and decays at the rate
\(\lambda\) from \(s_0\) on to one that vanishes before \(s_0+k\) and
decays at the rate \(\lambda\) from \(s_0+k\) on. By induction
\(\mathsf T^{r-1}\boldsymbol\xi\) vanishes before \(rk\) and decays at
the rate \(\lambda\) from \(rk\) on, and by \eqref{eq:real_wave_rounds}
so does \(\dot{\widehat q}^{(r)}\). Every wave \(W^{(r)}\), \(r\ge0\),
therefore vanishes before the date \((r+1)k\) and decays at the rate
\(\lambda\) from that date on. Hence, for every \(\Delta\ge1\),
\(\mathsf S^{*}_\Delta W^{(r)}-\lambda^{\Delta}W^{(r)}\) vanishes outside
the dates \((r+1)k-\Delta\le t<(r+1)k\), at which \(W^{(r')}\) vanishes
for every \(r'>r\), since \((r'+1)k\ge(r+2)k\). By
\eqref{eq:translation_identity}, for \(r<r'\),
\[
\bigl\langle W^{(r)},\mathsf S_{(r'-r)\Delta k}W^{(r')}\bigr\rangle_H
=\bigl\langle\mathsf S^{*}_{(r'-r)\Delta k}W^{(r)},W^{(r')}\bigr\rangle_H
=\lambda^{(r'-r)\Delta k}\bigl\langle W^{(r)},W^{(r')}\bigr\rangle_H
\]
and substituting into \eqref{eq:lag_shift_exact} gives
\eqref{eq:lag_shift_geometric}.
\end{proof}

Write \(\sigma_{i,s}:=\dot{\widehat e}_{i,s}-1\) and
\(\overline\sigma_{j,s}:=\dot{\widehat d}_{j,s}-1\), so that
\(\Xi_{ij,s}=\sigma_{i,s}-\overline\sigma_{j,s}\). By
\eqref{eq:centered_budget_impulse} of the paper,
\(\sigma_{i,s}=\dot m^\perp_{i,s}/((1-\beta_i)m_i^\ast)\) with
\(\dot{\mathbf m}_s^\perp=\mathbf A^{s}h_\theta\), and by
\(\dot{\widehat d}_{j,s}=\sum_l\Omega_{jl}\dot{\widehat e}_{l,s}+r_j\)
with \(\Omega_{jl}=(1-\beta_l)a_{jl}m_l^\ast/m_j^\ast\),
\(\overline\sigma_{j,s}=\sum_la_{jl}\dot m^\perp_{l,s}/m_j^\ast=\dot m^\perp_{j,s+1}/m_j^\ast\).
Hence
\begin{equation}
\overline\sigma_{j,s}=(1-\beta_j)\,\sigma_{j,s+1}
\label{eq:demand_is_next_expenditure}
\end{equation}
Write \(\iota_j=h_{\theta,j}/m_j^\ast\) for the impact misalignment
per unit of sales, \eqref{eq:incidence_ratio} of the paper, and
\((\mathsf Bz)_j:=(1-\beta_j)\sum_{l\in\mathcal S_j}a_{lj}z_l\) for
the pass-through-weighted average of a firm-indexed profile \(z\) over
the suppliers of firm \(j\).

\begin{lemma}[Mutual dampening of the monetary wave and the real wave]
\label{lem:wave_dampening}
Suppose that Assumptions~\ref{assump:diffuse_incidence} and~\ref{assump:monotone_decay_spectrum} of the paper
hold, that every purchase has the same lead time
\(k\ge1\), and that the impact misalignment \(h_\theta\) is an
eigenvector of \(\mathbf A\) with a real eigenvalue \(\lambda\) in
\((0,1)\). Fix a weighting \(\boldsymbol\omega\) of the firms and let
\(\upsilon:=(\mathbf I-\lambda^{k}\mathsf B)^{-1}(\mathbf I-\lambda\mathsf B)\,\iota\).
Then
\begin{equation}
\langle M,Q\rangle_H
=
-\,(1-\rho)\,\frac{\lambda^{k+1}}{1-\lambda^{2}}
\sum_{i\in N}\omega_i(1-\beta_i)\operatorname{Cov}_{\mathcal F_i}(\iota,\upsilon)
\label{eq:cross_term_sign}
\end{equation}
The monetary wave and the real wave therefore dampen each other,
\(\langle M,Q\rangle_H<0\), whenever
\(\operatorname{Cov}_{\mathcal F_i}(\iota,\upsilon)\ge0\) at every firm in
the support of \(\boldsymbol\omega\) and
\(\operatorname{Cov}_{\mathcal F_i}(\iota,\upsilon)>0\) at one of them.
When \(k=1\), \(\upsilon=\iota\), and this holds as soon as some firm in
the support of \(\boldsymbol\omega\) has suppliers with different
\(\iota_j\). When \(k\ge2\), it holds whenever some firm in the
support of \(\boldsymbol\omega\) has suppliers with different \(\iota_j\)
and, at every such firm \(i\),
\begin{equation}
\operatorname{Var}_{\mathcal F_i}(\mathsf B^{r}\iota)^{1/2}
\;\le\;
C_{\mathsf B}\,\operatorname{Var}_{\mathcal F_i}(\iota)^{1/2}
\quad\text{for every }r\ge1,
\qquad
C_{\mathsf B}<\frac{1-\lambda^{k}}{\lambda-\lambda^{k}}
\label{eq:dispersion_condition}
\end{equation}
In particular it holds whenever averaging over suppliers does not increase the
dispersion of \(\iota\) among the suppliers of any such firm, which is
\eqref{eq:dispersion_condition} with \(C_{\mathsf B}=1\).
\end{lemma}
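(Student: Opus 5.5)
The plan is to compute $\langle M,Q\rangle_H$ in closed form, using the single common lead time and the eigenvector structure. Then I would read off the sign conditions by a Neumann expansion and Cauchy--Schwarz.

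\emph{Reduction of the form.} With every purchase at the same lead time $k$, each bundle has a single vintage layer. Hence $V_i^{\mathrm b}\equiv0$ and $\langle y,z\rangle_i=(1-\rho)\operatorname{Cov}_{\mathcal F_i}(y,z)$, the flat case of \eqref{eq:curvature_weighted_cov}. Since $h_\theta$ is an eigenvector, $\dot{\mathbf m}^\perp_s=\lambda^sh_\theta$, so $\sigma_{i,s}=\lambda^s\iota_i/(1-\beta_i)$. By \eqref{eq:demand_is_next_expenditure}, $\overline\sigma_{j,s}=\lambda^{s+1}\iota_j$. In $M_{ji,t}=\Xi_{ij,t-k}=\sigma_{i,t-k}-\overline\sigma_{j,t-k}$ the buyer term is constant across the suppliers of $i$, and it drops out of the covariance. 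So effectively $M_{ji,t}=-\lambda^{t-k+1}\iota_j$ for $t\ge k$, and $M_{ji,t}=0$ before.

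\emph{Real wave and its generating function.} From the linearized production and order identities (Lemma~\ref{lem:production_linearization} and the decomposition $\dot{\widehat x}_{ji,s}=\Xi_{ij,s}+\dot{\widehat q}_{j,s}$), I obtain, for $u\ge k$,
\[
\dot{\widehat q}_{j,u}=(1-\beta_j)\sum_{l}a_{lj}\bigl(\Xi_{jl,u-k}+\dot{\widehat q}_{l,u-k}\bigr)=\lambda^{u-k}\bigl((\mathbf I-\lambda\mathsf B)\iota\bigr)_j+(\mathsf B\dot{\widehat q}_{\cdot,u-k})_j ,
\]
with $\dot{\widehat q}_u=0$ for $u<k$. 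Summing the $H$-form over $t$ with $u=t-k$ gives
\[
\langle M,Q\rangle_H=-(1-\rho)\,\lambda\sum_i\omega_i(1-\beta_i)\operatorname{Cov}_{\mathcal F_i}(\iota,G),
\qquad G:=\sum_{u\ge0}\lambda^u\dot{\widehat q}_u .
\]
Multiplying the recursion by $\lambda^u$ and summing yields
\[
G=\tfrac{\lambda^k}{1-\lambda^2}(\mathbf I-\lambda\mathsf B)\iota+\lambda^k\mathsf BG .
\]
The operator $\mathsf B$ has $\ell_\infty$ norm at most $1-\underline\beta<1$, so $\mathbf I-\lambda^k\mathsf B$ is invertible. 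Hence $G=\tfrac{\lambda^k}{1-\lambda^2}\upsilon$, which gives \eqref{eq:cross_term_sign}; the operators commute because both are polynomials in $\mathsf B$. The interchange of the sums over $t$, firms and suppliers is justified by the geometric envelopes of Proposition~\ref{prop:stability_stationary}, used exactly as in the proof of Lemma~\ref{lem:wave_comovement}.

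\emph{Sign conditions.} For $k=1$ the two factors cancel, so $\upsilon=\iota$ and $\operatorname{Cov}_{\mathcal F_i}(\iota,\upsilon)=\operatorname{Var}_{\mathcal F_i}(\iota)$. This is nonnegative everywhere and positive at a firm whose suppliers have different $\iota_j$, because the shares $a_{ji}$ are positive. For $k\ge2$ I expand
\[
(\mathbf I-\lambda^k\mathsf B)^{-1}(\mathbf I-\lambda\mathsf B)=\mathbf I-\sum_{r\ge1}\lambda^{(r-1)k}(\lambda-\lambda^k)\mathsf B^r ,
\]
whose coefficients are nonnegative and sum to $(\lambda-\lambda^k)/(1-\lambda^k)$. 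By Cauchy--Schwarz for $\operatorname{Cov}_{\mathcal F_i}$ and \eqref{eq:dispersion_condition},
\[
\operatorname{Cov}_{\mathcal F_i}(\iota,\upsilon)\ge\operatorname{Var}_{\mathcal F_i}(\iota)\Bigl(1-C_{\mathsf B}\tfrac{\lambda-\lambda^k}{1-\lambda^k}\Bigr).
\]
This is nonnegative at every firm and strictly positive at any firm whose suppliers differ in $\iota_j$. The case $C_{\mathsf B}=1$ is admissible because $\lambda<1$ implies $(\lambda-\lambda^k)/(1-\lambda^k)<1$.

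The main obstacle is the generating-function step. One has to track the start-up dates $u<2k$, where $\dot{\widehat q}_{u-k}$ vanishes, and check that the series over $u$ telescopes exactly into $(\mathbf I-\lambda^k\mathsf B)^{-1}$ without boundary terms. The proof of Corollary~\ref{cor:lag_shift_inefficiency} shows that each round vanishes before $(r+1)k$, and this is what makes the telescoping exact.
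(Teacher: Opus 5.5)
Your proof is correct, and it reaches \eqref{eq:cross_term_sign} by a different computation than the paper's.

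The paper splits the real wave into the rounds $Q^{(r)}$ of the Neumann series of the dated transmission operator. It shows by induction that $Q^{(r)}_{ji,t}=\lambda^{t-(r+1)k}(\mathsf B^{r-1}\upsilon^{(1)})_j$ for $t\ge(r+1)k$, with $\upsilon^{(1)}=(\mathbf I-\lambda\mathsf B)\iota$. It then evaluates each comovement $\langle M,Q^{(r)}\rangle_H$, which carries the factor $\lambda^{rk+1}/(1-\lambda^2)$. Finally it sums over $r$ using $\sum_{r\ge1}\lambda^{rk}\mathsf B^{r-1}=\lambda^k(\mathbf I-\lambda^k\mathsf B)^{-1}$, with the round-norm bound \eqref{eq:round_norm_bound} justifying the interchange.

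You instead keep $Q$ whole. You use the date recursion $\dot{\widehat q}_u=\lambda^{u-k}(\mathbf I-\lambda\mathsf B)\iota+\mathsf B\dot{\widehat q}_{u-k}$ and take its transform $G=\sum_u\lambda^u\dot{\widehat q}_u$ at the discount $\lambda$, which solves a resolvent equation directly.

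Your route is shorter. It needs only the geometric envelope (for convergence of $G$ and the interchange over $t$) and the invertibility of $\mathbf I-\lambda^k\mathsf B$. The paper's route additionally yields the sign of the comovement of the monetary wave with each round separately. That is the round structure $W^{(r)}$ on which Corollary~\ref{cor:lag_shift_inefficiency} is built. The sign argument for $k\ge2$ is the same as the paper's.

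Two small remarks:
\begin{itemize}
\item The obstacle you flag does not arise. $G$ already includes the vanishing terms $\dot{\widehat q}_v=0$ for $v<k$, so $\sum_{u\ge k}\lambda^u\mathsf B\dot{\widehat q}_{u-k}=\lambda^k\mathsf BG$ holds with no boundary term, and no fact about rounds or commutation of operators is needed.
\item The dispersion condition is assumed only at firms whose suppliers differ in $\iota_j$. At the other firms, say directly that $\operatorname{Cov}_{\mathcal F_i}(\iota,\upsilon)=0$ because $\iota$ is constant across their suppliers, rather than invoking the condition.
\end{itemize}
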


\begin{proof}[Proof of Lemma~\ref{lem:wave_dampening}]
Since \(\mathbf Ah_\theta=\lambda h_\theta\),
\(\dot{\mathbf m}_s^\perp=\lambda^{s}h_\theta\) for \(s\ge0\), so
\(\sigma_{j,s}=\lambda^{s}\iota_j/(1-\beta_j)\) and, by
\eqref{eq:demand_is_next_expenditure},
\(\overline\sigma_{j,s}=\lambda^{s+1}\iota_j\). Hence
\(\Xi_{ij,s}=\lambda^{s}(\sigma_{i,0}-\lambda\iota_j)\) for \(s\ge0\), and
with the common lead time \(M_{ji,t}=\lambda^{t-k}(\sigma_{i,0}-\lambda\iota_j)\)
for \(t\ge k\) and \(M_{ji,t}=0\) for \(t<k\).

With the common lead time the
dated relative expenditure impulse is
\(\xi_{j,s}=\sum_{l\in\mathcal S_j}a_{lj}\Xi_{jl,s-k}\) for \(s\ge k\)
and zero before, so, with \(\Xi_{jl,s}=\lambda^{s}(\sigma_{j,0}-\lambda\iota_l)\) and \((1-\beta_j)\sigma_{j,0}=\iota_j\),
\[
\dot{\widehat q}^{(1)}_{j,s}=(1-\beta_j)\xi_{j,s}
=\lambda^{s-k}\Bigl[(1-\beta_j)\sigma_{j,0}-\lambda(1-\beta_j)\sum_{l\in\mathcal S_j}a_{lj}\iota_l\Bigr]
=\lambda^{s-k}\upsilon^{(1)}_j,
\qquad
\upsilon^{(1)}:=(\mathbf I-\lambda\mathsf B)\iota
\]
for \(s\ge k\). The dated transmission operator is
\((\mathsf Tz)_{j,s}=\sum_{l\in\mathcal S_j}(1-\beta_l)a_{lj}z_{l,s-k}\)
for \(s\ge k\), so
\(\dot{\widehat q}^{(r)}_{j,s}=(1-\beta_j)\sum_la_{lj}\dot{\widehat q}^{(r-1)}_{l,s-k}\),
and by induction
\(\dot{\widehat q}^{(r)}_{j,s}=\lambda^{s-rk}(\mathsf B^{r-1}\upsilon^{(1)})_j\)
for \(s\ge rk\) and zero before. Hence
\(Q^{(r)}_{ji,t}=\lambda^{t-(r+1)k}(\mathsf B^{r-1}\upsilon^{(1)})_j\) for
\(t\ge(r+1)k\) and zero before.

With one vintage layer in every
bundle, \(\langle y,z\rangle_i=(1-\rho)\operatorname{Cov}_{\mathcal F_i}(y,z)\).
For \(t\ge(r+1)k\), the constant \(\sigma_{i,0}\) drops out of the
covariance and
\[
\operatorname{Cov}_{\mathcal F_i}\bigl(M_{i,t},Q^{(r)}_{i,t}\bigr)
=-\lambda^{2t-(r+2)k+1}\operatorname{Cov}_{\mathcal F_i}\bigl(\iota,\mathsf B^{r-1}\upsilon^{(1)}\bigr)
\]
while for \(t<(r+1)k\) the term vanishes. Summing over \(t\),
\[
\bigl\langle M,Q^{(r)}\bigr\rangle_H
=
-\,(1-\rho)\,\frac{\lambda^{rk+1}}{1-\lambda^{2}}
\sum_{i\in N}\omega_i(1-\beta_i)\operatorname{Cov}_{\mathcal F_i}(\iota,\mathsf B^{r-1}\upsilon^{(1)})
\]
Since \(\|\mathsf B\|_\infty\le1-\underline\beta<1\), the series
\(\sum_{r\ge1}\lambda^{rk}\mathsf B^{r-1}=\lambda^{k}(\mathbf I-\lambda^{k}\mathsf B)^{-1}\)
converges absolutely, and by \eqref{eq:round_norm_bound} so does
\(\sum_r\langle M,Q^{(r)}\rangle_H=\langle M,Q\rangle_H\). Summing
over \(r\) gives \eqref{eq:cross_term_sign} with
\(\upsilon=(\mathbf I-\lambda^{k}\mathsf B)^{-1}\upsilon^{(1)}\).

The factor
\((1-\rho)\lambda^{k+1}/(1-\lambda^{2})\) is positive, so the sign of
\(\langle M,Q\rangle_H\) is minus that of
\(\sum_i\omega_i(1-\beta_i)\operatorname{Cov}_{\mathcal F_i}(\iota,\upsilon)\),
which is the first claim. For \(k=1\),
\(\upsilon=(\mathbf I-\lambda\mathsf B)^{-1}(\mathbf I-\lambda\mathsf B)\iota=\iota\)
and \(\operatorname{Cov}_{\mathcal F_i}(\iota,\upsilon)=\operatorname{Var}_{\mathcal F_i}(\iota)\).
For \(k\ge2\), a firm whose suppliers share the same \(\iota_j\) has
\(\operatorname{Cov}_{\mathcal F_i}(\iota,\upsilon)=0\). At the other
firms,
\(\upsilon-\iota=(\mathbf I-\lambda^{k}\mathsf B)^{-1}(\lambda^{k}-\lambda)\mathsf B\iota=-(\lambda-\lambda^{k})\sum_{r\ge0}\lambda^{kr}\mathsf B^{r+1}\iota\),
so by the Cauchy--Schwarz inequality for
\(\operatorname{Cov}_{\mathcal F_i}\) and
\eqref{eq:dispersion_condition},
\[
\operatorname{Cov}_{\mathcal F_i}(\iota,\upsilon)
\;\ge\;
\operatorname{Var}_{\mathcal F_i}(\iota)
-(\lambda-\lambda^{k})\sum_{r\ge0}\lambda^{kr}\operatorname{Var}_{\mathcal F_i}(\iota)^{1/2}\operatorname{Var}_{\mathcal F_i}(\mathsf B^{r+1}\iota)^{1/2}
\;\ge\;
\operatorname{Var}_{\mathcal F_i}(\iota)\Bigl[1-\frac{(\lambda-\lambda^{k})\,C_{\mathsf B}}{1-\lambda^{k}}\Bigr]
\]
which is positive because
\(\operatorname{Var}_{\mathcal F_i}(\iota)>0\) and
\(C_{\mathsf B}<(1-\lambda^{k})/(\lambda-\lambda^{k})\). The value \(C_{\mathsf B}=1\)
satisfies this because \((1-\lambda^{k})-(\lambda-\lambda^{k})=1-\lambda>0\).
\end{proof}

\subsection{Lemmas for Section~\ref{sec:reallocation}: Monetary Reallocation}
\label{app:lemmas_reallocation}

Write
\begin{equation}
\overline\ell_j:=\sum_{l\in N}\Omega_{jl}\,\ell_l,
\qquad
\Delta\ell_i^{(k)}:=\sum_{j\in\mathcal S_i^{(k)}}\widetilde a^{(k)}_{ji}\bigl(\overline\ell_j-\ell_i\bigr)
\label{eq:buyer_average_size}
\end{equation}
for the average centered log size of the buyers of good \(j\) and for
the log-size gap of firm \(i\) on its inputs of lead time \(k\). Write
\begin{equation}
\sigma^{\circ}_l:=-\frac{\ell_l}{1-\beta_l},
\qquad
\xi_i^{(k)\circ}:=\sigma^{\circ}_i-\sum_{j\in\mathcal S_i^{(k)}}\widetilde a^{(k)}_{ji}\sum_{l\in N}\Omega_{jl}\,\sigma^{\circ}_l
\label{eq:exposure_advantage}
\end{equation}
for the incidence exposure of firm \(l\) and for the exposure
advantage of firm \(i\) on its inputs of lead time \(k\). Let
\begin{equation}
C_\ell
:=
\frac{2}{1-\overline\beta}\,
L_\ell^{2}\Bigl(1+\frac{L_\ell}{4}\Bigr)e^{L_\ell},
\qquad
\Delta_\beta
:=
\frac{2L_\ell\,(\overline\beta-\underline\beta)}
{(1-\overline\beta)(1-\underline\beta)}
\label{eq:exposure_constant}
\end{equation}
be the remainder constant and the labor-share correction.

\begin{lemma}[The relative expenditure impulse to leading order in the slack]
\label{lem:impulse_exposure}
Suppose that the log-size bound \eqref{eq:log_size_bound} and
Assumption~\ref{assump:diffuse_incidence} of the paper hold. Then at
every firm \(i\) and every active vintage \(k\),
\begin{equation}
\Bigl|\xi_i^{(k)}-(1-\theta)\,\xi_i^{(k)\circ}\Bigr|
\;\le\;
C_\ell\,(1-\theta)^2
\label{eq:exposure_remainder}
\end{equation}
and the exposure advantage lies within the labor-share correction of
the log-size gap over the intermediate-input share,
\begin{equation}
\frac{\Delta\ell_i^{(k)}}{1-\overline\beta}-\Delta_\beta
\;\le\;
\xi_i^{(k)\circ}
\;\le\;
\frac{\Delta\ell_i^{(k)}}{1-\overline\beta}+\Delta_\beta
\label{eq:size_to_exposure}
\end{equation}
The impulse \(\xi_{\mathcal C}^{(k)}\) of a set of firms,
\eqref{eq:set_impulse} of the paper, obeys the same two bounds with the
exposure advantage and the log-size gap averaged with the same
weights.
\end{lemma}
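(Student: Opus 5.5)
The proof reduces both bounds to a single expansion of the incidence ratio, done firm by firm and then averaged. Write $\epsilon:=1-\theta$ and $g_l:=e^{-\epsilon\ell_l}$. The incidence ratio is $g_l/Z$ with $Z=\sum_j(m_j^\ast/\overline M)g_j$, so $\iota_l=g_l/Z-1$.

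\emph{Step 1: second-order expansion of $\iota_l$.} Because the weights $m_j^\ast/\overline M$ sum to one and the centered log sizes $\ell_j$ have zero mean under these weights, the first-order term of $Z$ vanishes. Hence $Z=1+R_Z$ with $|R_Z|\le\tfrac12\epsilon^2L_\ell^2e^{\epsilon L_\ell}$ by Taylor's theorem with Lagrange remainder. Similarly $g_l=1-\epsilon\ell_l+R_l$ with $|R_l|\le\tfrac12\epsilon^2L_\ell^2e^{\epsilon L_\ell}$. Dividing and using $Z\ge e^{-\epsilon L_\ell}$ gives
\[
\bigl|\iota_l+\epsilon\ell_l\bigr|\;\le\;\epsilon^2L_\ell^2\Bigl(1+\tfrac{L_\ell}{4}\Bigr)e^{L_\ell},
\]
uniformly in $l$. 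The cross term $\epsilon\ell_lR_Z$ is cubic in $\epsilon$, and it is absorbed into the $L_\ell/4$ factor using $\epsilon\le\overline\epsilon_\theta\le\tfrac12$. This bookkeeping of the exact constant is the only fiddly step. I would first get any constant of this form and then tune the elementary inequalities, such as $e^{x}\le1+x+\tfrac12x^2e^{|x|}$, to reach the stated $C_\ell$.

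\emph{Step 2: linearity gives \eqref{eq:exposure_remainder}.} Since $\sigma_l=\iota_l/(1-\beta_l)$ and $\sigma^\circ_l=-\ell_l/(1-\beta_l)$, Step 1 yields $|\sigma_l-\epsilon\sigma^\circ_l|\le\epsilon^2C_\ell/2$. The factor $2$ in $C_\ell$ is reserved for the next averaging. The map from $\sigma$ to $\xi_i^{(k)}$ is $\sigma_i-\sum_j\widetilde a^{(k)}_{ji}\sum_l\Omega_{jl}\sigma_l$. This is a difference of two averages, with total weight $1$ and at most $1$ because $\sum_l\Omega_{jl}=1-r_j\le1$. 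Its $\ell_\infty$ operator norm is therefore at most $2$, which gives \eqref{eq:exposure_remainder}.

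\emph{Step 3: \eqref{eq:size_to_exposure}.} Write $1/(1-\beta_l)=1/(1-\overline\beta)-\delta_l$ with $0\le\delta_l\le(\overline\beta-\underline\beta)/((1-\overline\beta)(1-\underline\beta))$. Then
\[
\xi_i^{(k)\circ}=\frac{1}{1-\overline\beta}\Bigl(\sum_j\widetilde a^{(k)}_{ji}\sum_l\Omega_{jl}\ell_l-\ell_i\Bigr)+\Bigl(\delta_i\ell_i-\sum_j\widetilde a^{(k)}_{ji}\sum_l\Omega_{jl}\delta_l\ell_l\Bigr).
\]
The first bracket is exactly $\Delta\ell_i^{(k)}$, using $\sum_j\widetilde a^{(k)}_{ji}=1$ and $\overline\ell_j=\sum_l\Omega_{jl}\ell_l$. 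The second bracket is bounded by $2L_\ell\max_l\delta_l=\Delta_\beta$, which gives \eqref{eq:size_to_exposure}.

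\emph{Step 4: sets of firms.} The set impulse $\xi_{\mathcal C}^{(k)}$ is a convex combination of the $\xi_i^{(k)}$ with weights $m_i^\ast(1-\beta_i)s_i^{(k)}/e^{(k)}_{\mathcal C}$. Both \eqref{eq:exposure_remainder} and \eqref{eq:size_to_exposure} are preserved under convex combinations, so they hold for $\xi_{\mathcal C}^{(k)}$ with the correspondingly averaged exposure advantage and log-size gap.
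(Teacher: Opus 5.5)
Your proof is correct and is essentially the paper's own argument. The steps match: the centered log sizes remove the first-order term of $Z$; the identity $g_l/Z-1+\epsilon\ell_l=\bigl(R_l-R_Z(1-\epsilon\ell_l)\bigr)/Z$ with $Z\ge e^{-\epsilon L_\ell}$ and $\epsilon\le\tfrac12$ gives $\tfrac12\epsilon^2L_\ell^2e^{\epsilon L_\ell}(2+\epsilon L_\ell)e^{\epsilon L_\ell}\le\epsilon^2L_\ell^2(1+L_\ell/4)e^{L_\ell}$ directly, with no tuning needed; the factor two comes from a difference of two averages of total weight at most one; and the split of $1/(1-\beta_l)$ and convexity over the set weights finish it. The only blemish is notational: your $\delta_l$ clashes with the downstreamness discount $\delta$, and the paper writes this gap as $\Delta_{\beta,l}$.
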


\begin{proof}[Proof of Lemma~\ref{lem:impulse_exposure}]
By Assumption~\ref{assump:diffuse_incidence} of the paper, \(1-\theta\in(0,\tfrac12]\).
By \eqref{eq:expenditure_impulse} and the form of the incidence ratio
given before \eqref{eq:incidence_ratio} of the paper, the expenditure
impulse of firm \(l\) is
\[
\sigma_l
=\frac{1}{1-\beta_l}\Bigl(\frac{e^{-(1-\theta)\ell_l}}{Z}-1\Bigr),
\qquad
Z:=\sum_{j\in N}\frac{m_j^\ast}{\overline M}\,e^{-(1-\theta)\ell_j}
\]
and by \eqref{eq:buyer_average_impulse} and
\eqref{eq:vintage_impulse_def} of the paper
\[
\xi_i^{(k)}
=\sum_{j\in\mathcal S_i^{(k)}}\widetilde a^{(k)}_{ji}
\Bigl(\sigma_i-\sum_{l\in N}\Omega_{jl}\sigma_l\Bigr)
\]

For
\(|z|\le(1-\theta)L_\ell\), \(e^{-z}=1-z+\varepsilon(z)\) with
\(|\varepsilon(z)|\le\tfrac12z^2e^{|z|}\). Write \(\varepsilon_l:=\varepsilon((1-\theta)\ell_l)\), so
that every \(|\varepsilon_l|\) is at most \(\varepsilon_{\max}:=\tfrac12((1-\theta)L_\ell)^2e^{(1-\theta)L_\ell}\).
Since \(\ell\) is centered under the stationary weights,
\(\sum_j(m_j^\ast/\overline M)\ell_j=0\), the denominator is
\(Z=1+\overline\varepsilon\) with
\(\overline\varepsilon:=\sum_j(m_j^\ast/\overline M)\varepsilon_j\),
\(|\overline\varepsilon|\le\varepsilon_{\max}\), and \(Z\ge e^{-(1-\theta)L_\ell}\). Hence
\[
\frac{e^{-(1-\theta)\ell_l}}{Z}-1+(1-\theta)\ell_l
=\frac{\varepsilon_l-\overline\varepsilon\,(1-(1-\theta)\ell_l)}{Z}
\]
so that
\[
\Bigl|\frac{e^{-(1-\theta)\ell_l}}{Z}-1+(1-\theta)\ell_l\Bigr|
\le\varepsilon_{\max}\,(2+(1-\theta)L_\ell)\,e^{(1-\theta)L_\ell}
\le C_0\,(1-\theta)^2,
\qquad
C_0:=L_\ell^2\Bigl(1+\frac{L_\ell}{4}\Bigr)e^{L_\ell}
\]
the last step using \(1-\theta\le\tfrac12\). Dividing by \(1-\beta_l\),
\(|\sigma_l-(1-\theta)\,\sigma^{\circ}_l|\le C_0(1-\theta)^2/(1-\beta_l)\le C_0(1-\theta)^2/(1-\overline\beta)\)
at every firm.

The difference
\(\sigma_i-\sum_l\Omega_{jl}\sigma_l\) differs from
\((1-\theta)\,(\sigma^{\circ}_i-\sum_l\Omega_{jl}\sigma^{\circ}_l)\) by at most
\(C_0(1-\theta)^2(1+\sum_l\Omega_{jl})/(1-\overline\beta)\le2C_0(1-\theta)^2/(1-\overline\beta)=C_\ell(1-\theta)^2\).
Averaging over the suppliers of lead time \(k\) with the within-vintage shares,
which sum to one, gives \eqref{eq:exposure_remainder}.

For \eqref{eq:size_to_exposure}, for every firm \(l\) write
\(1/(1-\beta_l)=1/(1-\overline\beta)-\Delta_{\beta,l}\) with
\(\Delta_{\beta,l}:=1/(1-\overline\beta)-1/(1-\beta_l)\), which lies in
\([0,(\overline\beta-\underline\beta)/((1-\overline\beta)(1-\underline\beta))]\),
so that \(|\ell_l\Delta_{\beta,l}|\le\Delta_\beta/2\). For a buyer \(i\) and
a supplier \(j\),
\[
\sigma^{\circ}_i-\sum_{l\in N}\Omega_{jl}\sigma^{\circ}_l
=
\frac{\overline\ell_j-\ell_i}{1-\overline\beta}
+\ell_i\Delta_{\beta,i}-\sum_{l\in N}\Omega_{jl}\,\ell_l\Delta_{\beta,l}
\]
and the last two terms together lie in \([-\Delta_\beta,\Delta_\beta]\)
since \(\sum_l\Omega_{jl}\le1\). Averaging over firm \(i\)'s suppliers of lead time \(k\)
with the within-vintage shares puts \(\xi_i^{(k)\circ}\) within
\(\Delta_\beta\) of \(\Delta\ell_i^{(k)}/(1-\overline\beta)\), which is
\eqref{eq:size_to_exposure}. The average \eqref{eq:set_impulse} of the paper has nonnegative
weights that sum to one, so both bounds pass to it.
\end{proof}

\begin{lemma}[Decomposition of the linearized order response]
\label{lem:order_response_decomposition}
At every post-shock order date \(s\ge 0\), the
first-order proportional response of firm \(i\)'s intermediate-input
order from supplier \(j\), \(\dot{\widehat x}_{ji,s}\), decomposes into a
buyer-side expenditure channel and a supplier-side price channel,
\begin{equation}
\dot{\widehat x}_{ji,s}=\dot{\widehat e}_{i,s}-\dot{\widehat p}_{j,s}
\label{eq:bundle_response_decomp}
\end{equation}
Here \(\dot{\widehat e}_{i,s}\) is the response of firm \(i\)'s
intermediate-input expenditure and \(\dot{\widehat p}_{j,s}\) the response
of supplier \(j\)'s price. The expenditure channel admits the explicit form
\begin{equation}
\dot{\widehat e}_{i,s}
=
\frac{\dot m_{i,s}-(w^\ast/\overline M)(\mathbf 1^\top\dot{\mathbf m}_s)\,l_i^\ast}
{(1-\beta_i)m_i^\ast}
\label{eq:dot_e_main}
\end{equation}
linear in the first-order firm-balance perturbation \(\dot{\mathbf m}_s\)
at date \(s\). The price channel is fixed by the linearized market-clearing
identity
\begin{equation}
\dot{\widehat p}_{j,t}=\dot{\widehat d}_{j,t}-\dot{\widehat q}_{j,t}
\label{eq:price_response_identity}
\end{equation}
together with the supply-response identity
\begin{equation}
\dot{\widehat q}_{j,t}=(1-\beta_j)\,\mathcal F_j(\dot{\widehat x}_{j,t})
\label{eq:supply_response_identity}
\end{equation}
where \(\mathcal F_j(\dot{\widehat x}_{j,t})\) is the first-order proportional response
of firm \(j\)'s input-bundle scale \eqref{eq:bundle_scale_def}. Consequently
\begin{equation}
\dot{\widehat p}_{j,s}=\dot{\widehat d}_{j,s}-(1-\beta_j)\,\mathcal F_j(\dot{\widehat x}_{j,s})
\label{eq:price_closure}
\end{equation}
and
\begin{equation}
\dot{\widehat x}_{ji,s}
=
\dot{\widehat e}_{i,s}-\dot{\widehat d}_{j,s}+(1-\beta_j)\,\mathcal F_j(\dot{\widehat x}_{j,s})
\label{eq:bundle_closed}
\end{equation}
\end{lemma}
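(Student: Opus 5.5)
The plan is to differentiate, at \(\pi=0\), the three within-period rules of Appendix~\ref{app:within_period}: the order rule \eqref{eq:dynamic_real_alloc}, the clearing rule \eqref{eq:dyn_price_update} and the production identity \eqref{eq:dynamic_output}. Lemma~\ref{lem:propagation_linearization} guarantees that every quantity on the trajectory is real-analytic in \(\pi\) at each fixed date. It also guarantees that the within-period fixed point for current outputs has a unique solution, so these derivatives exist and the linearized system is well posed. The argument then works in log form, where each rule becomes additive.

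\textbf{Step 1: the order response.} For \(s\ge0\), take logarithms of \(x_{ji,s}=a_{ji}e_{i,s}/p_{j,s}\). Because \(a_{ji}\) is fixed, differentiating \(\log x_{ji,s}-\log x_{ji}^\ast\) at zero gives \eqref{eq:bundle_response_decomp} directly, using \(\dot{\widehat z}=\partial_\pi\log z|_{0}\).

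\textbf{Step 2: the expenditure channel.} Write \(e_{i,s}=m_{i,s}-\wM l_i^\ast\) with \(\wM=(w^\ast/\overline M)\mathbf 1^\top\mathbf m_s\), which is affine in \(\mathbf m_s\). Differentiating and dividing by \(e_i^\ast=(1-\beta_i)m_i^\ast\) gives \eqref{eq:dot_e_main}, and this expression is linear in \(\dot{\mathbf m}_s\).

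\textbf{Step 3: the price channel.} The log of \(p_{j,t}=d_{j,t}/q_{j,t}\) gives \eqref{eq:price_response_identity} immediately. For \eqref{eq:supply_response_identity}, differentiate \(\log q_{j,t}\) in the dated inputs. The first-order part of Lemma~\ref{lem:production_linearization} says the gradient at the stationary bundle is \((1-\beta_j)a_{lj}/x_{lj}^\ast\), with labor held at \(l_j^\ast\). The chain rule then gives \(\dot{\widehat q}_{j,t}=(1-\beta_j)\sum_l a_{lj}\dot{\widehat x}_{lj,t-k_{lj}}=(1-\beta_j)\mathcal F_j(\dot{\widehat x}_{j,t})\). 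Here the supplier average is taken in its dated form, each input at its own order date, matching the convention of Section~\ref{sec:inefficiency}. Substituting this into the price identity yields \eqref{eq:price_closure}, and substituting into Step 1 yields \eqref{eq:bundle_closed}.

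The main obstacle is the zero-lead-time inputs. Their orders \(x_{lj,t}\) depend on current outputs, which in turn depend on those same orders, so the chain rule in Step 3 is only legitimate if \(\check{\mathbf q}_t\) is differentiable through this implicit system. I would cite the implicit-function argument in Lemma~\ref{lem:propagation_linearization}: the block \(\mathbf I-\mathbf J_{\mathcal N}\) is invertible because its row sums are at most \(1-\underline\beta\). That invertibility makes the linear identities hold jointly and have a unique solution, rather than being a formal manipulation. I would also note that no claim is made for \(s<0\), where the dated inputs are stationary and all the responses vanish.
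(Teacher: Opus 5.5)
Your proposal is correct and follows the paper's proof essentially step for step: you log-differentiate the order rule, differentiate the wage-indexed expenditure, log-differentiate the clearing rule, and use the stationary gradient of log output with labor held at \(l_j^\ast\), then substitute. The one addition is that you explicitly appeal to Lemma~\ref{lem:propagation_linearization} for differentiability through the zero-lead-time fixed point, a step the paper leaves implicit in this proof.
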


\begin{proof}[Proof of Lemma~\ref{lem:order_response_decomposition}]
The order rule \eqref{eq:dynamic_real_alloc} reads
\(x_{ji,t}=a_{ji}(m_{i,t}-\wM\,l_i^\ast)/p_{j,t}\). Taking logarithms and
the derivative in \(\pi\) at \(\pi=0\) gives
\eqref{eq:bundle_response_decomp}.

For the expenditure channel \eqref{eq:dot_e_main}, the proportional
deviation of the expenditure from its stationary value
\((1-\beta_i)m_i^\ast\) is
\(\widehat e_{i,t}=[m_{i,t}-\wM\,l_i^\ast-(1-\beta_i)m_i^\ast]/[(1-\beta_i)m_i^\ast]\).
Differentiating in \(\pi\) at \(\pi=0\) gives
\[\dot{\widehat e}_{i,t}=[\dot m_{i,t}-\dot \wM\,l_i^\ast]/[(1-\beta_i)m_i^\ast]\]
and Assumption~\ref{assump:short_run_nominal_rigidity} fixes
\(\wM=(w^\ast/\overline M)\mathbf 1^\top\mathbf m_t\), so that
\(\dot \wM=(w^\ast/\overline M)\mathbf 1^\top\dot{\mathbf m}_t\) and
\eqref{eq:dot_e_main} follows.

The price identity \eqref{eq:price_response_identity} is the clearing
rule \(p_{j,t}=d_{j,t}/q_{j,t}\) of \eqref{eq:dyn_price_update} in
logarithms, differentiated at \(\pi=0\). With labor fixed at
\(l_j^\ast\) by Assumption~\ref{assump:short_run_nominal_rigidity},
\(\log q_j=\beta_j\log l_j^\ast+(1-\beta_j)\log X_j\), and
differentiating at \(\pi=0\) with
\(\nabla_j^\ast\!\cdot\dot{\mathbf x}_{j,t}=(1-\beta_j)\mathcal F_j(\dot{\widehat x}_{j,t})\)
from \eqref{eq:bundle_scale_def} gives
\eqref{eq:supply_response_identity}. Equations \eqref{eq:price_closure}
and \eqref{eq:bundle_closed} follow by substitution.
\end{proof}

Let \(\Xi_{ij,s}:=\dot{\widehat e}_{i,s}-\dot{\widehat d}_{j,s}\) be
the dated contrast \eqref{eq:dated_contrast} of the paper of buyer
\(i\) on supplier \(j\) at order date \(s\), where
\(\dot{\widehat e}_{i,s}=1+\dot m^\perp_{i,s}/\bigl((1-\beta_i)m_i^\ast\bigr)\)
and \(\dot{\widehat d}_{j,s}=\sum_l\Omega_{jl}\dot{\widehat e}_{l,s}+r_j\)
at every date, by \eqref{eq:centered_budget_impulse} in the proof of
Proposition~\ref{prop:impact_reallocation}. The
\emph{dated relative expenditure impulse} of firm \(i\) at date \(t\)
takes each supplier's contrast at the date on which the input usable at
\(t\) was ordered,
\begin{equation}
\xi_{i,t}
:=
\sum_{j\in\mathcal S_i:\,k_{ji}\le t}a_{ji}\,\Xi_{ij,\,t-k_{ji}}
\label{eq:dated_forcing}
\end{equation}
and coincides at \(t=\underline k\) with the impact-date relative expenditure
impulse \eqref{eq:impact_forcing_def}.

\begin{lemma}[Transmission of the relative expenditure impulses]
\label{lem:supplier_recursion}
For every firm \(i\in N\) and date \(t\ge0\), the input-bundle scale
obeys
\begin{equation}
\mathcal F_i(\dot{\widehat x}_{i,t})
=
\xi_{i,t}
+
\sum_{j\in\mathcal S_i:\,k_{ji}\le t}(1-\beta_j)\,a_{ji}\,\mathcal F_j(\dot{\widehat x}_{j,\,t-k_{ji}})
\label{eq:supplier_recursion_lemma}
\end{equation}
Suppose that Assumptions~\ref{assump:diffuse_incidence} and~\ref{assump:monotone_decay_spectrum} of the paper hold. Then the profile
\(\bigl(\mathcal F_i(\dot{\widehat x}_{i,t})\bigr)_{i,t}\) is the unique
bounded solution of that system. And the first-order response of the position-tilted log-output index at every
date is the non-negative combination \eqref{eq:forced_representation} of
dated contrasts, with weights \(\varpi^{(t,s)}_{ij}(b)\ge0\) whose total at
any date is at most \((1-\underline\beta)/\underline\beta\).
\end{lemma}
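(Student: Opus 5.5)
The plan has three parts: derive the recursion \eqref{eq:supplier_recursion_lemma} from the linearized order identity, show that it has a unique bounded solution, and then unroll it into a sum over supplier paths. That path expansion gives \eqref{eq:forced_representation} with explicit non-negative weights.

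\textbf{The recursion.} Fix a firm \(i\) and a date \(t\). Each entry of the dated bundle is an order placed at \(s=t-k_{ji}\). If \(s<0\), the order predates the shock and its first-order response is zero; this is why the sum runs only over \(k_{ji}\le t\). If \(s\ge0\), the closed form \eqref{eq:bundle_closed} of Lemma~\ref{lem:order_response_decomposition} gives
\[
\dot{\widehat x}_{ji,s}=\Xi_{ij,s}+(1-\beta_j)\mathcal F_j(\dot{\widehat x}_{j,s}).
\]
Averaging these entries with the weights \(a_{ji}\), and using the definition \eqref{eq:dated_forcing} of \(\xi_{i,t}\), yields \eqref{eq:supplier_recursion_lemma} directly.

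\textbf{Uniqueness of the bounded solution.} Write the system as \(\mathcal F=\boldsymbol\xi+\mathsf T\mathcal F\), where the dated transmission operator is
\[
(\mathsf Tz)_{i,t}=\sum_{j:\,k_{ji}\le t}(1-\beta_j)a_{ji}z_{j,t-k_{ji}}.
\]
On bounded profiles with the sup norm, \(\|\mathsf T\|\le1-\underline\beta<1\), because the \(a_{ji}\) sum to one over the suppliers of \(i\). So \(\mathbf I-\mathsf T\) is invertible and the unique bounded solution is the Neumann series \(\sum_{r\ge0}\mathsf T^r\boldsymbol\xi\). Zero-lead-time links put the same date on both sides, but the contraction handles them anyway, just as in the within-period fixed point of Lemma~\ref{lem:propagation_linearization}.

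The true profile \((\mathcal F_i(\dot{\widehat x}_{i,t}))\) is a bounded solution. Its boundedness comes from differentiating the envelope \eqref{eq:usable_input_envelope} of Proposition~\ref{prop:stability_stationary} at \(\pi=0\). Alternatively, since \(\boldsymbol\xi\) is bounded by the nominal envelope \eqref{eq:stab_nominal_envelope}, \(|\Xi_{ij,s}|\le C_\Xi\overline\lambda^{\,s}(1-\theta)\), one can check boundedness directly. Either way, the true profile must coincide with the Neumann series.

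\textbf{The path representation.} Start from \(D_\pi\mathcal Q_t(b)=\sum_i\omega_i(b)(1-\beta_i)\mathcal F_i(\dot{\widehat x}_{i,t})\), substitute the Neumann series, and expand each power \(\mathsf T^r\) as a sum over supplier paths \(i_0\to i_1\to\cdots\to i_r\). Each path carries the product
\[
\omega_{i_0}(b)(1-\beta_{i_0})\prod_m(1-\beta_{i_m})a_{i_mi_{m-1}}.
\]
Expand the terminal \(\xi_{i_r,\cdot}\) by \eqref{eq:dated_forcing} into contrasts on one further link \(i_r\to j\). The coefficient multiplying \(\Xi_{i_rj,s}\) is the path mass times \(a_{ji_r}\), and the order date \(s\) equals \(t\) minus the sum of lead times along the extended path. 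Collecting these terms gives \(\varpi^{(t,s)}_{ij}(b)\), which is non-negative as a sum of products of non-negative factors.

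\textbf{Bounding the total weight.} This is the main obstacle, though it is bookkeeping rather than analysis. Summing the \(r\)-link coefficients over all endpoints and over the final link uses \(\sum_ja_{ji}=1\) at every buyer. The pass-through factors then contribute at most \((1-\underline\beta)^{r+1}\). Summing over \(r\ge0\) gives \((1-\underline\beta)/\underline\beta\).

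The care needed is in matching the \(1-\beta\) factors exactly. The factor at the start firm \(i_0\) is the one from \(D_\pi\mathcal Q_t\). Each interior factor comes from \(\mathsf T\). Absolute convergence, from the geometric bound \((1-\underline\beta)^r\) together with the bound on \(\boldsymbol\xi\), justifies rearranging the series into the sum over \((s,i,j)\). The dependence on \(b\) enters only through the starting weights \(\omega_{i_0}(b)\).
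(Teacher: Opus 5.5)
Your proposal is correct and follows the paper's proof essentially step for step. You derive the recursion from \eqref{eq:bundle_closed} at each supplier's order date, obtain the unique bounded solution as the Neumann series of the dated transmission operator with $\|\mathsf T\|_\infty\le 1-\underline\beta$, identify it with the true profile through the geometric envelope of Proposition~\ref{prop:stability_stationary}, and bound the path masses of each length by powers of $1-\underline\beta$; your indexing (paths from $\mathsf T^{r}$, $r\ge0$, extended by one link) is the paper's $\mathsf T^{r-1}$, $r\ge1$, shifted by one.
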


\begin{proof}[Proof of Lemma~\ref{lem:supplier_recursion}]
By \eqref{eq:bundle_closed} of
Lemma~\ref{lem:order_response_decomposition} at supplier \(j\)'s order
date \(s=t-k_{ji}\), and by \eqref{eq:dated_contrast},
\(\dot{\widehat x}_{ji,t-k_{ji}}=\Xi_{ij,t-k_{ji}}+(1-\beta_j)\mathcal F_j(\dot{\widehat x}_{j,t-k_{ji}})\)
when \(t-k_{ji}\ge0\), while orders placed before the shock are
independent of \(\pi\) and have \(\dot{\widehat x}_{ji,t-k_{ji}}=0\).
Averaging against the shares \(a_{ji}\) over \(j\in\mathcal S_i\) gives
\eqref{eq:supplier_recursion_lemma}.

Define the dated transmission operator on
bounded profiles \(z=(z_{i,t})_{i\in N,\,t\ge0}\) by
\begin{equation}
(\mathsf T z)_{i,t}
:=
\sum_{j\in\mathcal S_i:\,k_{ji}\le t}(1-\beta_j)\,a_{ji}\,z_{j,\,t-k_{ji}}
\label{eq:transmission_operator}
\end{equation}
Since \(\sum_j(1-\beta_j)a_{ji}\le1-\underline\beta\),
\(\|\mathsf T\|_\infty\le1-\underline\beta<1\) on the space of bounded
profiles with the sup norm, and \eqref{eq:supplier_recursion_lemma}
reads \(\mathcal F=\boldsymbol\xi+\mathsf T\mathcal F\) with
\(\boldsymbol\xi=(\xi_{i,t})_{i,t}\) bounded by the nominal envelope.
The unique bounded solution is the Neumann series
\(\mathcal F=\sum_{r\ge0}\mathsf T^r\boldsymbol\xi\), and the
first-order scale profile, bounded uniformly in \(t\) by the
geometric decay of Proposition~\ref{prop:stability_stationary}, is that
solution.

By induction on \(r\), for every
\(r\ge1\),
\[
(\mathsf T^{r-1}\boldsymbol\xi)_{i_0,t}
=
\sum_{i_1,\dots,i_{r-1}}\;\prod_{m=1}^{r-1}(1-\beta_{i_m})a_{i_mi_{m-1}}\;
\xi_{i_{r-1},\,t-\sum_{m=1}^{r-1}k_{i_mi_{m-1}}}
\]
the sum running over the supplier paths \(i_0\to i_1\to\cdots\to i_{r-1}\)
with \(\sum_{m=1}^{r-1}k_{i_mi_{m-1}}\le t\), the term \(r=1\) being
\(\xi_{i_0,t}\). And by \eqref{eq:dated_forcing} each \(\xi\) is the
\(a\)-weighted sum over a last link \(i_{r-1}\to i_r\) of the contrast
\(\Xi_{i_{r-1}i_r,\,t-k(\mathcal P)}\), \(k(\mathcal P):=\sum_{m=1}^rk_{i_mi_{m-1}}\le t\).
Multiplying by \(\omega_{i_0}(b)(1-\beta_{i_0})\), summing over \(i_0\) and
over \(r\ge1\), and using
\(D_\pi\mathcal Q_t(b)=\sum_{i}\omega_i(b)(1-\beta_i)\mathcal F_i(\dot{\widehat x}_{i,t})\)
gives \eqref{eq:forced_representation} with
\[
\varpi^{(t,s)}_{ij}(b)
:=
\sum_{\mathcal P:\,i_{r-1}=i,\;i_r=j,\;k(\mathcal P)=t-s}
\omega_{i_0}(b)(1-\beta_{i_0})\prod_{m=1}^{r}a_{i_mi_{m-1}}\prod_{m=1}^{r-1}(1-\beta_{i_m})
\;\ge\;0
\]
the sum running over the paths of every length \(r\ge1\) whose last
link is \(i\to j\) and whose summed lead time is \(t-s\).

For the paths of \(r\) links,
\[
\sum_{i_0,\dots,i_r}\omega_{i_0}(b)(1-\beta_{i_0})\prod_{m=1}^{r}a_{i_mi_{m-1}}\prod_{m=1}^{r-1}(1-\beta_{i_m})
\;\le\;
(1-\underline\beta)^r\sum_{i_0}\omega_{i_0}(b)\prod_{m=1}^{r}\Bigl(\sum_{i_m}a_{i_mi_{m-1}}\Bigr)
\;=\;(1-\underline\beta)^r
\]
since the supplier shares of each buyer sum to one and each of the
\(r\) pass-through factors is at most \(1-\underline\beta\). Summing over
\(r\ge1\) bounds the total weight at any date by
\(\sum_{r\ge1}(1-\underline\beta)^r=(1-\underline\beta)/\underline\beta\).
\end{proof}

Let \(F_i(T)\) be the share \eqref{eq:delivery_time} of the paper of
firm \(i\)'s output that reaches final demand within \(T\) periods.
And for a set of firms \(\mathcal C\) write
\(F_{\mathcal C}(T):=\sum_{i\in\mathcal C}(m_i^\ast/w^\ast)F_i(T)\) for the sales of
\(\mathcal C\) delivered to final demand within \(T\) periods, relative to
GDP. For a supplier path \(\mathcal P'=(i_0\to\cdots\to i_r)\), each \(i_m\) a
supplier of \(i_{m-1}\), with summed lead time
\(k(\mathcal P'):=\sum_{m=1}^rk_{i_mi_{m-1}}\), the sum of the lead times of its
links, write
\[
\mu_{\boldsymbol\gamma}(\mathcal P')
:=
\gamma_{i_0}(1-\beta_{i_0})\prod_{m=1}^{r-1}(1-\beta_{i_m})\prod_{m=1}^{r}a_{i_mi_{m-1}}
\ \text{ for }r\ge1,
\qquad
\mu_{\boldsymbol\gamma}(\mathcal P'):=\gamma_{i_0}\ \text{ for }r=0
\]
for its mass from the household shares. Here \(\mu_{\boldsymbol\gamma}(\mathcal P')\) is the mass \(\mu(\mathcal P')\) of the
proof of Proposition~\ref{prop:finite_horizon_sign_persistence} of the
paper with \(\gamma_{i_0}\) in place of \(\omega_{i_0}(b)\). And, for
\(L\ge0\), write \(\mu_{\boldsymbol\gamma}(l;L)\) for the total mass of the paths that end at
the buyer \(l=i_r\) with \(k(\mathcal P')\le L\). Write
\(A_t=\sum_{i\in\mathcal R}\gamma_i(1-\beta_i)\mathcal F_i(\dot{\widehat x}_{i,t})\)
for the reallocation impulse of Section~\ref{subsec:asymmetric_nonneutrality} of
the paper.

\begin{lemma}[Delivery time to final demand and the pass-through to GDP]
\label{lem:gdp_pass_through}
Suppose that \(\underline k\ge1\). Then for every firm \(l\) and
every \(L\ge0\)
\begin{equation}
\mu_{\boldsymbol\gamma}(l;L)=\frac{m_l^\ast}{w^\ast}\,F_l(L)
\label{eq:domar_pass_through}
\end{equation}
The share \(F_l(L)\) is nondecreasing in \(L\) and rises to one. So
the total mass of the paths ending at \(l\) is its Domar weight
\(m_l^\ast/w^\ast\), and the paths ending in a set \(\mathcal C\) with summed
lead time at most \(L\) have mass \(F_{\mathcal C}(L)\). If
Assumption~\ref{assump:retail_outflow_cap} of the paper holds, then
for every tilt \(b\in[0,\infty]\) and every retailer
\(i\in\mathcal R\),
\begin{equation}
\omega_i(b)\;\le\;\epsilon_h^{-1}\,\gamma_i
\label{eq:tilt_household_comparison}
\end{equation}
So the paths from the tilt weights that start on the retail tier, end
in a set \(\mathcal C\) and have summed lead time at most \(L\) have mass at most
\(\epsilon_h^{-1}F_{\mathcal C}(L)\). If
Assumptions~\ref{assump:diffuse_incidence}
and~\ref{assump:monotone_decay_spectrum} of the paper hold, then the
reallocation impulse has the expansion \eqref{eq:forced_representation}
with the household shares in place of the tilt weights,
\begin{equation}
A_t
=
\sum_{s=0}^{t}\;\sum_{(i,j):\,j\in\mathcal S_i}
\varpi^{(t,s)}_{ij}(\boldsymbol\gamma)\,\Xi_{ij,s}
=:I_t(\boldsymbol\gamma)+J_t(\boldsymbol\gamma)
\label{eq:gdp_path_expansion}
\end{equation}
Here \(I_t(\boldsymbol\gamma)\) is the part carried by the impact-date
contrasts and \(J_t(\boldsymbol\gamma)\) the rest. And for every
horizon \(T\ge0\)
\begin{equation}
\sum_{t=0}^{T}I_t(\boldsymbol\gamma)
=
\sum_{l\in N}\sum_{k\in\mathcal K_l}\frac{m_l^\ast}{w^\ast}\,F_l(T-k)\,s_l^{(k)}\phi_l^{(k)},
\qquad
J_t(\boldsymbol\gamma)=0\ \text{ for every }t\le\underline k
\label{eq:window_impact_component}
\end{equation}
\end{lemma}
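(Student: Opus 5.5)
The plan has four steps, taken in the order of the statement: the Domar identity \eqref{eq:domar_pass_through}, the comparison \eqref{eq:tilt_household_comparison}, the path expansion \eqref{eq:gdp_path_expansion}, and the windowed identity \eqref{eq:window_impact_component}.

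\emph{Domar identity.} I would prove \eqref{eq:domar_pass_through} by a recursion on the last link of the path rather than the first. Let $\mu_{\boldsymbol\gamma}(l;L)$ be the mass of paths ending at $l$ with summed lead time at most $L$. The path of length zero contributes $\gamma_l$. Every longer path ending at $l$ is a path $\mathcal P''$ ending at some buyer $l'$ of $l$, extended by the link $l'\to l$. The extension multiplies the mass by $(1-\beta_{l'})a_{ll'}$ and adds $k_{ll'}$ to the lead time. This gives
\[
\mu_{\boldsymbol\gamma}(l;L)=\gamma_l+\sum_{l'}(1-\beta_{l'})a_{ll'}\,\mu_{\boldsymbol\gamma}(l';L-k_{ll'}),
\]
with $\mu_{\boldsymbol\gamma}(l;L)=0$ for $L<0$. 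Next I would multiply the delivery-time recursion \eqref{eq:delivery_time} by $m_l^\ast/w^\ast$ and use $m_l^\ast\Omega_{ll'}=(1-\beta_{l'})a_{ll'}m_{l'}^\ast$ together with $r_lm_l^\ast/w^\ast=\gamma_l$. This shows that $(m_l^\ast/w^\ast)F_l(L)$ satisfies the same recursion. Because $\underline k\ge1$, every link has positive lead time, so the recursion determines both sides uniquely by induction on $L$, and they are equal. The function $F_l$ is nondecreasing by the same induction. It rises to one because, letting $L\to\infty$, both recursions become the sales identity \eqref{eq:firm_sales_accounting}, whose solution is unique since $\mathbf I-\mathbf A_\beta$ is invertible.

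\emph{Tilt comparison.} For \eqref{eq:tilt_household_comparison}, take a retailer $i\in\mathcal R$. Its tilt weight is $\omega_i(b)=m_i^\ast/\overline M(b)$, because $\widehat\psi_i=1$. The key facts are $\overline M(b)\ge M_{\mathcal R}\ge w^\ast$ and $m_i^\ast\le\gamma_iw^\ast/\epsilon_h$, the latter since $r_i\ge\epsilon_h$ by Assumption~\ref{assump:retail_outflow_cap}. For $\overline M(b)\ge w^\ast$ it is enough that $\sum_{i\in\mathcal R}m_i^\ast\ge\sum_i\gamma_iw^\ast=w^\ast$, which holds because $m_i\ge\gamma_iw$. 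Path mass is linear in the starting weight, so the path bound follows from the Domar identity.

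\emph{Path expansion.} The expansion \eqref{eq:gdp_path_expansion} would repeat the proof of Lemma~\ref{lem:supplier_recursion} with $\gamma_i$ on $\mathcal R$ in place of $\omega_i(b)$. That argument used only the nonnegativity of the weights, and the total weight $\sum_i\gamma_i=1$ still gives the $(1-\underline\beta)/\underline\beta$ bound.

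\emph{Windowed identity.} For \eqref{eq:window_impact_component}, I would group the impact-date terms at each final buyer $l$ by vintage, as in \eqref{eq:vintage_grouping}. This gives
\[
I_t(\boldsymbol\gamma)=\sum_{\mathcal P'}\mu_{\boldsymbol\gamma}(\mathcal P')\,s_l^{(t-k(\mathcal P'))}\phi_l^{(t-k(\mathcal P'))}.
\]
Summing over $t\le T$ and substituting $k=t-k(\mathcal P')$, each pair (path ending at $l$, vintage $k$) appears exactly when $k(\mathcal P')\le T-k$. The total mass of such paths is $\mu_{\boldsymbol\gamma}(l;T-k)=(m_l^\ast/w^\ast)F_l(T-k)$, which is the identity. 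Finally, $J_t(\boldsymbol\gamma)=0$ for $t\le\underline k$ because each post-impact contrast sits at date $s\ge1$ and reaches date $t$ only along a path with $k(\mathcal P)\ge\underline k$, so $t\ge\underline k+1$. This is the same argument as in Proposition~\ref{prop:finite_horizon_sign_persistence}.

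I expect the main obstacle to be the bookkeeping in the reindexing for the windowed identity and the recursion for the Domar identity. In particular, the $r=0$ path must be handled consistently: its mass is $\gamma_{i_0}$ without the factor $1-\beta_{i_0}$, while extended paths carry that factor. I would check this against the mass definitions before anything else.
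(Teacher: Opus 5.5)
Your proposal is correct and follows the paper's proof step for step. The steps are: the last-link recursion showing that \(w^\ast\mu_{\boldsymbol\gamma}(l;L)/m_l^\ast\) solves the delivery-time recursion, uniquely because \(\underline k\ge1\); the bounds \(\sum_j m_j^\ast\widehat\psi_j^{\,b}\ge M_{\mathcal R}\ge w^\ast\) and \(m_i^\ast\le\gamma_iw^\ast/\epsilon_h\); reuse of Lemma~\ref{lem:supplier_recursion} with the household weights; and vintage grouping for the window identity and for \(J_t(\boldsymbol\gamma)=0\). The one cosmetic difference is that you obtain \(F_l(\infty)=1\) from the sales identity \eqref{eq:firm_sales_accounting} and the invertibility of \(\mathbf I-\mathbf A_\beta\), while the paper uses \(\boldsymbol\Omega\mathbf 1=\mathbf 1-\mathbf r\); these are the same fact under the similarity \(\boldsymbol\Omega=\mathbf D_m^{-1}\mathbf A_\beta\mathbf D_m\), and you should state that the monotone limit is finite before passing to it (e.g.\ \(F_l(L)\le1\) by the same induction).
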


\begin{proof}
A path of one or more links ending at
\(l\) is a path \(\mathcal P''\) with one link fewer, ending at a buyer \(j\) of
which \(l\) is a supplier, extended by the link \(j\to l\), with
\(\mu_{\boldsymbol\gamma}(\mathcal P')=\mu_{\boldsymbol\gamma}(\mathcal P'')(1-\beta_j)a_{lj}\)
and \(k(\mathcal P')=k(\mathcal P'')+k_{lj}\). What firm \(j\) spends on good \(l\) is the
share \(\Omega_{lj}\) of \(l\)'s sales,
\(m_j^\ast(1-\beta_j)a_{lj}=m_l^\ast\Omega_{lj}\), so
\(\mu_{\boldsymbol\gamma}(\mathcal P')=\mu_{\boldsymbol\gamma}(\mathcal P'')\,(m_l^\ast/m_j^\ast)\,\Omega_{lj}\).
The one-firm path at \(l\) has mass \(\gamma_l=r_lm_l^\ast/w^\ast\)
and summed lead time zero. Summing over the paths with \(k(\mathcal P')\le L\), and
setting \(\mu_{\boldsymbol\gamma}(l;L):=0\) for \(L<0\), since no path has a negative
summed lead time,
\[
\mu_{\boldsymbol\gamma}(l;L)
=
\frac{m_l^\ast}{w^\ast}\,r_l
+\sum_{j\in N}\frac{m_l^\ast}{m_j^\ast}\,\Omega_{lj}\,\mu_{\boldsymbol\gamma}(j;L-k_{lj}),
\qquad L\ge0
\]
so \(\widetilde F_l(L):=w^\ast\mu_{\boldsymbol\gamma}(l;L)/m_l^\ast\) obeys
\(\widetilde F_l(L)=r_l+\sum_j\Omega_{lj}\widetilde F_j(L-k_{lj})\) for \(L\ge0\) and
\(\widetilde F_l(L)=0\) for \(L<0\), the recursion \eqref{eq:delivery_time} of
the paper. Every lead time is at least one, so the recursion determines its
solution level by level in \(L\) from the values at negative \(L\),
and \(\widetilde F=F\), which is \eqref{eq:domar_pass_through}. By induction on
\(L\), \(F_l(L)\) is nondecreasing in \(L\) and at most
\(r_l+\sum_j\Omega_{lj}=1\), so it has a limit \(F_l(\infty)\), which
obeys \(F_l(\infty)=r_l+\sum_j\Omega_{lj}F_j(\infty)\). The matrix
\(\boldsymbol\Omega=\mathbf D_m^{-1}\mathbf A_\beta\mathbf D_m\) has
the spectral radius of \(\mathbf A_\beta\), which is below one at the
stationary equilibrium, so this system has the unique solution
\((\mathbf I-\boldsymbol\Omega)^{-1}\mathbf r\), and since
\(\boldsymbol\Omega\mathbf 1=\mathbf 1-\mathbf r\) that solution is
\(\mathbf 1\). The total mass of the paths ending at \(l\) is therefore
\(m_l^\ast/w^\ast\). Summing \eqref{eq:domar_pass_through} over
\(l\in\mathcal C\) gives the mass of the paths ending in \(\mathcal C\).

Every retailer has \(r_i\ge\epsilon_h\)
by Assumption~\ref{assump:retail_outflow_cap} and
\(\gamma_i=r_im_i^\ast/w^\ast\), so
\(m_i^\ast\le\gamma_iw^\ast/\epsilon_h\). The capped position
\(\widehat\psi\) of \eqref{eq:w_b_def} is one on the retail tier, so
the denominator of the tilt weights is at least its retail part,
\(\sum_{j\in N}m_j^\ast\widehat\psi_j^{\,b}\ge\sum_{j\in\mathcal R}m_j^\ast\ge w^\ast\),
since \(m_j^\ast\ge\gamma_jw^\ast\) and the household shares sum to
one. Hence \(\omega_i(b)\le m_i^\ast/w^\ast\le\epsilon_h^{-1}\gamma_i\)
for every \(i\in\mathcal R\) and every \(b\), which is
\eqref{eq:tilt_household_comparison}. The mass \(\mu(\mathcal P')\) of a path
from the tilt weights is \(\omega_{i_0}(b)\) times a product that does not
involve the weights. Hence a path starting at a retailer \(i_0\) has at
most \(\epsilon_h^{-1}\) times its mass from the household shares,
and the bound on the paths that start on the retail tier follows from
\eqref{eq:domar_pass_through}.

The proof of
Lemma~\ref{lem:supplier_recursion} uses the weights only through the
linear combination
\(\sum_i\omega_i(1-\beta_i)\mathcal F_i(\dot{\widehat x}_{i,t})\), so the
expansion holds for any non-negative weights, and \(A_t\) is that
combination at the household shares, which is
\eqref{eq:gdp_path_expansion}. For the identity \eqref{eq:window_impact_component}, a path
\(\mathcal P'\) ending at \(l\) carries the impact-date impulse of the vintage
\(k=t-k(\mathcal P')\) to the date \(t\), grouped by vintage as in
\eqref{eq:vintage_grouping} of the paper. Hence the terms of
\(I_t(\boldsymbol\gamma)\) with \(t\le T\) are those with
\(k(\mathcal P')\le T-k\), whose mass is \(\mu_{\boldsymbol\gamma}(l;T-k)\), and \eqref{eq:domar_pass_through} gives
the first identity. For the post-impact part, each path \(\mathcal P'\)
extended by a link \(l\to j\) carries a contrast of date \(s\ge1\) to
the single date \(t=k(\mathcal P')+k_{jl}+s\), which is at least
\(\underline k+1\) because every lead time is at least \(\underline k\). Hence
\(J_t(\boldsymbol\gamma)=0\) at every \(t\le\underline k\).
\end{proof}

For a downstream-concentrated tilt \(b\) and a date \(t\), let
\[
\mu_t(\mathcal T)
:=
\sum_{(i,j):\,i\in\mathcal T}\frac{\varpi^{(t,0)}_{ij}(b)}{1-\beta_i}
\]
be the tilt mass on the impact-date contrasts at buyers in the top
tier \(\mathcal T\), scaled by the buyers' intermediate-input shares,
as in Section~\ref{subsec:sign_reversing_irf} of the paper. And let
\(r_a\), \(\overline\tau\) and \(\underline\mu\) be the constants
\eqref{eq:reversal_constants} of the paper.

\begin{lemma}[Arrival of the cascade at the retail tier]
\label{lem:layered_routing}
Suppose that Assumptions~\ref{assump:transmission_delay},
\ref{assump:retail_outflow_cap} and~\ref{assump:lag_position_sorting}
of the paper hold and that \(\underline k\ge1\). Then for every
downstream-concentrated tilt \(b\) there is a date
\(\tau^-\in[\underline k+k_\ast,\overline\tau]\) with
\(\mu_{\tau^-}(\mathcal T)\ge\underline\mu\). Every retail lead-time
date is below \(k_\ast\), so \(\tau^-\) lies beyond the retail lead-time
dates \(\mathcal W\).
\end{lemma}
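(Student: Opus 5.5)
The proof is a path-counting argument on the representation of Lemma~\ref{lem:supplier_recursion}: a lower bound on the \emph{total} impact-date tilt mass that reaches the top tier over the window \([\underline k+k_\ast,\overline\tau]\), followed by pigeonhole over the dates of that window.

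\emph{Reading paths as a walk.} Since \(\mu_t(\mathcal T)\) divides each \(\varpi^{(t,0)}_{ij}(b)\) by \(1-\beta_i\), the contribution of a path \(\mathcal P'=(i_0\to\cdots\to i_r)\) ending at a buyer \(l=i_r\in\mathcal T\) and extended by a link \(l\to j\) of lead time \(k\) is \(\mu(\mathcal P')\,a_{jl}\). This term is counted at the date \(t=k(\mathcal P')+k\). Here \(\mu(\mathcal P')\) is as in the proof of Proposition~\ref{prop:finite_horizon_sign_persistence}, namely \(\omega_{i_0}(b)\) times \(r\) pass-through factors, each at least \(1-\overline\beta\), times the supplier shares along the path. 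The shares \(a_{i_mi_{m-1}}\) are the transition probabilities of a Markov walk that moves from each buyer to one of its suppliers. So \(\mu(\mathcal P')\ge\omega_{i_0}(b)(1-\overline\beta)^r\Pr(\mathcal P')\), where \(\Pr(\mathcal P')\) is the probability of that path under the walk. All terms are nonnegative, so it suffices to keep only the paths that start on the retail tier and stop at their \emph{first} entry into \(\mathcal T\).

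\emph{Hitting the top tier within \(r_a\) rounds.} Start the walk at a retailer and let \(\tau_{\mathcal T}\) be its first entry time into \(\mathcal T\). By Assumption~\ref{assump:transmission_delay}, every firm off \(\mathcal T\) satisfies \(\mathcal F_i(\psi)\le\psi_i-\Delta_\psi\). Hence \(\psi(X_{r\wedge\tau_{\mathcal T}})+\Delta_\psi(r\wedge\tau_{\mathcal T})\) is a supermartingale. Because \(\psi\in[0,1]\), optional stopping at bounded times gives \(\Delta_\psi\,\mathbb E[\tau_{\mathcal T}\wedge r]\le1\), and monotone convergence then gives \(\mathbb E\tau_{\mathcal T}\le1/\Delta_\psi\). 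Markov's inequality yields \(\Pr(\tau_{\mathcal T}\le r_a)\ge1/2\) with \(r_a=\lceil2/\Delta_\psi\rceil\). No retailer lies in \(\mathcal T\), since \(\psi_T<\psi_R-\Delta_+\), so \(\tau_{\mathcal T}\ge1\). The retail tier carries tilt mass at least \(1-\overline\epsilon_w\ge7/8\), because \(\overline\epsilon_w\le1/8\) by \eqref{eq:downstream_threshold}. The first-entry paths of length at most \(r_a\) therefore carry mass at least \(\tfrac7{16}(1-\overline\beta)^{r_a}\).

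\emph{Extending by a slow link and dating it.} At the entry buyer \(l\in\mathcal T\), Assumption~\ref{assump:lag_position_sorting} gives \(\sum_{k\ge k_\ast}s_l^{(k)}\ge1-\epsilon_k\). Extending only by links of lead time at least \(k_\ast\) keeps mass at least \(\tfrac7{16}(1-\epsilon_k)(1-\overline\beta)^{r_a}\). Each such term is dated at \(t=k(\mathcal P')+k\). Since \(r\ge1\) and every lead time is at least \(\underline k\), we have \(k(\mathcal P')\ge\underline k\), so \(t\ge\underline k+k_\ast\). Since \(r\le r_a\) and every lead time is at most \(\widehat k\), we have \(t\le(r_a+1)\widehat k=\overline\tau\).

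\emph{Pigeonhole and the final claim.} Summing \(\mu_t(\mathcal T)\) over the \(\overline\tau-\underline k-k_\ast+1\) dates of the window gives at least the mass just bounded. So some date \(\tau^-\) in the window has \(\mu_{\tau^-}(\mathcal T)\ge\underline\mu\), which is exactly the constant in \eqref{eq:reversal_constants}. The last sentence of the lemma follows from Assumption~\ref{assump:lag_position_sorting}. The retail tier places at most the share \(\epsilon_k<s_R\) of its expenditure on lead times of \(k_\ast\) or more, so every retail lead-time date is below \(k_\ast\le\underline k+k_\ast\le\tau^-\).

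The main obstacle is the hitting-time step. Optional stopping has to be carried out carefully, because Assumption~\ref{assump:transmission_delay} gives drift only off \(\mathcal T\). One must also check that discarding the paths that revisit \(\mathcal T\), and the paths that start off the retail tier, is legitimate; it is, because all path weights are nonnegative.
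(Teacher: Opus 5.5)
Your proposal is correct and is essentially the paper's own proof. It uses the same supplier walk stopped at its first entry into \(\mathcal T\), the same supermartingale \(\psi_{\mathsf X_{r\wedge r_{\mathcal T}}}+\Delta_\psi\,(r\wedge r_{\mathcal T})\) with Markov's inequality at \(r_a=\lceil 2/\Delta_\psi\rceil\), the same conversion of walk probability into tilt mass at the cost \((1-\overline\beta)^{r_a}\), the same extension by links of lead time at least \(k_\ast\) at the entry buyer, and the same pigeonhole over the \(\overline\tau-\underline k-k_\ast+1\) dates of the window; as in the paper, the disjointness of \(\mathcal R\) and \(\mathcal T\) comes from the standing convention \(\psi_T<\psi_R-\Delta_+\) of Section~\ref{subsec:finite_horizon_persistence}, not from the three assumptions listed in the statement.
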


\begin{proof}[Proof of Lemma~\ref{lem:layered_routing}]
Let \(\mathsf X\) be the supplier walk started
from the tilt, \(\Pr(\mathsf X_0=i)=\omega_i(b)\), moving from a buyer \(l\) to the
supplier \(i\) with probability \(a_{il}\), and stopped on first entry into
the top tier, at \(r_{\mathcal T}:=\inf\{r\ge0:\mathsf X_r\in\mathcal T\}\). We first show
that the walk is absorbed within \(r_a\) steps with probability at least
one half. While the walk
runs it sits at firms outside \(\mathcal T\), where
Assumption~\ref{assump:transmission_delay} applies, so that
\[
\mathbb E\bigl[\psi_{\mathsf X_{r+1}}\,\big|\,\mathsf X_r=l\bigr]
=
\mathcal F_l(\psi)
\;\le\;
\psi_l-\Delta_\psi,
\qquad l\notin\mathcal T
\]
The walk therefore drifts upstream by at least \(\Delta_\psi\) per step until it
is absorbed. Set \(\mathsf M_r:=\psi_{\mathsf X_{r\wedge r_{\mathcal T}}}+\Delta_\psi\,(r\wedge r_{\mathcal T})\). On
\(\{r<r_{\mathcal T}\}\),
\(\mathbb E[\mathsf M_{r+1}\mid \mathsf X_0,\dots,\mathsf X_r]\le\psi_{\mathsf X_r}-\Delta_\psi+\Delta_\psi(r+1)=\mathsf M_r\),
and on \(\{r\ge r_{\mathcal T}\}\), \(\mathsf M_{r+1}=\mathsf M_r\), so \(\mathsf M\) is a supermartingale
and \(\mathbb E[\mathsf M_r\mid \mathsf X_0]\le \mathsf M_0=\psi_{\mathsf X_0}\le1\). Since \(\psi\ge0\),
\(\Delta_\psi\,\mathbb E[r\wedge r_{\mathcal T}\mid \mathsf X_0]\le1\) for every \(r\), and
monotone convergence gives
\[
\mathbb E[r_{\mathcal T}\mid \mathsf X_0]\;\le\;\frac{1}{\Delta_\psi},
\qquad
\Pr(r_{\mathcal T}>r_a\mid \mathsf X_0)
\;\le\;
\frac{\mathbb E[r_{\mathcal T}\mid \mathsf X_0]}{r_a+1}
\;<\;
\frac{1}{\Delta_\psi\,r_a}
\;\le\;\tfrac12
\]
by Markov's inequality and \(r_a=\lceil2/\Delta_\psi\rceil\).
The walk is absorbed within \(r_a\) steps with probability at least one
half, whatever its starting firm. Restricting to the starts on the retail
tier, which carries tilt mass at least \(1-\overline\epsilon_w\ge7/8\),
since \(\overline\epsilon_w<1/8\) by \eqref{eq:downstream_threshold} of
the paper, and is disjoint from \(\mathcal T\), since
\(\psi_T<\psi_R\),
\[
\Pr\bigl(\mathsf X_0\in\mathcal R,\ r_{\mathcal T}\le r_a\bigr)
\;\ge\;
\tfrac{7}{16}
\]

We now convert walk probabilities into tilt mass. A realization
\((\mathsf X_0,\dots,\mathsf X_{r_{\mathcal T}})\) with \(r_{\mathcal T}=r\) is a supplier path \(\mathcal P'\) of \(r\)
links ending at the buyer \(\mathsf X_{r_{\mathcal T}}\in\mathcal T\), with walk probability
\(\omega_{\mathsf X_0}(b)\prod_{m=1}^ra_{\mathsf X_m\mathsf X_{m-1}}\) and tilt mass
\(\mu(\mathcal P')=\omega_{\mathsf X_0}(b)(1-\beta_{\mathsf X_0})\prod_{m=1}^{r-1}(1-\beta_{\mathsf X_m})\prod_{m=1}^ra_{\mathsf X_m\mathsf X_{m-1}}\)
in the notation of the proof of
Proposition~\ref{prop:finite_horizon_sign_persistence}. The two differ by
the pass-through factors of the \(r\) firms \(\mathsf X_0,\dots,\mathsf X_{r-1}\), so
\(\mu(\mathcal P')\) is at least \((1-\overline\beta)^r\) times the walk
probability, and at least \((1-\overline\beta)^{r_a}\) times it when
\(r\le r_a\). Extending \(\mathcal P'\) by one link to a supplier \(j\) of the buyer
\(l=\mathsf X_{r_{\mathcal T}}\) carries the contrast \(\Xi_{lj,0}\) at the date \(t=k(\mathcal P')+k_{jl}\),
and the mass it carries, scaled by \(1-\beta_l\), is \(\mu(\mathcal P')a_{jl}\).
The buyer \(l\) lies in the top tier, so by
Assumption~\ref{assump:lag_position_sorting} its suppliers of lead time
\(k_\ast\) or more carry at least the share \(1-\epsilon_k\) of its
expenditure, \(\sum_{j\in\mathcal S_l:\,k_{jl}\ge k_\ast}a_{jl}\ge1-\epsilon_k\).
For these suppliers the date \(k(\mathcal P')+k_{jl}\) lies in
\([\underline k+k_\ast,\overline\tau]\): the walk makes at least one
step, since \(\mathsf X_0\in\mathcal R\) and \(\mathsf X_{r_{\mathcal T}}\in\mathcal T\), so
\(k(\mathcal P')\ge r\underline k\ge\underline k\), while
\(k(\mathcal P')\le r\widehat k\le r_a\widehat k\) and \(k_{jl}\le\widehat k\).
Distinct realizations are distinct paths, and every path counted here
has its last buyer in \(\mathcal T\), so
\begin{align*}
\sum_{t=\underline k+k_\ast}^{\overline\tau}\mu_t(\mathcal T)
&\;\ge\;
(1-\epsilon_k)\sum_{\mathcal P'}\mu(\mathcal P')
\;\ge\;
(1-\epsilon_k)(1-\overline\beta)^{r_a}\Pr\bigl(\mathsf X_0\in\mathcal R,\ r_{\mathcal T}\le r_a\bigr)\\
&\;\ge\;
\tfrac{7}{16}\,(1-\epsilon_k)(1-\overline\beta)^{r_a}
\end{align*}
the middle sum running over the realizations with \(\mathsf X_0\in\mathcal R\)
and \(r_{\mathcal T}\le r_a\). Some date \(\tau^-\) among the
\(\overline\tau-\underline k-k_\ast+1\) dates of the window therefore
carries at least \(\underline\mu\).

Finally, at a retail lead-time date \(t\) the retail tier places at least
the share \(s_R\) of its input expenditure on suppliers of lead time \(t\),
by the definition \eqref{eq:retail_lag_dates}. But by
Assumption~\ref{assump:lag_position_sorting} every retailer, and so
the retail tier, places at most the share \(\epsilon_k<s_R\) on
suppliers of lead time \(k_\ast\) or more, so \(t<k_\ast\), and
\(\tau^-\ge\underline k+k_\ast>t\).
\end{proof}

\begin{lemma}[The economy without dated inputs]
\label{lem:no_dated_inputs}
Suppose that Assumptions~\ref{assump:diffuse_incidence}
and~\ref{assump:monotone_decay_spectrum} of the paper hold, that
\(|\pi|\le\overline\pi\), and that \(k_{ji}=0\) for every purchase.
Then at every firm \(i\) and date \(t\),
\(V_i^{\mathrm b}(\widehat x_{i,t})=0\) and
\(\mathcal D_{i,t}=\tfrac12(1-\beta_i)(1-\rho)\operatorname{Var}_{\mathcal F_i}(\widehat x_{i,t})\),
and at every date \(t\ge0\), \(P_t=0\) and
\(f_t(\pi)=-\Lambda_t(\pi)\le0\).
\end{lemma}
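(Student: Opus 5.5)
The plan is to treat the three claims of Lemma~\ref{lem:no_dated_inputs} in turn. Each of them should follow from the fact that, with every lead time zero, every firm's bundle has a single vintage layer and nothing is ever outstanding.

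First, the dispersion claim. When \(k_{ji}=0\) for every purchase, each firm has one active vintage, \(\mathcal K_i=\{0\}\), and \(s_i^{(0)}=1\). Then \(\mathcal F_i^{(0)}=\mathcal F_i\), so the between-vintage variance \eqref{eq:vintage_variances} vanishes identically. By the law of total variance \eqref{eq:ltv_decomposition}, the within-vintage variance equals \(\operatorname{Var}_{\mathcal F_i}\). Substituting into \eqref{eq:firm_inefficiency_def} gives \(\mathcal D_{i,t}=\tfrac12(1-\beta_i)(1-\rho)\operatorname{Var}_{\mathcal F_i}(\widehat x_{i,t})\). This step is immediate.

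Second, the pipeline claim. By Definition~\ref{def:pipeline}, a pipeline layer collects only the orders with \(k_{ji}\ge k+1\ge1\). With every lead time zero, every layer is empty and the pipeline state is trivial at every date. The excess value \(P_t\) of the goods in transit is the stationary-price value of outstanding orders in excess of the stationary pipeline, so it is an empty sum: \(P_t=0\) for all \(t\), and hence the drawdown is \(\Pi_t=(P_t-P_{t+1})/w^\ast=0\).

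Third, the GDP claim. The resource identity \eqref{eq:resource_identity} of Lemma~\ref{lem:resource_accounting_date} then gives \(f_t(\pi)=\Pi_t-\Lambda_t=-\Lambda_t(\pi)\), and \(\Lambda_t\ge0\) by the same lemma.

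The one point that needs care is that Lemma~\ref{lem:resource_accounting_date} and the trajectory must be available in the case \(\widehat k=0\). Here I would check three things:
\begin{itemize}
\item Proposition~\ref{prop:stability_stationary} explicitly covers \(\widehat k=0\): the within-period fixed point of Lemma~\ref{lem:propagation_linearization} has a unique solution, so the trajectory exists for \(|\pi|\le\overline\pi\).
\item The derivation of the resource identity uses only the accounting of goods produced into consumption plus new orders, together with the definitions of \(E_t\) and \(H_t\). None of these requires a positive lead time.
\item The paper imposes \(\underline k\ge1\) only in Section~\ref{sec:reallocation}, and Lemma~\ref{lem:resource_accounting_date} does not assume it.
\end{itemize}
The main obstacle is therefore confirming that the resource identity of Lemma~\ref{lem:resource_accounting_date} holds verbatim with an empty pipeline. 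If its statement implicitly assumed a nonempty pipeline, I would instead rederive it directly: date-\(t\) output is split between consumption and same-date orders, and labor is fixed, so the consumption aggregate falls short of its stationary value exactly by the production and consumption losses.
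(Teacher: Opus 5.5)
Your proposal is correct and follows the paper's proof step for step. A single vintage layer kills \(V_i^{\mathrm b}\) and reduces \(\mathcal D_{i,t}\) to the stated form, the empty pipeline gives \(P_t=0\), and the resource identity of Lemma~\ref{lem:resource_accounting_date} then yields \(f_t(\pi)=-\Lambda_t(\pi)\le0\). Your extra checks are sound and only make explicit what the paper takes for granted: the trajectory exists when \(\widehat k=0\), and the resource identity needs no positive lead time, since a zero-lead-time purchase drops out of the bracketed sum in its proof.
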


\begin{proof}
Every bundle has a single vintage layer, so the between-vintage
variance in \eqref{eq:ltv_decomposition} vanishes, and the loss from
production inefficiency \eqref{eq:firm_inefficiency_def} of the paper
reduces to the stated form. Every order is used in the period in which
it is placed, so no order is outstanding at any date and \(P_t=0\). The resource identity
\eqref{eq:resource_identity} of the paper then reads
\(f_t(\pi)=-\Lambda_t(\pi)\), and \(\Lambda_t\ge0\)
(Lemma~\ref{lem:resource_accounting_date}).
\end{proof}

\subsection{Lemmas for Section~\ref{sec:asymmetry}: Monetary Non-neutrality}
\label{app:lemmas_asymmetry}

\begin{lemma}[Asymmetry of a concave response]
\label{lem:concave_response_asymmetry}
Fix a finite nonempty set of dates \(W\), and suppose that at every
\(t\in W\) the function \(f_t\) is twice continuously differentiable
near zero with \(f_t(0)=0\), and that
\[
\min_{t\in W}\dot f_t>0,
\qquad
\max_{t\in W}\ddot f_t<0
\]
Then there is \(\pi_\ast>0\) such that, for every \(t\in W\) and every
\(0<\pi\le\pi_\ast\),
\begin{equation}
f_t(\pi)>0>f_t(-\pi),
\qquad
|f_t(-\pi)|>|f_t(\pi)|
\label{eq:asymmetry_inequality_lemma}
\end{equation}
\end{lemma}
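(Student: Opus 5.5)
The plan is a second-order Taylor expansion with explicit control of the remainder, applied separately at each of the finitely many dates in \(W\) and then combined by taking the smallest radius. At each \(t\in W\), write \(a_t:=\dot f_t>0\) and \(c_t:=-\ddot f_t>0\), and put \(a:=\min_{t\in W}a_t>0\) and \(c:=\min_{t\in W}c_t>0\). These minima exist and are positive because \(W\) is finite and nonempty.

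Since \(f_t\) is \(C^2\) near zero and \(f_t(0)=0\), Taylor's theorem with the Lagrange remainder gives
\[
f_t(\pi)=a_t\pi+\tfrac12\ddot f_t(\eta_t\pi)\,\pi^2
\]
for some \(\eta_t\in(0,1)\). By continuity of \(\ddot f_t\) at zero, we can choose \(\pi_1>0\) such that \(\ddot f_t(u)\le-c/2\) for every \(|u|\le\pi_1\) and every \(t\in W\). We can also take \(\pi_1\) small enough that each \(f_t\) is defined and \(C^2\) on \([-\pi_1,\pi_1]\).

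Next we handle the signs. For \(0<\pi\le\pi_1\), the bound on \(\ddot f_t\) gives \(f_t(-\pi)\le-a_t\pi-\tfrac{c}{4}\pi^2<0\) directly. For the positive side, let \(K:=\max_{t\in W}\sup_{|u|\le\pi_1}|\ddot f_t(u)|<\infty\). Then \(f_t(\pi)\ge a\pi-\tfrac12K\pi^2>0\) as soon as \(\pi<2a/K\).

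For the asymmetry, consider the sum \(f_t(\pi)+f_t(-\pi)\). Its first-order terms cancel, leaving
\[
f_t(\pi)+f_t(-\pi)=\tfrac12\bigl[\ddot f_t(\eta_t\pi)+\ddot f_t(-\eta'_t\pi)\bigr]\pi^2\le-\tfrac{c}{2}\pi^2<0.
\]
Once the signs are established, \(|f_t(-\pi)|-|f_t(\pi)|=-f_t(-\pi)-f_t(\pi)>0\), which is the claimed asymmetry.

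Finally, set \(\pi_\ast:=\min\{\pi_1,\,a/K\}\), with \(\pi_\ast:=\pi_1\) if \(K=0\). This single radius works simultaneously for all \(t\in W\).

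No step here is a genuine obstacle. The only point needing care is uniformity across dates, and finiteness of \(W\) together with continuity of the second derivatives supplies it. If only twice differentiability at zero were available rather than \(C^2\), I would instead use the Peano form \(f_t(\pi)=a_t\pi+\tfrac12\ddot f_t\pi^2+o(\pi^2)\); the even part \(f_t(\pi)+f_t(-\pi)=\ddot f_t\pi^2+o(\pi^2)\) gives the same conclusion.
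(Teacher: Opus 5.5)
Your proof is correct and follows essentially the same route as the paper's: a second-order Taylor expansion at each of the finitely many dates, with one radius chosen so that the first-order term fixes the signs and the negative second derivative makes the even part \(f_t(\pi)+f_t(-\pi)\) negative. The only difference is the remainder. The paper uses the Peano form \(\varepsilon_t(\pi)=o(\pi^2)\) with \(|\varepsilon_t(\pm\pi)|\le\tfrac14 c\,\pi^2\), which is the variant you mention at the end, while your Lagrange form relies on the continuity of \(f_t''\) near zero that the hypothesis supplies.
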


\begin{proof}
Write \(f_t(\pi)=\dot f_t\pi+\tfrac12\ddot f_t\pi^2+\varepsilon_t(\pi)\) with
\(\varepsilon_t(\pi)=o(\pi^2)\), and set
\(c_{\dot f}:=\min_W\dot f_t>0\) and \(c_{\ddot f}:=\min_W(-\ddot f_t)>0\). Since \(W\) is
finite there is a single \(\pi_\ast>0\) such that, for every \(t\in W\) and
\(0<\pi\le\pi_\ast\), \(|\varepsilon_t(\pm\pi)|\le\tfrac14c_{\ddot f}\pi^2\) and
\(\tfrac12|\ddot f_t|\pi+\tfrac14c_{\ddot f}\pi\le\tfrac12c_{\dot f}\). Then
\(f_t(\pi)\ge \dot f_t\pi-\tfrac12|\ddot f_t|\pi^2-\tfrac14c_{\ddot f}\pi^2\ge\tfrac12c_{\dot f}\pi>0\)
and
\(f_t(-\pi)\le-\dot f_t\pi+\tfrac12|\ddot f_t|\pi^2+\tfrac14c_{\ddot f}\pi^2\le-\tfrac12c_{\dot f}\pi<0\),
which are the two signs, and
\[
f_t(\pi)+f_t(-\pi)
=
\ddot f_t\pi^2+\varepsilon_t(\pi)+\varepsilon_t(-\pi)
\le
-c_{\ddot f}\pi^2+\tfrac12c_{\ddot f}\pi^2
<0
\]
so \(|f_t(-\pi)|-|f_t(\pi)|=-f_t(-\pi)-f_t(\pi)>0\), which is
\eqref{eq:asymmetry_inequality_lemma}.
\end{proof}

Write
\begin{equation}
\Delta\mathcal Q_t(b)
:=
\mathcal Q_t(b)-\mathcal Q^\ast(b)
=
\sum_{i\in N}
\omega_i(b)\Delta\log q_{i,t}
\label{eq:Delta_Qb_def}
\end{equation}
for the post-shock deviation of the position-tilted log-output index
of Definition~\ref{def:position_tilted_output_index} from its
stationary value, the second equality by \eqref{eq:w_b_def}.

Write \(r_{i,t}:=c_{i,t}/q_{i,t}\) for the share of firm \(i\)'s
date-\(t\) output bought by the household, and
\begin{equation}
\Delta\log Y_t=\sum_{i\in\mathcal R}\gamma_i\Delta\log c_{i,t},
\qquad
\Theta_t:=\sum_{i\in\mathcal R}\gamma_i\Delta\log r_{i,t}
\label{eq:gdp_change_def}
\end{equation}
for the log change in GDP and the consumption--output wedge, the
sums running over the retail tier
\(\mathcal R=\operatorname{supp}\boldsymbol\gamma\).

\begin{lemma}[GDP accounting and the consumption--output wedge]
\label{lem:gdp_labor_income_anchor}
The exact identity
\begin{equation}
\Delta\log Y_t
=
\sum_{i\in\mathcal R}\gamma_i\Delta\log q_{i,t}+\Theta_t
\label{eq:gdp_gamma_form}
\end{equation}
holds at every date, under no assumption. Suppose that
Assumptions~\ref{assump:diffuse_incidence},
\ref{assump:monotone_decay_spectrum} and~\ref{assump:retail_outflow_cap}
of the paper hold, with \(\kappa\) the contraction rate of
Proposition~\ref{prop:stability_stationary}. Then for \(|\pi|\le\overline\pi\),
\begin{equation}
\bigl|\Theta_t\bigr|
\;\le\;
\frac{4C_S}{1-\overline\beta}\,(1-\epsilon_h)\,\kappa^t\,|\pi|\,(1-\theta),
\qquad t\ge0
\label{eq:wedge_bound}
\end{equation}
uniformly in the cross-section size. The balance law being affine in
\(\pi\), the even part of the consumption--output wedge is exact,
\begin{equation}
\tfrac12\bigl[\Theta_t(\pi)+\Theta_t(-\pi)\bigr]
=
\tfrac12\sum_{i\in\mathcal R}\gamma_i\log\frac{1-\pi^2}{1-(1+\dot{\widehat m}^\perp_{i,t+1})^2\pi^2}
=
\pi^2\sum_{i\in\mathcal R}\gamma_i\bigl(\dot{\widehat m}^\perp_{i,t+1}+\tfrac12(\dot{\widehat m}^\perp_{i,t+1})^2\bigr)+O(\pi^4)
\label{eq:wedge_even_part}
\end{equation}
where \(\dot{\widehat m}^\perp_{i,t+1}\), independent of \(\pi\), is the first-order
proportional misalignment of the retailer's next-period balance,
\(m_{i,t+1}(\pi)=m_i^\ast\{1+\pi(1+\dot{\widehat m}^\perp_{i,t+1})\}\). Consequently,
\begin{equation}
\dot\Theta_t=-\sum_{i\in\mathcal R}\gamma_i\dot{\widehat m}^\perp_{i,t+1},
\qquad
\ddot\Theta_t+2\dot\Theta_t=\sum_{i\in\mathcal R}\gamma_i(\dot{\widehat m}^\perp_{i,t+1})^2\;\ge\;0
\label{eq:wedge_net_curvature}
\end{equation}
\end{lemma}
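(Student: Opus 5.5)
The plan is to work directly with consumption shares: I first show the exact identity, then derive the bound on the wedge from the envelopes of Proposition~\ref{prop:stability_stationary}, and finally use the affine balance law to obtain the even part in closed form.

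\emph{The identity.} Since \(c_{i,t}=r_{i,t}q_{i,t}\) and \(c_i^\ast=r_iq_i^\ast\) with \(r_i=r_i^\ast\), we have \(\Delta\log c_{i,t}=\Delta\log q_{i,t}+\Delta\log r_{i,t}\). Summing against \(\gamma_i\) over \(\mathcal R\) gives \eqref{eq:gdp_gamma_form} with no assumption used.

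\emph{The share in closed form.} The household spends \(\gamma_i\wM=(1+\pi)\gamma_iw^\ast\) on good \(i\), and the firm's total revenue is \(d_{i,t}=m_{i,t+1}\). Hence
\[
r_{i,t}=\frac{(1+\pi)\gamma_iw^\ast}{m_{i,t+1}},
\qquad
\frac{r_{i,t}}{r_i}=\frac{1+\pi}{1+\pi+\widehat m^\perp_{i,t+1}},
\]
where the second form uses \(m_{i,t+1}=(1+\pi)m_i^\ast+m^\perp_{i,t+1}\) and \(\widehat m^\perp=m^\perp/m^\ast\). Taking logs gives \(\Delta\log r_{i,t}=-\log\bigl(1+\widehat m^\perp_{i,t+1}/(1+\pi)\bigr)\).

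\emph{The bound on the wedge.} Two estimates are needed for each retailer.
\begin{itemize}
\item The retailer's misalignment is small relative to its wholesale sales. Its balance misalignment at \(t+1\) is \(\sum_la_{il}m^\perp_{l,t}\), and only its firm customers contribute. In proportional terms this is \(\sum_l\Omega_{il}\,\widehat m^\perp_{l,t}/(1-\beta_l)\), and \(\sum_l\Omega_{il}=1-r_i\le1-\epsilon_h\) by Assumption~\ref{assump:retail_outflow_cap}.
\item Each \(|\widehat m^\perp_{l,t}|\) is controlled by the envelope \eqref{eq:stab_nominal_envelope}, which gives at most \(C_\lambda C_\zeta\lambda_2^t|\pi|(1-\theta)\le C_S\kappa^t|\pi|(1-\theta)\).
\end{itemize}
Combining the two, \(|\widehat m^\perp_{i,t+1}|\le C_S(1-\epsilon_h)\kappa^t|\pi|(1-\theta)/(1-\overline\beta)\). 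Since \(1+\pi\ge\tfrac12\) and the argument of the log is at most \(\tfrac12\) for \(|\pi|\le\overline\pi\), the inequality \(|\log(1+v)|\le2|v|\) contributes a factor \(4\). The \(\gamma_i\) sum to one, which yields \eqref{eq:wedge_bound}. No constant depends on \(n\).

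\emph{The even part.} The balance law is affine in \(\pi\) with \(m^\perp\) linear in \(\pi\). So \(m_{i,t+1}(\pi)=m_i^\ast\{1+\pi(1+\dot{\widehat m}^\perp_{i,t+1})\}\) exactly, and \(\Theta_t(\pi)=\sum\gamma_i\bigl[\log(1+\pi)-\log(1+\pi a_i)\bigr]\) with \(a_i:=1+\dot{\widehat m}^\perp_{i,t+1}\). Adding the values at \(\pm\pi\) gives \(\log(1-\pi^2)-\log(1-a_i^2\pi^2)\), which is the first equality of \eqref{eq:wedge_even_part}. Expanding, \(\tfrac12(a_i^2-1)\pi^2+O(\pi^4)\), and \(a_i^2-1=2\dot{\widehat m}^\perp+(\dot{\widehat m}^\perp)^2\), which gives the second equality. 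Differentiating \(\log(1+\pi)-\log(1+\pi a_i)\) at zero gives \(1-a_i=-\dot{\widehat m}^\perp_{i,t+1}\) and second derivative \(-1+a_i^2\). Therefore \(\ddot\Theta_t+2\dot\Theta_t=\sum\gamma_i(a_i^2-1-2(a_i-1))=\sum\gamma_i(a_i-1)^2\ge0\), which is \eqref{eq:wedge_net_curvature}.

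The main obstacle is the constant in the wedge bound. The key step is routing the retailer misalignment through \(\mathbf A\) one period back so that the factor \(1-\epsilon_h\) appears; the log-to-linear conversion afterwards is routine. The case \(t=0\) is handled the same way, using \(\mathbf m^\perp_1=\mathbf A\mathbf m_0^\perp\).
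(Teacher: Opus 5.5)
Your proposal is correct and follows the paper's proof essentially step by step. It uses the closed form $r_{i,t}/r_i=(1+\pi)m_i^\ast/m_{i,t+1}$ from $d_{i,t}=m_{i,t+1}$, routes $\widehat m^\perp_{i,t+1}$ one period back through $\boldsymbol\Omega$ to pick up $\sum_l\Omega_{il}=1-r_i\le1-\epsilon_h$, gets the factor $4$ from $|\log(1+v)|\le2|v|$ with $1+\pi\ge\tfrac12$, and uses the exact affine form of $m_{i,t+1}(\pi)$ for the even part and the derivatives. The only cosmetic difference is that you bound $\widehat m^\perp_{l,t}$ through the nominal envelope \eqref{eq:stab_nominal_envelope} with $C_\lambda C_\zeta\le C_S$ and $\lambda_2(\mathbf A)\le\kappa$, whereas the paper cites the balance coordinates of the state bound \eqref{eq:post_shock_geometric_decay} directly, which is the same estimate.
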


\begin{proof}
For \(i\in\mathcal R\), household demand \(c_{i,t}=\gamma_i\wM/p_{i,t}\) and
market clearing \(p_{i,t}q_{i,t}=d_{i,t}=m_{i,t+1}\) give
\[
r_{i,t}=\frac{\gamma_i(1+\pi)w^\ast}{m_{i,t+1}},
\qquad
\frac{r_{i,t}}{r_i}=\frac{(1+\pi)m_i^\ast}{m_{i,t+1}}
\]
with \(r_i=\gamma_iw^\ast/m_i^\ast\) the stationary consumption share. Taking
logarithms of \(c_{i,t}=r_{i,t}q_{i,t}\) and summing against
\(\boldsymbol\gamma\) gives \eqref{eq:gdp_gamma_form}. For the bound \eqref{eq:wedge_bound} on the consumption--output wedge, let
\(\widehat m^\perp_{i,t+1}:=\bigl(m_{i,t+1}-(1+\pi)m_i^\ast\bigr)/m_i^\ast\) be the
proportional misalignment of the next-period balance. Subtracting the
post-shock stationary balance equation from the balance law
\eqref{eq:firm_balance_law_pre}, in which the wage bill is
\((1+\pi)\beta_lm_l^\ast\),
\[
\widehat m^\perp_{i,t+1}
=
\sum_{l\in N}\frac{a_{il}m_l^\ast}{m_i^\ast}\,
\frac{m_{l,t}-(1+\pi)m_l^\ast}{m_l^\ast}
=
\sum_{l\in N}\frac{\Omega_{il}}{1-\beta_l}\,\widehat m^\perp_{l,t}
\]
with \(\Omega_{il}=(1-\beta_l)a_{il}m_l^\ast/m_i^\ast\) the output shares
\eqref{eq:output_share_matrix}. For a retailer
\(\sum_l\Omega_{il}=1-r_i\le1-\epsilon_h\), and \(\widehat m^\perp_{l,t}\) is the balance
coordinate of the proportional state, so
\[
|\widehat m^\perp_{i,t+1}|
\;\le\;
\frac{1-\epsilon_h}{1-\overline\beta}\,\bigl\|\widehat{\mathbf S}_t(\pi)\bigr\|_\infty
\;\le\;
\frac{1-\epsilon_h}{1-\overline\beta}\,C_S\kappa^t|\pi|(1-\theta),
\qquad i\in\mathcal R
\]
by \eqref{eq:post_shock_geometric_decay}. Since
\(r_{i,t}/r_i=1/(1+\widehat m^\perp_{i,t+1}/(1+\pi))\), the conditions
\(|\widehat m^\perp_{i,t+1}|/(1+\pi)\le\tfrac12\) and \(|\pi|\le\tfrac12\),
which the bound \(\overline\pi\) imposes, keep every consumption share
within a factor of two of its stationary value. In
particular \(r_{i,t}\ge\epsilon_h/2>0\), so that the logarithm is defined,
and \(|\Delta\log r_{i,t}|=|\log(1+\widehat m^\perp_{i,t+1}/(1+\pi))|\le4|\widehat m^\perp_{i,t+1}|\).
Multiplying by \(\gamma_i\) and summing over \(\mathcal R\), where the
shares sum to one, gives \eqref{eq:wedge_bound}, with a constant that
depends only on \(\epsilon_h\), \(\overline\beta\) and \(C_S\). For the even part, the
balance law is affine in \(\pi\) with
\(m_{i,t+1}(\pi)=m_i^\ast\{1+\pi(1+\dot{\widehat m}^\perp_{i,t+1})\}\), so
\(\Delta\log r_{i,t}=\log(1+\pi)-\log\bigl(1+\pi(1+\dot{\widehat m}^\perp_{i,t+1})\bigr)\), and
averaging the values at \(\pi\) and \(-\pi\) gives
\eqref{eq:wedge_even_part}, and expanding the logarithm gives its leading
term. Differentiating \(\Delta\log r_{i,t}\) once and twice at \(\pi=0\)
gives \(-\dot{\widehat m}^\perp_{i,t+1}\) and \(2\dot{\widehat m}^\perp_{i,t+1}+(\dot{\widehat m}^\perp_{i,t+1})^2\), and summing against
\(\boldsymbol\gamma\), which has unit mass on \(\mathcal R\), gives
\eqref{eq:wedge_net_curvature}.
\end{proof}

Over the first reporting interval, the dates from the shock to the
shortest lead time \(\underline k\), the first-order response of GDP
takes a simple form.
Every lead time is at least \(\underline k\), so no dated input moves
before the orders placed at impact mature, and no order placed after
impact matures within the interval. Write
\(\dot f_+:=(\underline k+1)^{-1}\sum_{t\le\underline k}\dot f_t\) for the
first-order response of GDP over the first reporting interval.

\begin{lemma}[GDP over the first reporting interval at first order]
\label{lem:first_interval_gdp}
Suppose that Assumptions~\ref{assump:diffuse_incidence},
\ref{assump:monotone_decay_spectrum}, \ref{assump:retail_outflow_cap}
and~\ref{assump:impact_forcing_ordering} of the paper hold, with
\(C_S\) and \(\kappa\) the decay constants of
Proposition~\ref{prop:stability_stationary}, that \(\underline k\ge1\),
and that the shortest lead time is a retail lead-time date,
\(\underline k\in\mathcal W\). Then
\begin{equation}
\sum_{t=0}^{\underline k}\dot f_t
=\sum_{i\in\mathcal R}\gamma_i(1-\beta_i)\,s_i^{(\underline k)}\xi_i^{(\underline k)}
+\sum_{t=0}^{\underline k}\dot\Theta_t
\label{eq:first_interval_identity}
\end{equation}
And with the GDP margin
\begin{equation}
c_Y:=\frac{1}{\underline k+1}\Bigl[\bigl(c_\xi-(1-\epsilon_h)C_\xi\bigr)s_R(1-\overline\beta)
-(1-\epsilon_h)\,\frac{4C_S}{1-\overline\beta}\sum_{t=0}^{\underline k}\kappa^t\Bigr]
\label{eq:gdp_margin}
\end{equation}
the first-order response of GDP over the first reporting interval obeys
\(\dot f_+\ge c_Y(1-\theta)\) whenever \(c_Y>0\).
\end{lemma}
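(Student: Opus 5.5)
The plan is to compute $\dot f_t$ date by date over the first reporting interval from the exact identity \eqref{eq:gdp_gamma_form} of Lemma~\ref{lem:gdp_labor_income_anchor}. Differentiating it at $\pi=0$ gives $\dot f_t=\sum_{i\in\mathcal R}\gamma_i D_\pi\log q_{i,t}+\dot\Theta_t$. The first-order production identity of Lemma~\ref{lem:production_linearization} turns the first sum into the reallocation impulse $A_t=\sum_{i\in\mathcal R}\gamma_i(1-\beta_i)\mathcal F_i(\dot{\widehat x}_{i,t})$.

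For $t<\underline k$, every dated input in every bundle was ordered before the shock, since $t-k_{ji}\le t-\underline k<0$. Hence $\dot{\widehat x}_{i,t}=0$ and $A_t=0$. At $t=\underline k$ I will reuse the computation in the proof of Proposition~\ref{prop:impact_reallocation}. The only moving inputs are the shortest-lead-time orders placed at date $0$, whose suppliers' outputs $q_{j,0}$ are predetermined. Therefore $\mathcal F_i(\dot{\widehat x}_{i,\underline k})=\xi_{i,\underline k}=s_i^{(\underline k)}\xi_i^{(\underline k)}$. Equivalently, \eqref{eq:window_impact_component} of Lemma~\ref{lem:gdp_pass_through} with $T=\underline k$ gives the same result: only paths with zero summed lead time contribute, so $F_l(0)=r_l$ and $m_l^\ast r_l/w^\ast=\gamma_l$, and $J_t(\boldsymbol\gamma)=0$ for $t\le\underline k$. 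Summing $A_t+\dot\Theta_t$ over $t=0,\dots,\underline k$ then yields \eqref{eq:first_interval_identity}.

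For the lower bound I will compare household weights with sales weights on the retail tier. Since $\gamma_i=r_im_i^\ast/w^\ast$ with $\epsilon_h\le r_i\le1$, I write $\gamma_i=(m_i^\ast/w^\ast)-(1-r_i)m_i^\ast/w^\ast$. The sales-weighted part is $w^{\ast-1}e^{(\underline k)}_{\mathcal R}\xi^{(\underline k)}_{\mathcal R}$. By Assumption~\ref{assump:impact_forcing_ordering} and $\underline k\in\mathcal W$, it is at least $w^{\ast-1}c_\xi(1-\theta)s_R\sum_{\mathcal R}m_i^\ast(1-\beta_i)$. The correction is bounded using $|\xi_i^{(\underline k)}|\le C_\xi(1-\theta)$ from \eqref{eq:forcing_upper_bound} and $1-r_i\le1-\epsilon_h$.

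The main obstacle is expressing both pieces against one normalization that gives the factor $s_R(1-\overline\beta)$ in \eqref{eq:gdp_margin}. I need $\sum_{\mathcal R}m_i^\ast/w^\ast\ge\sum\gamma_i=1$ for the leading term. For the correction I will bound $(1-r_i)m_i^\ast(1-\beta_i)s_i^{(\underline k)}$ relative to the retail expenditure weight, which gives $(c_\xi-(1-\epsilon_h)C_\xi)s_R(1-\overline\beta)(1-\theta)$. If that normalization does not close exactly, I will instead factor out $\sum_{\mathcal R}\gamma_i(1-\beta_i)s_i^{(\underline k)}$ and check that it is at least $s_R(1-\overline\beta)$ up to the same $(1-\epsilon_h)$ correction.

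Finally, I bound the wedge terms. Differentiating \eqref{eq:wedge_bound} gives $|\dot\Theta_t|\le 4C_S(1-\epsilon_h)\kappa^t(1-\theta)/(1-\overline\beta)$. I subtract the sum of these bounds over $t\le\underline k$ and divide by $\underline k+1$ to obtain $\dot f_+\ge c_Y(1-\theta)$.
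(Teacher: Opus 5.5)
Your proposal follows the paper's proof: differentiate \eqref{eq:gdp_gamma_form}, take the date-$\underline k$ response from the impact computation of Proposition~\ref{prop:impact_reallocation}, split $\gamma_i=r_im_i^\ast/w^\ast$ into the sales-weighted term and the $(1-r_i)$ correction, and subtract the differentiated wedge bound. Your cross-check through \eqref{eq:window_impact_component} with $F_l(0)=r_l$ is also correct.

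The normalization you flag closes the way you suggest, but only if you do not first lower the leading term to $c_\xi(1-\theta)s_R\sum_{\mathcal R}m_i^\ast(1-\beta_i)/w^\ast$, as your third paragraph does. In that order you would only get the weaker coefficient $c_\xi s_R-(1-\epsilon_h)C_\xi$. Instead, keep both pieces in units of $e^{(\underline k)}_{\mathcal R}$, so that the sum is at least $\bigl(c_\xi-(1-\epsilon_h)C_\xi\bigr)(1-\theta)e^{(\underline k)}_{\mathcal R}/w^\ast$. Only then apply $e^{(\underline k)}_{\mathcal R}\ge s_R(1-\overline\beta)w^\ast$. That step is legitimate because $c_Y>0$ forces $c_\xi>(1-\epsilon_h)C_\xi$, a sign check your write-up omits. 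With this ordering your fallback of factoring out $\sum_{\mathcal R}\gamma_i(1-\beta_i)s_i^{(\underline k)}$ is not needed.
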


\begin{proof}
Every lead time is at least \(\underline k\), so no dated input of any
firm differs from its stationary value before \(\underline k\), and
\(D_\pi\log q_{i,t}=0\) for \(t<\underline k\). At \(\underline k\) only
the inputs ordered at impact from suppliers of lead time
\(\underline k\) have moved. By Proposition~\ref{prop:impact_reallocation}
of the paper the response of firm \(i\) is then its scaled relative
expenditure impulse,
\(D_\pi\log q_{i,\underline k}=s_i^{(\underline k)}\phi_i^{(\underline k)}=(1-\beta_i)s_i^{(\underline k)}\xi_i^{(\underline k)}\).
The identity \eqref{eq:gdp_gamma_form} of
Lemma~\ref{lem:gdp_labor_income_anchor}, differentiated at zero, gives
\(\dot f_t=\sum_{i\in\mathcal R}\gamma_iD_\pi\log q_{i,t}+\dot\Theta_t\)
at every date, and summing over the interval gives
\eqref{eq:first_interval_identity}.

For the bound, write the first term of
\eqref{eq:first_interval_identity} as
\(\sum_{i\in\mathcal R}r_im_i^\ast(1-\beta_i)s_i^{(\underline k)}\xi_i^{(\underline k)}/w^\ast\),
since \(\gamma_i=r_im_i^\ast/w^\ast\). By
Assumption~\ref{assump:impact_forcing_ordering} of the paper at
\(\underline k\in\mathcal W\) and the definition \eqref{eq:set_impulse} of
the paper,
\(\sum_{i\in\mathcal R}m_i^\ast(1-\beta_i)s_i^{(\underline k)}\xi_i^{(\underline k)}=e_{\mathcal R}^{(\underline k)}\xi_{\mathcal R}^{(\underline k)}\ge c_\xi(1-\theta)\,e_{\mathcal R}^{(\underline k)}\).
And \(0\le1-r_i\le1-\epsilon_h\) by
Assumption~\ref{assump:retail_outflow_cap} of the paper, while
\(|\xi_i^{(\underline k)}|\le C_\xi(1-\theta)\) by
\eqref{eq:forcing_upper_bound} of the paper. Hence
\[
\sum_{i\in\mathcal R}r_im_i^\ast(1-\beta_i)s_i^{(\underline k)}\xi_i^{(\underline k)}
=e_{\mathcal R}^{(\underline k)}\xi_{\mathcal R}^{(\underline k)}
-\sum_{i\in\mathcal R}(1-r_i)\,m_i^\ast(1-\beta_i)s_i^{(\underline k)}\xi_i^{(\underline k)}
\;\ge\;\bigl(c_\xi-(1-\epsilon_h)C_\xi\bigr)(1-\theta)\,e_{\mathcal R}^{(\underline k)}
\]
The shortest lead time is a retail lead-time date, so
\(e_{\mathcal R}^{(\underline k)}\ge s_R\sum_{i\in\mathcal R}m_i^\ast(1-\beta_i)\ge s_R(1-\overline\beta)\,w^\ast\),
for the sales of the retail tier cover the household's purchases,
\(\sum_{i\in\mathcal R}m_i^\ast\ge w^\ast\). This lower bound applies
because \(c_Y>0\) forces \(c_\xi>(1-\epsilon_h)C_\xi\). The first-order
consumption--output wedge obeys
\(|\dot\Theta_t|\le(1-\epsilon_h)\,4C_S\kappa^t(1-\theta)/(1-\overline\beta)\),
the bound \eqref{eq:wedge_bound} differentiated at zero. Summing over
\(t\le\underline k\) and dividing by \(\underline k+1\) gives
\(\dot f_+\ge c_Y(1-\theta)\).
\end{proof}

Write
\[
R_t^{(Y)}(\pi):=\sum_{i\in\mathcal R}\gamma_i(1-\beta_i)\,\mathcal F_i(\widehat x_{i,t}),
\qquad
\mathcal D_t^{(\boldsymbol\gamma)}(\pi)=\sum_{i\in\mathcal R}\gamma_i\,\mathcal D_{i,t}
\]
for the \emph{input-scale aggregate} and the loss from production
inefficiency of the retailers at the household shares
\(\boldsymbol\omega=\boldsymbol\gamma\). Write
\(B_t:=\sum_{i\in\mathcal R}\gamma_i(1-\beta_i)\,\mathcal F_i(\ddot{\widehat x}_{i,t})\),
\(C_t:=\sum_{i\in\mathcal R}\gamma_i(1-\beta_i)\,\{\mathcal F_i(\dot{\widehat x}_{i,t})\}^2\)
and
\(V_t:=\sum_{i\in\mathcal R}\gamma_i(1-\beta_i)\,\langle\dot{\widehat x}_{i,t},\dot{\widehat x}_{i,t}\rangle_i\)
for the curvature of the input-scale aggregate, the curvature of the
retailers' log output in the scale of their bundles, and the retailers'
loss from production inefficiency per unit of squared shock.

\begin{lemma}[Second-order expansion of the GDP response]
\label{lem:gdp_expansion}
Let \(W\) be a finite set of dates. At every date the GDP response
decomposes as
\begin{equation}
\Delta\log Y_t
=
R_t^{(Y)}(\pi)
-\tfrac12\sum_{i\in\mathcal R}\gamma_i(1-\beta_i)\{\mathcal F_i(\widehat x_{i,t})\}^2
-\mathcal D_t^{(\boldsymbol\gamma)}(\pi)
+\Theta_t(\pi)
+o\!\left(\max_{i\in\mathcal R}\|\Delta\mathbf x_{i,t}\|^2\right)
\label{eq:gdp_theorem_decomposition}
\end{equation}
The expansion \eqref{eq:gdp_expansion_coefficients} of the paper holds
uniformly over \(t\in W\), with the second derivative
\begin{equation}
\ddot f_t=B_t-C_t-V_t+\ddot\Theta_t
\label{eq:gdp_second_derivative}
\end{equation}
and with \(C_t\ge0\), \(V_t\ge0\) and
\(\mathcal D_t^{(\boldsymbol\gamma)}(\pi)=\tfrac12\pi^2V_t+o(\pi^2)\). If \(A_t\ge c_A>0\)
on \(W\), the input-scale aggregate \(R_t^{(Y)}(\pi)\) has the sign of
\(\pi\) for every sufficiently small \(\pi\neq0\), uniformly on \(W\).
\end{lemma}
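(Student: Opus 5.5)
The plan starts from the exact identity \eqref{eq:gdp_gamma_form} of Lemma~\ref{lem:gdp_labor_income_anchor}, \(\Delta\log Y_t=\sum_{i\in\mathcal R}\gamma_i\Delta\log q_{i,t}+\Theta_t\). Into each retailer's log output we substitute the second-order scale--composition expansion \eqref{eq:log_q_scale_composition} of Lemma~\ref{lem:production_linearization}. Its first term summed at the household shares is \(R_t^{(Y)}(\pi)\). Its second term gives \(-\tfrac12\sum_{i\in\mathcal R}\gamma_i(1-\beta_i)\{\mathcal F_i(\widehat x_{i,t})\}^2\). Its third term is, by Definition~\ref{def:production_inefficiency}, exactly \(-\mathcal D_t^{(\boldsymbol\gamma)}(\pi)\). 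The remainders \(o(\|\Delta\mathbf x_{i,t}\|^2)\) are finitely many, one per retailer, so they combine into \(o(\max_{i\in\mathcal R}\|\Delta\mathbf x_{i,t}\|^2)\). This gives \eqref{eq:gdp_theorem_decomposition}.

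For the expansion in \(\pi\), fix \(t\in W\). By Lemma~\ref{lem:propagation_linearization} the dated inputs, the outputs and \(\Theta_t\) are real-analytic in \(\pi\) near zero. We therefore take \(\Delta\log Y_t=f_t(\pi)\) and differentiate twice at \(\pi=0\) term by term. Since \(\widehat x_{i,t}=\pi\dot{\widehat x}_{i,t}+\tfrac12\pi^2\ddot{\widehat x}_{i,t}+o(\pi^2)\), the term \(R_t^{(Y)}\) contributes first derivative \(A_t\) and second derivative \(B_t\). The squared-scale term contributes \(-C_t\) to the second derivative and nothing at first order. The quadratic form \(\mathcal D_t^{(\boldsymbol\gamma)}\) equals \(\tfrac12\pi^2V_t+o(\pi^2)\), by continuity of \(\langle\cdot,\cdot\rangle_i\), as in the proof of Theorem~\ref{theorem:finite_cumulative_inefficiency}. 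The wedge contributes \(\dot\Theta_t\) and \(\ddot\Theta_t\). The remainder is \(o(\pi^2)\) because \(\|\Delta\mathbf x_{i,t}\|=O(\pi)\). This yields \(\dot f_t=A_t+\dot\Theta_t\) and \eqref{eq:gdp_second_derivative}. The signs are immediate: \(C_t\ge0\) is a sum of squares with nonnegative weights, and \(V_t\ge0\) because each \(\langle\cdot,\cdot\rangle_i\) is positive semidefinite (Lemma~\ref{lem:sign_production_inefficiency}).

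Uniformity over \(W\) comes from finiteness. Each \(f_t\) is \(C^2\) near zero, so we choose, for each \(t\), a radius at which the Taylor remainder is at most \(\varepsilon\pi^2\). Taking the minimum over the finitely many \(t\in W\) gives a common radius.

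For the sign claim, analyticity gives \(R_t^{(Y)}(\pi)=A_t\pi+O_t(\pi^2)\). With \(K:=\max_{t\in W}\) of the second-derivative bound on a common interval, \(|R_t^{(Y)}(\pi)-A_t\pi|\le K\pi^2\). For \(|\pi|<c_A/K\) this forces \(R_t^{(Y)}(\pi)\) to have the sign of \(\pi\), uniformly over \(W\).

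The main obstacle is the bookkeeping in the second derivative. The term \(\mathcal F_i(\ddot{\widehat x}_{i,t})\) must be counted inside \(B_t\) only, and not again inside \(C_t\) or \(V_t\). The way to keep this straight is to expand \(\widehat x\) to second order before substituting into \eqref{eq:log_q_scale_composition}, and to note that the quadratic terms need only the first-order part of \(\widehat x\).
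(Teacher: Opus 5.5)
Your proposal is correct and follows the paper's own proof essentially step for step. You start from the exact identity \eqref{eq:gdp_gamma_form} and substitute the scale--composition expansion \eqref{eq:log_q_scale_composition} at each retailer. You then use the real-analyticity from Lemma~\ref{lem:propagation_linearization} to match the coefficients of \(\pi\) and \(\pi^2\), read off \(C_t,V_t\ge0\) from sums of squares and the positive semidefinite form, and use the finiteness of \(W\) for the uniform remainder and for the sign of \(R_t^{(Y)}(\pi)\). Your framing of the coefficient extraction as term-by-term differentiation is also legitimate, since each term, and hence the remainder, is real-analytic in \(\pi\).
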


\begin{proof}
Fix \(t\in W\) and \(i\in\mathcal R\). By
Lemma~\ref{lem:propagation_linearization} the proportional deviation of the dated
inputs \(\widehat x_{i,t}(\pi)\) is real-analytic in \(\pi\) near
zero, so
\(\widehat x_{i,t}(\pi)=\pi\dot{\widehat x}_{i,t}+\tfrac12\pi^2\ddot{\widehat x}_{i,t}+o(\pi^2)\).
Since \(\mathcal F_i\) is linear and \(\langle\cdot,\cdot\rangle_i\) is
bilinear,
\begin{gather*}
\mathcal F_i(\widehat x_{i,t})=\pi\mathcal F_i(\dot{\widehat x}_{i,t})+\tfrac12\pi^2\mathcal F_i(\ddot{\widehat x}_{i,t})+o(\pi^2),
\qquad
\{\mathcal F_i(\widehat x_{i,t})\}^2=\pi^2\{\mathcal F_i(\dot{\widehat x}_{i,t})\}^2+o(\pi^2),\\
\langle\widehat x_{i,t},\widehat x_{i,t}\rangle_i=\pi^2\langle\dot{\widehat x}_{i,t},\dot{\widehat x}_{i,t}\rangle_i+o(\pi^2)
\end{gather*}
Substituting into the expansion \eqref{eq:log_q_scale_composition} of
log output,
\[
\Delta\log q_{i,t}
=
\pi(1-\beta_i)\mathcal F_i(\dot{\widehat x}_{i,t})
+\tfrac12\pi^2(1-\beta_i)\Bigl[\mathcal F_i(\ddot{\widehat x}_{i,t})-\{\mathcal F_i(\dot{\widehat x}_{i,t})\}^2-\langle\dot{\widehat x}_{i,t},\dot{\widehat x}_{i,t}\rangle_i\Bigr]
+o(\pi^2)
\]
and \(\mathcal D_{i,t}(\pi)=\tfrac12\pi^2(1-\beta_i)\langle\dot{\widehat x}_{i,t},\dot{\widehat x}_{i,t}\rangle_i+o(\pi^2)\)
by \eqref{eq:firm_inefficiency_def}. Multiplying by \(\gamma_i\) and
summing over \(\mathcal R\) gives, with the definitions of \(B_t\),
\(C_t\) and \(V_t\) above and of \(A_t\) in
Section~\ref{subsec:asymmetric_nonneutrality} of the paper,
\(\sum_{i\in\mathcal R}\gamma_i\Delta\log q_{i,t}=\pi A_t+\tfrac12\pi^2(B_t-C_t-V_t)+o(\pi^2)\)
and \(\mathcal D_t^{(\boldsymbol\gamma)}(\pi)=\sum_{i\in\mathcal R}\gamma_i\mathcal D_{i,t}(\pi)=\tfrac12\pi^2V_t+o(\pi^2)\).
The wedge
\(\Theta_t(\pi)=\sum_{i\in\mathcal R}\gamma_i[\log(1+\pi)-\log(1+\pi(1+\dot{\widehat m}^\perp_{i,t+1}))]\)
of Lemma~\ref{lem:gdp_labor_income_anchor} is real-analytic near zero,
so \(\Theta_t(\pi)=\dot\Theta_t\pi+\tfrac12\ddot\Theta_t\pi^2+o(\pi^2)\).
Adding the two by the exact identity \eqref{eq:gdp_gamma_form} gives
\eqref{eq:gdp_theorem_decomposition} and, collecting the coefficients
of \(\pi\) and \(\pi^2\), \eqref{eq:gdp_expansion_coefficients} of the
paper with \(\dot f_t=A_t+\dot\Theta_t\), and \eqref{eq:gdp_second_derivative}. Each
remainder is \(o(\pi^2)\) at a fixed date and \(W\) is finite, so the
maximum over \(W\) is \(o(\pi^2)\). \(C_t\) is a sum of squares with
non-negative weights, and \(V_t\) is a sum with non-negative weights of
values of the form \(\langle\cdot,\cdot\rangle_i\), which is positive
semidefinite because \(1-\rho>0\) and \(1-\rho_c>0\), so both are
non-negative. Finally
\(R_t^{(Y)}(\pi)=\sum_{i\in\mathcal R}\gamma_i(1-\beta_i)\mathcal F_i(\widehat x_{i,t})=\pi A_t+O(\pi^2)\)
uniformly on \(W\), so if \(A_t\ge c_A>0\) on \(W\) there is
\(\pi_1>0\) such that \(R_t^{(Y)}(\pi)/\pi\ge c_A/2>0\) for
\(0<|\pi|\le\pi_1\) and every \(t\in W\).
\end{proof}

\begin{lemma}[The normalized shock and the curvature of the response]
\label{lem:concave_scaling}
For every date \(t\) there is a map \(\mathcal Y_t\), real-analytic near
\(\mathbf 0\) with \(\mathcal Y_t(\mathbf 0)=0\) and independent of \(\pi\), such
that \eqref{eq:concave_scaling} holds, and the same form holds
for the input-scale aggregate, with
\(R_t^{(Y)}(\pi)=\mathcal Y^{R}_t(\tilde\pi\,h_\theta)\) and \(B_t=-2A_t+D^2\mathcal Y^{R}_t(\mathbf 0)[h_\theta,h_\theta]\). Suppose that
Assumption~\ref{assump:diffuse_incidence} holds and, on a finite set of
dates \(W\), \(\|D^2\mathcal Y_t(\mathbf 0)\|_m\le K\) and \(\dot f_t\ge c\,(1-\theta)\)
with \(c>0\). Here
\(\|D^2\mathcal Y_t(\mathbf 0)\|_m:=\sup\{|D^2\mathcal Y_t(\mathbf 0)[\mathbf v,\mathbf v']|:\mathbf v,\mathbf v'\in\mathcal Z_{\mathbf A},\ \|\mathbf v\|_m\le1,\ \|\mathbf v'\|_m\le1\}\)
is the Hessian norm on the conservation subspace
\(\mathcal Z_{\mathbf A}\) of the paper, in which the misalignment
\(h_\theta\) lies, taken in the proportional coordinates. Then
\begin{equation}
\ddot f_t\;\le\;-c\,(1-\theta)\;<\;0
\qquad\text{on }W\text{ whenever }1-\theta\le\frac{c}{K C_\zeta^2}
\label{eq:concave_scaling_bound}
\end{equation}
and the GDP response is, to leading order in the slack, the first-order
response applied to the normalized shock,
\begin{equation}
f_t(\pi)\;=\;\dot f_t\,\frac{\pi}{1+\pi}\;+\;O\bigl((1-\theta)^2\pi^2\bigr)
\label{eq:gdp_deflator_reading}
\end{equation}
locally in \(\pi\).
\end{lemma}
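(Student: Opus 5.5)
The argument has three steps. First, a homogeneity (rescaling) argument produces the map \(\mathcal Y_t\). Second, the chain rule gives the curvature identity. Third, the Hessian bound yields \eqref{eq:concave_scaling_bound} and \eqref{eq:gdp_deflator_reading}.

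\textbf{Step 1: construction of \(\mathcal Y_t\).} Divide every nominal variable along the post-shock path by \(1+\pi\): balances, wage, expenditures, demands and prices. The order rule \eqref{eq:dynamic_real_alloc}, the clearing rule \eqref{eq:dyn_price_update} and the balance law \eqref{eq:firm_balance_law_pre} are homogeneous of degree one in nominal magnitudes, and the indexed wage becomes \(w^\ast\). Hence real orders, outputs and consumption are unchanged. The rescaled economy starts from balances
\[
\frac{\mathbf m^\ast+\pi\overline M\boldsymbol\zeta}{1+\pi}=\mathbf m^\ast+\tilde\pi\,h_\theta ,
\]
with the stationary pipeline and wage \(w^\ast\). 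Define \(\mathcal Y_t(\mathbf v)\) as the date-\(t\) value of \(\sum_{i\in\mathcal R}\gamma_i\Delta\log c_{i,t}\) in the economy started from \((\mathbf m^\ast+\mathbf v,\boldsymbol{\mathcal X}^\ast)\) at wage \(w^\ast\), for \(\mathbf v\in\mathcal Z_{\mathbf A}\) near zero. By construction \(\mathcal Y_t\) does not depend on \(\pi\), and \(\mathcal Y_t(\mathbf 0)=0\) because \(\mathbf S^\ast\) is a fixed point. Real-analyticity comes from Lemma~\ref{lem:propagation_linearization}: the transition map and the observation maps are analytic near \(\mathbf S^\ast\), and consumption \(c_{i,t}=\gamma_iw^\ast/p_{i,t}\) is an analytic function of the state. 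The same construction with \(\sum_{i\in\mathcal R}\gamma_i(1-\beta_i)\mathcal F_i(\widehat x_{i,t})\) in place of GDP gives \(\mathcal Y^R_t\), since dated inputs are real quantities.

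\textbf{Step 2: derivatives.} Write \(g(\pi):=\tilde\pi=\pi/(1+\pi)\), so that \(g(0)=0\), \(g'(0)=1\) and \(g''(0)=-2\). Then \(f_t(\pi)=\mathcal Y_t(g(\pi)h_\theta)\), and the chain rule gives
\[
\dot f_t=D\mathcal Y_t(\mathbf 0)h_\theta,\qquad
\ddot f_t=g''(0)\,D\mathcal Y_t(\mathbf 0)h_\theta+D^2\mathcal Y_t(\mathbf 0)[h_\theta,h_\theta]
=-2\dot f_t+D^2\mathcal Y_t(\mathbf 0)[h_\theta,h_\theta].
\]
This is \eqref{eq:concave_scaling}. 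The identity \(B_t=-2A_t+D^2\mathcal Y^R_t(\mathbf 0)[h_\theta,h_\theta]\) follows identically, because \(A_t\) and \(B_t\) are the first and second \(\pi\)-derivatives of \(R^{(Y)}_t\) (Lemma~\ref{lem:gdp_expansion}).

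\textbf{Step 3: the bounds.} The impact misalignment lies in \(\mathcal Z_{\mathbf A}\) and satisfies \(\|h_\theta\|_m\le C_\zeta(1-\theta)\) by \eqref{eq:impact_misalignment_constant}. Bilinearity of the Hessian then gives \(|D^2\mathcal Y_t(\mathbf 0)[h_\theta,h_\theta]|\le KC_\zeta^2(1-\theta)^2\). Combining this with \(\dot f_t\ge c(1-\theta)\),
\[
\ddot f_t\le -2c(1-\theta)+KC_\zeta^2(1-\theta)^2\le -c(1-\theta)
\]
once \(1-\theta\le c/(KC_\zeta^2)\).

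For \eqref{eq:gdp_deflator_reading}, Taylor-expand \(\mathcal Y_t\) at zero along \(\tilde\pi h_\theta\):
\[
f_t(\pi)=\tilde\pi\,\dot f_t+\tfrac12\tilde\pi^2D^2\mathcal Y_t(\mathbf 0)[h_\theta,h_\theta]+O(\tilde\pi^3\|h_\theta\|_m^3).
\]
The quadratic term is \(O((1-\theta)^2\pi^2)\). The cubic remainder is \(O((1-\theta)^3|\pi|^3)\) locally, using a bound on the third derivative on a small ball, and is therefore absorbed.

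\textbf{Main obstacle.} The delicate part is Step 1. One must check that every rule is exactly degree-one homogeneous, including the household demand \(\gamma_j\wM\), the wage bill \(\wM l_i^\ast\) and the no-shutdown domain, so that the rescaled path is literally the path of the economy started at \(\mathbf m^\ast+\tilde\pi h_\theta\). One must also check that this initial state stays inside the neighborhood where Lemma~\ref{lem:propagation_linearization} gives analyticity for the fixed date \(t\). Both points reduce to small \(|\pi|\) at each fixed \(t\).
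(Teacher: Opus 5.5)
Your proposal is correct and follows the paper's proof essentially step for step. The paper also rescales nominal magnitudes by \((1+\pi)^{-1}\) to pass to the economy at wage \(w^\ast\) started from \(\mathbf m^\ast+\tilde\pi h_\theta\), then applies the chain rule with \(\tilde\pi''(0)=-2\), the bilinear bound \(\|h_\theta\|_m\le C_\zeta(1-\theta)\), and a cubic Taylor remainder using \(|\tilde\pi|\le 2|\pi|\). The one step the paper states explicitly and you leave implicit is that the rescaled path is \emph{the} path of the rescaled economy only because the within-period clearing fixed point of Lemma~\ref{lem:propagation_linearization} is unique; that uniqueness is what turns degree-one homogeneity of the rules into equality of the real paths.
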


\begin{proof}
Consider the within-period map of Appendix~\ref{app:within_period} of
the paper at a wage \(w>0\). It sends a start-of-period state
\((\mathbf m,\boldsymbol{\mathcal X})\) to expenditures
\(e_i=m_i-wl_i^\ast\), demands \(d_j=\sum_ia_{ji}e_i+\gamma_jw\) and
outputs \(q_j\) produced from the inherited orders and the current
zero-lead-time orders. Prices are \(p_j=d_j/q_j\), orders are
\(x_{ji}=a_{ji}e_i/p_j\), household consumption is
\(c_i=\gamma_iw/p_i\), the next balances are \(\mathbf m'=\mathbf d\),
and the pipeline \(\boldsymbol{\mathcal X}'\) is the pipeline aged by
one period. Let \(\varkappa>0\). At
\((\varkappa\mathbf m,\varkappa w,\boldsymbol{\mathcal X})\) the
expenditures and demands are \(\varkappa e_i\) and \(\varkappa d_j\). The
within-period fixed point of Lemma~\ref{lem:propagation_linearization}
is unique, and the quantities \((q_j,\varkappa p_j,x_{ji},c_i)\) solve
it, since \(\varkappa p_j=\varkappa d_j/q_j\),
\(x_{ji}=a_{ji}\varkappa e_i/(\varkappa p_j)\) and
\(c_i=\gamma_i\varkappa w/(\varkappa p_i)\). Hence the outputs, orders,
consumption and next pipeline are unchanged and the next balances are
\(\varkappa\mathbf m'\). By induction over dates, the real path from
\((\varkappa\mathbf m_0,\varkappa w,\boldsymbol{\mathcal X}_0)\) equals the
real path from \((\mathbf m_0,w,\boldsymbol{\mathcal X}_0)\).

The post-shock economy has wage \(\wM=(1+\pi)w^\ast\) and impact
balances
\(\mathbf m_0(\pi)=\mathbf m^\ast+\pi\overline M\boldsymbol\zeta=(1+\pi)\bigl(\mathbf m^\ast+\tilde\pi\,h_\theta\bigr)\),
with \(\tilde\pi:=\pi/(1+\pi)\) and \(h_\theta:=\overline M\boldsymbol\zeta-\mathbf m^\ast\),
and the stationary pipeline. Taking \(\varkappa=(1+\pi)^{-1}\), every
real quantity along the post-shock path, and in particular
\(\Delta\log Y_t(\pi)=\sum_{i\in\mathcal R}\gamma_i\Delta\log c_{i,t}\),
equals the corresponding quantity of the economy at the stationary
wage \(w^\ast\) started from
\((\mathbf m^\ast+\tilde\pi\,h_\theta,\boldsymbol{\mathcal X}^\ast)\).
For \(\mathbf v\in\mathbb R^n\) near \(\mathbf 0\) let
\(\mathcal Y_t(\mathbf v)\) be \(\Delta\log Y_t\) at date \(t\) in the
economy at wage \(w^\ast\) started from
\((\mathbf m^\ast+\mathbf v,\boldsymbol{\mathcal X}^\ast)\). The
initial state is affine in \(\mathbf v\) and equals \(\mathbf S^\ast\)
at \(\mathbf v=\mathbf 0\), so the argument of the proof of
Lemma~\ref{lem:propagation_linearization}, which uses only these two
properties, makes \(\mathcal Y_t\) real-analytic near \(\mathbf 0\)
with \(\mathcal Y_t(\mathbf 0)=0\). The map \(\mathcal Y_t\) does not
involve \(\pi\), and \(f_t(\pi)=\mathcal Y_t(\tilde\pi\,h_\theta)\).
Derivatives of \(\mathcal Y_t\) are measured in the proportional
coordinates of the paper, through the norm
\(\|\mathbf v\|_m=\|\mathbf D_m^{-1}\mathbf v\|_\infty\) and the
Hessian norm \(\|D^2\mathcal Y_t(\mathbf 0)\|_m\) built on it, and by
\eqref{eq:impact_misalignment_constant} of the paper
\(\|h_\theta\|_m=\max_{i\in N}|\overline M\zeta_i/m_i^\ast-1|\le C_\zeta(1-\theta)\).
The input-scale aggregate is a real quantity of the same economy, so
the same construction gives \(\mathcal Y_t^R\) with
\(R_t^{(Y)}(\pi)=\mathcal Y_t^R(\tilde\pi\,h_\theta)\).

With \(\tilde\pi(\pi)=\pi/(1+\pi)\), \(\tilde\pi(0)=0\),
\(\tilde\pi'(0)=1\) and \(\tilde\pi''(0)=-2\), the chain rule gives
\(f_t'(\pi)=D\mathcal Y_t(\tilde\pi h_\theta)[h_\theta]\,\tilde\pi'(\pi)\) and
\[
f_t''(\pi)=D^2\mathcal Y_t(\tilde\pi h_\theta)[h_\theta,h_\theta]\,\tilde\pi'(\pi)^2+D\mathcal Y_t(\tilde\pi h_\theta)[h_\theta]\,\tilde\pi''(\pi)
\]
which at \(\pi=0\) is \eqref{eq:concave_scaling}, with
\(\dot f_t=D\mathcal Y_t(\mathbf 0)[h_\theta]\). The same computation for
\(\mathcal Y_t^R\) gives \(B_t=-2A_t+D^2\mathcal Y_t^R(\mathbf 0)[h_\theta,h_\theta]\).

The misalignment \(h_\theta=\overline M\boldsymbol\zeta-\mathbf m^\ast\)
lies in \(\mathcal Z_{\mathbf A}\), since the incidence shares and the
stationary balances both sum to \(\overline M\). With
\(\|D^2\mathcal Y_t(\mathbf 0)\|_m\le K\) on \(W\), the bilinear bound gives
\(|D^2\mathcal Y_t(\mathbf 0)[h_\theta,h_\theta]|\le K\|h_\theta\|_m^2\le KC_\zeta^2(1-\theta)^2\),
and with \(\dot f_t\ge c(1-\theta)\),
\[
\ddot f_t\le-2c(1-\theta)+KC_\zeta^2(1-\theta)^2\le-c(1-\theta)
\qquad\text{when }KC_\zeta^2(1-\theta)\le c
\]
which is \eqref{eq:concave_scaling_bound}. Finally, Taylor's theorem
for the real-analytic \(\mathcal Y_t\) along the segment
\(\{\tilde\pi\,h_\theta\}\) gives
\(f_t(\pi)=\dot f_t\tilde\pi+\tfrac12D^2\mathcal Y_t(\mathbf 0)[h_\theta,h_\theta]\tilde\pi^2+\widetilde\varepsilon_t(\tilde\pi)\)
with \(|\widetilde\varepsilon_t(\tilde\pi)|\le K_3\|h_\theta\|_m^3|\tilde\pi|^3\) for \(|\tilde\pi|\) small,
where \(K_3\) bounds the third derivative of \(\mathcal Y_t\) near
\(\mathbf 0\) on \(\mathcal Z_{\mathbf A}\) in the same norm. For
\(\pi\ge-\tfrac12\), \(|\tilde\pi|\le2|\pi|\), so the second term is
at most \(2KC_\zeta^2(1-\theta)^2\pi^2\) in absolute value and the
remainder at most
\(8K_3C_\zeta^3(1-\theta)^3|\pi|^3\le8K_3C_\zeta^3(1-\theta)^2\pi^2\)
on \(|\pi|\le1\), which is \eqref{eq:gdp_deflator_reading}.
\end{proof}

The resource identity \eqref{eq:resource_identity} of the paper values
quantities at stationary prices. The excess value of the goods in
transit at date \(t\) is
\begin{equation}
P_t
:=
\sum_{i\in N}\sum_{j\in\mathcal S_i}p_j^\ast\sum_{u=t-k_{ji}}^{t-1}\bigl(x_{ji,u}-x_{ji}^\ast\bigr)
\label{eq:pipeline_value}
\end{equation}
the value of the orders outstanding at date \(t\) in excess of the
stationary pipeline. The value of the inputs that the firms used in
excess of their stationary bundles, less the value of the output they
made with them, is
\begin{equation}
E_t
:=
\sum_{i\in N}E_{i,t},
\qquad
E_{i,t}
:=
\sum_{j\in\mathcal S_i}p_j^\ast\bigl(x_{ji,t-k_{ji}}-x_{ji}^\ast\bigr)-p_i^\ast\bigl(q_{i,t}-q_i^\ast\bigr)
\label{eq:production_loss}
\end{equation}
with \(E_{i,t}\) the contribution of firm \(i\). And the consumption
loss, the value of the change in consumption at stationary prices less
its value in the log index, is
\begin{equation}
H_t
:=
w^\ast\sum_{i\in\mathcal R}\gamma_i\Bigl[\frac{c_{i,t}}{c_i^\ast}-1-\log\frac{c_{i,t}}{c_i^\ast}\Bigr]
\label{eq:consumption_loss}
\end{equation}
The deadweight loss of the paper is \(\Lambda_t:=(E_t+H_t)/w^\ast\).
Lemma~\ref{lem:resource_accounting_date} splits \(E_t\) into two
nonnegative parts, \(E_t=E_t^{\mathrm{mix}}+E_t^{\mathrm{scale}}\), each
the sum over firms of the corresponding term of
\eqref{eq:production_loss_split}. Both nests of the technology have
curvatures below one, so the composite of a bundle whose inputs have
moved by different proportions from their stationary levels falls short
of the value of the bundle at stationary prices. The first part,
\(E_t^{\mathrm{mix}}\), is the output lost in this way, and
\(E_t^{\mathrm{mix}}/w^\ast\) is the loss from production inefficiency
at date \(t\). Output is concave in the scale of the bundle at fixed
labor. The second part, \(E_t^{\mathrm{scale}}\), is therefore the
output lost because a firm's bundle as a whole is larger or smaller than
at the stationary equilibrium, relative to the workers it employs. With
the consumption loss, it makes up the loss from misallocation
\((E_t^{\mathrm{scale}}+H_t)/w^\ast\).

\begin{lemma}[Resource accounting at a date]
\label{lem:resource_accounting_date}
Suppose that Assumptions~\ref{assump:diffuse_incidence}
and~\ref{assump:monotone_decay_spectrum} of the paper hold and that
\(|\pi|\le\overline\pi\). Then at every date \(t\ge0\) the resource
identity \eqref{eq:resource_identity} of the paper holds, and the
consumption loss \eqref{eq:consumption_loss} is nonnegative, and
positive when \(c_{i,t}\ne c_i^\ast\) for some \(i\in\mathcal R\).
The contribution of each firm splits as
\(E_{i,t}=E_{i,t}^{\mathrm{mix}}+E_{i,t}^{\mathrm{scale}}\), with
\begin{equation}
\begin{aligned}
E_{i,t}^{\mathrm{mix}}
&:=
m_i^\ast\Bigl[\bigl(1+\mathcal F_i(\widehat x_{i,t})\bigr)^{1-\beta_i}-\Bigl(\frac{X_{i,t}}{X_i^\ast}\Bigr)^{1-\beta_i}\Bigr],\\
E_{i,t}^{\mathrm{scale}}
&:=
m_i^\ast\Bigl[(1-\beta_i)\,\mathcal F_i(\widehat x_{i,t})+1-\bigl(1+\mathcal F_i(\widehat x_{i,t})\bigr)^{1-\beta_i}\Bigr]
\end{aligned}
\label{eq:production_loss_split}
\end{equation}
where \(1+\mathcal F_i(\widehat x_{i,t})\) is the value of the dated input
bundle of firm \(i\) at stationary prices relative to the stationary
bundle and \(X_{i,t}/X_i^\ast\) is the change in its intermediate
composite. Both \(E_{i,t}^{\mathrm{mix}}\) and \(E_{i,t}^{\mathrm{scale}}\) are
nonnegative. The first vanishes exactly when the dated inputs of firm
\(i\) all move in the same proportion, and the second exactly when
\(\mathcal F_i(\widehat x_{i,t})=0\). At a fixed date, as \(\pi\to0\),
\begin{equation}
\begin{gathered}
E^{\mathrm{mix}}_{i,t}=m_i^\ast\,\mathcal D_{i,t}+o(\pi^2),
\qquad
E^{\mathrm{scale}}_{i,t}=\tfrac12\beta_i(1-\beta_i)\,m_i^\ast\{\mathcal F_i(\widehat x_{i,t})\}^2+o(\pi^2),\\
H_t=\tfrac12w^\ast\sum_{i\in\mathcal R}\gamma_i\,\widehat c_{i,t}^{\,2}+o(\pi^2)
\end{gathered}
\label{eq:deadweight_second_order}
\end{equation}
The drawdown \(\Pi_t\) of the goods in transit, the deadweight loss
\(\Lambda_t\) and the excess value \(P_t\) of the goods in transit depend
on the shock only through the normalized shock. There are maps
\(\mathcal Y^{\Pi}_t\), \(\mathcal Y^{\Lambda}_t\) and \(\mathcal Y^{P}_t\),
real-analytic near \(\mathbf 0\), vanishing there and not involving
\(\pi\), such that
\(\Pi_t(\pi)=\mathcal Y^{\Pi}_t(\tilde\pi\,h_\theta)\),
\(\Lambda_t(\pi)=\mathcal Y^{\Lambda}_t(\tilde\pi\,h_\theta)\) and
\(P_t(\pi)=\mathcal Y^{P}_t(\tilde\pi\,h_\theta)\). They satisfy
\(\mathcal Y^{\Pi}_t=(\mathcal Y^{P}_t-\mathcal Y^{P}_{t+1})/w^\ast\),
\(\mathcal Y^{P}_0\equiv0\), \(D\mathcal Y^{\Pi}_t(\mathbf 0)h_\theta=\dot f_t\)
and \(D^2\mathcal Y^{\Lambda}_t(\mathbf 0)[h_\theta,h_\theta]\ge0\). The first derivative of the excess value of the goods in transit at \(\pi=0\) obeys
\(\dot P_{T+1}=D\mathcal Y^{P}_{T+1}(\mathbf 0)h_\theta=-w^\ast\sum_{t\le T}\dot f_t\)
for every \(T\ge0\). The even
part of the GDP response obeys \eqref{eq:even_part} of the paper, and the
residual curvature of \eqref{eq:concave_scaling} of the paper is
\begin{equation}
D^2\mathcal Y_t(\mathbf 0)[h_\theta,h_\theta]
=
D^2\mathcal Y^{\Pi}_t(\mathbf 0)[h_\theta,h_\theta]-D^2\mathcal Y^{\Lambda}_t(\mathbf 0)[h_\theta,h_\theta]
\label{eq:residual_curvature_split}
\end{equation}
\end{lemma}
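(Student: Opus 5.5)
The plan is to prove the lemma by an exact accounting identity at stationary prices, done firm by firm and good by good, and then to get the second-order and normalized-shock statements from expansions and the scaling argument of Lemma~\ref{lem:concave_scaling}.

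\emph{Step 1: the resource identity.} Fix a date \(t\).

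For each good \(j\), market clearing in quantities gives \(q_{j,t}=c_{j,t}+\sum_i x_{ji,t}\), with \(c_{j,t}=0\) off \(\mathcal R\). Subtracting the stationary identity and multiplying by \(p_j^\ast\), then summing over \(j\), yields
\[
\sum_j p_j^\ast\Delta q_{j,t}=\sum_{i\in\mathcal R}p_i^\ast\Delta c_{i,t}+\sum_{i,j}p_j^\ast\Delta x_{ji,t}.
\]
The last term is the orders placed at \(t\). By the definition \eqref{eq:pipeline_value}, these orders equal \(P_{t+1}-P_t\) plus the value of the orders maturing at \(t\), which is \(\sum_{i,j}p_j^\ast\Delta x_{ji,t-k_{ji}}\). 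Zero-lead-time orders are counted consistently on both sides.

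Replacing \(\sum_j p_j^\ast\Delta q_{j,t}\) by the used inputs minus \(E_t\), from \eqref{eq:production_loss}, the used inputs cancel and we get
\[
\sum_{i\in\mathcal R}p_i^\ast\Delta c_{i,t}=P_t-P_{t+1}-E_t.
\]
Since \(p_i^\ast c_i^\ast=\gamma_iw^\ast\), the left side is \(w^\ast\sum\gamma_i\widehat c_{i,t}\). This equals \(w^\ast f_t+H_t\) by \eqref{eq:consumption_loss}, and dividing by \(w^\ast\) gives \eqref{eq:resource_identity}.

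Nonnegativity of \(H_t\) follows from \(z-1-\log z\ge0\), with equality only at \(z=1\).

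\emph{Step 2: the split of \(E_{i,t}\).} We use \(p_j^\ast x_{ji}^\ast=(1-\beta_i)a_{ji}m_i^\ast\) and \(q_{i,t}/q_i^\ast=(X_{i,t}/X_i^\ast)^{1-\beta_i}\) at fixed labor. Then \(E_{i,t}=m_i^\ast[(1-\beta_i)\mathcal F_i(\widehat x)-(X/X^\ast)^{1-\beta_i}+1]\), which is the sum of the two displayed terms.

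\(E^{\mathrm{scale}}\ge0\) by concavity of \(u\mapsto u^{1-\beta_i}\) (tangent line at \(u=1\)), with equality only when \(\mathcal F_i=0\).

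For \(E^{\mathrm{mix}}\), the composite is concave and homogeneous of degree one, and its gradient at \(\mathbf x^\ast\) has logarithmic weights \(a_{ji}\). The tangent-plane argument of Lemma~\ref{lem:normalized_nests}(ii) therefore gives \(X_{i,t}/X_i^\ast\le1+\mathcal F_i(\widehat x)\). Since \(u^{1-\beta_i}\) is increasing, \(E^{\mathrm{mix}}\ge0\). Equality holds exactly on the ray through \(\mathbf x^\ast\), by the strict concavity of CES with \(\rho,\rho_c<0\) off rays; this matches Lemma~\ref{lem:sign_production_inefficiency}.

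The expansions \eqref{eq:deadweight_second_order} come from Lemma~\ref{lem:production_linearization}. We have \(\log(X/X^\ast)=\log(1+\mathcal F)-\tfrac12\langle\widehat x,\widehat x\rangle_i+o\), and exponentiating times \(1-\beta_i\) gives \(E^{\mathrm{mix}}=m_i^\ast\mathcal D_{i,t}+o(\pi^2)\). Expanding \(u^{1-\beta_i}\) to second order gives the scale term. \(H_t\) follows from \(z-1-\log z=\tfrac12\widehat c^{\,2}+o\).

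\emph{Step 3: the normalized-shock representation.} By the invariance shown in Lemma~\ref{lem:concave_scaling}, every real quantity along the path, including orders, outputs and consumptions, is a real-analytic function of the impact balance \(\mathbf m^\ast+\tilde\pi h_\theta\) at wage \(w^\ast\). \(P_t\), \(E_t\) and \(H_t\) are real quantities valued at fixed stationary prices, so they inherit the maps \(\mathcal Y^P_t\), \(\mathcal Y^\Lambda_t\) and \(\mathcal Y^\Pi_t\).

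\(\mathcal Y^P_0\equiv0\), since outstanding orders at impact are pre-shock. \(\mathcal Y^\Pi_t=(\mathcal Y^P_t-\mathcal Y^P_{t+1})/w^\ast\) is the definition of \(\Pi_t\).

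\(\mathcal Y^\Lambda_t\ge0\) vanishes at \(\mathbf 0\), so it has zero gradient and a nonnegative Hessian in every direction. The identity then gives \(D\mathcal Y^\Pi_t(\mathbf 0)h_\theta=\dot f_t\). Telescoping this from \(\mathcal Y^P_0=0\) gives \(\dot P_{T+1}=-w^\ast\sum_{t\le T}\dot f_t\). Differentiating \(\mathcal Y_t=\mathcal Y^\Pi_t-\mathcal Y^\Lambda_t\) twice gives \eqref{eq:residual_curvature_split}.

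\emph{Step 4: the even part.} For \eqref{eq:even_part}, we write \(\overline f_t=\overline\Pi_t-\overline\Lambda_t\) and expand \(\Pi_t(\pm\pi)=\mathcal Y^\Pi_t(\tilde\pi_\pm h_\theta)\) to second order with \(\tilde\pi_\pm=\pm\pi-\pi^2+O(\pi^3)\). The linear term gives \(-\dot f_t\pi^2\) and the quadratic term gives \(\tfrac12\pi^2D^2\mathcal Y^\Pi_t[h_\theta,h_\theta]\).

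The main obstacle I expect is the strictness and equality claim for \(E^{\mathrm{mix}}\) with nested CES, and keeping the zero-lead-time orders consistent in the pipeline bookkeeping of Step 1. For the first I would rely on strict concavity within each nest, each nest being a strictly concave function of log inputs off rays.
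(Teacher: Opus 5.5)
Your proposal is correct and follows the paper's proof essentially step for step. The shared steps are market clearing valued at stationary prices, with the pipeline telescoping into \(P_t-P_{t+1}\); the exact split of \(E_{i,t}\), using concavity of \(z\mapsto z^{1-\beta_i}\); the expansions from Lemma~\ref{lem:production_linearization}; the normalized-shock maps from the scaling argument of Lemma~\ref{lem:concave_scaling}; and the even part from \(\tilde\pi(\pm\pi)=\pm\pi-\pi^2+O(\pi^3)\).

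The only variation is how you get \(X_{i,t}/X_i^\ast\le 1+\mathcal F_i(\widehat x_{i,t})\). You use the tangent-plane inequality for the concave, homogeneous composite, whereas the paper applies the power-mean inequality nest by nest: each power mean with exponent below one is increasing and lies below the arithmetic mean, with equality only at equal arguments. The paper's route delivers the equality case for \(E^{\mathrm{mix}}_{i,t}\) directly, and it is exactly the argument you would need to make your ``strict concavity off rays'' claim precise for the nested composite.
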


\begin{proof}
By the order rule
\eqref{eq:dynamic_real_alloc}, the nominal demand
\eqref{eq:dyn_nominal_demand} and the clearing rule
\eqref{eq:dyn_price_update} of the paper, the output of good \(j\) at
date \(t\) is the sum of the orders placed on it and of household
consumption \(c_{j,t}=\gamma_j\wM/p_{j,t}\),
\[
q_{j,t}
=
\sum_{i\in N}x_{ji,t}+c_{j,t}
\]
And the same holds at the stationary equilibrium. Subtracting the two,
multiplying by \(p_j^\ast\) and summing over goods,
\begin{align*}
\sum_{j\in N}p_j^\ast\bigl(c_{j,t}-c_j^\ast\bigr)
&=
\sum_{i\in N}p_i^\ast\bigl(q_{i,t}-q_i^\ast\bigr)
-\sum_{i\in N}\sum_{j\in\mathcal S_i}p_j^\ast\bigl(x_{ji,t}-x_{ji}^\ast\bigr)\\
&=
\sum_{i\in N}\sum_{j\in\mathcal S_i}p_j^\ast\bigl[(x_{ji,t-k_{ji}}-x_{ji}^\ast)-(x_{ji,t}-x_{ji}^\ast)\bigr]-E_t
\end{align*}
by the definition \eqref{eq:production_loss} of \(E_t\). The orders
outstanding at date \(t+1\) are those outstanding at date \(t\), less
the orders placed at the dates \(t-k_{ji}\), which mature at \(t\), plus
the orders placed at \(t\), so the bracketed sum is \(P_t-P_{t+1}\) by
\eqref{eq:pipeline_value}. A purchase with \(k_{ji}=0\) contributes to
neither. Since \(p_j^\ast c_j^\ast=\gamma_jw^\ast\) for
\(j\in\mathcal R\) and \(c_{j,t}=c_j^\ast=0\) otherwise,
\(\sum_jp_j^\ast(c_{j,t}-c_j^\ast)=w^\ast\sum_{i\in\mathcal R}\gamma_i(c_{i,t}/c_i^\ast-1)=w^\ast f_t(\pi)+H_t\)
by \eqref{eq:gdp_def_main} of the paper and \eqref{eq:consumption_loss}. Dividing by \(w^\ast\) gives \eqref{eq:resource_identity}. And
\(z-1-\log z\ge0\) for \(z>0\), with equality only at \(z=1\), so
\(H_t\ge0\), with \(H_t>0\) when \(c_{i,t}\ne c_i^\ast\) for some
\(i\in\mathcal R\).

For the split, fix \(i\) and \(t\), write
\(y_j:=x_{ji,t-k_{ji}}/x_{ji}^\ast=1+\widehat x_{ji,t}\) for the dated
input ratios of firm \(i\), in the notation of
Section~\ref{sec:inefficiency} of the paper, and
\(\overline y:=\sum_{j\in\mathcal S_i}a_{ji}y_j=1+\mathcal F_i(\widehat x_{i,t})\).
By the supplier shares \eqref{eq:supplier_share_def} and
\(m_i^\ast=p_i^\ast q_i^\ast\),
\(p_j^\ast x_{ji}^\ast=a_{ji}(1-\beta_i)m_i^\ast\), so
\(\sum_jp_j^\ast(x_{ji,t-k_{ji}}-x_{ji}^\ast)=(1-\beta_i)m_i^\ast(\overline y-1)\).
Labor stays at its stationary value
(Assumption~\ref{assump:short_run_nominal_rigidity} of the paper), so
\(q_{i,t}/q_i^\ast=(X_{i,t}/X_i^\ast)^{1-\beta_i}\) by \eqref{eq:top_cd},
and
\begin{equation}
E_{i,t}
=
m_i^\ast\Bigl[(1-\beta_i)(\overline y-1)+1-\bigl(X_{i,t}/X_i^\ast\bigr)^{1-\beta_i}\Bigr]
=
E^{\mathrm{mix}}_{i,t}+E^{\mathrm{scale}}_{i,t}
\label{eq:E_exact}
\end{equation}
The map \(z\mapsto z^{1-\beta_i}\) is strictly concave with
slope \(1-\beta_i\) at \(z=1\), so
\(\overline y^{\,1-\beta_i}\le1+(1-\beta_i)(\overline y-1)\), with equality exactly
at \(\overline y=1\), which is the claim for \(E^{\mathrm{scale}}_{i,t}\). By
the stationary shares \eqref{eq:share_quantity_form}, the two nests
normalized at the stationary equilibrium read
\[
\frac{X^{(k)}_{i,t}}{X^{(k)\ast}_i}
=
\Bigl(\sum_{j\in\mathcal S_i^{(k)}}\widetilde a^{(k)}_{ji}\,y_j^{\rho}\Bigr)^{1/\rho},
\qquad
\frac{X_{i,t}}{X_i^\ast}
=
\Bigl(\sum_{k\in\mathcal K_i}s_i^{(k)}\Bigl(\frac{X^{(k)}_{i,t}}{X^{(k)\ast}_i}\Bigr)^{\rho_c}\Bigr)^{1/\rho_c}
\]
power means with exponents \(\rho\) and \(\rho_c\) below one and
weights that sum to one. A power mean with exponent below one is
increasing in each argument and lies below the arithmetic mean, with
equality exactly when its arguments are equal. Hence
\(X^{(k)}_{i,t}/X^{(k)\ast}_i\le\overline y^{(k)}:=\sum_{j\in\mathcal S_i^{(k)}}\widetilde a^{(k)}_{ji}y_j\),
and
\[
\frac{X_{i,t}}{X_i^\ast}
\;\le\;
\Bigl(\sum_{k\in\mathcal K_i}s_i^{(k)}\bigl(\overline y^{(k)}\bigr)^{\rho_c}\Bigr)^{1/\rho_c}
\;\le\;
\sum_{k\in\mathcal K_i}s_i^{(k)}\overline y^{(k)}
\;=\;
\overline y
\]
the last equality by \(a_{ji}=s_i^{(k_{ji})}\widetilde a^{(k_{ji})}_{ji}\). So
\((X_{i,t}/X_i^\ast)^{1-\beta_i}\le\overline y^{\,1-\beta_i}\) and
\(E^{\mathrm{mix}}_{i,t}\ge0\), with equality exactly when all the
\(y_j\) are equal.

At a fixed date the dated input ratios are
real-analytic in \(\pi\) with \(y_j-1=O(\pi)\)
(Lemma~\ref{lem:propagation_linearization}). By the expansion
\eqref{eq:log_q_scale_composition} and
Definition~\ref{def:production_inefficiency} of the paper,
\((1-\beta_i)\log(X_{i,t}/X_i^\ast)=\Delta\log q_{i,t}=(1-\beta_i)\bigl((\overline y-1)-\tfrac12(\overline y-1)^2\bigr)-\mathcal D_{i,t}+o(\pi^2)\),
while \(\log\overline y=(\overline y-1)-\tfrac12(\overline y-1)^2+O(\pi^3)\). So
\((1-\beta_i)\log\bigl(\overline y X_i^\ast/X_{i,t}\bigr)=\mathcal D_{i,t}+o(\pi^2)\), and
\[
E^{\mathrm{mix}}_{i,t}
=
m_i^\ast\overline y^{\,1-\beta_i}\Bigl[1-\exp\Bigl(-(1-\beta_i)\log\frac{\overline y X_i^\ast}{X_{i,t}}\Bigr)\Bigr]
=
m_i^\ast\mathcal D_{i,t}+o(\pi^2)
\]
since \(\overline y^{\,1-\beta_i}=1+O(\pi)\) and \(\mathcal D_{i,t}=O(\pi^2)\).
Expanding
\(\overline y^{\,1-\beta_i}=1+(1-\beta_i)(\overline y-1)-\tfrac12\beta_i(1-\beta_i)(\overline y-1)^2+O(\pi^3)\)
gives the form of \(E^{\mathrm{scale}}_{i,t}\), and
\(z-1-\log z=\tfrac12(z-1)^2+O((z-1)^3)\) at \(z=1+\widehat c_{i,t}\)
gives that of \(H_t\).

Every real quantity of the post-shock path depends on \(\pi\) only
through \(\tilde\pi\,h_\theta\), and is real-analytic in the initial
balance vector near the stationary one, by the proof of
Lemma~\ref{lem:concave_scaling}. The
quantities \(P_t\), \(E_t\) and \(H_t\) are values at the fixed
stationary prices of real quantities, which gives the maps
\(\mathcal Y^{P}_t\), \(\mathcal Y^{\Pi}_t\) and \(\mathcal Y^{\Lambda}_t\).
By the definition of \(\Pi_t\),
\(\mathcal Y^{\Pi}_t=(\mathcal Y^{P}_t-\mathcal Y^{P}_{t+1})/w^\ast\), and
\(\mathcal Y^{P}_0\equiv0\) since every order outstanding at date \(0\)
was placed before the shock. The identity \eqref{eq:resource_identity}
and the split \eqref{eq:E_exact} hold along every path of the economy
at the stationary wage started from the stationary pipeline, so
\(\mathcal Y_t=\mathcal Y^{\Pi}_t-\mathcal Y^{\Lambda}_t\) near
\(\mathbf 0\), and \(\mathcal Y^{\Lambda}_t\ge0\) there with
\(\mathcal Y^{\Lambda}_t(\mathbf 0)=0\). The map
\(\mathcal Y^{\Lambda}_t\) is therefore smallest at \(\mathbf 0\), where
its derivative vanishes and its Hessian is positive semidefinite.
Differentiating \(\mathcal Y_t=\mathcal Y^{\Pi}_t-\mathcal Y^{\Lambda}_t\)
once in the direction \(h_\theta\) gives
\(D\mathcal Y^{\Pi}_t(\mathbf 0)h_\theta=D\mathcal Y_t(\mathbf 0)h_\theta=\dot f_t\),
and twice gives \eqref{eq:residual_curvature_split}. Summing
\(\mathcal Y^{\Pi}_t\) over \(t\le T\) gives
\(\sum_{t\le T}\mathcal Y^{\Pi}_t=-\mathcal Y^{P}_{T+1}/w^\ast\), and
differentiating in the direction \(h_\theta\) gives
\(\dot P_{T+1}=-w^\ast\sum_{t\le T}\dot f_t\). Averaging
\eqref{eq:resource_identity} at \(\pi\) and \(-\pi\) gives
\(-\overline f_t=\tfrac12[\Lambda_t(\pi)+\Lambda_t(-\pi)]-\tfrac12[\Pi_t(\pi)+\Pi_t(-\pi)]\)
exactly. With \(\tilde\pi(\pi)=\pi-\pi^2+O(\pi^3)\), Taylor's theorem along the
segment \(\{\tilde\pi\,h_\theta\}\) gives
\(\Pi_t(\pi)=\dot f_t\pi+\bigl(-\dot f_t+\tfrac12D^2\mathcal Y^{\Pi}_t(\mathbf 0)[h_\theta,h_\theta]\bigr)\pi^2+O(\pi^3)\),
so
\(\tfrac12[\Pi_t(\pi)+\Pi_t(-\pi)]=-\dot f_t\pi^2+\tfrac12\pi^2D^2\mathcal Y^{\Pi}_t(\mathbf 0)[h_\theta,h_\theta]+o(\pi^2)\),
which is \eqref{eq:even_part}.
\end{proof}

With \(\tilde\pi(\pi)=\pi/(1+\pi)\), for \(|\pi|<1\),
\begin{equation}
-\bigl[\tilde\pi(\pi)+\tilde\pi(-\pi)\bigr]
=\frac{2\pi^2}{1-\pi^2}
\label{eq:normalized_sizes}
\end{equation}
which for \(0\le\pi<1\) is \(|\tilde\pi(-\pi)|-\tilde\pi(\pi)\), the
excess of the normalized shock of the monetary contraction over that
of the monetary expansion in absolute value.
Write
\[
\Pi^\sharp_t(\pi):=\Pi_t(\pi)-\dot f_t\,\tilde\pi,
\qquad
P^\sharp_t(\pi):=P_t(\pi)-\dot P_t\,\tilde\pi
\]
for the parts of the drawdown of the goods in transit and of their excess value beyond first order in the normalized shock. For each of the maps \(\mathcal Y\) of
Lemmas~\ref{lem:concave_scaling} and~\ref{lem:resource_accounting_date},
write
\(\|D\mathcal Y(\mathbf 0)\|_m:=\sup\{|D\mathcal Y(\mathbf 0)\mathbf v|:\mathbf v\in\mathcal Z_{\mathbf A},\ \|\mathbf v\|_m\le1\}\)
for the norm of its derivative, and \(\|D^2\mathcal Y(\mathbf 0)\|_m\)
for the norm of its Hessian, as in Lemma~\ref{lem:concave_scaling}.

\begin{lemma}[The gap between the responses to a monetary expansion and to a monetary contraction]
\label{lem:gap_exact}
Suppose that Assumptions~\ref{assump:diffuse_incidence}
and~\ref{assump:monotone_decay_spectrum} of the paper hold and that
\(|\pi|\le\overline\pi\). Then at every date \(t\) and for every
\(T\ge0\), exactly,
\begin{equation}
-\bigl[f_t(\pi)+f_t(-\pi)\bigr]
=\bigl[\Lambda_t(\pi)+\Lambda_t(-\pi)\bigr]
+\frac{2\pi^2}{1-\pi^2}\,\dot f_t
-\bigl[\Pi^\sharp_t(\pi)+\Pi^\sharp_t(-\pi)\bigr]
\label{eq:gap_date_exact}
\end{equation}
and
\begin{equation}
-\sum_{t\le T}\bigl[f_t(\pi)+f_t(-\pi)\bigr]
=\sum_{t\le T}\bigl[\Lambda_t(\pi)+\Lambda_t(-\pi)\bigr]
+\frac{2\pi^2}{1-\pi^2}\sum_{t\le T}\dot f_t
+\frac{1}{w^\ast}\bigl[P^\sharp_{T+1}(\pi)+P^\sharp_{T+1}(-\pi)\bigr]
\label{eq:gap_interval_exact}
\end{equation}
At a fixed date and horizon, as \(\pi\to0\),
\begin{equation}
\begin{aligned}
\Lambda_t(\pi)+\Lambda_t(-\pi)&=\pi^2D^2\mathcal Y^{\Lambda}_t(\mathbf 0)[h_\theta,h_\theta]+O(\pi^4)\\
\Pi^\sharp_t(\pi)+\Pi^\sharp_t(-\pi)&=\pi^2D^2\mathcal Y^{\Pi}_t(\mathbf 0)[h_\theta,h_\theta]+O(\pi^4)\\
P^\sharp_{T+1}(\pi)+P^\sharp_{T+1}(-\pi)&=\pi^2D^2\mathcal Y^{P}_{T+1}(\mathbf 0)[h_\theta,h_\theta]+O(\pi^4)
\end{aligned}
\label{eq:gap_expansions}
\end{equation}
and the coefficients of the terms of \eqref{eq:gap_date_exact} and
\eqref{eq:gap_interval_exact} are bounded by the slack,
\begin{equation}
\begin{aligned}
|\dot f_t|&\le\|D\mathcal Y_t(\mathbf 0)\|_m\,C_\zeta(1-\theta)\\
0\;\le\;D^2\mathcal Y^{\Lambda}_t(\mathbf 0)[h_\theta,h_\theta]&\le\|D^2\mathcal Y^{\Lambda}_t(\mathbf 0)\|_m\,C_\zeta^2(1-\theta)^2\\
\bigl|D^2\mathcal Y^{\Pi}_t(\mathbf 0)[h_\theta,h_\theta]\bigr|&\le\|D^2\mathcal Y^{\Pi}_t(\mathbf 0)\|_m\,C_\zeta^2(1-\theta)^2\\
\bigl|D^2\mathcal Y^{P}_{T+1}(\mathbf 0)[h_\theta,h_\theta]\bigr|&\le\|D^2\mathcal Y^{P}_{T+1}(\mathbf 0)\|_m\,C_\zeta^2(1-\theta)^2
\end{aligned}
\label{eq:gap_slack_orders}
\end{equation}
\end{lemma}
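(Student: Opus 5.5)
The plan is to build everything on the normalized-shock representation of Lemma~\ref{lem:resource_accounting_date}. Each of \(f_t\), \(\Pi_t\), \(\Lambda_t\) and \(P_t\) equals a real-analytic map \(\mathcal Y_t\), \(\mathcal Y^{\Pi}_t\), \(\mathcal Y^{\Lambda}_t\) or \(\mathcal Y^{P}_t\), independent of \(\pi\), evaluated at \(\tilde\pi\,h_\theta\). The exact resource identity \eqref{eq:resource_identity} then holds at both \(\pi\) and \(-\pi\). Adding the two gives
\[
-[f_t(\pi)+f_t(-\pi)]=[\Lambda_t(\pi)+\Lambda_t(-\pi)]-[\Pi_t(\pi)+\Pi_t(-\pi)].
\]
Next I substitute \(\Pi_t=\dot f_t\,\tilde\pi+\Pi^\sharp_t\), using \(D\mathcal Y^{\Pi}_t(\mathbf 0)h_\theta=\dot f_t\), and apply \eqref{eq:normalized_sizes}, which says \(-[\tilde\pi(\pi)+\tilde\pi(-\pi)]=2\pi^2/(1-\pi^2)\). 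This yields \eqref{eq:gap_date_exact} as an exact algebraic identity with no remainder.

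For the interval version I sum the date identity over \(t\le T\). The drawdowns telescope, \(\sum_{t\le T}\Pi_t=(P_0-P_{T+1})/w^\ast=-P_{T+1}/w^\ast\), because \(P_0=0\). The first-order parts then match through \(\dot P_{T+1}=-w^\ast\sum_{t\le T}\dot f_t\), also from Lemma~\ref{lem:resource_accounting_date}. Concretely, \(-\sum_{t\le T}\Pi_t=P_{T+1}/w^\ast=(\dot P_{T+1}\tilde\pi+P^\sharp_{T+1})/w^\ast\). Its \(\tilde\pi\)-part equals \(-\tilde\pi\sum_{t\le T}\dot f_t\), and symmetrizing with \eqref{eq:normalized_sizes} produces \eqref{eq:gap_interval_exact} exactly. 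The identities are only claimed for \(|\pi|\le\overline\pi\), where the trajectory and the representation are valid by Proposition~\ref{prop:stability_stationary}.

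For the expansions \eqref{eq:gap_expansions}, write \(g(\pi)=\mathcal Y(\tilde\pi h_\theta)-D\mathcal Y(\mathbf 0)h_\theta\,\tilde\pi\) for any of the three maps. In the variable \(\tilde\pi\) this function is real-analytic and has no linear term. Its symmetrization in \(\pi\) is even, so the odd powers cancel and only \(\pi^2\) and \(\pi^4\) terms survive. I would therefore Taylor expand in \(\tilde\pi\), writing \(\mathcal Y(\tilde\pi h)=\tfrac12 D^2\mathcal Y(\mathbf 0)[h,h]\tilde\pi^2+c_3\tilde\pi^3+O(\tilde\pi^4)\). Then I would use \(\tilde\pi(\pm\pi)^2=\pi^2\mp2\pi^3+O(\pi^4)\) and \(\tilde\pi(\pm\pi)^3=\pm\pi^3+O(\pi^4)\). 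Summing over the two signs gives \(D^2\mathcal Y(\mathbf 0)[h,h]\pi^2+O(\pi^3)\), because the \(\pi^3\) terms cancel in the sum. The remainder is in fact \(O(\pi^4)\), since the symmetrized function is even and analytic in \(\pi\). For \(\Lambda\) the linear part vanishes, so \(\Lambda\) needs no subtraction.

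The slack bounds \eqref{eq:gap_slack_orders} follow from the definitions of the operator norms on \(\mathcal Z_{\mathbf A}\), because \(h_\theta\in\mathcal Z_{\mathbf A}\) and \(\|h_\theta\|_m\le C_\zeta(1-\theta)\) by \eqref{eq:impact_misalignment_constant}. The lower bound \(D^2\mathcal Y^\Lambda_t(\mathbf 0)[h_\theta,h_\theta]\ge0\) comes from Lemma~\ref{lem:resource_accounting_date}. The main obstacle is bookkeeping rather than analysis. One point is to confirm that the telescoped \(P\)-term carries the sign \(+P^\sharp\) in \eqref{eq:gap_interval_exact}. The other is that the \(O(\pi^4)\) claim needs the evenness argument and not merely the second-order Taylor expansion.
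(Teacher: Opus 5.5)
Your proposal is correct and follows the paper's proof step for step. The date identity comes from the resource identity at $\pm\pi$, the definition of $\Pi^\sharp_t$ and \eqref{eq:normalized_sizes}. The interval identity comes from the telescoped sum with $P_0=0$ and $\dot P_{T+1}=-w^\ast\sum_{t\le T}\dot f_t$. The expansions \eqref{eq:gap_expansions} come from a Taylor expansion in $\tilde\pi$ of the three maps, each with vanishing first derivative at zero. The slack bounds come from $h_\theta\in\mathcal Z_{\mathbf A}$, $\|h_\theta\|_m\le C_\zeta(1-\theta)$ and the positive semidefinite Hessian of $\mathcal Y^{\Lambda}_t$.

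The only difference is cosmetic. Your expansions already give $\tilde\pi(\pi)^2+\tilde\pi(-\pi)^2=2\pi^2+O(\pi^4)$ and $\tilde\pi(\pi)^3+\tilde\pi(-\pi)^3=O(\pi^4)$, and these yield the $O(\pi^4)$ remainder directly, as the paper shows. Your intermediate ``$O(\pi^3)$'' is therefore only an understatement, and your evenness argument is an equally valid shortcut to the same conclusion.
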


\begin{proof}
By \eqref{eq:resource_identity} of the
paper at \(\pi\) and at \(-\pi\),
\(f_t(\pm\pi)=\Pi_t(\pm\pi)-\Lambda_t(\pm\pi)\). By the definition of
\(\Pi^\sharp_t\),
\(\Pi_t(\pi)+\Pi_t(-\pi)=\dot f_t\,[\tilde\pi(\pi)+\tilde\pi(-\pi)]+\Pi^\sharp_t(\pi)+\Pi^\sharp_t(-\pi)\),
and \eqref{eq:normalized_sizes} gives \eqref{eq:gap_date_exact}.
Summing \eqref{eq:resource_identity} over \(t\le T\) and using
\(P_0=0\) gives
\(\sum_{t\le T}f_t(\pi)=-P_{T+1}(\pi)/w^\ast-\sum_{t\le T}\Lambda_t(\pi)\),
as in the proof of Proposition~\ref{prop:resource_accounting} of the
paper. By Lemma~\ref{lem:resource_accounting_date},
\(P_{T+1}(\pi)=-w^\ast\bigl(\sum_{t\le T}\dot f_t\bigr)\tilde\pi+P^\sharp_{T+1}(\pi)\).
Adding the values at \(\pi\) and at \(-\pi\) and using
\eqref{eq:normalized_sizes} gives \eqref{eq:gap_interval_exact}.

For \eqref{eq:gap_expansions}, let \(g\) be a function of the
normalized shock, real-analytic near zero, with \(g(0)=g'(0)=0\). Then
\[
g\bigl(\tilde\pi(\pi)\bigr)+g\bigl(\tilde\pi(-\pi)\bigr)
=\tfrac12g''(0)\bigl[\tilde\pi(\pi)^2+\tilde\pi(-\pi)^2\bigr]
+\tfrac16g'''(0)\bigl[\tilde\pi(\pi)^3+\tilde\pi(-\pi)^3\bigr]+O(\pi^4)
\]
with \(\tilde\pi(\pi)^2+\tilde\pi(-\pi)^2=2\pi^2+O(\pi^4)\) and
\(\tilde\pi(\pi)^3+\tilde\pi(-\pi)^3=O(\pi^4)\). Take for \(g\) the
functions \(\tilde\pi\mapsto\mathcal Y^{\Lambda}_t(\tilde\pi h_\theta)\),
\(\tilde\pi\mapsto\mathcal Y^{\Pi}_t(\tilde\pi h_\theta)-\dot f_t\tilde\pi\)
and
\(\tilde\pi\mapsto\mathcal Y^{P}_{T+1}(\tilde\pi h_\theta)-\dot P_{T+1}\tilde\pi\).
By Lemma~\ref{lem:resource_accounting_date} they are real-analytic near
zero and vanish there together with their first derivatives, the first
since \(\Lambda_t=\Pi_t-f_t\) and
\(D\mathcal Y^{\Pi}_t(\mathbf 0)h_\theta=\dot f_t\). Their values at
\(\tilde\pi(\pm\pi)\) are \(\Lambda_t(\pm\pi)\), \(\Pi^\sharp_t(\pm\pi)\)
and \(P^\sharp_{T+1}(\pm\pi)\), and their second derivatives at zero
are the three Hessians in the direction \(h_\theta\), which gives
\eqref{eq:gap_expansions}.

The misalignment \(h_\theta\) lies in \(\mathcal Z_{\mathbf A}\) and
satisfies \(\|h_\theta\|_m\le C_\zeta(1-\theta)\) (proof of
Lemma~\ref{lem:concave_scaling}), and
\(\dot f_t=D\mathcal Y_t(\mathbf 0)h_\theta\). The linear and bilinear
bounds give \eqref{eq:gap_slack_orders}. The lower bound on the
deadweight loss is the positive semidefinite Hessian of
\(\mathcal Y^{\Lambda}_t\) at \(\mathbf 0\)
(Lemma~\ref{lem:resource_accounting_date}).
\end{proof}

\begin{lemma}[The loss on consumed output and the loss in transit]
\label{lem:consumed_transit}
Suppose that Assumptions~\ref{assump:diffuse_incidence}
and~\ref{assump:monotone_decay_spectrum} of the paper hold, that
\(|\pi|\le\overline\pi\) and that \(\underline k\ge1\). Then at every
date the deadweight loss splits exactly as
\(\Lambda_t=\Lambda^{\mathrm c}_t+\Lambda^{\mathrm{tr}}_t\), into the
loss on consumed output and the loss in transit,
\begin{equation}
\Lambda^{\mathrm c}_t:=\frac{1}{w^\ast}\Bigl[\sum_{i\in\mathcal R}r_{i,t}E_{i,t}+H_t\Bigr],
\qquad
\Lambda^{\mathrm{tr}}_t:=\frac{1}{w^\ast}\sum_{i\in N}(1-r_{i,t})E_{i,t}
\label{eq:loss_split}
\end{equation}
Both are nonnegative. The loss in transit is also part of the drawdown \(\Pi_t\) of the goods in transit,
\begin{equation}
\Pi_t-\Lambda^{\mathrm{tr}}_t
=f_t+\Lambda^{\mathrm c}_t
=\sum_{i\in\mathcal R}\gamma_i\Bigl[\bigl(1+(1-\beta_i)\,\mathcal F_i(\widehat x_{i,t})\bigr)\frac{r_{i,t}}{r_i}-1\Bigr]
\label{eq:lossfree_consumption}
\end{equation}
and, summed over the transition,
\(\sum_{t\ge0}(\Pi_t-\Lambda^{\mathrm{tr}}_t)=-\sum_{t\ge0}\Lambda^{\mathrm{tr}}_t\).
\end{lemma}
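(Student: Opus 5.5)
The plan is to build everything out of the exact identity \eqref{eq:E_exact} for $E_{i,t}$ and the exact resource identity of Lemma~\ref{lem:resource_accounting_date}, weighting each firm's contribution by how its output divides between the household and the pipeline. Throughout, write $r_{i,t}=c_{i,t}/q_{i,t}$, with $r_{i,t}=0$ off the retail tier.

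\textbf{The split and its signs.} Since $\sum_i E_{i,t}=E_t$ and $r_{i,t}+(1-r_{i,t})=1$ at every firm, we have
\[
E_t=\sum_{i\in\mathcal R}r_{i,t}E_{i,t}+\sum_{i\in N}(1-r_{i,t})E_{i,t} .
\]
Adding $H_t$ to the first sum and dividing by $w^\ast$ gives $\Lambda_t=\Lambda^{\mathrm c}_t+\Lambda^{\mathrm{tr}}_t$ exactly. For nonnegativity I will use three facts. First, $E_{i,t}=E^{\mathrm{mix}}_{i,t}+E^{\mathrm{scale}}_{i,t}\ge0$ by Lemma~\ref{lem:resource_accounting_date}. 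Second, $0\le r_{i,t}\le1$, because $q_{i,t}=\sum_l x_{il,t}+c_{i,t}$ with every term nonnegative. Third, $H_t\ge0$.

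\textbf{The formula \eqref{eq:lossfree_consumption}.} The first equality is just the resource identity rearranged: $\Pi_t-\Lambda^{\mathrm{tr}}_t=f_t+\Lambda_t-\Lambda^{\mathrm{tr}}_t=f_t+\Lambda^{\mathrm c}_t$. For the second, I compute $w^\ast f_t+H_t$ directly. From the proof of Lemma~\ref{lem:resource_accounting_date},
\[
w^\ast f_t+H_t=w^\ast\sum_{i\in\mathcal R}\gamma_i\Bigl(\frac{c_{i,t}}{c_i^\ast}-1\Bigr).
\]
Writing $c_{i,t}/c_i^\ast=(q_{i,t}/q_i^\ast)(r_{i,t}/r_i)$, the identity \eqref{eq:E_exact} gives
\[
\frac{q_{i,t}}{q_i^\ast}=1+(1-\beta_i)\mathcal F_i(\widehat x_{i,t})-\frac{E_{i,t}}{m_i^\ast}.
\]
Substituting, and using $\gamma_i w^\ast/(m_i^\ast r_i)=1$, the $E_{i,t}$ term becomes exactly $-\sum_{i\in\mathcal R}r_{i,t}E_{i,t}$. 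Dividing by $w^\ast$ then yields $f_t+\Lambda^{\mathrm c}_t$ as the stated bracket.

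\textbf{The summed statement.} I sum $\Pi_t$ over $t\le T$ and get $-P_{T+1}/w^\ast\to0$, with $P_0=0$, as in the proof of Proposition~\ref{prop:resource_accounting}. This gives $\sum_t\Pi_t=0$, and therefore $\sum_t(\Pi_t-\Lambda^{\mathrm{tr}}_t)=-\sum_t\Lambda^{\mathrm{tr}}_t$. Convergence of $\sum_t\Lambda^{\mathrm{tr}}_t$ follows because $0\le\Lambda^{\mathrm{tr}}_t\le\Lambda_t$ and $\sum_t\Lambda_t<\infty$ by Proposition~\ref{prop:resource_accounting}.

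\textbf{Expected obstacle.} The only delicate step is the algebra in the second equality, where the $E$ terms must be matched against the weights $r_{i,t}$. The hypothesis $\underline k\ge1$ appears only as the standing condition under which the pipeline bookkeeping and $P_0=0$ were established, and I would flag that explicitly.
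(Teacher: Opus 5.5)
Your proposal is correct and follows the paper's proof essentially step for step. Both arguments use:
\begin{itemize}
\item the exact identity \eqref{eq:E_exact}, rewritten as $q_{i,t}/q_i^\ast=1+(1-\beta_i)\mathcal F_i(\widehat x_{i,t})-E_{i,t}/m_i^\ast$;
\item the factorization $c_{i,t}/c_i^\ast=(q_{i,t}/q_i^\ast)(r_{i,t}/r_i)$ together with $\gamma_i w^\ast/(r_i m_i^\ast)=1$;
\item the resource identity for the first equality;
\item nonnegativity from $E_{i,t}\ge0$, $H_t\ge0$ and $0\le r_{i,t}\le1$;
\item the telescoping of $\Pi_t$, with $0\le\Lambda^{\mathrm{tr}}_t\le\Lambda_t$ giving summability, for the summed claim.
\end{itemize}
You are also right that $\underline k\ge1$ plays no real role here: the paper's proof does not use it either.
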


\begin{proof}
By \eqref{eq:E_exact} and \(q_{i,t}/q_i^\ast=(X_{i,t}/X_i^\ast)^{1-\beta_i}\),
\(E_{i,t}=m_i^\ast\bigl[(1-\beta_i)\mathcal F_i(\widehat x_{i,t})+1-q_{i,t}/q_i^\ast\bigr]\)
exactly, so that
\(q_{i,t}/q_i^\ast=1+(1-\beta_i)\mathcal F_i(\widehat x_{i,t})-E_{i,t}/m_i^\ast\).
For a retailer, \(c_{i,t}/c_i^\ast=(q_{i,t}/q_i^\ast)(r_{i,t}/r_i)\),
and \(\gamma_i/(r_im_i^\ast)=1/w^\ast\) by
\eqref{eq:retail_absorption_def} of the paper. Hence
\[
\sum_{i\in\mathcal R}\gamma_i\Bigl(\frac{c_{i,t}}{c_i^\ast}-1\Bigr)
=\sum_{i\in\mathcal R}\gamma_i\Bigl[\bigl(1+(1-\beta_i)\mathcal F_i(\widehat x_{i,t})\bigr)\frac{r_{i,t}}{r_i}-1\Bigr]
-\frac{1}{w^\ast}\sum_{i\in\mathcal R}r_{i,t}E_{i,t}
\]
By \eqref{eq:gdp_def_main} of the paper and \eqref{eq:consumption_loss},
\(f_t=\sum_{i\in\mathcal R}\gamma_i(c_{i,t}/c_i^\ast-1)-H_t/w^\ast\),
which gives the expression for \(f_t+\Lambda^{\mathrm c}_t\). The split
of \(\Lambda_t=(E_t+H_t)/w^\ast\) is immediate, and
\eqref{eq:resource_identity} of the paper gives
\(\Pi_t-\Lambda^{\mathrm{tr}}_t=f_t+\Lambda^{\mathrm c}_t\). Both
losses are nonnegative, since \(E_{i,t}\ge0\) and \(H_t\ge0\)
(Lemma~\ref{lem:resource_accounting_date}) and \(0\le r_{i,t}\le1\),
the output of a firm being what the household and the firms buy of it.
Finally, the drawdowns \(\Pi_t\) of the goods in transit sum to zero over the transition, since
\(P_0=0\) and \(P_t\to0\), and the losses
\(0\le\Lambda^{\mathrm{tr}}_t\le\Lambda_t\) are summable
(Proposition~\ref{prop:resource_accounting} of the paper), which gives
the last claim.
\end{proof}

 At the date
\(\underline k\), write \(\dot{\widehat x}_i:=\dot{\widehat x}_{i,\underline k}\)
for the first-order dated input deviations of firm \(i\), and
\(\dot{\widehat m}^\perp_i:=\dot{\widehat m}^\perp_{i,\underline k+1}\)
for the first-order misalignment of the next balance of retailer \(i\)
of Lemma~\ref{lem:gdp_labor_income_anchor}. With \(\phi_i\) the scaled
relative expenditure impulse \eqref{eq:scaled_impulse_def} of the paper, write
\(\dot{\widehat c}_i:=\phi_i-\dot{\widehat m}^\perp_i\) for the
first-order response of the household's consumption of good \(i\).
Write \(\ddot\Lambda^{\mathrm c}_{\underline k}\) and
\(\ddot\Lambda^{\mathrm{tr}}_{\underline k}\) for the second
derivatives at \(\pi=0\) of the loss on consumed output and of the loss
in transit.

\begin{lemma}[The deadweight loss and the drawdown of the goods in transit at the shortest lead time]
\label{lem:shortest_lag_curvature}
Suppose that Assumptions~\ref{assump:diffuse_incidence}
and~\ref{assump:monotone_decay_spectrum} of the paper hold and that
\(\underline k\ge1\). Then at the date \(\underline k\) the first-order
dated input deviations are \(\dot{\widehat x}_{ji}=\sigma_i-\overline\sigma_j\)
on firm \(i\)'s suppliers of lead time \(\underline k\) and zero on its
other suppliers, so that
\(\mathcal F_i(\dot{\widehat x}_i)=\xi_{i,\underline k}\), and
\begin{equation}
\begin{aligned}
\ddot\Lambda^{\mathrm c}_{\underline k}
&=\sum_{i\in\mathcal R}\gamma_i\Bigl[(1-\beta_i)\bigl(\langle\dot{\widehat x}_i,\dot{\widehat x}_i\rangle_i+\beta_i\xi_{i,\underline k}^2\bigr)+\dot{\widehat c}_i^{\,2}\Bigr]\\
\ddot\Lambda^{\mathrm{tr}}_{\underline k}
&=\sum_{i\in N}\Bigl(\frac{m_i^\ast}{w^\ast}-\gamma_i\Bigr)(1-\beta_i)\bigl(\langle\dot{\widehat x}_i,\dot{\widehat x}_i\rangle_i+\beta_i\xi_{i,\underline k}^2\bigr)\\
D^2\mathcal Y^{\Pi}_{\underline k}(\mathbf 0)[h_\theta,h_\theta]-\ddot\Lambda^{\mathrm{tr}}_{\underline k}
&=-2\sum_{i\in\mathcal R}\gamma_i(1-\beta_i)\sum_{j\in\mathcal S_i}a_{ji}\,\overline\sigma_j\,\dot{\widehat x}_{ji}
+2\sum_{i\in\mathcal R}\gamma_i\,\dot{\widehat m}^\perp_i\bigl(\dot{\widehat m}^\perp_i-\phi_i\bigr)
\end{aligned}
\label{eq:shortest_lag_curvatures}
\end{equation}
\end{lemma}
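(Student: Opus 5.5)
The plan is to work entirely at the single date \(\underline k\), where everything is explicit because no order placed after impact has matured and no supplier output has moved. First I establish the first-order inputs. By the proof of Proposition~\ref{prop:impact_reallocation}, the bundle at \(\underline k\) consists of pre-shock orders, which are independent of \(\pi\), together with the impact-date orders on lead-time-\(\underline k\) suppliers. For the latter, \(\dot{\widehat x}_{ji}=\dot{\widehat e}_{i,0}-\dot{\widehat d}_{j,0}=\sigma_i-\overline\sigma_j\), and averaging gives \(\mathcal F_i(\dot{\widehat x}_i)=\xi_{i,\underline k}\).

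\textbf{The two loss curvatures.} I read these off the second-order expansions \eqref{eq:deadweight_second_order} of Lemma~\ref{lem:resource_accounting_date}.
\begin{itemize}
\item \emph{Production part.} Combining \(E^{\mathrm{mix}}_{i,\underline k}=m_i^\ast\mathcal D_{i,\underline k}+o(\pi^2)\) with \(\mathcal D_{i,\underline k}=\tfrac12\pi^2(1-\beta_i)\langle\dot{\widehat x}_i,\dot{\widehat x}_i\rangle_i+o(\pi^2)\) and the scale term gives \(\ddot E_{i,\underline k}=m_i^\ast(1-\beta_i)\bigl(\langle\dot{\widehat x}_i,\dot{\widehat x}_i\rangle_i+\beta_i\xi_{i,\underline k}^2\bigr)\).
\item \emph{Weights in the split.} In the split \eqref{eq:loss_split}, \(E_{i,t}\) is \(O(\pi^2)\) and \(r_{i,t}=r_i+O(\pi)\). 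At second order the weights are therefore the stationary ones: \(r_im_i^\ast/w^\ast=\gamma_i\) on consumed output, and \((1-r_i)m_i^\ast/w^\ast=m_i^\ast/w^\ast-\gamma_i\) in transit, which also holds for non-retailers since their \(\gamma_i=0\).
\item \emph{Consumption part.} \(\ddot H_{\underline k}/w^\ast=\sum_{i\in\mathcal R}\gamma_i\dot{\widehat c}_i^{\,2}\). By \eqref{eq:gdp_gamma_form} and Lemma~\ref{lem:gdp_labor_income_anchor}, \(\dot{\widehat c}_i=D_\pi\log q_{i,\underline k}+\dot\Theta\)-component \(=\phi_i-\dot{\widehat m}^\perp_i\).
\end{itemize}
This yields the first two lines of \eqref{eq:shortest_lag_curvatures}.

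\textbf{The drawdown line.} I use \eqref{eq:lossfree_consumption}: \(\Pi_{\underline k}-\Lambda^{\mathrm{tr}}_{\underline k}=\sum_{i\in\mathcal R}\gamma_i[(1+(1-\beta_i)\mathcal F_i(\widehat x_{i,\underline k}))\,r_{i,\underline k}/r_i-1]\), which is exact. Writing \(\Pi_{\underline k}=\mathcal Y^{\Pi}_{\underline k}(\tilde\pi h_\theta)\) gives \(\ddot\Pi_{\underline k}=D^2\mathcal Y^\Pi_{\underline k}(\mathbf 0)[h_\theta,h_\theta]-2\dot f_{\underline k}\). So I compute the second derivative of the right side and add \(2\dot f_{\underline k}=2\sum_{i\in\mathcal R}\gamma_i(\phi_i-\dot{\widehat m}^\perp_i)\).

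The factors on the right side behave as follows.
\begin{itemize}
\item \(r_{i,\underline k}/r_i=(1+\pi)/(1+\pi(1+\dot{\widehat m}^\perp_i))\) exactly, with second derivative \(2(1+\dot{\widehat m}^\perp_i)^2-2(1+\dot{\widehat m}^\perp_i)=2\dot{\widehat m}^\perp_i(1+\dot{\widehat m}^\perp_i)\) and first derivative \(-\dot{\widehat m}^\perp_i\).
\item For the scale \(\mathcal F_i(\widehat x_{i,\underline k})\) I need its second derivative \(\mathcal F_i(\ddot{\widehat x}_i)\). The impact order is exactly \(x_{ji,0}/x_{ji}^\ast=\bigl(1+\pi(1+\sigma_i)\bigr)/\bigl(1+\pi(1+\overline\sigma_j)\bigr)\), since expenditures and demands are affine in \(\pi\) and \(q_{j,0}\) is fixed. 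Its second derivative is \(2(1+\overline\sigma_j)^2-2(1+\sigma_i)(1+\overline\sigma_j)=-2(1+\overline\sigma_j)\dot{\widehat x}_{ji}\).
\end{itemize}

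Applying the product rule, the result is
\[
(1-\beta_i)\sum_j a_{ji}\bigl(-2(1+\overline\sigma_j)\dot{\widehat x}_{ji}\bigr)-2\phi_i\dot{\widehat m}^\perp_i+2\dot{\widehat m}^\perp_i(1+\dot{\widehat m}^\perp_i).
\]
Adding \(2\dot f_{\underline k}\) cancels the \(-2\phi_i\) coming from the \(-2\sum_j a_{ji}\dot{\widehat x}_{ji}\) piece, and also cancels the linear \(2\dot{\widehat m}^\perp_i\). What remains is \(-2(1-\beta_i)\sum_j a_{ji}\overline\sigma_j\dot{\widehat x}_{ji}+2\dot{\widehat m}^\perp_i(\dot{\widehat m}^\perp_i-\phi_i)\), which is the third line.

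\textbf{Main obstacle.} The hardest part is the bookkeeping in this last step. It requires keeping the normalization \(-2\dot f\) inside \(D^2\mathcal Y^{\Pi}\) consistent, and checking that every other quantity is exact in \(\pi\) at date \(\underline k\), which relies on \(\underline k\ge1\).
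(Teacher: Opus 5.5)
Your proof is correct and follows the paper's argument. Both compute exact formulas at date \(\underline k\) for the impact orders and the consumption shares, take the quadratic forms of \(E\) and \(H\) from Lemma~\ref{lem:resource_accounting_date} weighted by the stationary \(r_i\) in the split \eqref{eq:loss_split}, and apply the product rule to \eqref{eq:lossfree_consumption}. The only difference is the parametrization: the paper writes the input ratios and \(r_{i,\underline k}/r_i\) as functions of the normalized shock \(\tilde\pi\), so their second \(\tilde\pi\)-derivatives give \(D^2\mathcal Y^{\Pi}_{\underline k}(\mathbf 0)[h_\theta,h_\theta]-\ddot\Lambda^{\mathrm{tr}}_{\underline k}\) directly, whereas you differentiate in \(\pi\) and add back \(2\dot f_{\underline k}\), which yields the same result after the cancellations you identify.
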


\begin{proof}
At impact firm \(i\) holds
\(m_i^\ast+\pi\overline M\zeta_i=m_i^\ast(1+\pi+\pi\iota_i)\) and pays
the wage bill \((1+\pi)\beta_im_i^\ast\), so its intermediate
expenditure is \(e_{i,0}=(1-\beta_i)m_i^\ast\bigl(1+\pi(1+\sigma_i)\bigr)\).
The household spends \((1+\pi)\gamma_jw^\ast\) on good \(j\). Since
\(\sum_ia_{ji}(1-\beta_i)m_i^\ast+\gamma_jw^\ast=m_j^\ast\), the nominal
demand for good \(j\) is
\(d_{j,0}=m_j^\ast(1+\pi)+\pi\sum_ia_{ji}(1-\beta_i)m_i^\ast\sigma_i=m_j^\ast\bigl(1+\pi(1+\overline\sigma_j)\bigr)\)
by \eqref{eq:output_share_matrix} and
\eqref{eq:buyer_average_impulse} of the paper. Since
\(\underline k\ge1\), every input used at date \(0\) was ordered before
the shock, and labor is fixed, so \(q_{j,0}=q_j^\ast\). Hence
\(p_{j,0}/p_j^\ast=1+\pi(1+\overline\sigma_j)=(1+\pi)(1+\tilde\pi\,\overline\sigma_j)\),
and the order rule gives
\(x_{ji,0}/x_{ji}^\ast=(e_{i,0}/e_i^\ast)(p_j^\ast/p_{j,0})=(1+\tilde\pi\sigma_i)/(1+\tilde\pi\,\overline\sigma_j)\).
Since \(\underline k\) is the shortest lead time, no input used before
date \(\underline k\) was ordered after the shock. At date
\(\underline k\) firm \(i\) uses the orders of date \(0\) on its
suppliers of lead time \(\underline k\) and orders placed before the
shock on the others. As functions of \(\tilde\pi\), its dated input
ratios and the consumption shares of the retailers are therefore
\begin{equation}
y_{ji}=\frac{1+\tilde\pi\sigma_i}{1+\tilde\pi\,\overline\sigma_j}\ \ (k_{ji}=\underline k),
\qquad
y_{ji}=1\ \ (k_{ji}>\underline k),
\qquad
\frac{r_{i,\underline k}}{r_i}=\frac{1}{1+\tilde\pi\,\dot{\widehat m}^\perp_i}\ \ (i\in\mathcal R)
\label{eq:shortest_lag_inputs}
\end{equation}
the last by Lemma~\ref{lem:gdp_labor_income_anchor}, since
\(r_{i,\underline k}/r_i=(1+\pi)m_i^\ast/m_{i,\underline k+1}\) with
\(m_{i,\underline k+1}=m_i^\ast\{1+\pi(1+\dot{\widehat m}^\perp_i)\}\).
At zero the first derivatives of the \(y_{ji}\) are
\(\sigma_i-\overline\sigma_j\), which gives \(\dot{\widehat x}_i\), and
\(\mathcal F_i(\dot{\widehat x}_i)=\xi_{i,\underline k}\) by
\eqref{eq:impact_forcing_def} of the paper. Their second derivatives at
zero are \(y_{ji}''(0)=-2\,\overline\sigma_j\,\dot{\widehat x}_{ji}\).
Dots keep their meaning of derivatives in \(\pi\), so that
\(\ddot{\widehat x}_{ji,\underline k}=y_{ji}''(0)-2\dot{\widehat x}_{ji}\).

Write \(\widehat x_i:=\widehat x_{i,\underline k}\) and
\(\widehat c_i:=\widehat c_{i,\underline k}\). By
\eqref{eq:production_loss_split} and \eqref{eq:deadweight_second_order}
of Lemma~\ref{lem:resource_accounting_date} and by
\eqref{eq:firm_inefficiency_def} of the paper, up to terms of third
order,
\[
E_{i,\underline k}=\tfrac12m_i^\ast(1-\beta_i)\bigl(\langle\widehat x_i,\widehat x_i\rangle_i+\beta_i\mathcal F_i(\widehat x_i)^2\bigr),
\qquad
H_{\underline k}=\tfrac12w^\ast\sum_{i\in\mathcal R}\gamma_i\widehat c_i^{\,2}
\]
Here \(\widehat x_i=\tilde\pi\,\dot{\widehat x}_i+O(\tilde\pi^2)\) by
\eqref{eq:shortest_lag_inputs}, and
\(\widehat c_i=\tilde\pi\,\dot{\widehat c}_i+O(\tilde\pi^2)\) by
\(c_{i,\underline k}/c_i^\ast=(q_{i,\underline k}/q_i^\ast)(r_{i,\underline k}/r_i)\),
\eqref{eq:scaled_impulse_def} of the paper and
\eqref{eq:shortest_lag_inputs}. Since
\(r_{i,\underline k}=r_i+O(\tilde\pi)\) and
\(E_{i,\underline k}=O(\tilde\pi^2)\), differentiating
\eqref{eq:loss_split} of Lemma~\ref{lem:consumed_transit} twice at zero
and using \(r_im_i^\ast/w^\ast=\gamma_i\) gives the first two lines of
\eqref{eq:shortest_lag_curvatures}. Both losses vanish at zero together
with their first derivatives, so their second derivatives are the same
in \(\pi\) and in \(\tilde\pi\).

The function \(1+(1-\beta_i)\mathcal F_i(\widehat x_i)\) of
\(\tilde\pi\) equals one at zero, with first derivative \(\phi_i\) and
second derivative \((1-\beta_i)\mathcal F_i(y_i''(0))\) there. By
\eqref{eq:shortest_lag_inputs}, the second derivative at zero of its
product with \(r_{i,\underline k}/r_i\) is
\((1-\beta_i)\mathcal F_i(y_i''(0))-2\phi_i\dot{\widehat m}^\perp_i+2(\dot{\widehat m}^\perp_i)^2\).
Summing against \(\boldsymbol\gamma\) over the retail tier and using
\eqref{eq:lossfree_consumption} of Lemma~\ref{lem:consumed_transit}
gives the third line of \eqref{eq:shortest_lag_curvatures}.
\end{proof}

\begin{example}
\label{ex:drawdown_curvature_sign}
Neither term on the right of the third line of
\eqref{eq:shortest_lag_curvatures} has a determined sign, nor does
\(D^2\mathcal Y^{\Pi}_{\underline k}(\mathbf 0)[h_\theta,h_\theta]\).
Take two retailers, with \(\theta=0.8\), \(\rho=-1\), every lead time
equal to one and labor shares of one half. The household spends the
share \(2/5\) on the first retailer. The second buys only its own
good, and the first buys the share \(a\) of its inputs from itself and
the rest from the second. By \eqref{eq:shortest_lag_curvatures}, at
\(a=1/20\) the first term, the sum of the two terms and
\(D^2\mathcal Y^{\Pi}_1(\mathbf 0)[h_\theta,h_\theta]\) are about
\(2.2\times10^{-4}\), \(9.9\times10^{-5}\) and \(4.0\times10^{-3}\). At
\(a=1/2\) they are about \(-3.1\times10^{-3}\), \(-6.8\times10^{-3}\)
and \(-2.2\times10^{-3}\). The first-order response \(\dot f_1\) is
positive at both values of \(a\). The second term is negative at both
values of \(a\), and positive, about \(1.7\times10^{-2}\), at
\(a=19/20\) when the household spends the share \(1/10\) on the first
retailer.
\end{example}

\appendix
\setcounter{section}{2}
\counterwithin{equation}{section}
\counterwithin{figure}{section}
\counterwithin{table}{section}
\renewcommand{\theHequation}{\thesection.\arabic{equation}}
\renewcommand{\theHfigure}{\thesection.\arabic{figure}}
\renewcommand{\theHtable}{\thesection.\arabic{table}}
\singlespacing

\clearpage
\section{Calibration, Notation and Dependency Map}
\label{app:construction}

\subsection{Parameter Calibration}
\label{app:calibration}
\label{app:calibration_summary}

The within-vintage curvature is set at \(\rho=-0.8\), the value
implied by the most direct firm-level estimate of the elasticity of
substitution across same-product suppliers, about \(0.55\)
\citep{cevallosfujiy2024production}. Table~\ref{tab:elasticities}
restates the leading estimates as values of \(\rho=1-1/\eta\), with
\(\eta\) the elasticity of substitution. The baseline sets
\(\rho_c=\rho\), so the reported numbers carry no cross-vintage
amplification. Table~\ref{tab:baseline} lists the baseline values of both supply chains and the range of each sweep.
Every link is assigned an integer lead time. The lead time of the link
from buyer \(i\) to supplier \(j\) is drawn uniformly on the integers from
\(\operatorname{round}\bigl(k_{\mathrm{lo}}(1-\epsilon_{\downarrow}\psi_i)\bigr)\)
to
\(\operatorname{round}\bigl(k_{\mathrm{hi}}(1-\epsilon_{\downarrow}\psi_i)\bigr)\),
and at least from \(1\) to \(2\), with \(\psi_i\) the downstreamness index
of the buyer (Definition~\ref{def:downstreamness} of the paper). At the
baseline the range is three to eight months for the most upstream buyers
and two to five months for the most downstream buyers. Every buyer keeps
at least \(\operatorname{round}\bigl(\underline n_{\mathrm s}(1-\psi_i)\bigr)\)
links with a lead time of \(\underline k_{\mathrm s}\) months or more.

\begin{table}[H]
\centering
\footnotesize
\caption{Short-run substitutability of intermediate inputs, expressed as the
within-vintage curvature \(\rho\). Each entry is the curvature implied by the
study's estimated elasticity of substitution \(\eta\) through
\(\rho=1-1/\eta\), or, where the study estimates none, the sign its findings
imply. A negative curvature denotes gross complements
and \(\rho\to-\infty\) the Leontief limit.}
\label{tab:elasticities}
\renewcommand{\arraystretch}{1.25}
\begin{tabular}{@{}>{\raggedright\arraybackslash}p{0.22\textwidth}>{\centering\arraybackslash}p{0.10\textwidth}>{\raggedright\arraybackslash}p{0.33\textwidth}>{\raggedright\arraybackslash}p{0.26\textwidth}@{}}
\toprule
Study & Implied \(\rho\) & Context & Sample and horizon \\
\midrule
\citet{barrot2016input} & \(\ll 0\) & Propagation of supplier shocks to customers when inputs are specific; no elasticity estimated & US firms and suppliers; \(\le\)1\,yr \\
\citet{atalay2017sectoral} & \(\lesssim-4\) & Sectoral shocks and aggregate volatility & US industries; annual \\
\citet{boehm2019input} & \(\lesssim-4\)  & Transmission of input shocks via multinationals; imported and domestic inputs close to Leontief & US affiliates of Japanese MNEs; months \\
\citet{carvalho2021supply} & \(\approx0.15\) & Supply chain disruption; intermediate inputs weak gross substitutes (\(\eta\approx1.18\)), primary and intermediate inputs complements (\(\eta\approx0.6\)) & Japanese firm network; \(\le\)1\,yr \\

\citet{cevallosfujiy2024production} & \(\approx-0.8\) & Firm-level substitution across same-product suppliers & Indian firms; COVID lockdowns; monthly \\
\bottomrule
\end{tabular}
\end{table}

\begin{table}[tbp]
\centering
\footnotesize
\caption{Baseline configuration of the two supply chains, organized by parameter class. For a swept parameter the baseline is the point from which the sweep in Section~\ref{sec:quant-robustness} departs, and the range of the sweep is recorded with the rationale.}
\label{tab:baseline}
\renewcommand{\arraystretch}{1.2}
\begin{tabular}{@{}>{\raggedright\arraybackslash}p{4.1cm}>{\raggedright\arraybackslash}p{2.5cm}>{\raggedright\arraybackslash}p{2.6cm}>{\raggedright\arraybackslash}p{5.5cm}@{}}
\toprule
Parameter & Synthetic supply chain & US supply chain & Source or rationale \\
\midrule
\multicolumn{4}{@{}l}{\itshape Fixed at conventional values}\\
$\rho$\, within-vintage curvature & $-0.8$ & $-0.8$ & $\eta\approx0.56$; gross complements \citep{cevallosfujiy2024production}; swept over $[-1.4,-0.1]$, plus $+0.3$ \\
$\rho-\rho_c$\, curvature gap & $0$ & $0$ & baseline carries no cross-vintage amplification (Lemma~\ref{lem:inefficiency_gap}) \\
$\beta$\, labor share & $0.4$ & $0.4$ & Cobb--Douglas top nest; the same at every firm; standard range \\
labor along the transition & fixed at $l_i^\ast$ & fixed at $l_i^\ast$ & Assumption~\ref{assump:short_run_nominal_rigidity}; wage indexed to the money stock \\
$\pi$\, shock size & $\pm0.05$ & $\pm0.05$ & \(5\%\) of the money stock; swept over $0.002$--$0.10$ \\

period length & one month & one month & three periods aggregate to a quarter \\
\addlinespace
\multicolumn{4}{@{}l}{\itshape Network, lead times and incidence}\\
network source & tiered generator & gravity reconstruction & controlled gradient vs.\ census and input--output data \\
$n$\, number of firms & $10^3$--$10^5$ & $10^4$--$6.46\times10^{6}$ & scale ladder \\
tiers and relative widths & five, $1\!:\!3\!:\!9\!:\!27\!:\!81$ & --- & top of the supply chain to the retail tier \\
mean number of suppliers & $8$ & $50$ & draws from upstream tiers on the synthetic supply chain; gravity fit on the US supply chain \\
Pareto tail index of sizes within a tier & $4$ & --- & sizes on the US supply chain are the census \\
sourcing & distance decay up the supply chain & gravity reconstruction & size--position sorting on the synthetic supply chain (rank correlation $-0.35$); data on the US supply chain \\
$k_{\mathrm{lo}}\!:\!k_{\mathrm{hi}}$\, base lead times & $3\!:\!8$ & $3\!:\!8$ & one to three quarters order-to-delivery \\
$\epsilon_{\downarrow}$\, downstream compression & $0.4$ & $0.4$ & time to build \\

$\underline k_{\mathrm{s}},\underline n_{\mathrm{s}}$\, slow links & $4,\;3$ & $4,\;3$ & upstream time-to-build bottleneck \\
$\theta$\, incidence concavity & $0.5$ & $0.5$ & small firms struck harder \\
$\theta_{\mathrm{h}}$\, household incidence exponent & $0.9$ & $0.9$ & the household's share of the shock (Section~\ref{sec:quant-setup}) \\
$\epsilon_{\mathrm{cap}}$\, transfer cap & $0.5$ & $0.5$ & no-shutdown at the smallest firms (the margin \eqref{eq:no_shutdown_margin} of the paper) \\
\bottomrule
\end{tabular}
\end{table}

\clearpage
\subsection{Notation}
\label{app:notation}

This subsection lists the notation of the paper, grouped by theme and, within
each group, in roughly the order of first use. The last column gives the
section of the paper, of Appendix~\ref{app:analytical} or of this appendix, in which the symbol is
introduced.

{\footnotesize
\renewcommand{\arraystretch}{1.12}
\begin{longtable}{@{}p{0.26\textwidth} p{0.56\textwidth} p{0.125\textwidth}@{}}
\toprule
Symbol & Meaning & Where \\
\midrule
\endfirsthead
\toprule
Symbol & Meaning & Where \\
\midrule
\endhead
\endfoot
\bottomrule
\endlastfoot

\multicolumn{3}{@{}l}{\itshape Sets and indices} \\
$N=\{1,\dots,n\}$, $h$ & Firms and the representative household & \textsection\ref{subsec:environment} \\
$i,j,l\in N$ & Firm indices; in $a_{ji}$, $x_{ji}$ and $k_{ji}$ the first index is the supplier and the second the buyer & \textsection\ref{subsec:environment} \\
$\mathcal S_i$, $\mathcal S_i^{(k)}$ & Suppliers of firm $i$; those delivering at lead time $k$ & \textsection\ref{subsubsec:technology} \\
$\mathcal K_i$ & Active vintages of firm $i$, the lead times at which it has a supplier & \textsection\ref{subsubsec:technology} \\
$\mathcal R=\operatorname{supp}\boldsymbol\gamma$; $\psi_R$ & Firms the household buys from, the retail tier; the position of its least downstream firm & \textsection\ref{subsec:downstreamness} \\
$\mathcal T$, $\psi_T$; $\mathcal G$, $\Delta_+$ & Top tier \eqref{eq:top_tier} and its level; neighborhood of the retail tier \eqref{eq:retail_neighborhood} and its width & \textsection\ref{subsec:downstreamness}, \textsection\ref{subsec:finite_horizon_persistence} \\
$\mathcal R^{+}=\mathcal R\cup\mathcal G$ & Neighborhood of final demand, the retail tier with its neighborhood & \textsection\ref{subsec:finite_horizon_persistence} \\
$\mathcal C$ & A set of firms, in Proposition~\ref{prop:impact_reallocation} one on which the shock lands more heavily than on the other buyers of its shortest-lead-time inputs & \textsection\ref{subsec:downstreamness} \\
$t,s$; $k$ & Dates; lead times & \textsection\ref{subsec:environment} \\

\midrule
\multicolumn{3}{@{}l}{\itshape Technology, network and household} \\
$\beta_i\in[\underline\beta,\overline\beta]$, $\boldsymbol\beta$ & Labor share of firm $i$ and the vector of shares & \textsection\ref{subsubsec:technology} \\
$\rho$, $\rho_c$; $\overline\rho,\underline\rho$ & Within- and cross-vintage CES curvatures, $-\overline\rho\le\rho_c\le\rho\le-\underline\rho$ & \textsection\ref{subsubsec:technology} \\
$\rho-\rho_c$ & Curvature gap & \textsection\ref{subsubsec:technology} \\
$\eta=1/(1-\rho)$, $\eta_c=1/(1-\rho_c)$ & The two elasticities of substitution & \textsection\ref{app:proof_existence} \\
$X_i$, $X_i^{(k)}$ & Intermediate composite of firm $i$ and its vintage composite of lead time $k$ & \textsection\ref{subsubsec:technology} \\
$\nu_{ji}$ & Supplier weight of firm $i$ on good $j$ in the within-vintage nest & \textsection\ref{subsubsec:technology} \\
$\chi_i^{(k)}$ & Vintage weight of firm $i$ on its vintage of lead time $k$ in the outer nest & \textsection\ref{subsubsec:technology} \\
$a_{ji}=s_i^{(k_{ji})}\widetilde a_{ji}^{(k_{ji})}$ & Supplier share, the share of $i$'s input spending paid to $j$: cross-vintage share times within-vintage share & \textsection\ref{subsubsec:network} \\
$\mathbf A=(a_{ji})$ & Column-stochastic supplier-share matrix & \textsection\ref{subsubsec:network} \\
$\mathbf A_\beta=\mathbf A\,\mathrm{diag}(\mathbf 1-\boldsymbol\beta)$ & Matrix of firm-to-firm expenditure shares, column sums $1-\beta_i$ & \textsection\ref{subsec:equilibrium} \\
$\gamma_i$, $\boldsymbol\gamma$ & Household shares & \textsection\ref{subsubsec:household} \\
$U(c_1,\dots,c_n)$, $c_i$ & Cobb--Douglas household utility and consumption of good $i$ & \textsection\ref{subsubsec:household} \\
$a_{ji}^{1-\rho}$ & Calibrated value of the supplier weights $\nu_{ji}$ in the simulated supply chains & \textsection\ref{sec:quant} \\

\midrule
\multicolumn{3}{@{}l}{\itshape Lead times and the pipeline} \\
$k_{ji}$ & Lead time of firm $i$'s purchase from supplier $j$ & \textsection\ref{subsubsec:technology} \\
$\underline k$, $\widehat k$ & Minimum and maximum lead time & \textsection\ref{subsubsec:technology} \\
$s_i^{(k)}$ & Share of firm $i$'s input spending on its vintage layer of lead time $k$ & \textsection\ref{subsubsec:network} \\
$\mathbf k=(k_{ji})$ & Lead-time profile & \textsection\ref{subsec:stability} \\

$\boldsymbol{\mathcal X}_t^{(k)}$, $\boldsymbol{\mathcal X}_t$ & Pipeline layer at remaining lead time $k$ and the pipeline state & \textsection\ref{subsec:temporal_depth_formal} \\
$\Delta k$ & Uniform shift of the lead-time profile (Corollary~\ref{cor:lag_shift_inefficiency}) & \textsection\ref{subsec:lag_profile} \\
$\mathbf k'$ & A second lead-time profile, a buyerwise mean-preserving lead-time-variance increase of $\mathbf k$ (Corollary~\ref{cor:firm_level_vintage_spread_aggregate_inefficiency}) or its uniform shift by $\Delta k$ (Corollary~\ref{cor:lag_shift_inefficiency}); in the proof of that corollary a prime marks an object under $\mathbf k'$ & \textsection\ref{subsec:lag_profile}, \textsection\ref{app:proof_lag_shift} \\

\midrule
\multicolumn{3}{@{}l}{\itshape Downstreamness and the supply chain} \\
$\boldsymbol\Omega=\mathbf D_m^{-1}\mathbf A_\beta\mathbf D_m$ & Output-share matrix, $\Omega_{ji}$ the share of $j$'s sales bought by $i$ & \textsection\ref{sec:reallocation} \\
$r_j=\gamma_jw^\ast/m_j^\ast$ & Consumption share of firm $j$, the share of its output sold to the household & \textsection\ref{sec:reallocation} \\
$\delta$ & Per-round discount of downstreamness & \textsection\ref{sec:reallocation} \\
$\widetilde{\boldsymbol\psi}=(\mathbf I-\delta\boldsymbol\Omega)^{-1}\mathbf r$, $\boldsymbol\psi$ & Downstreamness and the downstreamness index & \textsection\ref{sec:reallocation} \\
$\Delta_\psi$ & Height of a tier of the supply chain (Assumption~\ref{assump:transmission_delay}) & \textsection\ref{subsec:downstreamness} \\
$\epsilon_h$ & Consumption-share floor at the retail tier (Assumption~\ref{assump:retail_outflow_cap}) & \textsection\ref{subsec:downstreamness} \\
$\epsilon_k$, $k_\ast$; $s_R$, $\mathcal W$, $t_{\mathcal W}$ & Share of a retailer's input expenditure that may arrive with lead time $k_\ast$ or more, and of a top-tier firm's that may arrive in fewer than $k_\ast$ periods, and the lead time that divides the two (Assumption~\ref{assump:lag_position_sorting}, time to build); share of the retail tier's input expenditure that makes a lead time a retail lead-time date, the retail lead-time dates \eqref{eq:retail_lag_dates} and the last of them & \textsection\ref{subsec:downstreamness}, \textsection\ref{subsec:finite_horizon_persistence} \\

\midrule
\multicolumn{3}{@{}l}{\itshape Stationary equilibrium} \\
$p_i,q_i,l_i,x_{ji},w$; $z^\ast$ & Prices, outputs, labor, orders, wage; a star marks the stationary value & \textsection\ref{subsec:environment} \\
$m_i^\ast=p_i^\ast q_i^\ast$, $\mathbf m$, $\mathbf m^\ast$; $\overline M=\mathbf 1^\top\mathbf m^\ast$ & Nominal sales, which are next period's balances; aggregate firm balances & \textsection\ref{subsec:equilibrium} \\
$\mathbf D_m=\mathrm{diag}(\mathbf m^\ast)$; $\|\mathbf v\|_m=\|\mathbf D_m^{-1}\mathbf v\|_\infty$ & Stationary balances on the diagonal; sup norm in proportional coordinates & \textsection\ref{subsec:stability} \\
$\Psi_i(\mathbf p)$; $\overline p_i^{(k)}(\mathbf p)$, $\overline p_i(\mathbf p)$ & Unit cost of firm $i$'s output; CES price indices of its vintage and composite (proof of Proposition~\ref{prop:existence_stationary}) & \textsection\ref{app:proof_existence} \\
$\check{\mathbf p}=\log\mathbf p$, $\check\Psi$ & Log prices and the log unit-cost map (proof of Proposition~\ref{prop:existence_stationary}) & \textsection\ref{app:proof_existence} \\
$\lambda_2(\mathbf A)$; $\overline\lambda$, $C_\lambda$ & Nominal redistribution rate; its bound and the multiplier of the power bound \eqref{eq:power_bound}, parts (a) and (b) of Assumption~\ref{assump:monotone_decay_spectrum} & \textsection\ref{subsec:stability} \\
$\mathcal Z_{\mathbf A}=\{\mathbf v:\mathbf 1^\top\mathbf v=0\}$ & Conservation subspace & \textsection\ref{subsec:stability} \\

\midrule
\multicolumn{3}{@{}l}{\itshape Monetary shock and incidence} \\
$\pi\in[-\overline\pi,\overline\pi]$; $\pi_\ast$ & Shock as a fraction of $\overline M$ and its bound, set by the constants of the assumptions; the threshold of Theorem~\ref{thm:asymmetric_nonneutrality}, which depends on the economy & \textsection\ref{subsec:money}, \textsection\ref{subsec:asymmetric_nonneutrality} \\
$\theta$; $\overline\epsilon_\theta$ & Incidence concavity; bound on $1-\theta$ (Assumption~\ref{assump:diffuse_incidence}) & \textsection\ref{subsec:money} \\
$\zeta_i=(m_i^\ast)^\theta/\sum_j(m_j^\ast)^\theta$ & Incidence share of firm $i$ & \textsection\ref{subsec:money} \\
$\ell_i$, $L_\ell$ & Centered log size of firm $i$ and its bound & \textsection\ref{subsec:money} \\
$Z$ & Normalizer of the incidence ratio, $\overline M\zeta_i/m_i^\ast=e^{-(1-\theta)\ell_i}/Z$, with $Z=\sum_j(m_j^\ast/\overline M)e^{-(1-\theta)\ell_j}$ & \textsection\ref{subsec:money} \\
$\iota_i$; $C_\zeta$ & Misalignment the shock creates at firm $i$ per unit of shock \eqref{eq:incidence_ratio}, the incidence ratio $\overline M\zeta_i/m_i^\ast$ less one; incidence range, $|\iota_i|\le C_\zeta(1-\theta)$, at most $2L_\ell e^{2\overline\epsilon_\theta L_\ell}$ & \textsection\ref{subsec:money} \\
$\mathbf m_0=\mathbf m^\ast+\pi\overline M\boldsymbol\zeta$ & Impact balances & \textsection\ref{subsec:money} \\
$h_\theta=\dot{\mathbf m}_0^\perp=\overline M\boldsymbol\zeta-\mathbf m^\ast$ & Impact misalignment, the cross-sectional balance misalignment at impact & \textsection\ref{subsec:stability}, \textsection\ref{subsec:asymmetric_nonneutrality} \\
$\varepsilon_\pi=|\pi|(1-\theta)$ & Size of the shock times the slack (proof of Proposition~\ref{prop:stability_stationary}) & \textsection\ref{app:proof_stability} \\

\midrule
\multicolumn{3}{@{}l}{\itshape Post-shock dynamics} \\
$\wM=(w^\ast/\overline M)\mathbf 1^\top\mathbf m_t=(1+\pi)w^\ast$ & Indexed nominal wage (Assumption~\ref{assump:short_run_nominal_rigidity}) & \textsection\ref{subsec:stability} \\
$m_{i,t}$, $d_{j,t}$, $p_{j,t}$, $q_{i,t}$, $x_{ji,t}$ & Date-$t$ balances, nominal demand, prices, outputs and orders & \textsection\ref{subsec:stability} \\
$e_{i,t}=m_{i,t}-\wM l_i^\ast$ & Intermediate-input expenditure of firm $i$ & \textsection\ref{subsec:stability} \\
$c_\beta=(1-\overline\beta)/(1+\overline\beta)$ & No-shutdown margin \eqref{eq:no_shutdown_margin}, implied for small shocks & \textsection\ref{app:proof_stability} \\
$\check x_{ji,t}$, $\check q_{j,t}$, $\check\Xi_{ij,t}$; $C_{\check\Xi}$, $\kappa_\beta$, $C_q$, $\Upsilon$; $\mathcal L_i$ & Log deviations of orders and outputs from their stationary values, and the log contrast \eqref{eq:stab_order_identity}; bound \eqref{eq:stab_forcing_bound} on the log contrasts, contraction factor $(1-\underline\beta)\kappa^{-\widehat k}$, output envelope constant and scaled output deviation of the induction; log-output map of firm $i$ (proof of Proposition~\ref{prop:stability_stationary}) & \textsection\ref{app:proof_stability} \\
$\mathbf S_t=(\mathbf m_t,\boldsymbol{\mathcal X}_t)$, $\mathbf S^\ast$, $\Phi$, $\widehat{\mathbf S}_t(\pi)$ & State, stationary state, transition map, proportional-deviation state & \textsection\ref{subsec:stability}, \textsection\ref{app:lemmas_model} \\
$\mathbf J_\Phi^\ast$, $\dot{\mathbf S}_0$; $\mathcal V_i$; $\mathcal N_i$, $\mathbf J_{\mathcal N}$, $\mathcal U$ & Jacobian of the transition map at the stationary state and the first-order response of the impact state; map that returns firm $i$'s dated input bundle from the state; within-period fixed-point map of the log outputs, its Jacobian and the neighborhood of the stationary state on which the maps are real-analytic (Lemma~\ref{lem:propagation_linearization}) & \textsection\ref{app:lemmas_model} \\
$\kappa$, $C_S$ & Contraction rate and prefactor of the post-shock envelope (Proposition~\ref{prop:stability_stationary}) & \textsection\ref{subsec:stability} \\
$\dot{\mathbf m}_t$; $\dot{\mathbf m}_t^\perp=\dot{\mathbf m}_t-\mathbf m^\ast$; $\mathbf m_t^\perp$, $\widehat{\mathbf m}_t^\perp$ & First-order response of the balances; the cross-sectional balance misalignment; the misalignment $\mathbf m_t-(1+\pi)\mathbf m^\ast$ of the balances, of which $\dot{\mathbf m}_t^\perp$ is the first-order response, and its proportional form $\mathbf D_m^{-1}\mathbf m_t^\perp$ (proof of Proposition~\ref{prop:stability_stationary}) & \textsection\ref{subsec:stability}, \textsection\ref{app:proof_stability} \\

\midrule
\multicolumn{3}{@{}l}{\itshape Deviations, impulses, scale and composition} \\
$\Delta z$, $\widehat z$; $\dot z$, $\ddot z$ & Absolute and proportional deviations; first-order response per unit of shock and curvature response & \textsection\ref{subsec:money} \\
$\mathbf x_{i,t}$, $\mathbf x_i^\ast$; $\Delta\mathbf x_{i,t}$, $\widehat x_{ji,t}$ & Dated input bundle of firm $i$ and its stationary value; deviation of the bundle and proportional deviation of input $(j,i)$ & \textsection\ref{sec:inefficiency} \\
$\nabla_i^\ast$ & Gradient of $\log q_i$ at the stationary bundle & \textsection\ref{sec:inefficiency} \\
$\mathcal F_i(y)=\sum_{j\in\mathcal S_i}a_{ji}y_j$ & Supplier-averaging operator; $\mathcal F_i(\widehat x_{i,t})$ is the input-bundle scale, and a dated argument is evaluated at its order date & \textsection\ref{sec:inefficiency} \\
$\mathcal F_i^{(k)}$ & Within-vintage average over vintage layer $k$ & \textsection\ref{sec:inefficiency} \\
$V_i^{\mathrm w},V_i^{\mathrm b}$ & Within- and between-vintage variance of a supplier-indexed quantity under the supplier shares of firm $i$ & \textsection\ref{sec:inefficiency} \\
$\operatorname{Cov}_{\mathcal F_i}$, $\operatorname{Var}_{\mathcal F_i}$; $\operatorname{Cov}^{\mathrm w}_i,\operatorname{Cov}^{\mathrm b}_i$ & Supplier-share covariance and variance and their within- and between-vintage components & \textsection\ref{app:lemmas_inefficiency} \\
$\langle y,z\rangle_i$; $\langle\cdot,\cdot\rangle_{i,\mathbf k}$ & Curvature-weighted covariance form; $\langle y,y\rangle_i$ is the curvature-weighted dispersion; the form under the vintage partition of the lead-time profile $\mathbf k$ (Corollary~\ref{cor:firm_level_vintage_spread_aggregate_inefficiency}) & \textsection\ref{sec:inefficiency}, \textsection\ref{app:lemmas_inefficiency} \\
$\mathbf z_{i,t}$ & Composition residual of the bundle deviation & \textsection\ref{sec:inefficiency} \\
$\sigma_i$; $\overline\sigma_j$ & Expenditure impulse of firm $i$ \eqref{eq:expenditure_impulse}, the misalignment $\iota_i$ divided by its intermediate-input share; average expenditure impulse of the buyers of good $j$ \eqref{eq:buyer_average_impulse}, the household counting at zero; their dated forms $\sigma_{i,s}=\dot{\widehat e}_{i,s}-1$ and $\overline\sigma_{j,s}=\dot{\widehat d}_{j,s}-1$ (Lemma~\ref{lem:wave_dampening}) & \textsection\ref{subsec:downstreamness} \\
$\xi_i^{(k)}$; $\xi_{\mathcal C}^{(k)}$, $e_{\mathcal C}^{(k)}$ & Relative expenditure impulse of firm $i$ on its inputs of lead time $k$ \eqref{eq:vintage_impulse_def}; that of a set of firms $\mathcal C$ and the set's expenditure on those inputs \eqref{eq:set_impulse} & \textsection\ref{subsec:downstreamness} \\
$\xi_{i,\underline k}$; $\phi_i$, $\phi_i^{(k)}$ & Relative expenditure impulse of firm $i$ on its shortest-lead-time inputs, weighted by their cross-vintage share, $s_i^{(\underline k)}\xi_i^{(\underline k)}$ \eqref{eq:impact_forcing_def}; the scaled relative expenditure impulses $(1-\beta_i)\xi_{i,\underline k}$ and $(1-\beta_i)\xi_i^{(k)}$ \eqref{eq:scaled_impulse_def} & \textsection\ref{subsec:impact_reallocation} \\
$C_\xi$, $C_\phi$ & Bounds on the relative expenditure impulses and on their scaled forms \eqref{eq:forcing_upper_bound}, $2C_\zeta/(1-\overline\beta)$ and $(1-\underline\beta)C_\xi$ & \textsection\ref{subsec:impact_reallocation} \\
$\Xi_{ij,s}=\dot{\widehat e}_{i,s}-\dot{\widehat d}_{j,s}$ & Dated contrast of buyer $i$ on supplier $j$ at order date $s$ & \textsection\ref{subsec:finite_horizon_persistence} \\
$\xi_{i,t}$, $\boldsymbol\xi$; $\mathsf T$; $\mathcal F$ & Dated relative expenditure impulse \eqref{eq:dated_forcing} and its profile; the dated transmission operator \eqref{eq:transmission_operator}; the profile $(\mathcal F_i(\dot{\widehat x}_{i,t}))_{i,t}$ of input-bundle scales (Lemma~\ref{lem:supplier_recursion}) & \textsection\ref{app:proof_lag_shift}, \textsection\ref{app:lemmas_reallocation} \\
$\varpi^{(t,s)}_{ij}(b)$ & Tilt mass carried to link $i\to j$ at order date $s$ in equation \eqref{eq:forced_representation} & \textsection\ref{subsec:finite_horizon_persistence} \\
$F_i(T)$; $F_{\mathcal C}(T)$ & Share of firm $i$'s output that reaches final demand within $T$ periods \eqref{eq:delivery_time}, the distribution of its delivery time to final demand; sales of a set of firms $\mathcal C$ delivered to final demand within $T$ periods, relative to GDP (Lemma~\ref{lem:gdp_pass_through}) & \textsection\ref{subsec:finite_horizon_persistence} \\
$\Gamma_{lk}[T_1,T_2]$; $\Gamma[T_1,T_2]$, $\Gamma_{\mathcal C}[T_1,T_2]$ & Weight of firm $l$'s purchase of lead time $k$ delivered to the household in the interval; the delivered purchases of the interval relative to GDP \eqref{eq:delivered_purchases}, and those of the firms of a set $\mathcal C$ & \textsection\ref{subsec:finite_horizon_persistence} \\
$\widetilde F_l$ & Path-mass solution of the recursion \eqref{eq:delivery_time} (proof of Lemma~\ref{lem:gdp_pass_through}) & \textsection\ref{app:lemmas_reallocation} \\
$T$; $T_0$, $t_1$ & A horizon, the last date of a window $[0,T]$; the dates of Proposition~\ref{prop:implicit_interest} & \textsection\ref{subsec:finite_horizon_persistence}, \textsection\ref{subsec:intertemporal_reallocation} \\
$M$, $Q$; $\langle\cdot,\cdot\rangle_H$; $\mathsf S_\Delta$, $\mathsf S^{*}_\Delta$ & Monetary and real waves; cumulative curvature-weighted form; causal time translation and the advance & \textsection\ref{subsec:lag_profile}, \textsection\ref{app:proof_lag_shift} \\
$C_{EQ}$, $C_{DQ}$ & Cumulative curvature-weighted comovements of the responses of expenditure and of demand with the real wave \eqref{eq:DMR_def} (Lemma~\ref{lem:wave_comovement}) & \textsection\ref{app:wave_comovement} \\
$Q^{(r)}$, $\dot{\widehat q}^{(r)}_{j,s}$; $W^{(r)}$; $\lambda$; $C_\Xi$ & $r$-th round of the real wave and of the response of output \eqref{eq:real_wave_rounds}; the waves counted by round, $W^{(0)}=M$ and $W^{(r)}=Q^{(r)}$ for $r\ge1$; a real eigenvalue of $\mathbf A$ with the impact misalignment as eigenvector, the rate at which the misalignment of balances dies out (Corollary~\ref{cor:lag_shift_inefficiency} and Lemma~\ref{lem:wave_dampening}); bound on the dated contrasts, $|\Xi_{ij,s}|\le C_\Xi\overline\lambda^{\,s}(1-\theta)$ (proof of Corollary~\ref{cor:lag_shift_inefficiency}) & \textsection\ref{subsec:lag_profile}, \textsection\ref{app:proof_lag_shift} \\
$\mathsf B$; $\upsilon$, $\upsilon^{(1)}$; $C_{\mathsf B}$ & Pass-through-weighted average over a firm's suppliers; the profiles $(\mathbf I-\lambda^{k}\mathsf B)^{-1}(\mathbf I-\lambda\mathsf B)\iota$ and $(\mathbf I-\lambda\mathsf B)\iota$; constant of the dispersion condition \eqref{eq:dispersion_condition} (Lemma~\ref{lem:wave_dampening}) & \textsection\ref{app:proof_lag_shift} \\
$\varsigma^{\mathbf k}_{i,t}$, $u^{\mathbf k}_{i,t}$ & Regression slope of the response on the lead time and its residual & \textsection\ref{subsec:lag_profile} \\

\midrule
\multicolumn{3}{@{}l}{\itshape Losses, tilted output and reallocation} \\
$\mathcal D_{i,t}(\pi)$, $\mathcal D_t^{(\boldsymbol\omega)}(\pi)$, $\mathcal D^{(\boldsymbol\omega)}(\pi)$ & Loss from production inefficiency of a firm, loss from production inefficiency at a date and cumulative loss from production inefficiency at the weighting $\boldsymbol\omega$ & \textsection\ref{sec:inefficiency} \\
$C^{(\boldsymbol\omega)}$ & Leading coefficient of the cumulative loss from production inefficiency & \textsection\ref{subsec:cumulative_loss} \\
$V^{\mathrm w},V^{\mathrm b}$ & Cumulative within- and between-vintage dispersions of the first-order responses & \textsection\ref{app:lemmas_inefficiency} \\
$C^{(\boldsymbol\omega)}(\mathbf k)$ & Leading coefficient of the cumulative loss from production inefficiency under the lead-time profile $\mathbf k$ & \textsection\ref{subsec:lag_profile} \\
$\mathcal D_{\mathrm M}^{(\boldsymbol\omega)},\mathcal D_{\mathrm I}^{(\boldsymbol\omega)},\mathcal D_{\mathrm C}^{(\boldsymbol\omega)}$ & Nominal-expenditure, input-price and covariance components (Lemma~\ref{lem:monetary_input_decomposition}) & \textsection\ref{app:lemmas_inefficiency} \\
$b$; $\widehat\psi_i$; $\omega_i(b)$ & Tilt on $[0,\infty]$; capped position of firm $i$, its downstreamness index relative to that of the least downstream retailer, capped at one; tilt weights \eqref{eq:w_b_def}, the sales shares at $b=0$ and the sales shares within the retail tier at $b=\infty$ & \textsection\ref{subsec:impact_reallocation} \\
$\omega_{\mathcal R}(b)$, $b_w$; $M_{\mathcal R}$, $a_m$ & Retail-tier mass of the tilt weights and the first downstream-concentrated tilt; sales of the retail tier and the sales off it relative to those on it (proof of Proposition~\ref{prop:impact_reallocation}) & \textsection\ref{subsec:impact_reallocation}, \textsection\ref{app:proof_impact_reallocation} \\
$\mathcal Q_t(b)$, $\Delta\mathcal Q_t(b)$, $D_\pi\mathcal Q_t(b)$ & Position-tilted log-output index, its deviation and first-order response & \textsection\ref{subsec:impact_reallocation}, \textsection\ref{app:lemmas_asymmetry} \\

$c_\xi$; $c_R$ & Impulse margin of Assumption~\ref{assump:impact_forcing_ordering} and of condition (U) on the top tier, per unit of slack; retail margin $s_R(1-\overline\beta)c_\xi$ of the retail impulse bound \eqref{eq:retail_impulse_bound} & \textsection\ref{subsec:downstreamness}, \textsection\ref{subsec:impact_reallocation} \\
$\overline\ell_j$; $\Delta\ell_i^{(k)}$ & Average centered log size of the buyers of good $j$ and log-size gap of firm $i$ on its inputs of lead time $k$ \eqref{eq:buyer_average_size} & \textsection\ref{app:lemmas_reallocation} \\
$\sigma^{\circ}_l$, $\xi_i^{(k)\circ}$; $C_\ell$, $\Delta_\beta$ & Incidence exposure of firm $l$ and exposure advantage of firm $i$ on its inputs of lead time $k$ \eqref{eq:exposure_advantage}; remainder constant and labor-share correction \eqref{eq:exposure_constant} of Lemma~\ref{lem:impulse_exposure}, which gives the impulse to leading order in the slack & \textsection\ref{app:lemmas_reallocation} \\
$C_0$; $\varepsilon_l$, $\varepsilon_{\max}$, $\overline\varepsilon$; $\Delta_{\beta,l}$ & Remainder constant of the expenditure impulses; remainders of the exponential at firm $l$, their bound and their stationary average; gap $1/(1-\overline\beta)-1/(1-\beta_l)$ at firm $l$ (proof of Lemma~\ref{lem:impulse_exposure}) & \textsection\ref{app:lemmas_reallocation} \\
$\overline\epsilon_w$; $\overline\epsilon_\xi$; $\overline\epsilon_F,\overline\epsilon_h$ & Threshold \eqref{eq:downstream_threshold} of the downstream-concentrated tilts; tolerance of condition (N) on the neighborhood of the retail tier and thresholds of condition (L), both in \eqref{eq:persistence_thresholds}, $\overline\epsilon_F$ for the purchases of the firms beyond the neighborhood of final demand delivered to the household by the last retail lead-time date and $\overline\epsilon_h$ for the retail slack & \textsection\ref{subsec:impact_reallocation}, \textsection\ref{subsec:finite_horizon_persistence}, \textsection\ref{app:reallocation} \\
$\mathcal P'$, $k(\mathcal P')$, $\mu(\mathcal P')$ & Supplier path, its summed lead time and its tilt mass (proof of Proposition~\ref{prop:finite_horizon_sign_persistence}) & \textsection\ref{app:proof_finite_horizon_sign_persistence} \\
$\mathcal P$, $k(\mathcal P)$; $\mathcal P''$ & Supplier path ending in a link from a buyer to a supplier, and its summed lead time (Lemma~\ref{lem:supplier_recursion}); a path one link shorter than $\mathcal P'$ (proof of Lemma~\ref{lem:gdp_pass_through}) & \textsection\ref{app:proof_finite_horizon_sign_persistence}, \textsection\ref{app:lemmas_reallocation} \\
$\mu^{(r)}(l)$ & Mass of the $r$-link paths from the tilt weights ending at buyer $l$ (proof of Proposition~\ref{prop:finite_horizon_sign_persistence}) & \textsection\ref{app:proof_finite_horizon_sign_persistence} \\
$\overline M(b)=\sum_jm_j^\ast\widehat\psi_j^{\,b}$ & Denominator of the tilt weights, with $\overline M(0)=\overline M$ (proof of Proposition~\ref{prop:finite_horizon_sign_persistence}) & \textsection\ref{app:proof_finite_horizon_sign_persistence} \\
$I_t(b)$, $J_t(b)$ & Impact-date and post-impact parts of $D_\pi\mathcal Q_t(b)$ & \textsection\ref{subsec:finite_horizon_persistence} \\
(N), (L); (U) & Tolerance on the impulses of the neighborhood of the retail tier; smallness of the purchases of the firms beyond the neighborhood of final demand delivered to the household by the last retail lead-time date, and of the retail slack, the thresholds \eqref{eq:persistence_thresholds}; margin on the impulses of the top tier & \textsection\ref{subsec:finite_horizon_persistence}, \textsection\ref{subsec:sign_reversing_irf} \\
(P), (R) & Hypotheses of Proposition~\ref{prop:finite_horizon_sign_persistence} and Theorem~\ref{thm:sign_reversing_irf} & \textsection\ref{subsec:finite_horizon_persistence}, \textsection\ref{subsec:sign_reversing_irf} \\
$I_t^{\mathcal C}(b)$, $\mu_t(\mathcal C)$; $-I_t^{\mathcal T}(b)$ & Part of $I_t(b)$ carried by the impact-date contrasts at buyers in a set of firms $\mathcal C$, and the tilt mass that carries it, scaled by the buyers' intermediate-input shares \eqref{eq:arriving_contraction}; the arriving contraction & \textsection\ref{subsec:sign_reversing_irf} \\
$r_a$, $\overline\tau$, $\underline\mu$, $\tau^-$ & Rounds of sourcing within which the supplier walk from the retail tier reaches the contracting region with probability at least one half, horizon of the arrival window, arrival mass, reversal date (Theorem~\ref{thm:sign_reversing_irf}, \eqref{eq:reversal_constants}, Lemma~\ref{lem:layered_routing}) & \textsection\ref{subsec:sign_reversing_irf}, \textsection\ref{app:proof_sign_reversing_irf} \\
$\mathsf X$, $r_{\mathcal T}$, $\mathsf M_r$; $\mathbb E$, $\Pr$ & Supplier walk started from the tilt weights, its first entry into the top tier and the supermartingale of the proof of Lemma~\ref{lem:layered_routing}; expectation and probability under the walk & \textsection\ref{app:lemmas_reallocation} \\

\midrule
\multicolumn{3}{@{}l}{\itshape GDP response} \\
$Y_t$; $f_t(\pi)=\Delta\log Y_t(\pi)$ & GDP, the household consumption index, and its log response & \textsection\ref{sec:asymmetry} \\
$P_t$, $\Pi_t$; $E_t$, $H_t$; $\Lambda_t$ & Value of the goods in transit relative to the stationary pipeline, and its drawdown; value of the inputs used in excess of the stationary bundles less the value of the output made with them, and the consumption loss; deadweight loss \eqref{eq:resource_identity} & \textsection\ref{sec:asymmetry}, \textsection\ref{app:lemmas_asymmetry} \\
$E_{i,t}$; $E_{i,t}^{\mathrm{mix}}$, $E_{i,t}^{\mathrm{scale}}$; $E_t^{\mathrm{mix}}$, $E_t^{\mathrm{scale}}$ & Contribution of firm $i$ to $E_t$, its terms \eqref{eq:production_loss_split} and their sums; $E_t^{\mathrm{mix}}/w^\ast$ is the loss from production inefficiency and $(E_t^{\mathrm{scale}}+H_t)/w^\ast$ the loss from misallocation & \textsection\ref{app:lemmas_asymmetry} \\
$R_t^{(Y)}$, $\mathcal D_t^{(\boldsymbol\gamma)}$, $\Theta_t$ & Input-scale aggregate, loss from production inefficiency of the retailers at the household shares, consumption--output wedge & \textsection\ref{subsec:asymmetric_nonneutrality}, \textsection\ref{app:lemmas_asymmetry} \\
$\dot f_t$, $\ddot f_t$; $A_t,B_t,C_t,V_t$; $\dot\Theta_t,\ddot\Theta_t$ & First and second derivatives of $f_t$ at zero; the reallocation impulse, the first-order response of the retailers' input-bundle scales at the household shares, in \eqref{eq:gdp_expansion_coefficients}, and the curvature of the input-scale aggregate, the curvature of the retailers' log output in the scale of their bundles and their loss from production inefficiency per unit of squared shock, in \eqref{eq:gdp_second_derivative}; derivatives of the consumption--output wedge & \textsection\ref{subsec:asymmetric_nonneutrality}, \textsection\ref{app:lemmas_asymmetry} \\
$\tilde\pi=\pi/(1+\pi)$, $\mathcal Y_t$, $\mathcal Y^{\Pi}_t$, $\mathcal Y^{\Lambda}_t$, $\mathcal Y^{P}_t$, $\mathcal Y^{R}_t$; $K$ & Normalized shock; normalized responses of GDP (Lemma~\ref{lem:concave_scaling}), of the drawdown of the goods in transit, of the deadweight loss and of the excess value of the goods in transit (Lemma~\ref{lem:resource_accounting_date}), and of the input-scale aggregate (Lemma~\ref{lem:concave_scaling}); bound on the Hessian of $\mathcal Y_t$ on the conservation subspace $\mathcal Z_{\mathbf A}$ in the norm $\|\cdot\|_m$ (Lemma~\ref{lem:concave_scaling}) & \textsection\ref{subsec:asymmetric_nonneutrality}, \textsection\ref{app:lemmas_asymmetry} \\
$c$; $K_3$, $\widetilde\varepsilon_t$; $\varkappa$ & A margin of a positive first-order response, $\dot f_t\ge c(1-\theta)$; bound on the third derivative of $\mathcal Y_t$ and the remainder of the second-order expansion of $f_t$ in the normalized shock; common rescaling of balances and wage (Lemma~\ref{lem:concave_scaling} and its proof) & \textsection\ref{subsec:asymmetric_nonneutrality}, \textsection\ref{app:lemmas_asymmetry} \\
$\overline f_t$ & Even part of the GDP response, decomposed in \eqref{eq:even_part} & \textsection\ref{subsec:asymmetric_nonneutrality} \\
$c_A$; $c_{\dot f}$, $c_{\ddot f}$, $\varepsilon_t$ & Lower bound on the reallocation impulse on $W$ (Lemma~\ref{lem:gdp_expansion}); smallest first derivative and smallest absolute second derivative on $W$, and the remainder of the second-order expansion of $f_t$ in the shock (proof of Lemma~\ref{lem:concave_response_asymmetry}) & \textsection\ref{app:lemmas_asymmetry} \\
$f_+(\pi)$, $\Lambda_+(\pi)$; $c_Y$; $K_+$, $K_P$ & GDP over the first reporting interval, the dates from the shock to the shortest lead time, and the deadweight loss averaged over it; GDP margin \eqref{eq:gdp_margin} (Lemma~\ref{lem:first_interval_gdp}); bounds on the Hessians of the average of the maps $\mathcal Y_t$ over the first reporting interval and of the excess value of the goods in transit at its end, divided by the number of its dates and by $w^\ast$ (Theorem~\ref{thm:asymmetric_nonneutrality}) & \textsection\ref{subsec:asymmetric_nonneutrality}, \textsection\ref{app:lemmas_asymmetry} \\
$I_t(\boldsymbol\gamma)$, $J_t(\boldsymbol\gamma)$; $\mu_{\boldsymbol\gamma}(\mathcal P')$, $\mu_{\boldsymbol\gamma}(l;L)$ & Impact-date and post-impact parts of the reallocation impulse $A_t$ at the household shares; mass of a supplier path from the household shares, and the mass of those ending at buyer $l$ with summed lead time at most $L$, equal to $(m_l^\ast/w^\ast)F_l(L)$ (Lemma~\ref{lem:gdp_pass_through}) & \textsection\ref{app:lemmas_reallocation} \\
$\Pi^\sharp_t$, $P^\sharp_t$; $\dot P_t$; $\|D\mathcal Y(\mathbf 0)\|_m$, $\|D^2\mathcal Y(\mathbf 0)\|_m$ & Parts of the drawdown of the goods in transit and of their excess value beyond first order in the normalized shock (Lemma~\ref{lem:gap_exact}); first derivative of the goods in transit at $\pi=0$ (Lemma~\ref{lem:resource_accounting_date}); norms of the derivative and of the Hessian of a normalized response on $\mathcal Z_{\mathbf A}$ in the proportional coordinates (Lemmas~\ref{lem:concave_scaling} and~\ref{lem:gap_exact}) & \textsection\ref{app:lemmas_asymmetry} \\
$\Lambda^{\mathrm c}_t$, $\Lambda^{\mathrm{tr}}_t$; $r_{i,t}$, $\dot{\widehat m}^\perp_{i,t+1}$ & Loss on consumed output and loss in transit, the deadweight loss on the output the household consumes and on the output that goes into transit (Lemma~\ref{lem:consumed_transit}); consumption share of firm $i$, the share of its output bought by the household, and the first-order proportional misalignment of its next-period balance (Lemma~\ref{lem:gdp_labor_income_anchor}) & \textsection\ref{subsec:asymmetric_nonneutrality}, \textsection\ref{app:lemmas_asymmetry} \\
$\dot{\widehat x}_i$, $\dot{\widehat m}^\perp_i$, $\dot{\widehat c}_i$; $\ddot\Lambda^{\mathrm c}_{\underline k}$, $\ddot\Lambda^{\mathrm{tr}}_{\underline k}$; $y_{ji}$, $\widehat x_i$, $\widehat c_i$ & At the shortest lead time: the first-order deviations $\dot{\widehat x}_{i,\underline k}$ of the dated inputs of firm $i$, the first-order misalignment $\dot{\widehat m}^\perp_{i,\underline k+1}$ of the next balance of retailer $i$ and the first-order response of the household's consumption of good $i$; the second derivatives there of the loss on consumed output and of the loss in transit (Lemma~\ref{lem:shortest_lag_curvature}); the ratios of the dated inputs as functions of $\tilde\pi$ and the deviations $\widehat x_{i,\underline k}$, $\widehat c_{i,\underline k}$ (its proof) & \textsection\ref{app:lemmas_asymmetry} \\
$C_c$ & Prefactor of the consumption envelope (proof of Proposition~\ref{prop:resource_accounting}) & \textsection\ref{app:proof_resource_accounting} \\
$G_Y(\pi)$, $L_Y(\pi)$, $\mathcal I(\pi)$ & Cumulative gain, cumulative loss and implicit interest rate of the monetary expansion (Proposition~\ref{prop:implicit_interest}) & \textsection\ref{subsec:intertemporal_reallocation} \\
$G_0$; $\mathcal E_i$, $C_{\mathcal E}$, $C_\Lambda$ & GDP responses accumulated up to $T_0$; contribution of firm $i$ to $E_t$ per unit of its sales as a function of the proportional changes of its dated inputs, its quadratic bound, and the bound $\sum_t(E_t+H_t)\le C_\Lambda\pi^2$ (proof of Proposition~\ref{prop:implicit_interest}) & \textsection\ref{app:proof_implicit_interest} \\

\midrule
\multicolumn{3}{@{}l}{\itshape Operators} \\
$\mathbf 1$, $\mathbf I$; $\ind_{\{\cdot\}}$ & All-ones vector, identity matrix; indicator & \textsection\ref{subsec:environment} \\
$D_\pi$; $D$, $D^2$; $O(\cdot)$, $o(\cdot)$ & Derivative in the shock; derivative and Hessian of a map at a point, $D^2\mathcal Y(\mathbf 0)[\mathbf v,\mathbf v']$ the Hessian as a bilinear form; asymptotic notation as $\pi\to0$ at a fixed economy & \textsection\ref{subsec:impact_reallocation}, \textsection\ref{subsec:asymmetric_nonneutrality}, \textsection\ref{subsec:money} \\

\midrule
\multicolumn{3}{@{}l}{\itshape Simulation and reported objects} \\
$k_{\mathrm{lo}}$, $k_{\mathrm{hi}}$; $\epsilon_{\downarrow}$; $\underline k_{\mathrm s},\underline n_{\mathrm s}$ & Bounds of the lead-time window of the most upstream buyers; fraction by which the buyer's downstreamness index compresses the window; lead time from which a link counts as slow and the base number of slow links a buyer keeps (Table~\ref{tab:baseline}) & \textsection\ref{app:calibration} \\
$M_{\mathrm h}$; $\theta_{\mathrm h}$; $\epsilon_{\mathrm{cap}}$ & Household balance; household incidence exponent; transfer cap, the largest fraction of its balance by which the shock moves a firm's balance (Table~\ref{tab:baseline}) & \textsection\ref{sec:quant-setup}, \textsection\ref{app:calibration} \\

\end{longtable}\addtocounter{table}{-1}}

\newpage
\subsection{Dependency map}
\label{app:depgraph}

\begin{figure}[H]
\centering
\includegraphics[width=\textwidth,height=0.78\textheight,keepaspectratio]{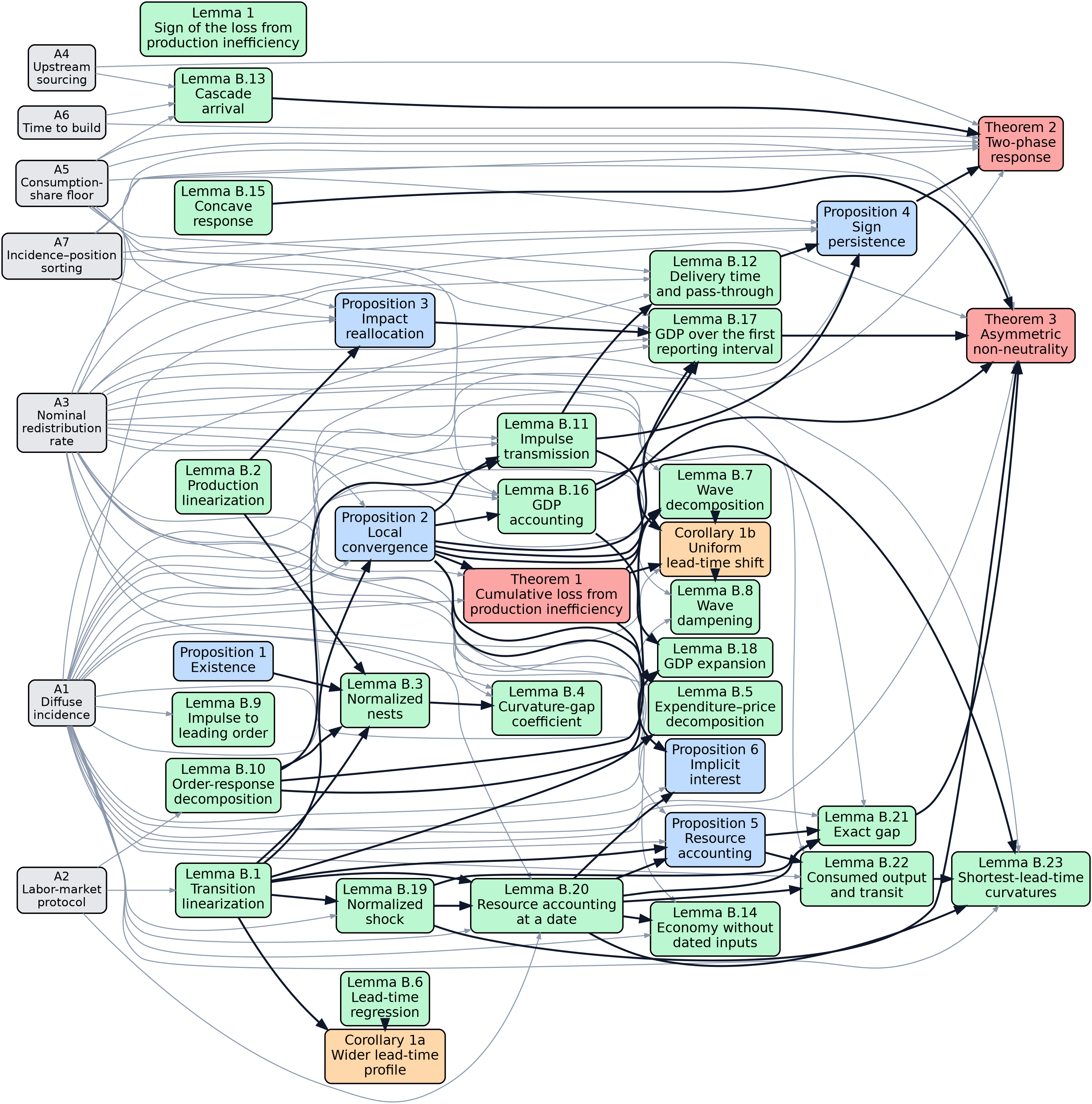}

\vspace{0.35em}
{\footnotesize
\fcolorbox{black!30}{depAssumption}{\strut\,Assumption\,}\;
\fcolorbox{black!30}{depProposition}{\strut\,Proposition\,}\;
\fcolorbox{black!30}{depLemma}{\strut\,Lemma\,}\;
\fcolorbox{black!30}{depTheorem}{\strut\,Theorem\,}\;
\fcolorbox{black!30}{depCorollary}{\strut\,Corollary\,}
}

\caption{Dependency map of the formal results of the paper and of
Appendix~\ref{app:analytical}. An arrow from \(X\) to \(Y\) means that the
statement or proof of \(Y\) invokes \(X\). Results are ordered left to
right by dependency depth.
Assumption~\ref{assump:short_run_nominal_rigidity}, maintained throughout,
receives an arrow only where a proof invokes it by name.}
\label{fig:depgraph_hierarchical}
\end{figure}
\clearpage

\end{document}